\documentclass[11pt]{article}

\usepackage[margin=1in]{geometry}
\usepackage{amssymb}
\usepackage{amsmath}
\usepackage{amsthm}
\usepackage{booktabs}
\usepackage{graphicx}
\usepackage{natbib}
\usepackage{hyperref}
\usepackage{microtype}
\usepackage{enumitem}
\usepackage{xcolor}
\usepackage{caption}

\hypersetup{
 colorlinks=true,
 linkcolor=black,
 urlcolor=blue!55!black,
 citecolor=blue!55!black,
 pdftitle={Symmetry-Aware Convex Shrinkage for High-Dimensional Covariance Estimation (v6.4)},
 pdfauthor={Mitchell A. Thornton}
}

\graphicspath{{figures/}}

\newtheorem{theorem}{Theorem}[section]
\newtheorem{proposition}[theorem]{Proposition}
\newtheorem{corollary}[theorem]{Corollary}
\newtheorem{lemma}[theorem]{Lemma}
\theoremstyle{definition}
\newtheorem{definition}[theorem]{Definition}
\newtheorem{remark}[theorem]{Remark}

\DeclareMathOperator{\Tr}{tr}

\DeclareMathOperator*{\argmin}{arg\,min}

\newcommand{\Rhat}{\hat{\mathbf{R}}}
\newcommand{\Sigmab}{\boldsymbol{\Sigma}}
\newcommand{\I}{\mathbf{I}}
\renewcommand{\P}{\mathcal{P}}

\title{Symmetry-Aware Convex Shrinkage for\\
 High-Dimensional Covariance Estimation}

\author{Mitchell A. Thornton\\
 Darwin Deason Institute for Cyber Security\\
 Department of Electrical and Computer Engineering\\
 Southern Methodist University, Dallas, TX, USA\\
 \texttt{mitch@smu.edu}}

\date{}

\begin{document}

\maketitle

\begin{abstract}
This paper develops a class of data-adaptive shrinkage estimators for
high-dimensional covariance estimation in which the shrinkage target
is a Reynolds projection of the sample covariance under a finite
symmetry group selected from a candidate library by held-out
predictive performance. The class generalizes the convex shrinkage
estimator of Ledoit and Wolf, which targets a scalar multiple of the
identity matrix, by replacing the identity target with a structured
target derived from a symmetry group when one is available. It also
generalizes the group-symmetric maximum-likelihood covariance
estimator of Shah and Chandrasekaran by combining structural
targeting with adaptive convex shrinkage and by selecting the group
from data rather than treating it as prespecified. A two-tier
procedure is developed for the group selection: a universal
per-candidate evaluation based on held-out negative log-likelihood,
optionally preceded by a domain-specific step that constructs the
candidate library from structural priors of the data. A
finite-sample regret bound is established for the held-out
calibration of the convex combination weight; an oracle inequality
is established for the data-driven group selection; and a
quantitative sufficient-match condition is established under which
the proposed estimator dominates Ledoit-Wolf shrinkage in
Frobenius mean-squared error. The procedure is illustrated on six
real-data covariance estimation problems whose candidate libraries
are constructed from domain-specific structural priors: financial
returns on the constituents of the Standard and Poor's 500, sea
surface temperature anomalies from the National Oceanic and
Atmospheric Administration's optimum interpolation product, gene
expression covariances for breast invasive carcinoma from The
Cancer Genome Atlas, intermediate-frequency radio signal
covariances from the RadioML 2018.A benchmark, galaxy image
covariances from the Galaxy10 DECaLS dataset, and natural image
patch covariances from CIFAR-10 with a distribution-shift
companion on CIFAR-10.1. The procedure is also empirically compared, in a non-cyclic setting, against the Bayesian permutation-symmetry estimator of Chojecki and colleagues. Outside the few-shot regime, where
structural priors carry the most information per observation,
Ledoit-Wolf shrinkage remains the appropriate baseline.
\end{abstract}

\noindent\textbf{Keywords:} covariance estimation, convex shrinkage,
symmetry-aware regularization, Reynolds projection, finite-group
invariance, algebraic diversity, high-dimensional statistics

\bigskip

\section{Introduction}
\label{sec:intro}

High-dimensional covariance estimation in the few-shot regime, where
the sample size $N$ is comparable to or smaller than the dimension
$M$, is a core problem in multivariate statistics
\citep{pourahmadi2013high, wainwright2019}. The sample covariance is
rank-deficient and inadmissible in this regime; in classical terms,
its inadmissibility traces to \citet{stein1956}, with
covariance-specific shrinkage developed in \citet{stein1986}, and
the \citet{marchenko1967} eigenvalue distribution making the
high-dimensional distortion explicit
\citep[surveyed in][]{bai2010spectral}. The standard remedy
introduces regularization toward a structurally simple target, of
which several variants exist
\citep{dempster1972covariance, schafer2005shrinkage,
chen2010shrinkage, won2013condition}. The reference baseline for
the present work is the convex shrinkage estimator of
\citet{ledoit2004},
\begin{equation}\label{eq:lw2004_intro}
\Rhat_{\mathrm{LW}}(\alpha) \;=\; (1 - \alpha)\,\Rhat
\;+\; \alpha\,\hat\sigma^2\,\I,
\qquad
\hat\sigma^2 \;=\; \frac{1}{M}\,\Tr\,\Rhat,
\end{equation}
which combines the sample covariance with a multiple of the
identity matrix at a data-determined convex weight controlled by
the \emph{shrinkage intensity} $\alpha \in [0, 1]$. The
nonlinear-shrinkage descendants of this line, including
\citet{ledoit2012nonlinear} and the analytical
Marchenko-Pastur form of \citet{ledoit2020analytical}, are
surveyed in those references; \eqref{eq:lw2004_intro}
is retained here as the linear baseline, with the analytical
nonlinear form treated separately below.

The identity target is the natural choice when no structural
prior is available. In many applications, however, a structural
prior is available in the form of a symmetry group acting on the
index set, for instance from compositional, hierarchical,
lattice, or combinatorial structure, and a target that respects
this structure carries information that the identity target does
not. Symmetry-aware covariance estimation has a fifty-year
tradition originating with the \citet{andersson1975}
irreducible-decomposition theorem for $G$-invariant covariance
matrices, the Jordan-algebra perspective on the same of
\citet{jensen1988}, the lattice-conditional-independence extension
of \citet{andersson1998}, and the graphical Gaussian models with
symmetry constraints of \citet{hojsgaard2008graphical}; the
rate-optimality results of \citet{lounici2014} and
\citet{cai2012optimal} for sparse-precision and sparse-covariance
estimation occupy a parallel structural substrate not directly
compared here. More recently, \citet{shah2012} and
\citet{soloveychik2016group} extend the invariant-Gaussian-models
machinery into the high-dimensional covariance estimation regime;
the algebraic decomposition machinery the present work rests on is
exactly Andersson's, specialized to representations over
$\mathbb{R}$. The objective of the present work is to introduce a
class of data-adaptive shrinkage estimators in which the target is
a Reynolds projection of the sample covariance under a finite
group, with the group selected from a candidate library by
held-out predictive performance, building on the Algebraic
Diversity (AD) framework of \citet{thornton2026spectral,
thornton2026framework}.

The principal past methods to which we compare are
\citet{ledoit2004} and \citet{shah2012}. \citet{ledoit2004}
introduced a convex combination of the sample covariance with a
target proportional to the identity, with the convex combination
weight selected to minimize a Frobenius-loss risk. The estimator
is well-conditioned, dominates the sample covariance in
mean-squared error in many high-dimensional regimes, and is widely
used. \citet{shah2012} take an explicitly group-symmetric approach
to covariance regularization, with the group $G$ assumed known a
priori and the estimator defined as the maximum-likelihood
estimator within the commutant algebra subject to a determinantal
regularizer. Although the convex regularizer described here may at
first appear to be an extension of \citet{ledoit2004} with a
target influenced by \citet{shah2012}, it differs from such a
simple substitution in several respects.

First, the target is the projection of the sample covariance onto
the commutant of a finite group, which produces a
positive-semidefinite estimator on the same scale as the sample
covariance and is therefore suitable as a shrinkage target; the
penalized estimator of \citet{shah2012} is not on the same scale
and does not enter a convex shrinkage form directly. Second, the
candidate library is broadened from a single fixed linear action
to admit direct products, wreath products, and semidirect products
of finite groups \citep[for the wreath construction
see][Chapter~4]{james1981representations}, which arise naturally
in real data with hierarchical or compositional symmetry. Third, the symmetry group
is selected from data rather than assumed known, by way of a
two-tier Best-Matched Group (BMG) procedure described in
Section~\ref{sec:background} that is intended to be fast,
efficient, and accurate in the few-shot regime; this is the
principal practical enabler since most applications do not come
with a prespecified group. The empirical comparisons in
Section~\ref{sec:experiments} use the BMG-selected group as the
input to the Shah comparator (denoted Shah-BMG), so that the
comparison evaluates the Shah-style $\alpha = 1$ projection-only
estimator against the AD-style calibrated $\alpha \in [0, 1]$
estimator at the same target, on a level methodological playing
field; the BMG group-selection is itself an extension of
\citet{shah2012}'s framework, which assumes the group as input.
Fourth, two natural calibrations of the shrinkage intensity are
studied side-by-side: a closed-form Frobenius mean-squared-error
plug-in and a held-out negative-log-likelihood (NLL) minimizer.
Their respective regions of preference are characterized in
Section~\ref{sec:theory}.

A separate line of recent work, the \texttt{gips} R package of
\citet{chojecki2025gips} implementing the Bayesian model-selection
procedure of \citet{graczyk2022aos}, also addresses learning
permutation symmetries in Gaussian covariance from data
\citep[building on the broader group-representation-in-statistics
program surveyed by][]{diaconis1988gibbs}; the
present work differs from \texttt{gips} in three respects that
together place the two methods in adjacent rather than overlapping
methodological territory. First, the search space differs:
\texttt{gips} searches over cyclic subgroups of the symmetric
group $S_p$ via Metropolis-Hastings on $S_p$, whereas the
candidate library $\mathcal{G}$ used here is a small,
hand-curated set of physically motivated groups including direct
products, wreath products, and semidirect products of finite
groups (the wreath products in particular do not lie in the
cyclic-only search space). Second, the selection paradigm differs:
\texttt{gips} maximizes a Bayesian posterior with a
Diaconis-Ylvisaker conjugate prior on the precision parameter,
whereas the present procedure minimizes held-out cross-validated
NLL across a tractable candidate set; the two paradigms have
different small-sample behavior and different requirements on the
sample size. Third, the estimator at the selected group differs:
\texttt{gips} uses the projection MLE under the symmetry
constraint, equivalent to a Shah-style $\alpha = 1$
projection-only estimator, whereas the present procedure produces
a shrinkage estimator at calibrated $\alpha \in [0, 1]$ that
adapts the bias-variance balance to the match quality of the
projection target.

A synthetic-data comparison documents the empirical consequence
of this difference; the numerical results are reported in
Section~\ref{sec:experiments}, with the qualitative finding that
\texttt{gips} remains above the Ledoit-Wolf 2004 relative Frobenius error in every operating regime evaluated in this non-cyclic setting, and far above the Shah projected estimator at the matched group.

A separate modern shrinkage estimator that warrants explicit
identification is the analytical nonlinear-shrinkage estimator of
\citet{ledoit2020analytical}, denoted LW-NL throughout this paper.
LW-NL applies a nonlinear function to each sample eigenvalue,
with the function derived from the Marchenko-Pastur kernel
density estimate of the sample-eigenvalue distribution; it is the
analytical-formula successor to the QuEST nonlinear shrinkage
method of \citet{ledoit2012nonlinear} and represents the modern
frequentist frontier in the eigenvalue-shrinkage family. LW 2004
is used as the principal continuity comparator throughout
Section~\ref{sec:experiments} for direct continuity with the
existing covariance-shrinkage literature, and LW-NL is added as a
fifth estimator alongside the sample covariance, LW 2004,
Shah-BMG, and the AD estimator calibrated by held-out NLL with
BMG group selection (denoted AD-NLL-BMG).

A natural question with LW-NL in place is whether the structural-
prior mechanism of the AD framework composes usefully with LW-NL's
nonlinear eigenvalue shrinkage. We test this with the AD-LW-NL
composition defined in
Section~\ref{sec:theory-ad-lwnl}, in which the sample-covariance
term of the AD convex blend is replaced by its LW-NL nonlinear
shrinkage while the structural prior is retained. The
three-column protocol comparison (LW-NL alone, AD-NLL-BMG
alone, AD-LW-NL-NLL-BMG combining the two regularization
mechanisms) decomposes any empirical advantage into two
attributable components: the structural-prior contribution
(visible in AD versus LW-NL) and the nonlinear-shrinkage
contribution within the AD framework (visible in AD-LW-NL versus
AD). The empirical evaluation reported in
Section~\ref{sec:experiments} finds mixed results from the
AD-LW-NL composition: a narrow regime of small benefit in the
moderate-$c$ well-conditioned regime on financial-returns data,
and a sharper regime of substantial harm in the few-shot image-
patch regime. AD-LW-NL is reported here as a tested composition
with mixed empirical results, not as a recommended default.
The mechanism for the few-shot image-patch deficit is discussed
in Section~\ref{sec:experiments}: when LW-NL applied to the
rank-deficient sample is worse than the structural projection,
the AD-LW-NL cross-validation pins $\alpha = 1$ (pure projection)
and loses access to the raw-sample blend path that AD-NLL-BMG
exploits. AD-NLL-BMG remains the recommended default; AD-LW-NL
is a specialization for moderate-$c$ well-conditioned settings
on data where LW-NL is substantially better than LW 2004 and the
structural projection does not fully capture the variance
structure.

The convex shrinkage step is structural rather than ornamental.
When the population covariance is approximately rather than
exactly invariant under the candidate group, the projected target
is a biased estimator of the population, with a perpendicular-bias
term that does not vanish in $N$. The convex blend of the
projection with the sample covariance reduces this bias by
allocating mass to the sample covariance in proportion to the
perpendicular-residual variance, recovering an estimator that
improves on Ledoit-Wolf shrinkage in mean-squared error under a
quantitative sufficient-match condition derived in
Section~\ref{sec:theory}. Outside this regime, Ledoit-Wolf
shrinkage remains the appropriate baseline; this conservatism is
intentional and is reflected in the dominance condition.

The remainder of the paper is organized as follows.
Section~\ref{sec:background} fixes notation, recalls the
group-averaged sample covariance and the AD framework, and
presents the candidate-library construction together with the
BMG group-selection procedure. Section~\ref{sec:theory} develops
the theoretical machinery: the construction of the estimator and
its calibrations, the AD-LW-NL composition, the risk
decomposition on the convex shrinkage family, the asymptotic
crossover between the two natural calibrations, the
finite-sample regret bound for the held-out calibration, the
oracle inequality for the data-driven group selection, the
sufficient-match condition for dominance over Ledoit-Wolf
shrinkage, and the minimax rate results. Each theorem statement
is followed immediately by its proof.
Section~\ref{sec:experiments} reports empirical results across
six real-data settings selected to exercise different group
families, with a distribution-shift companion on CIFAR-10.1 and
a decoy stress test of the BMG procedure.
Section~\ref{sec:discussion} synthesizes the empirical and
theoretical findings into a three-region phase diagram in
$(N, M, |G|, \delta)$ coordinates that names the mechanism
producing the boundary between adjacent regions and locates the
empirical anchors from the real-data experiments within the
predicted partition. Section~\ref{sec:conclusion} concludes,
and Section~\ref{sec:code-data} documents code and data
availability.

\section{Background}
\label{sec:background}

This section fixes the notation used throughout the paper, recalls
the group-averaged sample covariance estimator from the AD
framework, and describes the BMG selection procedure together
with its algorithmic realization.

\subsection{Notation and the group-averaged sample covariance}
\label{sec:bg-notation}

Let $\mathbf{x}_1, \ldots, \mathbf{x}_N$ be independent and
identically distributed observations of a random vector
$\mathbf{x} \in \mathbb{R}^M$ with mean zero and population
covariance $\Sigmab$, and let
$\Rhat = N^{-1}\sum_n \mathbf{x}_n\mathbf{x}_n^\top$ denote the
sample covariance. For a finite group $G$ with unitary
representation $\rho: G \to U(M)$, the \emph{commutant algebra}
\[
\mathcal{A}_G \;=\; \{\mathbf{A} \in \mathbb{R}^{M \times M}
\,:\, \rho(g)\mathbf{A}\rho(g)^\top = \mathbf{A}\;\text{for all}\;
g \in G\}
\]
collects the matrices that are invariant under conjugation by
every element of $G$. The Frobenius-orthogonal projection onto
$\mathcal{A}_G$ is the \emph{Reynolds projection}
\citep[see, e.g.,][Section 1.2]{fultonharris1991}
\begin{equation}\label{eq:reynolds}
\P_G(\mathbf{A}) \;=\; \frac{1}{|G|}
\sum_{g \in G} \rho(g)\,\mathbf{A}\,\rho(g)^\top,
\end{equation}
and the \emph{group-averaged sample covariance} is
\begin{equation}\label{eq:gas}
\Rhat_G \;:=\; \P_G(\Rhat).
\end{equation}
The matrix $\Rhat_G$ is symmetric positive semidefinite, lies on
the same scale as $\Rhat$, and commutes with every $\rho(g)$.
When $\Sigmab \in \mathcal{A}_G$, the group-averaged sample
covariance is an unbiased estimator of $\Sigmab$ with variance
reduced relative to $\Rhat$ by a factor that scales with the
symmetric-subspace commutant dimension $d_G$ (see
Section~\ref{sec:theory}); when $\Sigmab \notin \mathcal{A}_G$,
$\Rhat_G$ is a biased estimator whose bias is the Frobenius
distance from $\Sigmab$ to $\mathcal{A}_G$.

\subsection{The candidate library}
\label{sec:bg-library}

The candidate library $\mathcal{G}$ is a finite collection of
candidate groups assembled before any data is examined. Useful
sources include domain knowledge about the data regime (spatial
lattice, temporal shift, hierarchical compositional structure,
combinatorial substructure) and small libraries of generic finite
groups that exercise common symmetries. Two extrema should always
appear in $\mathcal{G}$: the trivial group $\{e\}$, for which the
projection returns the sample covariance unchanged, and the full
symmetric group $S_M$, for which the projection returns a
compound-symmetry matrix; the latter is admitted only when $M$
is small enough that this extreme amount of averaging is
plausibly informative. Library construction is permissive rather
than prescriptive: the user is not required to specify the
correct group a priori, and the library may include candidates
that turn out to be poor matches. The procedure that follows is
designed to be robust to such inclusion, and a decoy stress test
of this robustness is reported in
Section~\ref{sec:exp-decoy}.

\subsection{The Best-Matched-Group (BMG) selection procedure}
\label{sec:bg-bmg}

Selection from $\mathcal{G}$ uses a procedure referred to as the
\emph{Best-Matched Group} (BMG) selection, defined by three
structural commitments. The first is a \emph{held-out predictive
criterion}: the matched group is selected by $K_{\mathrm{cv}}$-fold
cross-validated evaluation of the per-candidate AD-plugin
estimator, with a per-candidate score computed on held-out folds
(Gaussian NLL in this paper). The second is the
\emph{candidate-library architecture} described in
Section~\ref{sec:bg-library}: the library $\mathcal{G}$ is
constructed in advance from domain-specific structural
conventions and defines the search space within which the
held-out criterion operates. The third is a \emph{structural-fit
diagnostic}: at the selected group $\hat G$, the dimensionless
commutativity residual
\begin{equation}\label{eq:delta-residual}
\delta(\hat G, \Rhat) \;:=\; \frac{\|\Rhat - \Rhat_{\hat G}\|_F}
{\|\Rhat\|_F}
\end{equation}
is reported as a post-hoc check that supports or undermines the
selection independently of the held-out score.

The realization used in this paper is two-tier.

\paragraph{Tier 1: effective-rank prefilter.}
For each candidate $G \in \mathcal{G}$, admit $G$ to Tier 2 if
and only if
\begin{equation}\label{eq:rank-prefilter}
N \cdot |G| \;\geq\; \kappa\, M,
\end{equation}
at a conservatism constant $\kappa \geq 1$ (default
$\kappa = 2$). The prefilter removes candidates for which the
group-averaged sample covariance would be rank-deficient or
ill-conditioned at the given $N$, before any cross-validation
work begins. The constant $\kappa = 1$ is the strict
rank-inequality boundary at which the group-averaged sample
covariance is non-singular in expectation; $\kappa = 2$ provides
a safety factor that ensures well-conditioning rather than
near-singularity. When the entire library fails
\eqref{eq:rank-prefilter}, the procedure falls back to the
Ledoit-Wolf estimator on the unstructured sample covariance and
returns it with a flag indicating that no structured candidate
was admitted.

\paragraph{Tier 2: cross-validated held-out NLL.}
Partition the training rows of the data matrix into
$K_{\mathrm{cv}}$ contiguous folds. For each candidate $G$
surviving Tier 1 and each fold $k$, fit the training-fold sample
covariance $\Rhat^{(-k)}$ on the remaining $K_{\mathrm{cv}} - 1$
folds, calibrate the shrinkage intensity on the same training
folds (using the held-out NLL minimizer
$\hat\alpha^*_{\mathrm{NLL}}(G)$ developed in
Section~\ref{sec:theory} and evaluated on an $\alpha$-grid
$\mathcal{A} \subset [0, 1]$), and evaluate the held-out NLL of
the resulting AD-shrinkage estimator on the $k$th fold. Average
the held-out NLLs across folds to produce a per-candidate score
$\bar L_{\mathrm{cv}}(G)$. The selected group is
\begin{equation}\label{eq:bmg-rule}
\hat G \;=\; \argmin_{G \in \mathcal{G},\ \text{Tier 1 admitted}}\;
\bar L_{\mathrm{cv}}(G).
\end{equation}

\paragraph{Output.}
The procedure returns the selected group $\hat G$, the calibrated
shrinkage intensity $\hat\alpha^*_{\mathrm{NLL}}(\hat G)$ refit on
the full training data, the resulting AD shrinkage estimator, and
the structural-fit diagnostic $\delta(\hat G, \Rhat)$ at the
selected group. The diagnostic is reported alongside the
estimator as a check against the held-out selection; small
$\delta$ corroborates the selection while large $\delta$ flags a
result obtained by minimizing the cross-validated score over a
library that does not contain a near-symmetry of $\Sigmab$.

\paragraph{Computational cost.}
The dominant cost is the per-fold, per-candidate evaluation of
the AD-plugin estimator and held-out NLL, which is $O(M^3)$ per
fold per candidate per $\alpha$-grid point. The total cost is
therefore $O(|\mathcal{G}_{\mathrm{T1}}| \cdot K_{\mathrm{cv}}
\cdot |\mathcal{A}| \cdot M^3)$, where
$|\mathcal{G}_{\mathrm{T1}}|$ is the size of the Tier 1
shortlist. The Tier 1 prefilter is $O(1)$ per candidate so its
cost is negligible. The candidate-library construction is
performed once upstream of BMG and its cost is
application-specific.

\paragraph{Per-trial selection-stability diagnostic.}
The cross-validation step in Tier 2 produces, for each Tier 1
admitted candidate $G$, a held-out score $\bar L_{\mathrm{cv}}(G)$.
An additional per-trial diagnostic is recorded,
\begin{equation}\label{eq:bmg-margin}
m_{\mathrm{cv}} \;:=\;
\min_{G \neq \hat G} \bar L_{\mathrm{cv}}(G)
\;-\; \bar L_{\mathrm{cv}}(\hat G),
\end{equation}
the gap between the second-best and best Tier 2 scores. A large
$m_{\mathrm{cv}}$ (relative to the within-candidate $K_{\mathrm{cv}}$-fold
standard deviation) corresponds to a declarative selection of
$\hat G$; a small $m_{\mathrm{cv}}$ corresponds to a regime in
which multiple candidates are statistically indistinguishable in
their Tier 2 score. The diagnostic does not modify the selection
rule \eqref{eq:bmg-rule}; it is reported alongside the selection
as a regime indicator. Two notes on its interpretation are
warranted. First, in cells where the BMG selects the trivial
group, $m_{\mathrm{cv}}$ is identically zero by construction: the
AD estimator at $\alpha = 0$ collapses to the sample covariance
regardless of the candidate group's projection, so all candidates
with optimal $\alpha = 0$ produce identical Tier 2 scores at
their respective alpha-minima. The $m_{\mathrm{cv}} = 0$ value in
this case is a structural artifact, not a noise-driven selection.
Second, when the per-trial $m_{\mathrm{cv}}$ is small but the
directional consistency of the BMG selection across multiple
subsample-split trials is high, the collective evidence for the
selection is strong even though each individual margin is on the
scale of fold-noise.

\paragraph{Generalizations.}
The procedure admits various extensions. The held-out score may
be replaced by any downstream-task-aligned criterion (for example,
mean predictive error in regression-on-covariance applications,
or detection-task NLL in signal classification), and additional
prefilter steps may be inserted between the rank inequality and
the cross-validation step to further reduce the shortlist when
the candidate library is large. We do not pursue these extensions
here; the experiments of Section~\ref{sec:experiments} use only
the effective-rank prefilter at Tier 1 and the held-out NLL
criterion at Tier 2.

\paragraph{Distinction from prior approaches.}
The BMG procedure differs from \citet{ledoit2004}, which fixes
the target as the scaled identity (corresponding to averaging
over the full orthogonal group) and requires no group selection,
and from \citet{shah2012}, which assumes the structural group is
known a priori and performs no library search. The BMG
group-selection therefore extends \citet{shah2012} by replacing
the prespecified-group assumption with a held-out
predictive-criterion search over a curated library.

\section{Theory}
\label{sec:theory}

This section develops the formal machinery: notation, the convex
shrinkage family and its two natural calibrations, the bias-variance
risk decomposition, the asymptotic crossover between the two
calibrations, the finite-sample regret bound and Ledoit-Wolf
dominance condition, the oracle inequality for the data-driven
group selection, and the rates and minimax results.

\subsection{Setup and notation}
\label{sec:theory-setup}

Let $\mathbf{x}_1, \ldots, \mathbf{x}_N$ be i.i.d.\ observations of a
random vector $\mathbf{x} \in \mathbb{R}^M$ with mean zero and
population covariance $\boldsymbol{\Sigma} := \mathbb{E}[\mathbf{x}
\mathbf{x}^\top] \in \mathbb{R}^{M \times M}$.
The sample covariance is
\begin{equation}\label{eq:sample-cov}
 \hat{\mathbf{R}} \;=\; \frac{1}{N} \sum_{n=1}^N \mathbf{x}_n
 \mathbf{x}_n^\top.
\end{equation}
Let $G$ be a finite group with a unitary representation
$\rho: G \to U(M)$. The action of $G$ on $M \times M$ matrices is by
conjugation $\mathbf{A} \mapsto \rho(g) \mathbf{A} \rho(g)^\top$. The
\emph{commutant algebra} is
\begin{equation}\label{eq:commutant}
 \mathcal{A}_G \;=\; \bigl\{ \mathbf{A} \in \mathbb{R}^{M \times M} :
 \rho(g) \mathbf{A} \rho(g)^\top = \mathbf{A} \text{ for all } g \in G
 \bigr\},
\end{equation}
and $W_G := \mathcal{A}_G \cap \mathrm{Sym}(M, \mathbb{R})$ denotes
its symmetric-matrix part. The \emph{Reynolds projection} onto
$\mathcal{A}_G$, computed in the Frobenius inner product
$\langle \mathbf{A}, \mathbf{B}\rangle_F = \operatorname{tr}(\mathbf{A}^\top \mathbf{B})$
\citep[Chapter~5]{horn2012matrix}, is
\begin{equation}\label{eq:reynolds}
 \mathcal{P}_G(\mathbf{A}) \;=\; \frac{1}{|G|} \sum_{g \in G}
 \rho(g)\, \mathbf{A}\, \rho(g)^\top,
\end{equation}
with orthogonal complement $\mathcal{P}_G^\perp := \mathbf{I} -
\mathcal{P}_G$. The \emph{group-averaged sample covariance} is
$\hat{\mathbf{R}}_G := \mathcal{P}_G(\hat{\mathbf{R}})$.

The \emph{convex shrinkage family} considered in this paper is
\begin{equation}\label{eq:family}
 \hat{\mathbf{R}}_{\mathrm{AD}}(\alpha) \;=\;
 (1 - \alpha)\,\hat{\mathbf{R}} + \alpha\,\hat{\mathbf{R}}_G,
 \qquad \alpha \in [0, 1].
\end{equation}
The endpoints are the unstructured sample covariance ($\alpha = 0$)
and the group-averaged covariance $\hat{\mathbf{R}}_G$ ($\alpha = 1$).
We refer to $\alpha$ as the \emph{shrinkage intensity}, following
\citet{ledoit2004}.
The bias of the projected target at the population is
\begin{equation}\label{eq:B_G}
 \mathbf{B}_G \;:=\; \boldsymbol{\Sigma} - \mathcal{P}_G(\boldsymbol{\Sigma})
 \;=\; \mathcal{P}_G^\perp(\boldsymbol{\Sigma}),
\end{equation}
and the dimensionless commutativity residual is
\begin{equation}\label{eq:delta}
 \delta(G, \boldsymbol{\Sigma}) \;:=\;
 \|\mathbf{B}_G\|_F / \|\boldsymbol{\Sigma}\|_F.
\end{equation}

\subsection{The estimator and its calibrations}
\label{sec:theory-estimator}

Two distinguished members of the family at the endpoints of the index
set anchor the development. The first identifies the Reynolds
projection at $G = O(M)$ with the Ledoit-Wolf identity-target estimator.

\begin{proposition}[Ledoit-Wolf as the maximally symmetric endpoint]
\label{prop:lw_endpoint}
Let $\hat{\mathbf{R}} \in \mathrm{Sym}(M, \mathbb{R})$ be a real
symmetric matrix, let $G = O(M)$ act on $\hat{\mathbf{R}}$ by
conjugation $\rho(g) = g$, and let $\hat{\mathbf{R}}_{O(M)}$ denote
the corresponding group-averaged covariance computed with respect to
the Haar measure on $O(M)$. Then
\begin{equation}\label{eq:lw_endpoint}
 \hat{\mathbf{R}}_{O(M)} \;=\;
 \int_{O(M)} g\,\hat{\mathbf{R}}\, g^\top\, d\mu(g)
 \;=\; \frac{\mathrm{tr}\,\hat{\mathbf{R}}}{M}\,\mathbf{I},
\end{equation}
which is the scaled-identity shrinkage target of \citet{ledoit2004}.
\end{proposition}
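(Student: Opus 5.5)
The plan is to show that the Haar average $\mathbf{T} := \int_{O(M)} g\,\hat{\mathbf{R}}\,g^\top\,d\mu(g)$ commutes with every orthogonal matrix. We then identify the commutant of $O(M)$ acting on $\mathbb{R}^M$ as the scalar matrices, and fix the scalar with the trace. The integral is well defined because $O(M)$ is compact, so its Haar measure $\mu$ is a bi-invariant probability measure, and the integrand $g \mapsto g\hat{\mathbf{R}}g^\top$ is continuous and bounded.

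The first step is invariance. For any fixed $h \in O(M)$,
\[
h\mathbf{T}h^\top = \int_{O(M)} (hg)\,\hat{\mathbf{R}}\,(hg)^\top\,d\mu(g) = \mathbf{T}
\]
by left-invariance of $\mu$ under $g \mapsto hg$. Hence $\mathbf{T}$ lies in the commutant $\mathcal{A}_{O(M)}$. It is also symmetric, since each integrand is symmetric.

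The second step, which I expect to be the only substantive point, is to show that a real matrix commuting with all of $O(M)$ is a multiple of $\mathbf{I}$. I would argue this elementarily rather than citing Schur's lemma over $\mathbb{R}$, which would require checking that the representation is absolutely irreducible.
\begin{itemize}[leftmargin=*]
\item Take $h = \mathbf{I} - 2\mathbf{e}_i\mathbf{e}_i^\top$, the reflection in the $i$th coordinate. From $h\mathbf{T}h^\top = \mathbf{T}$, the $(i,j)$ entry for $j \neq i$ changes sign and must therefore vanish. So $\mathbf{T}$ is diagonal.
\item Take $h$ to be the permutation matrix of the transposition $(i\,j)$. Invariance forces $T_{ii} = T_{jj}$ for all $i,j$.
\end{itemize}
Therefore $\mathbf{T} = c\,\mathbf{I}$ for some scalar $c$. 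Note that this uses only the signed permutation subgroup of $O(M)$, so the same conclusion holds there.

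The final step determines $c$. Trace commutes with the integral, and $\operatorname{tr}(g\hat{\mathbf{R}}g^\top) = \operatorname{tr}\hat{\mathbf{R}}$ for orthogonal $g$. Since $\mu$ has total mass one, $\operatorname{tr}\mathbf{T} = \operatorname{tr}\hat{\mathbf{R}}$. Combined with $\operatorname{tr}(c\,\mathbf{I}) = cM$, this gives $c = \operatorname{tr}\hat{\mathbf{R}}/M$, which is \eqref{eq:lw_endpoint}.

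To close, this matches $\hat\sigma^2\,\mathbf{I}$ in \eqref{eq:lw2004_intro}. One could also note that $\mathbf{T}$ is the Frobenius-orthogonal projection of $\hat{\mathbf{R}}$ onto $\mathrm{span}\{\mathbf{I}\}$, which is consistent with the Reynolds-projection interpretation. That observation is not needed for the proof.
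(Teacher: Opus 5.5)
Your proof is correct. Its first and last steps (Haar invariance of the average, then fixing the scalar by the trace) are the same as the paper's; the difference is in the middle step. The paper concludes $A = c\,\mathbf{I}$ by citing irreducibility of the standard representation of $O(M)$ together with Schur's lemma. You instead compute directly with the coordinate reflections $\mathbf{I} - 2\mathbf{e}_i\mathbf{e}_i^\top$, which kill the off-diagonal entries, and the transposition matrices, which equalize the diagonal.

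The paper's route is shorter and carries over verbatim to any absolutely irreducible representation. As written, though, it leans on exactly the point you flag. Over $\mathbb{R}$, Schur's lemma only makes the commutant of an irreducible representation a real division algebra ($\mathbb{R}$, $\mathbb{C}$ or $\mathbb{H}$); compare $SO(2)$ acting on $\mathbb{R}^2$. So the paper's step also needs either absolute irreducibility or, more simply, the symmetry of the averaged matrix: its real eigenspaces are then nonzero invariant subspaces and must equal all of $\mathbb{R}^M$.

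Your argument is self-contained and makes explicit the normalization $\mu(O(M)) = 1$ that the trace step silently uses. It also proves more. The finite signed-permutation subgroup, of order $2^M M!$, already has scalar commutant, so the Ledoit--Wolf target is itself a finite-group Reynolds projection of the kind used throughout the paper, with no Haar integral needed.
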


\begin{proof}
Define $A := \int_{O(M)} g\,\hat{\mathbf{R}}\,g^\top\,d\mu(g)$.
For any $h \in O(M)$, the substitution $g' = h g$ together with
left-invariance of the Haar measure $\mu$ gives
\[
 h\,A\,h^\top \;=\; \int_{O(M)} h g\, \hat{\mathbf{R}}\, g^\top
 h^\top\, d\mu(g) \;=\; \int_{O(M)} g'\,\hat{\mathbf{R}}\,(g')^\top\,
 d\mu(g') \;=\; A.
\]
Since $h \in O(M)$ satisfies $h^\top = h^{-1}$, the identity
$h A h^\top = A$ is equivalent to $A h = h A$. Thus $A$ commutes with
every $h$ in the standard representation of $O(M)$ on $\mathbb{R}^M$.
The standard representation is irreducible over $\mathbb{R}$, so by
Schur's lemma \citep[][Proposition 4]{serre1977linear}
$A = c \mathbf{I}$ for some $c \in \mathbb{R}$. Taking
the trace of both sides of \eqref{eq:lw_endpoint} and using the
cyclic property $\mathrm{tr}(g\hat{\mathbf{R}}g^\top) = \mathrm{tr}(
\hat{\mathbf{R}})$ identifies the scalar as $c = \mathrm{tr}\,
\hat{\mathbf{R}}/M$.
\end{proof}

The second statement is the closed-form Frobenius-MSE optimum for the
shrinkage intensity, the structural-target generalization of
\citet[][Theorem 2.1]{ledoit2004}.

\begin{proposition}[Optimal shrinkage intensity in Frobenius MSE]
\label{prop:alpha_star}
Let $V_\perp := \mathbb{E}\bigl\| \mathcal{P}_G^\perp(\hat{\mathbf{R}} -
\boldsymbol{\Sigma}) \bigr\|_F^2$ and $D := \|\mathbf{B}_G\|_F^2$. The
shrinkage intensity that minimizes the Frobenius mean-squared error of
$\hat{\mathbf{R}}_{\mathrm{AD}}(\alpha)$ is
\begin{equation}\label{eq:alpha_star}
 \alpha^*_{\mathrm{MSE}} \;=\; \frac{V}{V + D},
\end{equation}
and the corresponding minimum MSE is
\begin{equation}\label{eq:mse_at_alpha_star}
 \mathbb{E}\bigl\|\hat{\mathbf{R}}_{\mathrm{AD}}(\alpha^*_{\mathrm{MSE}})
 - \boldsymbol{\Sigma}\bigr\|_F^2 \;=\;
 \mathbb{E}\bigl\|\mathcal{P}_G(\hat{\mathbf{R}} -
 \boldsymbol{\Sigma})\bigr\|_F^2 + \frac{V \cdot D}{V + D},
\end{equation}
strictly smaller than the MSE at either endpoint $\alpha = 0$ or
$\alpha = 1$ when $V_\perp > 0$ and $D > 0$.
\end{proposition}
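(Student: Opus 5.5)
The plan is to expand the Frobenius error of $\hat{\mathbf{R}}_{\mathrm{AD}}(\alpha)$ along the orthogonal splitting $\mathcal{P}_G \oplus \mathcal{P}_G^\perp$ and reduce the problem to a scalar quadratic in $\alpha$. Here $V$ in the statement is read as $V_\perp$. First we record that $\mathcal{P}_G$ is an orthogonal projection in $\langle\cdot,\cdot\rangle_F$. It is idempotent because $\mathcal{P}_G(\rho(h)\mathbf{A}\rho(h)^\top)=\mathcal{P}_G(\mathbf{A})$ by reindexing the group sum. It is self-adjoint because $\langle \rho(g)\mathbf{A}\rho(g)^\top,\mathbf{B}\rangle_F=\langle \mathbf{A},\rho(g^{-1})\mathbf{B}\rho(g^{-1})^\top\rangle_F$, using $\rho(g)^\top=\rho(g)^{-1}$ and the fact that $g\mapsto g^{-1}$ is a bijection of $G$.

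Write $\mathbf{E}:=\hat{\mathbf{R}}-\boldsymbol{\Sigma}$. Since $\hat{\mathbf{R}}_G=\mathcal{P}_G(\hat{\mathbf{R}})$, we have
\[
\hat{\mathbf{R}}_{\mathrm{AD}}(\alpha)-\boldsymbol{\Sigma}
=\mathcal{P}_G(\mathbf{E})+(1-\alpha)\,\mathcal{P}_G^\perp(\hat{\mathbf{R}})-\mathcal{P}_G^\perp(\boldsymbol{\Sigma})
=\mathcal{P}_G(\mathbf{E})+(1-\alpha)\,\mathcal{P}_G^\perp(\mathbf{E})-\alpha\,\mathbf{B}_G .
\]
The first term lies in $\mathcal{A}_G$ and the other two lie in its orthogonal complement. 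Pythagoras therefore gives
\[
\|\hat{\mathbf{R}}_{\mathrm{AD}}(\alpha)-\boldsymbol{\Sigma}\|_F^2=\|\mathcal{P}_G(\mathbf{E})\|_F^2+\|(1-\alpha)\mathcal{P}_G^\perp(\mathbf{E})-\alpha\mathbf{B}_G\|_F^2 .
\]

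Next we take expectations. Because $\mathbb{E}\hat{\mathbf{R}}=\boldsymbol{\Sigma}$ (the mean is known to be zero and the normalization is $1/N$), we have $\mathbb{E}\,\mathcal{P}_G^\perp(\mathbf{E})=0$ by linearity. Since $\mathbf{B}_G$ is deterministic, the cross term $-2\alpha(1-\alpha)\langle \mathcal{P}_G^\perp(\mathbf{E}),\mathbf{B}_G\rangle_F$ then has zero expectation. This yields
\[
\mathrm{MSE}(\alpha)=\mathbb{E}\|\mathcal{P}_G(\mathbf{E})\|_F^2+(1-\alpha)^2V_\perp+\alpha^2 D .
\]
This cross-term cancellation is the one step that needs care: it relies on the unbiasedness of $\hat{\mathbf{R}}$ and on the bias living entirely in the $\perp$ component. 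It would fail, for instance, under mean-centering with a $1/N$ normalization.

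What remains is elementary. The function $f(\alpha)=(1-\alpha)^2V_\perp+\alpha^2D$ is a convex quadratic, strictly convex when $V_\perp+D>0$. Setting $f'(\alpha)=0$ gives $\alpha^*=V_\perp/(V_\perp+D)\in[0,1]$, so the constraint $\alpha\in[0,1]$ is inactive. Substituting back gives
\[
f(\alpha^*)=\frac{D^2V_\perp+V_\perp^2D}{(V_\perp+D)^2}=\frac{V_\perp D}{V_\perp+D},
\]
which is \eqref{eq:mse_at_alpha_star}. For strictness, the endpoint values are $f(0)=V_\perp$ and $f(1)=D$. When both are positive, $V_\perp D/(V_\perp+D)<\min(V_\perp,D)$, because $V_\perp D<V_\perp(V_\perp+D)$ and $V_\perp D<D(V_\perp+D)$. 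The degenerate case $V_\perp+D=0$ can be noted separately, since there every $\alpha$ is optimal.
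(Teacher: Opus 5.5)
Your argument is correct and is essentially the paper's proof: the same decomposition $\mathcal{P}_G(\mathbf{E}) + (1-\alpha)\,\mathcal{P}_G^\perp(\mathbf{E}) - \alpha\,\mathbf{B}_G$, the cross term removed by unbiasedness of $\hat{\mathbf{R}}$, and minimization of the parabola $(1-\alpha)^2 V_\perp + \alpha^2 D$. Your check that $\mathcal{P}_G$ is a Frobenius-orthogonal projection, and your explicit comparison with the endpoints, spell out steps the paper takes for granted or attributes to strict convexity.
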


\begin{proof}
Let $\boldsymbol{\eta} := \hat{\mathbf{R}} - \boldsymbol{\Sigma}$, so
$\mathbb{E}[\boldsymbol{\eta}] = \mathbf{0}$. Direct computation gives
\[
 \hat{\mathbf{R}}_{\mathrm{AD}}(\alpha) - \boldsymbol{\Sigma} =
 \mathcal{P}_G(\boldsymbol{\eta}) + (1 - \alpha)
 \mathcal{P}_G^\perp(\boldsymbol{\eta}) - \alpha\,\mathbf{B}_G,
\]
where $\mathcal{P}_G(\boldsymbol{\eta}) \in \mathcal{A}_G$ and the
remaining two summands lie in $\mathcal{A}_G^\perp$. Squared
Frobenius norm splits orthogonally; the cross term between
$\mathcal{P}_G^\perp(\boldsymbol{\eta})$ and $\mathbf{B}_G$ vanishes
in expectation because $\mathbb{E}[\boldsymbol{\eta}] = \mathbf{0}$.
Hence
\[
 \mathbb{E}\bigl\| \hat{\mathbf{R}}_{\mathrm{AD}}(\alpha) -
 \boldsymbol{\Sigma}\bigr\|_F^2 = \mathbb{E}\bigl\|
 \mathcal{P}_G(\boldsymbol{\eta})\bigr\|_F^2 + (1 - \alpha)^2 V_\perp +
 \alpha^2 D.
\]
The first term is independent of $\alpha$; differentiating the rest
in $\alpha$ and setting equal to zero gives $\alpha^*(V_\perp + D) = V_\perp$,
which is \eqref{eq:alpha_star}. Substituting back gives
\eqref{eq:mse_at_alpha_star}; strict improvement over either endpoint
follows from strict convexity of the parabola $(1 - \alpha)^2 V_\perp +
\alpha^2 D$ when $V_\perp, D > 0$.
\end{proof}

The denominator of \eqref{eq:alpha_star} admits an observable
representation
\begin{equation}\label{eq:VplusD}
 V_\perp + D \;=\; \mathbb{E}\bigl\| \hat{\mathbf{R}} -
 \hat{\mathbf{R}}_G \bigr\|_F^2,
\end{equation}
and replacing $V_\perp$ and $V_\perp + D$ by their sample analogs gives the
\emph{closed-form plug-in calibration}
\begin{equation}\label{eq:V_hat}
 \widehat{V_\perp} \;:=\; \frac{1}{N^2} \sum_{k=1}^N \bigl\|
 \mathcal{P}_G^\perp(\mathbf{x}_k \mathbf{x}_k^\top) -
 \mathcal{P}_G^\perp(\hat{\mathbf{R}}) \bigr\|_F^2,
\end{equation}
\begin{equation}\label{eq:VplusD_hat}
 \widehat{V_\perp + D} \;:=\; \bigl\| \hat{\mathbf{R}} -
 \hat{\mathbf{R}}_G \bigr\|_F^2,
\end{equation}
\begin{equation}\label{eq:alpha_hat}
 \hat\alpha^*_{\mathrm{MSE}} \;=\; \min\!\Bigl(1,\,
 \max\!\bigl(0,\, \widehat{V_\perp} / \widehat{V_\perp + D}\bigr)\Bigr).
\end{equation}
The MSE plug-in $\hat\alpha^*_{\mathrm{MSE}}$ is the closed-form
counterpart to the cross-validated calibration introduced below.
Consistency of the plug-in for the population-optimal $\alpha^*$ is
the LW2004 Theorem~3.3 analog for the structured-target setting; we
record it next.

\begin{proposition}[Consistency of the closed-form MSE plug-in]
\label{prop:mse_alpha_consistency}
Assume $\mathbf{x}_1, \ldots, \mathbf{x}_N$ are i.i.d.\ with mean
zero, population covariance $\boldsymbol{\Sigma}$, and bounded fourth
moments. Then $\widehat{V_\perp} \to V_\perp$ and $\widehat{V_\perp + D} \to V_\perp + D$ in
probability as $N \to \infty$, and consequently the closed-form
plug-in $\hat\alpha^*_{\mathrm{MSE}}$ defined by
\eqref{eq:V_hat}--\eqref{eq:alpha_hat} satisfies
\begin{equation}\label{eq:alpha_hat_convergence}
 \hat\alpha^*_{\mathrm{MSE}} \;\to\; \alpha^*_{\mathrm{MSE}}
 \quad \text{in probability as } N \to \infty.
\end{equation}
\end{proposition}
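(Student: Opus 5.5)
The plan has three steps, with $M$ and $G$ held fixed. First establish the two convergences separately, then pass them through the continuous map $(a,b)\mapsto \min(1,\max(0,a/b))$. Write $\boldsymbol{\eta} := \hat{\mathbf{R}} - \boldsymbol{\Sigma}$ and $\mathbf{Z}_k := \mathbf{x}_k\mathbf{x}_k^\top - \boldsymbol{\Sigma}$. The $\mathbf{Z}_k$ are i.i.d., mean zero, and have finite second moments entrywise by the bounded-fourth-moment assumption. One caveat up front: $V_\perp = N^{-1}\,\mathbb{E}\|\mathcal{P}_G^\perp(\mathbf{Z}_1)\|_F^2 =: v_\perp/N$ tends to zero. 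So the meaningful form of the first claim is the scaled statement $N\widehat{V_\perp} - v_\perp \to 0$, which implies $\widehat{V_\perp} - V_\perp \to 0$.

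\emph{Step 1 (the numerator).} Expand $N\widehat{V_\perp} = N^{-1}\sum_k \|\mathcal{P}_G^\perp(\mathbf{Z}_k)\|_F^2 - \|\mathcal{P}_G^\perp(\boldsymbol{\eta})\|_F^2$. This is the usual variance-about-the-mean identity, applied after linearity of $\mathcal{P}_G^\perp$ and $\mathcal{P}_G^\perp(\hat{\mathbf{R}}) - \mathcal{P}_G^\perp(\boldsymbol{\Sigma}) = \mathcal{P}_G^\perp(\boldsymbol{\eta})$. The first term converges in probability to $v_\perp$ by the weak law of large numbers. The second term is $O_p(1/N)$, since $\mathbb{E}\|\boldsymbol{\eta}\|_F^2 = O(1/N)$ and $\mathcal{P}_G^\perp$ is a contraction. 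Hence $N\widehat{V_\perp} \to v_\perp$ in probability.

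\emph{Step 2 (the denominator).} Since $\hat{\mathbf{R}} - \hat{\mathbf{R}}_G = \mathcal{P}_G^\perp(\boldsymbol{\eta}) + \mathbf{B}_G$, we get
\[
\widehat{V_\perp + D} - D \;=\; \|\mathcal{P}_G^\perp(\boldsymbol{\eta})\|_F^2 + 2\langle \mathcal{P}_G^\perp(\boldsymbol{\eta}), \mathbf{B}_G\rangle_F .
\]
The first term is $O_p(1/N)$. The inner product has mean zero and variance $O(\|\mathbf{B}_G\|_F^2/N)$, so by Chebyshev it is $O_p(N^{-1/2})$. Since $V_\perp \to 0$ as well, this gives $\widehat{V_\perp + D} - (V_\perp + D) \to 0$ in probability. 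This step also confirms the observable representation \eqref{eq:VplusD} in expectation.

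\emph{Step 3 (the ratio).} Now split on whether $D = \|\mathbf{B}_G\|_F^2$ is positive.
\begin{itemize}
\item If $D > 0$, the denominator converges to $D > 0$ and the numerator converges to $0$. Both $\hat\alpha^*_{\mathrm{MSE}}$ and $\alpha^*_{\mathrm{MSE}} = V_\perp/(V_\perp + D)$ then tend to $0$. Their difference tends to zero in probability by the continuous mapping theorem and continuity of the clipping.
\item If $D = 0$, then $\alpha^*_{\mathrm{MSE}} = 1$ and the ratio becomes $N\widehat{V_\perp} / (N\|\mathcal{P}_G^\perp(\boldsymbol{\eta})\|_F^2)$. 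The numerator is consistent for $v_\perp$. By the multivariate CLT, however, the denominator converges in distribution to $\|\mathcal{P}_G^\perp(\mathbf{W})\|_F^2$, with $\mathbf{W}$ Gaussian, which is a nondegenerate random variable with mean $v_\perp$ rather than the constant $v_\perp$.
\end{itemize}

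The main obstacle is therefore the case $D = 0$, where I do not expect convergence to the constant $1$ to hold without further structure. I would first try to salvage it in two ways. One is to note that $\hat\alpha^*_{\mathrm{MSE}}$ is clipped at $1$, so it converges to $1$ only if the limiting ratio is at least $1$ almost surely, which is false in general. The other is to state the proposition under the standing assumption $D > 0$, the misspecified regime that motivates the shrinkage step. This is the form I would actually prove, flagging $D = 0$ as giving only distributional convergence of $\hat\alpha^*_{\mathrm{MSE}}$, in the spirit of the LW2004 Theorem~3.3 analog.
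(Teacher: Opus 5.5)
Your proposal is correct, and it is more careful than the paper's own proof. The two arguments differ exactly at the step that matters.

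The paper controls $\widehat{V_\perp}$ and $\widehat{V_\perp + D}$ only in absolute terms: it bounds bias and variance by $O(1/N)$ and applies Markov. It then applies the continuous mapping theorem to $(a,b)\mapsto\min(1,\max(0,a/b))$ ``at $(V_\perp, V_\perp + D)$''. For this it cites a denominator-positivity assumption that does not appear in the statement.

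Since $V_\perp = v_\perp/N \to 0$, those absolute bounds are no smaller than $V_\perp$ itself, and the only fixed limit point is $(0, D)$. The continuous-mapping step is therefore legitimate exactly when $D > 0$, which is your first case.

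Your rescaling by $N$ supplies the relative control the paper's argument lacks. The key ingredients are the exact identity $N\widehat{V_\perp} = N^{-1}\sum_k \|\mathcal{P}_G^\perp(\mathbf{Z}_k)\|_F^2 - \|\mathcal{P}_G^\perp(\boldsymbol{\eta})\|_F^2$ and the Chebyshev bound on $\langle \mathcal{P}_G^\perp(\boldsymbol{\eta}), \mathbf{B}_G\rangle_F$.

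This also shows that the matched case $D = 0$ is a genuine failure, not a technicality. There $\alpha^*_{\mathrm{MSE}} = 1$, but by the CLT and Slutsky, $\hat\alpha^*_{\mathrm{MSE}}$ converges in distribution to $\min\bigl(1,\, v_\perp/\|\mathcal{P}_G^\perp(\mathbf{W})\|_F^2\bigr)$, which is not the constant $1$. So the proposition as stated is false when $D = 0$, and the paper's proof has a gap precisely there.

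Three small refinements.
\begin{itemize}
\item Your ``false in general'' can be sharpened to ``false whenever $v_\perp > 0$''. The variable $\|\mathcal{P}_G^\perp(\mathbf{W})\|_F^2$ is a positively weighted sum of independent $\chi^2_1$ variables whose weights sum to $v_\perp$. It is therefore nondegenerate with mean $v_\perp$, so it exceeds $v_\perp$ with positive probability. On that event the clipped limit lies strictly below $1$. If $v_\perp = 0 = D$, then $\alpha^*_{\mathrm{MSE}}$ is undefined.
\item In the $D > 0$ case, add the remark that $\Pr\bigl(\widehat{V_\perp + D} > D/2\bigr) \to 1$, so the ratio is well defined with probability tending to one.
\item The hypothesis $D > 0$ is also what the paper's later oracle inequality effectively imposes through $V_\perp + D \ge c_0 > 0$, since $V_\perp \to 0$ with $M$ fixed.
\end{itemize}
With $D > 0$ added to the hypotheses, your argument is a complete proof.
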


\begin{proof}
We argue convergence in probability of $\widehat{V_\perp}$ and
$\widehat{V_\perp + D}$ separately, then combine via continuous mapping.

\emph{Convergence of $\widehat{V_\perp + D}$.} By definition,
$\widehat{V_\perp + D} = \|\hat{\mathbf{R}} - \hat{\mathbf{R}}_G\|_F^2 =
\|\mathcal{P}_G^\perp(\hat{\mathbf{R}})\|_F^2$. Adding and subtracting
$\boldsymbol{\Sigma}$ inside the projection,
\[
 \mathcal{P}_G^\perp(\hat{\mathbf{R}}) =
 \mathcal{P}_G^\perp(\hat{\mathbf{R}} - \boldsymbol{\Sigma}) +
 \mathcal{P}_G^\perp(\boldsymbol{\Sigma}) =: \mathbf{u}_\perp +
 \mathbf{B}_G,
\]
where $\mathbf{u}_\perp$ is mean-zero and $\mathbf{B}_G$ is
non-random. Squared Frobenius norm splits orthogonally in expectation
(the cross term $\langle \mathbf{u}_\perp, \mathbf{B}_G\rangle_F$ has
zero expectation), giving $\mathbb{E}[\widehat{V_\perp + D}] =
\mathbb{E}\|\mathbf{u}_\perp\|_F^2 + \|\mathbf{B}_G\|_F^2 = V_\perp + D$.
Under the bounded-fourth-moment assumption, the variance of the
empirical squared-norm $\|\mathbf{u}_\perp\|_F^2$ is $O(1/N)$ by the
sample-variance bound for Frobenius-norm functionals of independent
sums, so $\widehat{V_\perp + D} \to V_\perp + D$ in probability by Markov.

\emph{Convergence of $\widehat{V_\perp}$.} The sample analog
\eqref{eq:V_hat} can be written
\[
 \widehat{V_\perp} = \frac{1}{N^2} \sum_{k=1}^N \bigl\| \mathcal{P}_G^\perp
 (\mathbf{x}_k \mathbf{x}_k^\top - \hat{\mathbf{R}}) \bigr\|_F^2,
\]
a degree-two U-statistic in the $N$ independent observations. Under
the fourth-moment assumption, $\mathbb{E}[\widehat{V_\perp}] = V_\perp + O(1/N)$ by
the standard bias correction relating the sample-variance estimator
to its population analog, and $\mathrm{Var}(\widehat{V_\perp}) = O(1/N)$ by the
Hoeffding decomposition for U-statistics. Markov's inequality gives
$\widehat{V_\perp} \to V_\perp$ in probability.

\emph{Convergence of the ratio.} The map $(a, b) \mapsto \min(1,
\max(0, a/b))$ is continuous at $(V_\perp, V_\perp + D)$ whenever
$V_\perp + D > 0$, which holds by the denominator-positivity
assumption. The continuous mapping theorem then yields
$\hat\alpha^*_{\mathrm{MSE}} \to \alpha^*_{\mathrm{MSE}}$ in
probability, establishing \eqref{eq:alpha_hat_convergence}. The
quantitative oracle MSE-loss gap rate is given separately as
Theorem~\ref{thm:oracle} below. The argument parallels
\citet[][Theorems 3.3--3.4]{ledoit2004}, which establishes the same
parity for the identity-target shrinkage intensity.
\end{proof}

The corresponding held-out calibration is the $K$-fold
cross-validation minimizer of the Gaussian negative log-likelihood
on validation folds. Define, for each fold $k$, the training
estimator $\boldsymbol{\Sigma}_\alpha^{(k)}(G) := (1 - \alpha)
\hat{\mathbf{R}}^{(k)}_{\mathrm{tr}} + \alpha\,\mathcal{P}_G(
\hat{\mathbf{R}}^{(k)}_{\mathrm{tr}})$ and the validation sample
covariance $\hat{\mathbf{R}}^{(k)}_{\mathrm{te}}$, and write
\begin{equation}\label{eq:nll_foc_hold}
 \widehat{L}_{\mathrm{ho}}(\alpha; G) \;=\; \frac{1}{K} \sum_{k=1}^K
 \Bigl[ \tfrac{1}{2}\,\log\det\boldsymbol{\Sigma}_\alpha^{(k)}(G)
 + \tfrac{1}{2}\,\mathrm{tr}\!\bigl(\boldsymbol{\Sigma}_\alpha^{(k)}(G)^{-1}
 \hat{\mathbf{R}}^{(k)}_{\mathrm{te}} \bigr) \Bigr].
\end{equation}

The behaviour of the population analog of \eqref{eq:nll_foc_hold} on
the convex shrinkage family is the prerequisite for understanding the
finite-sample held-out calibration; we record it next, before the
finite-sample regime statement that depends on it.

\begin{proposition}[Population NLL minimum]\label{prop:nll_pop_min}
For any population covariance $\boldsymbol{\Sigma}$ with
$\mathbf{B}_G = \boldsymbol{\Sigma} - \mathcal{P}_G(\boldsymbol{\Sigma})
\neq 0$, the population NLL $L_{\mathrm{pop}}(\alpha)$ is strictly
increasing on $(0, 1]$, and
\[
 \alpha^*_{\mathrm{NLL,pop}} \;=\;
 \arg\min_{\alpha \in [0, 1]} L_{\mathrm{pop}}(\alpha) \;=\; 0.
\]
At the boundary $\mathbf{B}_G = 0$, the NLL is constant in
$\alpha$ and any $\alpha \in [0, 1]$ is optimal; the unique-optimum
$\alpha = 1$ identification used in
Theorem~\ref{thm:nll_alpha} below arises only via the limit of
finite-sample held-out NLL minimizers under $\delta \to 0$.
\end{proposition}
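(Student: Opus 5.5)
The plan is to fix the population analog of \eqref{eq:nll_foc_hold} explicitly and then compute its derivative in $\alpha$ in closed form. Replacing the training and validation sample covariances by their population limits gives the population blend
\[
\boldsymbol{\Sigma}_\alpha := (1-\alpha)\boldsymbol{\Sigma} + \alpha\,\mathcal{P}_G(\boldsymbol{\Sigma}) = \boldsymbol{\Sigma} - \alpha\,\mathbf{B}_G
\]
and the criterion
\[
L_{\mathrm{pop}}(\alpha) = \tfrac{1}{2}\log\det\boldsymbol{\Sigma}_\alpha + \tfrac{1}{2}\,\mathrm{tr}\bigl(\boldsymbol{\Sigma}_\alpha^{-1}\boldsymbol{\Sigma}\bigr).
\]
I will assume $\boldsymbol{\Sigma}$ is positive definite, which is implicit for the NLL to be finite. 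Then $\mathcal{P}_G(\boldsymbol{\Sigma})$ is also positive definite, being an average of congruences $\rho(g)\boldsymbol{\Sigma}\rho(g)^\top$ of a positive definite matrix. Hence every $\boldsymbol{\Sigma}_\alpha$ with $\alpha\in[0,1]$ is positive definite, as a convex combination of positive definite matrices. This makes $L_{\mathrm{pop}}$ smooth on $[0,1]$.

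The key step is the derivative. Using $\tfrac{d}{d\alpha}\boldsymbol{\Sigma}_\alpha = -\mathbf{B}_G$, together with $\tfrac{d}{d\alpha}\log\det\boldsymbol{\Sigma}_\alpha = \mathrm{tr}(\boldsymbol{\Sigma}_\alpha^{-1}\dot{\boldsymbol{\Sigma}}_\alpha)$ and $\tfrac{d}{d\alpha}\boldsymbol{\Sigma}_\alpha^{-1} = -\boldsymbol{\Sigma}_\alpha^{-1}\dot{\boldsymbol{\Sigma}}_\alpha\boldsymbol{\Sigma}_\alpha^{-1}$, I get
\[
2L_{\mathrm{pop}}'(\alpha) = -\mathrm{tr}(\boldsymbol{\Sigma}_\alpha^{-1}\mathbf{B}_G) + \mathrm{tr}(\boldsymbol{\Sigma}_\alpha^{-1}\mathbf{B}_G\boldsymbol{\Sigma}_\alpha^{-1}\boldsymbol{\Sigma}).
\]
The trick is to substitute $\boldsymbol{\Sigma} = \boldsymbol{\Sigma}_\alpha + \alpha\mathbf{B}_G$ in the second trace. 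The $\boldsymbol{\Sigma}_\alpha$ part reproduces $\mathrm{tr}(\boldsymbol{\Sigma}_\alpha^{-1}\mathbf{B}_G)$ and cancels the first term, leaving
\[
2L_{\mathrm{pop}}'(\alpha) = \alpha\,\mathrm{tr}\bigl(\boldsymbol{\Sigma}_\alpha^{-1}\mathbf{B}_G\boldsymbol{\Sigma}_\alpha^{-1}\mathbf{B}_G\bigr) = \alpha\,\bigl\|\boldsymbol{\Sigma}_\alpha^{-1/2}\mathbf{B}_G\boldsymbol{\Sigma}_\alpha^{-1/2}\bigr\|_F^2 ,
\]
where the last equality uses the symmetry of $\mathbf{B}_G$. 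Since $\boldsymbol{\Sigma}_\alpha^{-1/2}$ is invertible, this is strictly positive for every $\alpha\in(0,1]$ whenever $\mathbf{B}_G\neq 0$. Therefore $L_{\mathrm{pop}}$ is strictly increasing on $(0,1]$, and by continuity at $0$ the unique minimizer over $[0,1]$ is $\alpha=0$.

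As a cross-check, the same conclusion follows from Gibbs' inequality. The quantity $L_{\mathrm{pop}}(\alpha)-L_{\mathrm{pop}}(0)$ is the Gaussian KL divergence $\mathrm{KL}(\mathcal{N}(0,\boldsymbol{\Sigma})\,\|\,\mathcal{N}(0,\boldsymbol{\Sigma}_\alpha))$. It is nonnegative and vanishes only when $\boldsymbol{\Sigma}_\alpha=\boldsymbol{\Sigma}$, i.e.\ only when $\alpha\mathbf{B}_G=0$.

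For the boundary case $\mathbf{B}_G=0$, we have $\boldsymbol{\Sigma}_\alpha=\boldsymbol{\Sigma}$ for all $\alpha$, so $L_{\mathrm{pop}}$ is constant and every $\alpha\in[0,1]$ is optimal. The remark about the $\alpha=1$ identification in Theorem~\ref{thm:nll_alpha} is interpretive and needs no proof here.

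The main obstacle is definitional rather than computational. I must pin down that the population analog plugs $\boldsymbol{\Sigma}$ into both the training slot and the validation slot. This removes the finite-sample variance of $\hat{\mathbf{R}}^{(k)}_{\mathrm{tr}}$, which is exactly what would otherwise favour $\alpha>0$. I also need the positive-definiteness of $\boldsymbol{\Sigma}_\alpha$ throughout $[0,1]$, which justifies the inverse and the square root in the derivative identity.
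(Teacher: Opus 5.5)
Your proposal is correct and matches the paper's proof: you use the same population blend $\boldsymbol{\Sigma}-\alpha\mathbf{B}_G$, the same substitution $\boldsymbol{\Sigma}=\boldsymbol{\Sigma}_\alpha+\alpha\mathbf{B}_G$ that reduces the derivative to $\tfrac{\alpha}{2}\,\mathrm{tr}(\boldsymbol{\Sigma}_\alpha^{-1}\mathbf{B}_G\boldsymbol{\Sigma}_\alpha^{-1}\mathbf{B}_G)$, and the same constancy observation at $\mathbf{B}_G=0$. Your explicit check that $\boldsymbol{\Sigma}_\alpha$ stays positive definite on $[0,1]$ is a small addition the paper leaves implicit, as is the Gaussian KL cross-check, which gives uniqueness of the minimizer but not monotonicity.
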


\begin{proof}
The population NLL is $L_{\mathrm{pop}}(\alpha) = \tfrac{1}{2}
\bigl(\log\det\boldsymbol{\Sigma}_\alpha^{\mathrm{pop}} +
\mathrm{tr}\bigl((\boldsymbol{\Sigma}_\alpha^{\mathrm{pop}})^{-1}
\boldsymbol{\Sigma}\bigr)\bigr)$ with $\boldsymbol{\Sigma}_\alpha^{
\mathrm{pop}} = \boldsymbol{\Sigma} - \alpha\mathbf{B}_G$.
Differentiating in $\alpha$ and using $\partial_\alpha
\boldsymbol{\Sigma}_\alpha^{\mathrm{pop}} = -\mathbf{B}_G$ gives
\[
 \frac{dL_{\mathrm{pop}}}{d\alpha} = \frac{\alpha}{2}\,
 \mathrm{tr}\bigl((\boldsymbol{\Sigma}_\alpha^{\mathrm{pop}})^{-1}
 \mathbf{B}_G\,(\boldsymbol{\Sigma}_\alpha^{\mathrm{pop}})^{-1}
 \mathbf{B}_G\bigr),
\]
where the substitution $\boldsymbol{\Sigma} =
\boldsymbol{\Sigma}_\alpha^{\mathrm{pop}} + \alpha\mathbf{B}_G$
collapses the leading-order trace term. The trace is the
$(\boldsymbol{\Sigma}_\alpha^{\mathrm{pop}})^{-1}$-weighted Frobenius
squared norm of $\mathbf{B}_G$ on the positive-definite
Mahalanobis metric, hence strictly positive when $\mathbf{B}_G
\neq 0$ and $\alpha > 0$. The unique zero on $[0, 1]$ is therefore
$\alpha = 0$, and $L_{\mathrm{pop}}$ is strictly increasing on
$(0, 1]$. At $\mathbf{B}_G = 0$, the derivative is identically
zero; $L_{\mathrm{pop}}$ is constant on $[0, 1]$.
\end{proof}

\begin{lemma}[Stein-loss leading-order expansion]
\label{lem:stein_leading}
Let $Y$ be a Hermitian random matrix with $\mathbb{E}[Y] =
\boldsymbol{\Sigma} \succ 0$, let $Z := Y - \boldsymbol{\Sigma}$
(mean-zero perturbation), and define the Stein loss $\mathcal{L}_S(Y;
\boldsymbol{\Sigma}) := \mathrm{tr}(Y^{-1}\boldsymbol{\Sigma}) - \log
\det(Y^{-1}\boldsymbol{\Sigma}) - M$. Then
\begin{equation}\label{eq:stein_leading}
 \mathbb{E}\bigl[\mathcal{L}_S(Y; \boldsymbol{\Sigma})\bigr] =
 \tfrac{1}{2}\,\mathbb{E}\bigl\|\boldsymbol{\Sigma}^{-1/2} Z\,
 \boldsymbol{\Sigma}^{-1/2}\bigr\|_F^2 + O\bigl(\mathbb{E}\|Z\|_F^3
 \bigr),
\end{equation}
provided $\sigma_{\min}(\boldsymbol{\Sigma}) \geq \lambda > 0$ and the
higher moments are bounded.
\end{lemma}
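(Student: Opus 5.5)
The plan is to whiten and reduce everything to a scalar function of eigenvalues. Set $W := \boldsymbol{\Sigma}^{-1/2} Z \boldsymbol{\Sigma}^{-1/2}$, which is Hermitian with $\mathbb{E}[W] = 0$. Then $Y = \boldsymbol{\Sigma}^{1/2}(\mathbf{I} + W)\boldsymbol{\Sigma}^{1/2}$, and by cyclicity of the trace and multiplicativity of the determinant,
\[
 \mathcal{L}_S(Y;\boldsymbol{\Sigma}) \;=\; \mathrm{tr}\bigl((\mathbf{I}+W)^{-1}\bigr) + \log\det(\mathbf{I}+W) - M \;=\; \sum_{i=1}^M f(w_i),
\]
where $w_i$ are the eigenvalues of $W$ and $f(w) := (1+w)^{-1} + \log(1+w) - 1$. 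This reduces the matrix statement to a scalar Taylor expansion applied eigenvalue by eigenvalue, which is legitimate because $\mathbf{I}+W$ is diagonalized in the same basis as $W$.

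Next, expand $f$ at $w=0$. We have $f(0)=0$ and $f'(w) = -(1+w)^{-2} + (1+w)^{-1}$, so $f'(0)=0$. Also $f''(w) = 2(1+w)^{-3} - (1+w)^{-2}$, so $f''(0)=1$. Hence $f(w) = \tfrac12 w^2 + r(w)$, with $|r(w)| \le C|w|^3$ on $|w|\le \tfrac12$, where $C$ bounds $f'''/6$ there. Summing over eigenvalues gives $\sum_i \tfrac12 w_i^2 = \tfrac12\|W\|_F^2$, which is the leading term in \eqref{eq:stein_leading}. For the remainder, $\sum_i |w_i|^3 \le \|W\|_F^3 \le \lambda^{-3}\|Z\|_F^3$, using $\|\boldsymbol{\Sigma}^{-1/2}\|_{\mathrm{op}} \le \lambda^{-1/2}$. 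Taking expectations then yields the claim on the event $\{\|W\|_{\mathrm{op}} \le \tfrac12\}$. The linear term vanishes identically in the pathwise sense, not merely in expectation, because $f'(0)=0$. This matches the statement's use of $\mathbb{E}[Z]=0$ only as a normalization.

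The main obstacle is the complement event $\{\|W\|_{\mathrm{op}} > \tfrac12\}$. On this event the expansion fails, and $f$ blows up as $w_i \to -1$, i.e.\ as $Y$ approaches singularity. I would handle it by reading the hypothesis ``higher moments are bounded'' as including a uniform bound on $\mathbb{E}[\|Y^{-1}\|_{\mathrm{op}}^{p}]$ for some $p$, or equivalently on the negative moments of $\lambda_{\min}(\mathbf{I}+W)$. On the bad event, $f(w) \le (1+w)^{-1} + |w|$ is controlled by such negative moments. Cauchy--Schwarz then gives a contribution of at most $\sqrt{\mathbb{E}[\sum_i f(w_i)^2]}\,\sqrt{\Pr(\|W\|_{\mathrm{op}}>\tfrac12)}$. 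Markov's inequality bounds the probability by $8\,\mathbb{E}\|W\|_F^3 \le 8\lambda^{-3}\mathbb{E}\|Z\|_F^3$, so this piece is again $O(\mathbb{E}\|Z\|_F^3)$ up to the square root. To avoid the square-root loss, I would apply Hölder's inequality with a higher Markov moment, or state the remainder as $O(\mathbb{E}\|Z\|_F^3)$ under the assumption that the relevant moments are bounded. The bad-event part of $\tfrac12\mathbb{E}\|W\|_F^2$ is handled the same way.

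Finally, the constants in the $O(\cdot)$ depend only on $\lambda$ and these moment bounds, which is exactly the proviso in the statement. The Hermitian (rather than real symmetric) setting changes nothing, since only the real spectrum of $W$ enters.
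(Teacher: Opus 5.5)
Your proposal is correct and follows the same route as the paper. Both whiten to $W=\boldsymbol{\Sigma}^{-1/2}Z\boldsymbol{\Sigma}^{-1/2}$ and expand $\mathrm{tr}((\mathbf{I}+W)^{-1})+\log\det(\mathbf{I}+W)-M$ to second order, so the linear terms cancel and the quadratic term is $\tfrac12\|W\|_F^2$; your scalar $f$, with $f'''(0)=-4$, reproduces the paper's cubic term $-\tfrac23\mathrm{tr}(W^3)$.

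Your eigenvalue-wise remainder bound and your explicit treatment of the event $\{\|W\|_{\mathrm{op}}>\tfrac12\}$ are more careful than the paper, which only asserts that $\|W\|<1$ holds with high probability. Your reading of ``higher moments are bounded'' as including negative moments of $\lambda_{\min}(Y)$ is genuinely necessary: otherwise the left-hand side can be infinite, e.g.\ when $Y$ is singular with positive probability.
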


\begin{proof}
Write $Y = \boldsymbol{\Sigma}^{1/2}(I + W) \boldsymbol{\Sigma}^{1/2}$
with $W := \boldsymbol{\Sigma}^{-1/2} Z\, \boldsymbol{\Sigma}^{-1/2}$
Hermitian and $\mathbb{E}[W] = 0$. Then $\mathrm{tr}(Y^{-1}
\boldsymbol{\Sigma}) = \mathrm{tr}((I + W)^{-1})$ and $\log\det(Y^{-1}
\boldsymbol{\Sigma}) = -\log\det(I + W)$. Expand both for $\|W\| <
1$ (which holds with high probability under the higher-moment bound
combined with $\sigma_{\min}(\boldsymbol{\Sigma}) \geq \lambda$):
\begin{align*}
 \mathrm{tr}((I + W)^{-1}) &= M - \mathrm{tr}(W) + \mathrm{tr}(W^2)
 - \mathrm{tr}(W^3) + O(\|W\|^4), \\
 \log\det(I + W) &= \mathrm{tr}(W) - \tfrac{1}{2}\mathrm{tr}(W^2) +
 \tfrac{1}{3}\mathrm{tr}(W^3) - O(\|W\|^4).
\end{align*}
Adding and subtracting $M$, the linear terms in $W$ cancel and the
quadratic terms combine:
\[
 \mathcal{L}_S(Y; \boldsymbol{\Sigma}) = \tfrac{1}{2}\mathrm{tr}(W^2)
 - \tfrac{2}{3}\mathrm{tr}(W^3) + O(\|W\|^4) = \tfrac{1}{2}\|W\|_F^2
 + O(\|W\|^3),
\]
using $\mathrm{tr}(W^2) = \|W\|_F^2$ for $W$ Hermitian. Taking
expectation and substituting $W = \boldsymbol{\Sigma}^{-1/2} Z\,
\boldsymbol{\Sigma}^{-1/2}$ gives \eqref{eq:stein_leading}.
\end{proof}

\begin{theorem}[NLL-optimal shrinkage intensity]\label{thm:nll_alpha}
Let $\hat{\mathbf{R}}^{(k)}_{\mathrm{tr}}$ and
$\hat{\mathbf{R}}^{(k)}_{\mathrm{te}}$ be the training and held-out
sample covariances on disjoint folds of comparable sizes drawn
i.i.d.\ from a sub-Gaussian distribution with population covariance
$\boldsymbol{\Sigma}$ of bounded condition number. Let
$\hat\alpha^*_{\mathrm{NLL}}$ be the unique solution of the first-order
condition for \eqref{eq:nll_foc_hold} on $[0, 1]$. Two regimes apply.

\noindent\emph{Mismatched-population regime.} When $\boldsymbol{\Sigma}
\notin \mathcal{A}_G$ (i.e., $\mathbf{B}_G := \boldsymbol{\Sigma}
- \mathcal{P}_G(\boldsymbol{\Sigma}) \neq 0$), the population NLL is
strictly increasing in $\alpha$ on $(0, 1]$ with minimum at $\alpha
= 0$ (Proposition~\ref{prop:nll_pop_min}). The empirical minimizer
satisfies $\hat\alpha^*_{\mathrm{NLL}} \to 0$ in probability as $N
\to \infty$, with regret $L_{\mathrm{pop}}(\hat\alpha^*_{\mathrm{NLL}})
- L_{\mathrm{pop}}(0) = O_P(M^2/N)$.

\noindent\emph{Matched-population regime.} When $\boldsymbol{\Sigma}
\in \mathcal{A}_G$ exactly ($\mathbf{B}_G = 0$), the population
NLL is constant in $\alpha$ on $[0, 1]$, but the expected held-out
NLL is, to leading order in noise scale and provided $|G| > 1$ acts
non-trivially, strictly decreasing in $\alpha$ on $[0, 1)$ with
minimum at $\alpha = 1$. Consequently $\hat\alpha^*_{\mathrm{NLL}}
\to 1$ in probability as $N \to \infty$ at the matched limit.

\noindent\emph{Crossover behaviour.} For fixed $\mathbf{B}_G
\neq 0$ but small, the finite-sample $\hat\alpha^*_{\mathrm{NLL}}$
migrates from values near $1$ in the variance-dominated small-$N$
regime to values near $0$ in the bias-dominated large-$N$ regime;
the boundary is approximately $N V_\perp \sim
\|\mathbf{B}_G\|_F^2$, where $V_\perp$ is the perp-residual
variance.
\end{theorem}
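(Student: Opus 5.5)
The plan is to write the expected held-out NLL as a function of $\alpha$ and to treat the two regimes through the conditional independence of training and validation folds. Conditional on the training folds, $\mathbb{E}[\hat{\mathbf{R}}^{(k)}_{\mathrm{te}}]=\boldsymbol{\Sigma}$. The conditional expected held-out NLL is therefore $L_{\mathrm{pop}}$ evaluated at the random matrix $\boldsymbol{\Sigma}_\alpha^{(k)}(G)$, and it equals $\tfrac12\mathcal{L}_S(\boldsymbol{\Sigma}_\alpha^{(k)};\boldsymbol{\Sigma})$ plus an $\alpha$-independent constant. Lemma~\ref{lem:stein_leading} applies with $Y=\boldsymbol{\Sigma}_\alpha^{(k)}$. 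Here $\mathbb{E}[Y]=\boldsymbol{\Sigma}-\alpha\mathbf{B}_G$, so I will center at $\boldsymbol{\Sigma}_\alpha^{\mathrm{pop}}$ and carry the deterministic bias separately. To leading order in the noise scale, this gives the surrogate
\[
\mathcal{E}(\alpha)\approx \tfrac14\Bigl\|\boldsymbol{\Sigma}^{-1/2}\alpha\mathbf{B}_G\boldsymbol{\Sigma}^{-1/2}\Bigr\|_F^2+\tfrac14\,\mathbb{E}\Bigl\|\boldsymbol{\Sigma}^{-1/2}\bigl(\mathcal{P}_G(\boldsymbol{\eta})+(1-\alpha)\mathcal{P}_G^\perp(\boldsymbol{\eta})\bigr)\boldsymbol{\Sigma}^{-1/2}\Bigr\|_F^2 ,
\]
where $\boldsymbol{\eta}=\hat{\mathbf{R}}_{\mathrm{tr}}-\boldsymbol{\Sigma}$ as in the proof of Proposition~\ref{prop:alpha_star}. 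Bounded condition number makes this weighted norm comparable to the Frobenius norm. Sub-Gaussianity gives $\mathbb{E}\|\boldsymbol{\eta}\|_F^2=O(M^2/N)$ and $\mathbb{E}\|\boldsymbol{\eta}\|_F^3=O((M^2/N)^{3/2})$, so the cubic remainder is lower order.

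For the matched regime, $\mathbf{B}_G=0$ and only the variance parabola remains. Its $\alpha$-dependent part is
\[
(1-\alpha)^2\,\mathbb{E}\|\mathcal{P}_G^\perp\boldsymbol{\eta}\|^2_{\boldsymbol{\Sigma}}+2(1-\alpha)\,\mathbb{E}\langle\mathcal{P}_G\boldsymbol{\eta},\mathcal{P}_G^\perp\boldsymbol{\eta}\rangle_{\boldsymbol{\Sigma}} .
\]
Because $\boldsymbol{\Sigma}\in\mathcal{A}_G$, conjugation by $\boldsymbol{\Sigma}^{-1/2}$ commutes with $\rho(g)$, so the weighted inner product preserves the orthogonal splitting and the cross term vanishes. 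What is left is $(1-\alpha)^2\tilde V_\perp$. Here $\tilde V_\perp>0$ exactly when $G$ acts non-trivially on $\mathrm{Sym}(M)$, since then $\mathcal{P}_G^\perp\neq0$ and $\boldsymbol{\eta}$ has a nondegenerate covariance. The surrogate is therefore strictly decreasing on $[0,1)$ with its minimum at $1$.

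The finite-sample FOC minimizer is then handled by uniform concentration of $\widehat L_{\mathrm{ho}}(\cdot;G)$ and of its derivative around $\mathcal{E}$ over $\alpha\in[0,1]$. The function is a smooth one-parameter family and, under the conditioning assumption, has Lipschitz derivatives with high probability. After rescaling by $N$, the derivative converges to $-2(1-\alpha)\tilde V_\perp\cdot N/4$, which is bounded away from $0$ on $[0,1-\epsilon]$. This gives $\hat\alpha^*_{\mathrm{NLL}}\to1$.

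For the mismatched regime, the bias term $\alpha^2\|\mathbf{B}_G\|_{\boldsymbol{\Sigma}}^2$ is of order $1$ while the variance terms are $O(M^2/N)$. The surrogate minimizer is therefore $\alpha\approx \tilde V_\perp/(\tilde V_\perp+\tilde D)=O(M^2/(N\|\mathbf{B}_G\|^2))\to0$. This is the NLL analogue of \eqref{eq:alpha_star}, and concentration yields $\hat\alpha^*_{\mathrm{NLL}}\to0$. For the regret, Proposition~\ref{prop:nll_pop_min} gives $L_{\mathrm{pop}}(\alpha)-L_{\mathrm{pop}}(0)=\int_0^\alpha$ of a derivative of size $O(\alpha\|\mathbf{B}_G\|^2)$, hence $O(\alpha^2)$. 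This is at most $O_P(M^2/N)$ once $\hat\alpha=O_P(M/\sqrt N)$, which follows from the ratio above, possibly with room to spare.

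The crossover statement follows by reading off the same ratio. $\hat\alpha^*$ stays near $1$ when $N\tilde V_\perp\gg\|\mathbf{B}_G\|^2$, with the variance scaled as $V_\perp\sim1/N$ per sample, and near $0$ in the opposite case. The boundary is therefore $NV_\perp\sim\|\mathbf{B}_G\|_F^2$ up to condition-number constants.

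The main obstacle is the uniform control of the held-out objective. Lemma~\ref{lem:stein_leading} requires $\|W\|<1$, which fails for $\alpha$ near $0$ when $M\gtrsim N$ because $\hat{\mathbf{R}}_{\mathrm{tr}}$ is singular. I would first restrict to an event on which $\lambda_{\min}(\boldsymbol{\Sigma}_\alpha^{(k)})\ge\lambda/2$, using sub-Gaussian operator-norm concentration in the regime $M/N\to0$. I would then argue that $\widehat L_{\mathrm{ho}}=+\infty$ or is large off this event, so the minimizer never lies there. I would also need the uniqueness of the FOC root assumed in the statement, rather than attempting to prove convexity of $\widehat L_{\mathrm{ho}}$ in $\alpha$.
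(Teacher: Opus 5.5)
You follow the paper's route. Its proof delegates to three results: Proposition~\ref{prop:nll_pop_min} together with the derivative-fluctuation/curvature argument of Theorem~\ref{thm:regret} for the mismatched regime; Theorem~\ref{thm:matched}, i.e.\ Lemma~\ref{lem:stein_leading} plus commutation of $\boldsymbol{\Sigma}^{-1/2}$ with $\mathcal{P}_G$, for the matched regime; and Proposition~\ref{prop:transition} for the crossover. You inline the same ingredients.

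The mismatched part is sound, with one correction: $\hat\alpha^*_{\mathrm{NLL}} = O_P(M/\sqrt{N})$ comes from the derivative-fluctuation bound divided by the curvature $L'_{\mathrm{pop}}(\alpha) \ge \alpha\|\mathbf{B}_G\|_F^2/(2\Lambda^2)$. It does not come from your ratio, which only locates the minimizer of the expected surrogate.

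The step that fails is $\hat\alpha^*_{\mathrm{NLL}} \to 1$ in the matched regime. Put $\mathcal{L}[\mathbf{A}] := \boldsymbol{\Sigma}^{-1/2}\mathbf{A}\boldsymbol{\Sigma}^{-1/2}$, $\boldsymbol{\xi} := \hat{\mathbf{R}}_{\mathrm{te}} - \boldsymbol{\Sigma}$ and $\mathbf{a} := \mathcal{L}[\mathcal{P}_G^\perp\boldsymbol{\eta}]$. Because your matched-case orthogonality holds pointwise, the \emph{realized} single-fold objective is, to leading order,
\[
\widehat L_{\mathrm{ho}}(\alpha) - \widehat L_{\mathrm{ho}}(1) \;\approx\; \tfrac14(1-\alpha)^2\|\mathbf{a}\|_F^2 \;-\; \tfrac12(1-\alpha)\,\langle \mathbf{a}, \mathcal{L}[\boldsymbol{\xi}]\rangle_F ,
\]
so $1-\hat\alpha^*_{\mathrm{NLL}} \approx \min\{1,\max\{0,\langle \mathbf{a},\mathcal{L}[\boldsymbol{\xi}]\rangle_F/\|\mathbf{a}\|_F^2\}\}$. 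For fixed $M$, $\mathbf{a}$ and $\boldsymbol{\xi}$ are independent and both of order $N^{-1/2}$, so this ratio does not shrink with $N$. For Gaussian data it is approximately $\sqrt{N_{\mathrm{tr}}/N_{\mathrm{te}}}\,\langle\mathbf{z}_1,\mathbf{z}_2\rangle/\|\mathbf{z}_1\|^2$, with $\mathbf{z}_1,\mathbf{z}_2$ independent standard normal vectors of dimension $D_\perp := M(M+1)/2 - d_G$. That law does not depend on $N$.

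A concrete case: $M=2$, $G=\mathbb{Z}_2$ acting by the coordinate swap, and $\boldsymbol{\Sigma}=\mathbf{I}$. Then $D_\perp=1$ and the ratio is a scaled Cauchy variable. For a single fold, $\Pr(\hat\alpha^*_{\mathrm{NLL}} < 1-\epsilon)$ is close to $1/2$ for small $\epsilon$, uniformly in $N$, and averaging over $K_{\mathrm{cv}}$ folds does not change the order.

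So your assertion that the $N$-rescaled derivative converges to the deterministic $-\tfrac{N}{2}(1-\alpha)\tilde V_\perp$ is false. Both $N\|\mathbf{a}\|_F^2$ and $N\langle\mathbf{a},\mathcal{L}[\boldsymbol{\xi}]\rangle_F$ have non-degenerate limits in distribution, and the noise-to-signal ratio of the derivative is of order $1/\bigl((1-\alpha)\sqrt{D_\perp}\bigr)$ whatever $N$ is. Consistency needs $D_\perp\to\infty$ (e.g.\ $M\to\infty$ with $M^2/N\to 0$); otherwise the conclusion holds only for the minimizer of the \emph{expected} held-out NLL. The paper's proof hands exactly this step to ``uniform sub-Gaussian concentration'' in Theorem~\ref{thm:matched}, so it does not supply the missing ingredient either.

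Three smaller points.

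First, $\tilde V_\perp>0$ does not follow from sub-Gaussianity plus non-trivial action. Take Rademacher coordinates, $\boldsymbol{\Sigma}=\mathbf{I}$ and the same swap. Then $\mathcal{P}_G^\perp(\hat{\mathbf{R}}_{\mathrm{tr}}) = \tfrac12\bigl((\hat{\mathbf{R}}_{\mathrm{tr}})_{11}-(\hat{\mathbf{R}}_{\mathrm{tr}})_{22}\bigr)\,\mathrm{diag}(1,-1)\equiv\mathbf{0}$, so the held-out NLL is exactly flat in $\alpha$. You need non-degeneracy of the fourth-moment operator on $\mathcal{A}_G^\perp\cap\mathrm{Sym}(M,\mathbb{R})$, which is automatic for Gaussian data.

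Second, your own ratio places the crossover at $\tilde V_\perp\asymp\tilde D$, i.e.\ $V_\perp\asymp\|\mathbf{B}_G\|_F^2$. With $V_\perp\approx c_\perp/N$ this gives $N^*\asymp c_\perp/\|\mathbf{B}_G\|_F^2$, in line with \eqref{eq:N_star}. Your sentence ``near $1$ when $N\tilde V_\perp\gg\|\mathbf{B}_G\|^2$'' does not follow, because $N\tilde V_\perp$ is asymptotically constant in $N$. The statement's ``$NV_\perp\sim\|\mathbf{B}_G\|_F^2$'' has to be read as $V_\perp\sim\|\mathbf{B}_G\|_F^2$.

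Third, your ratio $\tilde V_\perp/(\tilde V_\perp+\tilde D)$ is not the correct leading-order NLL optimum. Once $\boldsymbol{\Sigma}\notin\mathcal{A}_G$, the weighted cross term $\mathbb{E}\langle\mathcal{P}_G\boldsymbol{\eta},\mathcal{P}_G^\perp\boldsymbol{\eta}\rangle_{\boldsymbol{\Sigma}}$ no longer vanishes. Also, since $\mathbf{B}_G$ is not small, the $\mathbf{B}_G\boldsymbol{\eta}^2$ part of the cubic remainder is of the same order $1/N$ as the variance terms. Neither affects $\hat\alpha^*_{\mathrm{NLL}}\to 0$ or the regret bound. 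They are, however, the $T_2$ and $T_3+T_5$ contributions that make $c(\boldsymbol{\Sigma},G)$ in Proposition~\ref{prop:transition} differ from your ratio.
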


\begin{proof}
The theorem is a delegation of three claims to results stated
elsewhere.

\emph{Mismatched-population claim.} When $\mathbf{B}_G \neq 0$,
Proposition~\ref{prop:nll_pop_min} gives the population NLL minimum
at $\alpha = 0$ with strict positive curvature on $(0, 1]$, and
Theorem~\ref{thm:regret} gives the finite-sample regret bound
$L_{\mathrm{pop}}(\hat\alpha^*_{\mathrm{NLL}}) - L_{\mathrm{pop}}(0)
= O_P(M^2/N)$ at this boundary minimum via the derivative-process
bound combined with one-sided curvature. The two together imply
$\hat\alpha^*_{\mathrm{NLL}} \to 0$ in probability and the stated
regret rate.

\emph{Matched-population claim.} At $\mathbf{B}_G = 0$ the
population NLL is constant in $\alpha$, so the population objective
alone is uninformative. Theorem~\ref{thm:matched} establishes via
Lemma~\ref{lem:stein_leading} that the expected held-out NLL is, to
leading order in noise scale and provided $|G| > 1$ acts non-trivially,
strictly decreasing in $\alpha$ on $[0, 1)$ with minimum at $\alpha
= 1$. Sub-Gaussian concentration gives empirical convergence
$\hat\alpha^*_{\mathrm{NLL}} \to 1$ in probability.

\emph{Crossover claim.} For fixed $\mathbf{B}_G \neq 0$ and
$N \to \infty$, Proposition~\ref{prop:transition} gives both
$\alpha^*_{\mathrm{MSE}}$ and $\bar\alpha^*_{\mathrm{NLL}}$ approaching
zero at rate $\Theta(1/N)$. The boundary scale $N V_\perp \asymp
\|\mathbf{B}_G\|_F^2$ emerges from balancing the variance
contribution $V_\perp(1 - \alpha)^2 \sim 1/N$ against the bias
contribution $\alpha^2 \|\mathbf{B}_G\|_F^2$ in the held-out
NLL Taylor expansion. In the strong-match boundary $\mathbf{B}_G
\to 0$, the matched-population claim takes over.
\end{proof}

\begin{theorem}[Matched-limit optimality of $\alpha = 1$]
\label{thm:matched}
Let $G$ be a candidate group with $\mathbf{B}_G = 0$ (so
$\boldsymbol{\Sigma} \in \mathcal{A}_G$) and $|G| > 1$ acting
non-trivially on $\mathbb{C}^M$. Suppose the cross-validation in
\eqref{eq:nll_foc_hold} produces training and test folds of comparable
sizes from a sub-Gaussian distribution with bounded condition number.
In the small-noise regime $V_\perp(G) = o(\sigma_{\min}(\boldsymbol{
\Sigma})^2)$ (equivalently $N \gg M^2 / \lambda^2$ for spectral lower
bound $\lambda$), the expected held-out negative log-likelihood is, to
leading order in noise scale, strictly decreasing in $\alpha$ on
$[0, 1)$ with minimum at $\alpha = 1$. Consequently
$\hat\alpha^*_{\mathrm{NLL}} \to 1$ in probability as $N \to \infty$.
\end{theorem}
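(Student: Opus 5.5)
The plan is to reduce the expected held-out NLL to a Stein-loss computation via Lemma~\ref{lem:stein_leading}, and then evaluate the leading quadratic term on the convex family exactly as in the proof of Proposition~\ref{prop:alpha_star}, but in the whitened metric. Since the test fold is independent of the training fold and $\mathbb{E}[\hat{\mathbf{R}}^{(k)}_{\mathrm{te}}] = \boldsymbol{\Sigma}$, conditioning on the training fold gives
\[
\mathbb{E}\,\widehat{L}_{\mathrm{ho}}(\alpha;G) = \tfrac12\,\mathbb{E}\,\mathcal{L}_S\bigl(\boldsymbol{\Sigma}_\alpha^{(k)}(G);\boldsymbol{\Sigma}\bigr) + \tfrac12\bigl(\log\det\boldsymbol{\Sigma} + M\bigr).
\]
Minimizing the expected held-out NLL in $\alpha$ is therefore the same as minimizing the expected Stein loss of $Y_\alpha := \boldsymbol{\Sigma}_\alpha^{(k)}(G)$. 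Because $\mathbf{B}_G = 0$, we have $\mathcal{P}_G(\boldsymbol{\Sigma}) = \boldsymbol{\Sigma}$, so $\mathbb{E}[Y_\alpha] = \boldsymbol{\Sigma}$ for every $\alpha$. The perturbation is $Z_\alpha = \mathcal{P}_G(\boldsymbol{\eta}) + (1-\alpha)\mathcal{P}_G^\perp(\boldsymbol{\eta})$ with $\boldsymbol{\eta} = \hat{\mathbf{R}}^{(k)}_{\mathrm{tr}} - \boldsymbol{\Sigma}$. Lemma~\ref{lem:stein_leading} then applies, giving leading term $\tfrac12\mathbb{E}\|\boldsymbol{\Sigma}^{-1/2}Z_\alpha\boldsymbol{\Sigma}^{-1/2}\|_F^2$ and a remainder $O(\mathbb{E}\|Z_\alpha\|_F^3)$. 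Under sub-Gaussianity this remainder is $O((M^2/N)^{3/2})$ uniformly in $\alpha\in[0,1]$, since $\|Z_\alpha\|_F\le\|\boldsymbol{\eta}\|_F$.

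The key step is to show that the whitening map $\mathbf{A}\mapsto \boldsymbol{\Sigma}^{-1/2}\mathbf{A}\boldsymbol{\Sigma}^{-1/2}$ commutes with $\mathcal{P}_G$ when $\boldsymbol{\Sigma}\in\mathcal{A}_G$. Since $\boldsymbol{\Sigma}$ commutes with each $\rho(g)$, so does $\boldsymbol{\Sigma}^{-1/2}$ (it is a function of $\boldsymbol{\Sigma}$). Hence $\rho(g)\boldsymbol{\Sigma}^{-1/2}\mathbf{A}\boldsymbol{\Sigma}^{-1/2}\rho(g)^\top = \boldsymbol{\Sigma}^{-1/2}\rho(g)\mathbf{A}\rho(g)^\top\boldsymbol{\Sigma}^{-1/2}$, and averaging over $G$ gives the claim. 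Whitening also preserves the Frobenius-orthogonal decomposition $\mathcal{A}_G\oplus\mathcal{A}_G^\perp$: it is self-adjoint in $\langle\cdot,\cdot\rangle_F$ and commutes with the orthogonal projection $\mathcal{P}_G$. Writing $\tilde{\boldsymbol{\eta}}$ for the whitened noise, the leading term becomes
\[
\tfrac12\,\mathbb{E}\|\mathcal{P}_G(\tilde{\boldsymbol{\eta}})\|_F^2 + \tfrac12(1-\alpha)^2\,\tilde V_\perp, \qquad \tilde V_\perp := \mathbb{E}\|\mathcal{P}_G^\perp(\tilde{\boldsymbol{\eta}})\|_F^2 .
\]
This is the $D=0$ case of Proposition~\ref{prop:alpha_star} in the whitened metric. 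Its $\alpha$-derivative is $-(1-\alpha)\tilde V_\perp$, which is strictly negative on $[0,1)$ as soon as $\tilde V_\perp>0$.

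Next we verify $\tilde V_\perp>0$, which is where $|G|>1$ acting non-trivially enters. If $\rho$ is non-trivial, then $\mathcal{A}_G^\perp\cap\mathrm{Sym}(M,\mathbb{R})\neq\{0\}$. This is because a symmetric matrix fixed by all conjugations, for all rank-one choices $\mathbf{v}\mathbf{v}^\top$, would force $\rho(g)\mathbf{v}=\pm\mathbf{v}$ for every $\mathbf{v}$, hence $\rho(g)=\pm\I$. The scalar case $\rho(g)=-\I$ also acts trivially by conjugation and is excluded by the non-triviality hypothesis. The whitened single-observation noise $\tilde{\mathbf{x}}\tilde{\mathbf{x}}^\top-\I$ has a non-degenerate covariance on the symmetric matrices; for Gaussian data it is $2\,\mathrm{Id}$ on the symmetric part. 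Its projection onto that nonzero subspace therefore has positive variance, of order $\dim(\mathcal{A}_G^\perp\cap\mathrm{Sym})/N$. We must also confirm that the remainder is dominated by the leading gap. The gap between $\alpha$ and $\alpha'$ is of order $|\alpha-\alpha'|\tilde V_\perp\asymp M^2/N$, while the remainder is $O(M^3/N^{3/2})$. The ratio is $O(M/\sqrt N)$, which vanishes under the small-noise hypothesis $N\gg M^2/\lambda^2$. This is the precise sense of ``to leading order in noise scale'', and it yields the strict decrease and minimum at $\alpha=1$.

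The main obstacle is the final consistency claim $\hat\alpha^*_{\mathrm{NLL}}\to1$. The expected objective is nearly flat, with total variation in $\alpha$ only $O(M^2/N)$, and both the signal and the fluctuations of $\widehat{L}_{\mathrm{ho}}$ shrink with $N$. My first attempt would be to work with the derivative process $\partial_\alpha\widehat{L}_{\mathrm{ho}}$ directly. It is a smooth function of the training and test covariances, and its expectation is $\approx -\tfrac12(1-\alpha)\tilde V_\perp$. I would then bound its fluctuation uniformly over $[0,1]$ using sub-Gaussian matrix concentration, in the same derivative-process style invoked for Theorem~\ref{thm:regret}, and show the sign is negative with probability tending to one on $[0,1-\epsilon]$ for each fixed $\epsilon>0$. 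If the fluctuation is only of the same order $M^2/N$, this fails for a single split. I would then fall back on averaging over the $K$ folds and over the test-fold randomness, which enters only linearly through $\mathrm{tr}(Y_\alpha^{-1}\hat{\mathbf{R}}_{\mathrm{te}})$, to sharpen the concentration. Failing that, I would state the conclusion for the expected-NLL minimizer.
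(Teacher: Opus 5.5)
Up to the consistency claim your argument is the paper's own: condition on the training fold to reduce to the Stein loss, apply Lemma~\ref{lem:stein_leading}, use $\boldsymbol{\Sigma}^{-1/2}\in\mathcal{A}_G$ so that whitening commutes with $\mathcal{P}_G$, and read off $\tfrac12T_{\mathrm{in}}+\tfrac12(1-\alpha)^2T_\perp$. You treat $T_\perp>0$ more explicitly than the paper does. The nondegeneracy of $\tilde{\mathbf{x}}\tilde{\mathbf{x}}^\top-\mathbf{I}$ on $\mathcal{A}_G^\perp$, however, holds for Gaussian data but not for every sub-Gaussian law: Rademacher coordinates with $M=2$ and $G$ the swap give $\mathcal{P}_G^\perp(\hat{\mathbf{R}}-\mathbf{I})\equiv 0$. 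A second point: comparing the \emph{size} of the remainder with the leading gap does not give strict decrease near $\alpha=1$. If you want that exactly, whiten and let $b$ be the $\mathcal{A}_G^\perp$-part of the training error. The conditional expected loss then has $\alpha$-derivative $-\tfrac12\langle b,h(Y_\alpha)\rangle_F$ with $h(y)=y^{-1}-y^{-2}$. This vanishes at $\alpha=1$ because $h(Y_1)\in\mathcal{A}_G$. It is strictly negative on $[0,1)$ whenever $b\neq 0$ and the spectra of the $Y_\alpha$ lie in $(0,2)$, because $h$ is increasing there.

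The genuine gap is the last step, and it cannot be closed as stated. The paper handles it with a one-line appeal to uniform sub-Gaussian concentration on $[0,1]$. That appeal needs a deterministic limit criterion with a well-separated minimum at $\alpha=1$, and there is none, so your scaling suspicion is right. Keep the test fold in the second-order expansion. Set $\boldsymbol{\Sigma}=\mathbf{I}$ after whitening (this shifts $\widehat{L}_{\mathrm{ho}}$ by a constant) and let $d$ be the $\mathcal{A}_G^\perp$-part of the test-fold error. By orthogonality the $\mathcal{A}_G$-parts drop out of the $\alpha$-dependence, and for a single fold
\[
\widehat{L}_{\mathrm{ho}}(\alpha;G)=C+\tfrac14(1-\alpha)^2\|b\|_F^2-\tfrac12(1-\alpha)\langle b,d\rangle_F+O_P(N^{-3/2}),
\]
with $C$ free of $\alpha$. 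At fixed $M$ this gives $1-\hat\alpha^*_{\mathrm{NLL}}=\min\{1,\max\{0,\langle b,d\rangle_F/\|b\|_F^2\}\}+o_P(1)$. For Gaussian data, $\sqrt{N}\,b$ and $\sqrt{N}\,d$ converge to independent nondegenerate Gaussians. The ratio therefore has a nondegenerate limit law of scale $D_\perp^{-1/2}$, where $D_\perp:=M(M+1)/2-d_G$, and it is positive with probability bounded away from zero. Equivalently, the signal $\asymp(1-\alpha)D_\perp/N$ and the fluctuation $\asymp\sqrt{D_\perp}/N$ of $\partial_\alpha\widehat{L}_{\mathrm{ho}}$ shrink at the same rate in $N$. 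The test fold does enter linearly, as you note, but that linear term is exactly the obstruction. Averaging over $K$ equal folds does not remove it: the numerator becomes $\sum_k\langle b^{(k)},d^{(k)}\rangle_F=\tfrac{2}{K-1}\sum_{j<k}\langle e_j,e_k\rangle_F$, with $e_j$ the $\mathcal{A}_G^\perp$-part of the fold-$j$ error, and the denominator becomes $\sum_k\|b^{(k)}\|_F^2$, which changes only constants. So at fixed $M$, $\hat\alpha^*_{\mathrm{NLL}}\not\to 1$ in probability. Two things are provable. One is your fallback, the statement for the expected-NLL minimizer. The other is consistency in a joint limit with $D_\perp\to\infty$ (e.g.\ $M\to\infty$ with $N\gg M^2$). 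There your derivative-process argument does go through, because the noise-to-signal ratio on $[0,1-\epsilon]$ is of order $\epsilon^{-1}D_\perp^{-1/2}$.
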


\begin{proof}
At the matched limit $\mathbf{B}_G = 0$, the training-fold
sample covariance noise decomposes as $\mathbf{u}^{\mathrm{tr}} =
\mathbf{u}_{\mathrm{in}}^{\mathrm{tr}} + \mathbf{u}_\perp^{\mathrm{tr}}$
with $\mathbf{u}_{\mathrm{in}}^{\mathrm{tr}} \in \mathcal{A}_G$ and
$\mathbf{u}_\perp^{\mathrm{tr}} \in \mathcal{A}_G^\perp$. The
convex-combination estimator is $\boldsymbol{\Sigma}_\alpha^{\mathrm{tr}}
= \boldsymbol{\Sigma} + \mathbf{u}_{\mathrm{in}}^{\mathrm{tr}} + (1 -
\alpha) \mathbf{u}_\perp^{\mathrm{tr}}$, with deviation $Z :=
\boldsymbol{\Sigma}_\alpha^{\mathrm{tr}} - \boldsymbol{\Sigma} =
\mathbf{u}_{\mathrm{in}}^{\mathrm{tr}} + (1 - \alpha)
\mathbf{u}_\perp^{\mathrm{tr}}$.

By the relation $\mathbb{E}[L_{\mathrm{ho}}(\alpha)] = \tfrac{1}{2}
\mathbb{E}[\mathcal{L}_S(\boldsymbol{\Sigma}_\alpha^{\mathrm{tr}};
\boldsymbol{\Sigma})] + \tfrac{1}{2}\log\det\boldsymbol{\Sigma} +
\tfrac{M}{2}$, minimizing the expected held-out NLL in $\alpha$ is
equivalent to minimizing $\mathbb{E}[\mathcal{L}_S(\boldsymbol{\Sigma}_
\alpha^{\mathrm{tr}}; \boldsymbol{\Sigma})]$. Apply
Lemma~\ref{lem:stein_leading}:
\[
 \mathbb{E}[\mathcal{L}_S(\boldsymbol{\Sigma}_\alpha^{\mathrm{tr}};
 \boldsymbol{\Sigma})] = \tfrac{1}{2}\mathbb{E}\bigl\| \boldsymbol{
 \Sigma}^{-1/2}(\mathbf{u}_{\mathrm{in}}^{\mathrm{tr}} + (1 - \alpha)
 \mathbf{u}_\perp^{\mathrm{tr}}) \boldsymbol{\Sigma}^{-1/2}\bigr\|_F^2
 + O(\mathbb{E}\|\mathbf{u}^{\mathrm{tr}}\|_F^3).
\]
Since $\boldsymbol{\Sigma} \in \mathcal{A}_G$ at the matched limit,
the metric matrix $\boldsymbol{\Sigma}^{-1/2}$ is also in $\mathcal{A}_G$
(by functional calculus on commuting operators). Conjugation by
$\boldsymbol{\Sigma}^{-1/2}$ therefore preserves the
$\mathcal{A}_G$-vs-$\mathcal{A}_G^\perp$ orthogonal split:
$\boldsymbol{\Sigma}^{-1/2} \mathbf{u}_{\mathrm{in}}^{\mathrm{tr}}
\boldsymbol{\Sigma}^{-1/2} \in \mathcal{A}_G$ and
$\boldsymbol{\Sigma}^{-1/2} \mathbf{u}_\perp^{\mathrm{tr}}
\boldsymbol{\Sigma}^{-1/2} \in \mathcal{A}_G^\perp$, with
Frobenius inner product zero. Hence
\[
 \mathbb{E}[\mathcal{L}_S(\boldsymbol{\Sigma}_\alpha^{\mathrm{tr}};
 \boldsymbol{\Sigma})] = \tfrac{1}{2}\,T_{\mathrm{in}} +
 \tfrac{(1 - \alpha)^2}{2}\,T_\perp + O(\mathbb{E}\|\mathbf{u}^{
 \mathrm{tr}}\|_F^3),
\]
where $T_{\mathrm{in}}, T_\perp \geq 0$ are independent of $\alpha$
and $T_\perp > 0$ when $|G| > 1$ acts non-trivially. The expression
is strictly decreasing in $\alpha$ on $[0, 1)$ with minimum at
$\alpha = 1$ to leading order. The remainder is $O(M^3 / N^{3/2})$
by sub-Gaussian moment bounds, smaller than the $O(1/N)$ leading term
in the small-noise regime. Consistency $\hat\alpha^*_{\mathrm{NLL}}
\to 1$ follows from uniform sub-Gaussian concentration on the
compact interval $[0, 1]$.
\end{proof}

\begin{remark}[Bias characterization]\label{rem:bias}
The structural bias $\mathbf{B}_G$ defined in \eqref{eq:B_G} satisfies
$\mathbf{B}_G = \mathbf{0}$ if and only if
$\boldsymbol{\Sigma} \in \mathcal{A}_G$, equivalently
$\delta(G, \boldsymbol{\Sigma}) = 0$. The matched/mismatched
dichotomy used throughout this section identifies exactly these two
cases: matched when $\delta = 0$, mismatched when $\delta > 0$.
\end{remark}

\begin{remark}[Well-conditioning]\label{rem:well-conditioned}
The convex blend \eqref{eq:family} is positive semidefinite for every
$\alpha \in [0, 1]$, since it is a non-negative combination of two
positive-semidefinite matrices ($\hat{\mathbf{R}}$ and its Reynolds
projection $\hat{\mathbf{R}}_G$, the latter PSD because Reynolds
averaging preserves the cone). When the sample covariance is
strictly positive definite, $\hat{\mathbf{R}}_{\mathrm{AD}}(\alpha)$
is strictly positive definite for every $\alpha \in [0, 1]$. When
the sample covariance is rank-deficient, $\hat{\mathbf{R}}_{\mathrm{AD}}
(\alpha)$ is positive definite for any $\alpha > 0$ such that
$\hat{\mathbf{R}}_G$ is itself positive definite, which holds
whenever the action of $G$ on the sample provides at least one full
orbit covering the spectrum (in particular, when $N \cdot |G| \geq
M$). The estimator is therefore invertible across the few-shot
regime, parallelling the well-conditioning property of
\citet[][Theorem 3.5]{ledoit2004}.
\end{remark}

\subsection{Nonlinear shrinkage of the sample term: the AD-LW-NL composition}
\label{sec:theory-ad-lwnl}

The AD estimator of \eqref{eq:family} blends the raw sample
covariance with a structured projection target. An immediate
question is whether the AD framework's benefits compose with the
nonlinear eigenvalue shrinkage of \citet{ledoit2020analytical}
(LW-NL), the modern frequentist successor to LW 2004 in the
eigenvalue-shrinkage family. This subsection defines the natural
composition (AD-LW-NL), states its formal properties, and serves
as the theoretical companion to the empirical evaluation reported
in Section~\ref{sec:experiments}. The empirical evaluation
finds that AD-LW-NL does not produce a sweet spot on any of the
three datasets where the protocol has been run; the AD-LW-NL
composition is tested rather than claimed.

\paragraph{The LW-NL estimator.}
For a sample covariance $\hat{\mathbf{R}}$ with spectral decomposition
$\hat{\mathbf{R}} = \sum_{i=1}^{M} \lambda_i \mathbf{u}_i \mathbf{u}_i^\top$,
\citet{ledoit2020analytical} replace each sample eigenvalue
$\lambda_i$ with a shrunken eigenvalue $\tilde\lambda_i$ while
keeping the eigenvectors fixed. The shrinkage formula is
\begin{equation}\label{eq:lwnl-shrinkage}
\tilde\lambda_i \;=\; \frac{\lambda_i}{\bigl[1 - c - c\,\lambda_i\,
\mathcal{H}\tilde f(\lambda_i)\bigr]^2 + \bigl[\pi\,c\,\lambda_i\,
\tilde f(\lambda_i)\bigr]^2},
\end{equation}
where $c = M/N$ is the concentration ratio, $\tilde f$ is an
Epanechnikov-kernel density estimate of the sample eigenvalue
distribution with variable bandwidth $h_j = \lambda_j N^{-1/3}$,
and $\mathcal{H}\tilde f$ is the Hilbert transform of $\tilde f$.
The Hilbert transform admits a closed-form expression for the
Epanechnikov kernel. The LW-NL estimator is
\begin{equation}\label{eq:lwnl-estimator}
\hat{\mathbf{R}}_{\mathrm{LW\text{-}NL}} \;=\; \sum_{i=1}^{M}
\tilde\lambda_i\,\mathbf{u}_i \mathbf{u}_i^\top.
\end{equation}
The estimator is positive semi-definite by construction (the
$\tilde\lambda_i$ are nonnegative real numbers and the
$\mathbf{u}_i$ are orthonormal); it preserves the eigenvector basis
of the sample covariance; and it preserves the trace approximately
in the bulk regime, matching the Marchenko-Pastur asymptotic
constraint that the average eigenvalue equals the population
average.

\paragraph{The AD-LW-NL estimator.}
The AD-LW-NL estimator replaces the sample-covariance term in the
AD convex blend \eqref{eq:family} with its LW-NL shrinkage:
\begin{equation}\label{eq:ad-lwnl-estimator}
\hat{\mathbf{R}}_{\mathrm{AD\text{-}LW\text{-}NL}}(\alpha;\,G)
\;=\; (1 - \alpha)\,\hat{\mathbf{R}}_{\mathrm{LW\text{-}NL}}
\;+\; \alpha\,\hat{\mathbf{R}}_G.
\end{equation}
The structured target $\hat{\mathbf{R}}_G = \mathcal{P}_G(\hat{\mathbf{R}})$
is the Reynolds projection of the \emph{raw} sample covariance, not of
its LW-NL shrinkage. This choice is deliberate: the projection target
encodes the structural prior associated with the group $G$, and the
projection $\mathcal{P}_G$ is most naturally applied to the unmodified
sample second-moment matrix that the population covariance is
assumed to approximately equal under the symmetry hypothesis.
Projecting LW-NL-shrunken eigenvalues onto the commutant of $G$
would conflate the structural-prior averaging with the
eigenvalue-shrinkage operation, with no clear principled benefit and
with the cost of losing the convex-blend interpretation of $\alpha$.

\paragraph{Boundary behavior and interpretation.}
At $\alpha = 0$ the AD-LW-NL estimator reduces to
$\hat{\mathbf{R}}_{\mathrm{LW\text{-}NL}}$, recovering 
LW-NL; at $\alpha = 1$ it reduces to the Reynolds projection
$\hat{\mathbf{R}}_G$, identical to AD at the same operating
point. The convex weight $\alpha$ continues to measure the
contribution of the structural prior relative to the sample
covariance; only the sample-covariance contribution has been
upgraded from raw to nonlinearly-shrunken. In particular, the
held-out NLL calibration \eqref{eq:nll_foc_hold} and the
candidate-library Best-Matched-Group (BMG) selection procedure of
Section~\ref{sec:bg-bmg} apply unchanged to the AD-LW-NL
estimator, with the only modification being the swap of
$\hat{\mathbf{R}}$ for $\hat{\mathbf{R}}_{\mathrm{LW\text{-}NL}}$
inside the per-trial cross-validation loop.

\paragraph{Failure mode in the rank-deficient image-patch regime.}
The AD-LW-NL composition is upper-bounded in the favorable case by
AD-NLL-BMG and lower-bounded in the unfavorable case by
LW-NL, because at $\alpha = 1$ it reproduces AD's
Shah projection and at $\alpha = 0$ it reproduces LW-NL.
Within those bounds, the held-out CV calibration of $\alpha$ is
free to find any blend. An important asymmetry emerges in the
rank-deficient regime ($c \ge 1$): the raw sample covariance
$\hat{\mathbf{R}}$ is rank-deficient but may still carry residual
information about pixel-level or feature-level correlation
structure that an averaging projection $\mathcal{P}_G$ collapses
away. When this residual information is exploitable, 
AD-NLL-BMG can find a blend $(\alpha, G)$ with $\alpha$ strictly
less than 1 that outperforms the same group at $\alpha = 1$. In
AD-LW-NL the raw sample is replaced by
$\hat{\mathbf{R}}_{\mathrm{LW\text{-}NL}}$, which on rank-deficient
inputs may carry \emph{less} of this residual information than the
raw sample carries, because the LW-NL formula shrinks the
zero-eigenvalue indices toward a single value
(see the boundary-case discussion accompanying the LW-NL implementation note in Section~\ref{sec:experiments}). When LW-NL is worse than the structural
projection at the same group, AD-LW-NL's CV correctly pins
$\alpha = 1$ to recover the Shah projection; but this then
forecloses the raw-sample blend path that AD-NLL-BMG used to be preferred. In the empirical evaluation of
Section~\ref{sec:experiments}, this failure mode is observed
on Galaxy10 DECaLS image patches in the two smallest-$N$ cells of
the protocol sweep (AD-LW-NL performing worse than AD-NLL-BMG by
$2$ to $3$ nats per sample, paired effect size $2$ to $3$
standard deviations), on the deepest CIFAR-10 cell at $c = 2.0$
(AD-LW-NL having higher NLL by $1.7$ nats per sample, paired effect size
$0.07$), and at the largest magnitude on CIFAR-10.1 cells $0$
through $2$ at $c \in \{4, 5.33, 8\}$ (AD-LW-NL having higher NLL by $98$ to
$124$ nats per sample, $0$ of $49$ to $50$ paired trials favoring
AD-LW-NL). The failure mode is not observed on TCGA-BRCA gene
expression or on CRSP financial returns, where the AD framework's
structural projection captures essentially all the regularizable
structure and the raw-sample residual is too noisy to be
exploitable in the few-shot regime.

\paragraph{Family-conditional BMG and the choice-agreement
statistic.}
Both AD-NLL-BMG and AD-LW-NL-NLL-BMG use the same Best Matched
Group (BMG) selection procedure (CV-NLL-minimize over $(G,
\alpha)$ in the candidate library), but the CV objective evaluates
a different parametric family in each case. The two objectives
are
\begin{equation}
\label{eq:bmg-ad-objective}
J_G^{\mathrm{AD}}(\alpha) \;=\; \mathrm{CV\text{-}NLL}\!\left[\,
\alpha\, T_G(\mathbf{S}) + (1-\alpha)\, \mathbf{S}\,\right]
\end{equation}
for AD-NLL-BMG, and
\begin{equation}
\label{eq:bmg-adlw-objective}
J_G^{\mathrm{AD\text{-}LW}}(\alpha) \;=\; \mathrm{CV\text{-}NLL}\!\left[\,
\alpha\, T_G(\mathbf{S}) + (1-\alpha)\,
\hat{\mathbf{R}}_{\mathrm{LW\text{-}NL}}\,\right]
\end{equation}
for AD-LW-NL-NLL-BMG. Group selection is then $G^* = \arg\min_G
\min_\alpha J_G(\alpha)$, which is composition-specific by
construction. The procedure is identical; the objective being
optimized is not. There is no a priori reason for the two
optimizations to land at the same group.

The geometric picture: for each candidate group $G$, both
compositions trace out a line segment in covariance-estimator
space that \emph{shares} the target endpoint $T_G(\mathbf{S})$
but has a different other endpoint
($\mathbf{S}$ for AD,
$\hat{\mathbf{R}}_{\mathrm{LW\text{-}NL}}$ for AD-LW-NL). The CV
finds the optimal point along that segment, and BMG picks the
group whose segment's optimum has the lowest CV-NLL. The bundle
of segments through $\{T_G(\mathbf{S}) : G \in \text{library}\}$
goes in one direction (toward $\mathbf{S}$) for AD and a
different direction (toward
$\hat{\mathbf{R}}_{\mathrm{LW\text{-}NL}}$) for AD-LW-NL. The
closest-to-population-covariance segment in one bundle has no
reason to be the same group as the closest-to-population segment
in the other bundle.

The misconception worth being explicit about: there is no
intrinsic ``symmetry group of the dataset'' that BMG is
discovering. There is a candidate library, and within each
family of compositions there is a \emph{family-conditional best
group}, the group whose Reynolds projection best complements that
family's other regularization tool. Different families pick
different groups not because one is wrong but because they are
solving different optimization problems. The choice-agreement
statistic measures how strongly the data's structure constrains
the group selection across the family axis: high agreement
(unanimous or near-unanimous across $50$ trials per cell) means
the data's structure dominates the choice and the
family-conditional groups coincide; low agreement (zero or
near-zero across $50$ trials per cell) means the family's other
regularization tool is a significant factor in the group
selection.

Two practical consequences for interpreting the empirical
results of Section~\ref{sec:experiments}. First,
\emph{choice agreement is information, not consistency}. A
$0/50$ agreement says the two parameterizations have found two
different routes that the CV-NLL ranks differently within each
family; it does not say one or both is confused. On
RadioML cell $2$ the AD-LW-NL composition has lower NLL at effect size $5.19$;
on OISST cells $0$ through $5$ the AD route has lower NLL by $2$
to $19$ nats per sample. Both produce $0/50$ choice agreement,
both reveal genuine "different optima found" behavior, but only
the \emph{direction} of which route has lower NLL is the substantive
empirical content of the comparison. Second, the framework's
diagnostic power scales with the number of distinct composition
families: each new second hull endpoint (LW-NL,
Stein-Haff \citep{stein1981estimation, haff1980empirical} or
Shah-Chandrasekaran, OAS, graphical lasso
\citep{friedman2008glasso}, etc.) gives
a new family-conditional BMG that probes the data through a
different lens, and family-disagreement on the group choice maps
regions of estimator space where multiple symmetry-like
structures compete for "best target." The current paper has two
compositions; future extensions with three or more would let the
choice-disagreement statistics map the data's regularization
geometry more finely.

\paragraph{the protocol: a three-column comparison.}
In the experimental work of Section~\ref{sec:experiments}, the AD,
AD-LW-NL, and LW-NL estimators are reported in a three-column
"the protocol" comparison alongside the LW 2004 baseline retained
from earlier versions. The columns serve distinct purposes:
\begin{itemize}
\item AD-NLL-BMG vs LW-NL isolates the \emph{structural-prior}
contribution: the gap represents the advantage of the Reynolds-projected
target over isotropic eigenvalue shrinkage, with neither method
having access to the other's regularization mechanism.
\item AD-LW-NL-NLL-BMG vs LW-NL isolates the structural-prior
contribution \emph{conditional on} nonlinear sample shrinkage: both
methods enjoy the LW-NL nonlinear shrinkage, so the gap measures
only the additional regularization provided by the structural prior.
\item AD-LW-NL-NLL-BMG vs AD-NLL-BMG isolates the \emph{nonlinear-shrinkage}
contribution within the AD framework: the gap measures whether the
nonlinear eigenvalue shrinkage of LW-NL adds value beyond the
convex-blend shrinkage of the AD framework alone.
\end{itemize}
The three-column comparison decomposes the AD framework's
empirical advantage into two attributable components (structural
prior and nonlinear shrinkage) and exposes how they compose.

\paragraph{Computational cost.}
LW-NL adds an $O(M^3)$ per-cell cost for the spectral decomposition
of $\hat{\mathbf{R}}$ \citep[Chapter~8]{golub2013matrix} plus an
$O(M^2)$ cost for the kernel density
and Hilbert transform evaluations, both of which are dominated by
the existing cost of the cross-validated AD pipeline. The
AD-LW-NL composition therefore adds approximately constant overhead
per trial relative to AD-NLL-BMG. Caching the LW-NL
estimator across the $\alpha$-grid for a given training fold makes
this overhead negligible compared to the BMG cross-validation loop.

\subsection{Risk decomposition and structural results}
\label{sec:theory-risk}

The Reynolds projection's effect on the matched-Gaussian risk
decomposes orthogonally into a structural-bias term and a
variance-reduction term. The variance-reduction factor is governed
by a single dimension parameter, the symmetric-subspace commutant
dimension, which is recorded next.

\begin{definition}[Symmetric-subspace commutant dimension]
\label{def:d_G_sym}
For a finite group $G$ acting on $\mathbb{R}^M$ via permutation
representation $\rho$, the \emph{commutant dimension} used throughout
this paper is
\begin{equation}\label{eq:d_G_def}
 d_G \;:=\; \dim_{\mathbb{R}}\bigl(\mathcal{A}_G \cap
 \mathrm{Sym}(M, \mathbb{R})\bigr),
\end{equation}
the dimension of the commutant algebra restricted to real symmetric
matrices.
\end{definition}

\begin{lemma}[Risk decomposition on the convex shrinkage family]
\label{lem:risk_decomp}
Let $G$ be a finite group with unitary representation $\rho$ on
$\mathbb{C}^M$, let $\hat{\mathbf{R}}$ be the sample covariance of
$N$ independent observations of $\mathbf{x} \in \mathbb{C}^M$ with
population covariance $\boldsymbol{\Sigma}$, and let $\mathbf{B}_G
:= \boldsymbol{\Sigma} - \mathcal{P}_G(\boldsymbol{\Sigma})$. Define
\begin{equation}\label{eq:vin_vperp_def}
 V_{\mathrm{in}}(G) := \mathbb{E}\!\left[\bigl\|\mathcal{P}_G(\hat{
 \mathbf{R}} - \boldsymbol{\Sigma})\bigr\|_F^2\right], \quad
 V_\perp(G) := \mathbb{E}\!\left[\bigl\|(I - \mathcal{P}_G)(\hat{
 \mathbf{R}} - \boldsymbol{\Sigma})\bigr\|_F^2\right].
\end{equation}
Then for any $\alpha \in [0, 1]$,
\begin{equation}\label{eq:risk_decomp_lemma}
 R_{\mathrm{MSE}}(\alpha; G) := \mathbb{E}\!\left[\bigl\|
 \hat{\mathbf{R}}_{\mathrm{AD}}(\alpha) - \boldsymbol{\Sigma}\bigr\|_F^2
 \right] = V_{\mathrm{in}}(G) + (1 - \alpha)^2 V_\perp(G) + \alpha^2
 \|\mathbf{B}_G\|_F^2.
\end{equation}
\end{lemma}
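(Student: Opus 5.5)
The plan is to reproduce the algebra already used in the proof of Proposition~\ref{prop:alpha_star}, now stated in the complex/unitary setting and with $V_{\mathrm{in}}$ named explicitly. Set $\boldsymbol{\eta} := \hat{\mathbf{R}} - \boldsymbol{\Sigma}$. Since the observations are independent with covariance $\boldsymbol{\Sigma}$, we have $\mathbb{E}[\boldsymbol{\eta}] = \mathbf{0}$.

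The first step is to record the properties of $\mathcal{P}_G$ that the argument needs. $\mathcal{P}_G$ is linear and idempotent. It is self-adjoint for the Frobenius inner product $\langle \mathbf{A},\mathbf{B}\rangle_F = \mathrm{tr}(\mathbf{A}^{*}\mathbf{B})$, because each map $\mathbf{A}\mapsto\rho(g)\mathbf{A}\rho(g)^{*}$ is a unitary map whose adjoint is the conjugation by $\rho(g^{-1})$. Averaging over the group and reindexing $g\mapsto g^{-1}$ then gives self-adjointness. Consequently $\mathcal{P}_G$ is the orthogonal projection onto $\mathcal{A}_G$, and $\mathcal{P}_G^\perp = I - \mathcal{P}_G$ is the orthogonal projection onto $\mathcal{A}_G^\perp$. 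This is the only place where the representation-theoretic structure enters.

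The second step is the decomposition. Using linearity,
\[
\hat{\mathbf{R}}_{\mathrm{AD}}(\alpha) - \boldsymbol{\Sigma} = \hat{\mathbf{R}} - \alpha\,\mathcal{P}_G^\perp(\hat{\mathbf{R}}) - \boldsymbol{\Sigma} = \mathcal{P}_G(\boldsymbol{\eta}) + (1-\alpha)\,\mathcal{P}_G^\perp(\boldsymbol{\eta}) - \alpha\,\mathbf{B}_G .
\]
Here I use $\mathcal{P}_G^\perp(\boldsymbol{\Sigma}) = \mathbf{B}_G$. The first summand lies in $\mathcal{A}_G$ and the other two lie in $\mathcal{A}_G^\perp$.

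Next, expand the squared Frobenius norm. The Pythagorean split separates the $\mathcal{A}_G$ part from the $\mathcal{A}_G^\perp$ part exactly. Within $\mathcal{A}_G^\perp$ there remains a cross term $-2\alpha(1-\alpha)\,\mathrm{Re}\,\langle \mathcal{P}_G^\perp(\boldsymbol{\eta}),\mathbf{B}_G\rangle_F$. This term is linear in $\boldsymbol{\eta}$ and $\mathbf{B}_G$ is deterministic, so its expectation is $\mathrm{Re}\,\langle \mathcal{P}_G^\perp(\mathbb{E}\boldsymbol{\eta}),\mathbf{B}_G\rangle_F = 0$. Taking expectations of the remaining squares gives $V_{\mathrm{in}}(G) + (1-\alpha)^2V_\perp(G) + \alpha^2\|\mathbf{B}_G\|_F^2$, which is \eqref{eq:risk_decomp_lemma}.

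There is no substantial obstacle here. The only points needing care are the following.
\begin{itemize}
\item Justifying self-adjointness of $\mathcal{P}_G$ in the complex case: use $\rho(g)^{*}=\rho(g)^{-1}$ in place of the transpose used in \eqref{eq:reynolds}.
\item Taking the real part of the cross term.
\item Noting that finite second moments of $\hat{\mathbf{R}}$ are required so that $V_{\mathrm{in}}$ and $V_\perp$ are finite and the interchange of expectation with the linear functional is valid.
\end{itemize}
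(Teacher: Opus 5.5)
Your proposal is correct and follows the paper's proof essentially line for line: the same three-term decomposition $\mathcal{P}_G(\boldsymbol{\eta}) + (1-\alpha)\,\mathcal{P}_G^\perp(\boldsymbol{\eta}) - \alpha\,\mathbf{B}_G$, with the same exact orthogonal splitting across $\mathcal{A}_G$ and $\mathcal{A}_G^\perp$ and the single remaining cross term vanishing in expectation because $\mathbb{E}[\boldsymbol{\eta}] = \mathbf{0}$. Your explicit verification that $\mathcal{P}_G$ is Frobenius-self-adjoint in the complex case (using $\rho(g)^{*}$ in place of $\rho(g)^\top$) and your real-part bookkeeping on the cross term are details the paper leaves implicit, and you handle them correctly.
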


\begin{proof}
Write $\mathbf{u} := \hat{\mathbf{R}} - \boldsymbol{\Sigma}$
(mean-zero by unbiasedness of the sample covariance), and decompose
$\mathbf{u} = \mathcal{P}_G(\mathbf{u}) + \mathbf{u}_\perp$ where
$\mathbf{u}_\perp := (I - \mathcal{P}_G)(\mathbf{u})$. Then
\[
 \hat{\mathbf{R}}_{\mathrm{AD}}(\alpha) - \boldsymbol{\Sigma} =
 \mathcal{P}_G(\mathbf{u}) + (1 - \alpha) \mathbf{u}_\perp - \alpha
 \mathbf{B}_G,
\]
since $\mathcal{P}_G(\hat{\mathbf{R}}) - \mathcal{P}_G(\boldsymbol{
\Sigma}) = \mathcal{P}_G(\mathbf{u})$ and the deterministic bias
$\mathcal{P}_G(\boldsymbol{\Sigma}) - \boldsymbol{\Sigma} = -
\mathbf{B}_G$. The three summands lie in $\mathcal{A}_G$,
$\mathcal{A}_G^\perp$, $\mathcal{A}_G^\perp$ respectively, so
$\mathcal{P}_G(\mathbf{u})$ is Frobenius-orthogonal to the latter
two deterministically. Compute the squared Frobenius norm: the
$\langle \mathcal{P}_G(\mathbf{u}), v\rangle_F$ cross terms vanish
identically; the cross term between $\mathbf{u}_\perp$ and
$\mathbf{B}_G$ vanishes in expectation because $\mathbf{u}$
is mean-zero and $\mathbf{B}_G$ is non-random. Combining
yields \eqref{eq:risk_decomp_lemma}.
\end{proof}

\begin{theorem}[Bias-variance orthogonal parametrization]
\label{thm:bias_variance_orthogonal}
Let $\hat{\mathbf{R}}_G = \mathcal{P}_G(\hat{\mathbf{R}})$ be the
group-averaged sample covariance estimator with population covariance
$\boldsymbol{\Sigma}$ and finite group $G$ acting unitarily on
$\mathbb{C}^M$. In the Gaussian regime with
$\mathbf{x}_i \sim \mathcal{N}(0, \boldsymbol{\Sigma})$ i.i.d., the
mean-squared error of $\hat{\mathbf{R}}_G$ admits the decomposition
\begin{equation}\label{eq:bias_variance_orthogonal}
 \mathrm{MSE}(\hat{\mathbf{R}}_G) \;=\; \delta^2(G, \boldsymbol{\Sigma})\,
 \|\boldsymbol{\Sigma}\|_F^2 \,+\, c_{\mathrm{in}}(\boldsymbol{\Sigma},
 G)\,\frac{d_G}{N}\,(1 + o(1))
\end{equation}
as $N \to \infty$, with $d_G$ as in Definition~\ref{def:d_G_sym} and
$c_{\mathrm{in}}(\boldsymbol{\Sigma}, G)$ a bounded constant
satisfying $2\,\sigma_{\min}(\boldsymbol{\Sigma})^2 \leq c_{\mathrm{
in}}(\boldsymbol{\Sigma}, G) \leq 2\,\sigma_{\max}(\boldsymbol{
\Sigma})^2$ that depends on the alignment of $\boldsymbol{\Sigma}$
with the isotypic decomposition of $\mathcal{A}_G \cap \mathrm{Sym}(M,
\mathbb{R})$ (Theorem~\ref{thm:var_reduction}). The bias term depends
on $G$ only through $\delta$; the variance term depends on $G$ only
through $d_G$, with the variance constant $c_{\mathrm{in}}$ depending
on $\boldsymbol{\Sigma}$ but determined by the operator-trace
formula given in Theorem~\ref{thm:var_reduction}.
\end{theorem}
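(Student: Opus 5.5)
The plan is to specialize Lemma~\ref{lem:risk_decomp} to the endpoint $\alpha = 1$, which gives
\[
\mathrm{MSE}(\hat{\mathbf{R}}_G) = V_{\mathrm{in}}(G) + \|\mathbf{B}_G\|_F^2 .
\]
The bias term is then immediate from the definition \eqref{eq:delta}: $\|\mathbf{B}_G\|_F^2 = \delta^2(G,\boldsymbol{\Sigma})\,\|\boldsymbol{\Sigma}\|_F^2$. It depends on $G$ only through $\delta$ and carries no $N$-dependence. All remaining work concerns the in-commutant variance $V_{\mathrm{in}}(G) = \mathbb{E}\|\mathcal{P}_G(\hat{\mathbf{R}} - \boldsymbol{\Sigma})\|_F^2$.

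For the variance term, write $\hat{\mathbf{R}} - \boldsymbol{\Sigma} = N^{-1}\sum_n \mathbf{Y}_n$ with $\mathbf{Y}_n := \mathbf{x}_n\mathbf{x}_n^\top - \boldsymbol{\Sigma}$, which are i.i.d.\ and mean zero. Then
\[
V_{\mathrm{in}}(G) = N^{-1}\,\mathbb{E}\|\mathcal{P}_G(\mathbf{Y}_1)\|_F^2 .
\]
Because the $\mathbf{Y}_n$ are symmetric and $\mathcal{P}_G$ preserves symmetry, $\mathcal{P}_G$ restricted to symmetric matrices is the orthogonal projection onto $W_G$. Fix a Frobenius-orthonormal basis $\{\mathbf{E}_j\}_{j=1}^{d_G}$ of $W_G$. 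Then
\[
\mathbb{E}\|\mathcal{P}_G(\mathbf{Y}_1)\|_F^2 = \sum_{j=1}^{d_G} \mathbb{E}\,\langle \mathbf{E}_j, \mathbf{Y}_1\rangle_F^2 = \sum_{j=1}^{d_G} \mathrm{Var}\bigl(\mathbf{x}^\top \mathbf{E}_j \mathbf{x}\bigr).
\]
In the Gaussian regime, Isserlis' theorem gives $\mathrm{Var}(\mathbf{x}^\top \mathbf{E}\mathbf{x}) = 2\,\mathrm{tr}(\mathbf{E}\boldsymbol{\Sigma}\mathbf{E}\boldsymbol{\Sigma}) = 2\|\boldsymbol{\Sigma}^{1/2}\mathbf{E}\boldsymbol{\Sigma}^{1/2}\|_F^2$. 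Setting
\[
c_{\mathrm{in}} := \frac{2}{d_G}\sum_j \mathrm{tr}(\mathbf{E}_j\boldsymbol{\Sigma}\mathbf{E}_j\boldsymbol{\Sigma}),
\]
which is the operator-trace formula to be identified with Theorem~\ref{thm:var_reduction}, yields $V_{\mathrm{in}} = c_{\mathrm{in}}\,d_G/N$. This holds exactly for each $N$, so the $(1+o(1))$ in the statement is conservative. The constant is basis-independent: it equals $2/d_G$ times the trace of the superoperator $\mathbf{E} \mapsto \mathcal{P}_G(\boldsymbol{\Sigma}\mathbf{E}\boldsymbol{\Sigma})$ restricted to $W_G$.

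The sandwich bounds follow from an eigenvalue estimate on each summand. For symmetric $\mathbf{E}$ with $\|\mathbf{E}\|_F = 1$, diagonalize $\boldsymbol{\Sigma} = \mathbf{U}\Lambda\mathbf{U}^\top$ and set $\tilde{\mathbf{E}} = \mathbf{U}^\top \mathbf{E}\mathbf{U}$. Then
\[
\mathrm{tr}(\mathbf{E}\boldsymbol{\Sigma}\mathbf{E}\boldsymbol{\Sigma}) = \sum_{a,b}\lambda_a\lambda_b\,\tilde E_{ab}^2 ,
\]
which lies in $[\sigma_{\min}^2, \sigma_{\max}^2]$ because $\sum_{a,b}\tilde E_{ab}^2 = 1$. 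Averaging over $j$ gives $2\sigma_{\min}^2 \le c_{\mathrm{in}} \le 2\sigma_{\max}^2$.

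The separation claims then follow. The variance term depends on $G$ through the count $d_G$ and through the alignment of $\boldsymbol{\Sigma}$ with $W_G$, which enters only via $c_{\mathrm{in}}$, while the bias term depends on $G$ only via $\delta$.

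The main obstacle is the complex-representation caveat. If $\rho$ is unitary but not real, $\mathcal{P}_G$ need not map real symmetric matrices to real symmetric matrices. In that case the argument must either restrict to real orthogonal (for example permutation) representations, matching Definition~\ref{def:d_G_sym}, or work with Hermitian matrices, where $\mathrm{Var}$ of a Hermitian form under a circular Gaussian picks up a factor $1$ instead of $2$. I would handle this by stating the real-orthogonal case as the main result and noting the constant change in the circular case.
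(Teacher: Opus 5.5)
Your proof is correct and follows the paper's route: the bias term comes straight from the definition of $\delta$, and the variance term is $N^{-1}$ times the trace over $W_G$ of the Isserlis operator $\mathbf{A}\mapsto 2\boldsymbol{\Sigma}\mathbf{A}\boldsymbol{\Sigma}$, with the constant bounded by the spectrum of that operator, which lies in $[2\sigma_{\min}^2, 2\sigma_{\max}^2]$. Your orthonormal-basis version is slightly sharper than the paper's in two respects: it shows the variance identity holds exactly for every $N$, and it makes explicit that ``restricted to $W_G$'' must mean the compression $\mathcal{P}_G(\boldsymbol{\Sigma}\,\cdot\,\boldsymbol{\Sigma})$ on $W_G$, since that operator need not preserve $W_G$ when $\boldsymbol{\Sigma}\notin\mathcal{A}_G$.
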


\begin{proof}
The bias term is exactly $\|\mathcal{P}_G(\boldsymbol{\Sigma}) -
\boldsymbol{\Sigma}\|_F^2 = \|\mathcal{P}_G^\perp(\boldsymbol{\Sigma})
\|_F^2 = \delta^2(G, \boldsymbol{\Sigma})\,\|\boldsymbol{\Sigma}\|_F^2$
by definition of $\delta$ in \eqref{eq:delta}. For the variance, the
projection $\mathcal{P}_G$ acts as orthogonal projection on
$\mathrm{Sym}(M, \mathbb{R})$ onto $\mathcal{A}_G \cap \mathrm{Sym}(M)$
of dimension $d_G$ (by Schur's lemma applied to the conjugation
action of $\rho$ on $M \times M$ symmetric matrices). Under the
Gaussian model $\mathbf{x}_i \sim \mathcal{N}(\mathbf{0},
\boldsymbol{\Sigma})$, Wick's theorem (the four-point Isserlis
identity) gives the second-moment operator on $\mathrm{Sym}(M,
\mathbb{R})$ as $\boldsymbol{\Phi}_{\boldsymbol{\Sigma}}[\mathbf{A}] =
2\,\boldsymbol{\Sigma}\mathbf{A}\boldsymbol{\Sigma}$, with the
$N$-sample variance scaled by $1/N$. The variance term
$\mathbb{E}\|\mathcal{P}_G(\hat{\mathbf{R}} - \boldsymbol{\Sigma})
\|_F^2$ is therefore $N^{-1}\,\mathrm{tr}_{\mathcal{A}_G \cap
\mathrm{Sym}(M)}(\boldsymbol{\Phi}_{\boldsymbol{\Sigma}})$, which is
$c_{\mathrm{in}}(\boldsymbol{\Sigma}, G)\,d_G/N + o(1/N)$ where
$c_{\mathrm{in}}(\boldsymbol{\Sigma}, G)$ is the average eigenvalue
of $\boldsymbol{\Phi}_{\boldsymbol{\Sigma}}$ restricted to the
$d_G$-dimensional subspace $\mathcal{A}_G \cap \mathrm{Sym}(M)$. The
operator-norm bounds $2\sigma_{\min}(\boldsymbol{\Sigma})^2 \leq
c_{\mathrm{in}}(\boldsymbol{\Sigma}, G) \leq 2\sigma_{\max}(
\boldsymbol{\Sigma})^2$ follow from the spectral bounds on
$\boldsymbol{\Phi}_{\boldsymbol{\Sigma}}$. The variance therefore
depends on $G$ only through $d_G$ (the dimension of the projected
subspace), with the constant $c_{\mathrm{in}}$ depending on the
alignment of $\boldsymbol{\Sigma}$'s spectrum with the isotypic
decomposition of $\mathcal{A}_G \cap \mathrm{Sym}(M)$ but not on
$G$'s own structure beyond that subspace.
\end{proof}

\begin{remark}[On the variance constant at the LW corner]
\label{rem:variance_constant_lw}
The dependence of $c_{\mathrm{in}}$ on $\boldsymbol{\Sigma}$ is
essential, not cosmetic. Sanity check: at $G = O(M)$ acting by
conjugation, $d_G = 1$ (only scalar matrices are commutant), and
the in-algebra component is $\mathcal{P}_{O(M)}(\hat{\mathbf{R}}) =
(\mathrm{tr}\,\hat{\mathbf{R}}/M)\,\mathbf{I}$. Direct computation
with $\boldsymbol{\Sigma} = \mathbf{I}$ gives
$\mathrm{Var}(\mathrm{tr}\,\hat{\mathbf{R}}) = 2M/N$ (sum of $M$
independent $\chi^2$ trace contributions), so $V_{\mathrm{in}} =
\mathrm{Var}(\mathrm{tr}\,\hat{\mathbf{R}}/M)\cdot M = 2/N$. This is
consistent with $c_{\mathrm{in}}(\mathbf{I}, O(M))\cdot d_G/N = 2/N$
at $c_{\mathrm{in}} = 2$, the value forced by both bounds
$2\sigma_{\min}^2 = 2\sigma_{\max}^2 = 2$ for the isotropic spectrum.
A formulation with the variance constant proportional to
$\|\boldsymbol{\Sigma}\|_F^2 = M$ rather than to $\sigma^2$ would
over-predict the variance by a factor $M$ at this corner.
\end{remark}

\begin{theorem}[Variance reduction on the in-algebra component]
\label{thm:var_reduction}
Let $G$ be a finite group with unitary representation $\rho$ on
$\mathbb{C}^M$, and let $d_G$ be the symmetric-subspace commutant
dimension of Definition~\ref{def:d_G_sym}. Under the matched-limit
assumption $\boldsymbol{\Sigma} \in \mathcal{A}_G$ and sub-Gaussian
observations,
\begin{equation}\label{eq:var_reduction}
 V_{\mathrm{in}}(G) \;:=\; \mathbb{E}\bigl\|\mathcal{P}_G(\hat{\mathbf{R}}
 - \boldsymbol{\Sigma})\bigr\|_F^2 \;=\; c_{\mathrm{in}}(\boldsymbol{
 \Sigma}, G) \cdot \frac{d_G}{N} + o(1/N),
\end{equation}
where $c_{\mathrm{in}}(\boldsymbol{\Sigma}, G)$ is the average
eigenvalue of the Wishart second-moment operator
$\boldsymbol{\Phi}_{\boldsymbol{\Sigma}}[\mathbf{A}] = 2\,
\boldsymbol{\Sigma}\mathbf{A}\boldsymbol{\Sigma}$ restricted to
$\mathcal{A}_G \cap \mathrm{Sym}(M,\mathbb{R})$, satisfying
$2\,\sigma_{\min}(\boldsymbol{\Sigma})^2 \leq c_{\mathrm{in}}(
\boldsymbol{\Sigma}, G) \leq 2\,\sigma_{\max}(\boldsymbol{\Sigma})^2$.
Equivalently, $\hat{\mathbf{R}}_G$ achieves the variance behaviour of
a sample covariance with effective sample size $N \cdot M^2/d_G$ on
its image $\mathcal{A}_G$, up to the spectral-alignment factor
$c_{\mathrm{in}}/(2\sigma_{\max}^2)$.
\end{theorem}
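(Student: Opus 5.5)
The plan is to compute $V_{\mathrm{in}}(G)$ exactly rather than asymptotically, using the i.i.d.\ structure and the fact that $\mathcal{P}_G$ is an orthogonal projector. Write $\hat{\mathbf{R}} - \boldsymbol{\Sigma} = N^{-1}\sum_{n} \mathbf{Z}_n$ with $\mathbf{Z}_n := \mathbf{x}_n\mathbf{x}_n^\top - \boldsymbol{\Sigma}$ i.i.d.\ and mean zero. By linearity of $\mathcal{P}_G$ and independence, the cross terms vanish, so
\[
V_{\mathrm{in}}(G) = \frac{1}{N}\,\mathbb{E}\bigl\|\mathcal{P}_G(\mathbf{Z}_1)\bigr\|_F^2 .
\]
Hence the $o(1/N)$ remainder is in fact zero, and everything reduces to a single-observation computation.

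\textbf{Step 1 (basis expansion).} Since $\mathbf{Z}_1$ is real symmetric and $\mathcal{P}_G$ maps $\mathrm{Sym}(M,\mathbb{R})$ orthogonally onto $W_G$, I fix a Frobenius-orthonormal basis $\mathbf{E}_1,\dots,\mathbf{E}_{d_G}$ of $W_G$. Then
\[
\mathbb{E}\|\mathcal{P}_G(\mathbf{Z}_1)\|_F^2 = \sum_{j=1}^{d_G} \mathrm{Var}\bigl(\mathbf{x}^\top \mathbf{E}_j \mathbf{x}\bigr).
\]

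\textbf{Step 2 (Gaussian fourth moment).} Under Gaussianity, the Isserlis identity gives
\[
\mathrm{Var}(\mathbf{x}^\top \mathbf{E}\mathbf{x}) = 2\,\mathrm{tr}(\boldsymbol{\Sigma}\mathbf{E}\boldsymbol{\Sigma}\mathbf{E}) = \langle \mathbf{E}, \boldsymbol{\Phi}_{\boldsymbol{\Sigma}}[\mathbf{E}]\rangle_F .
\]
The sum in Step 1 is therefore $\mathrm{tr}\bigl(\mathcal{P}_G \boldsymbol{\Phi}_{\boldsymbol{\Sigma}} \mathcal{P}_G\bigr)$ on $W_G$.

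\textbf{Step 3 (invariance).} Because $\boldsymbol{\Sigma}\in\mathcal{A}_G$, for $\mathbf{A}\in W_G$ we have $\rho(g)\boldsymbol{\Sigma}\mathbf{A}\boldsymbol{\Sigma}\rho(g)^\top = \boldsymbol{\Sigma}\mathbf{A}\boldsymbol{\Sigma}$. So $\boldsymbol{\Phi}_{\boldsymbol{\Sigma}}$ maps $W_G$ into itself, and its restriction is a self-adjoint positive operator on a $d_G$-dimensional space. The trace is then $d_G$ times the average eigenvalue $c_{\mathrm{in}}$.

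\textbf{Step 4 (eigenvalue bounds).} The bounds on $c_{\mathrm{in}}$ follow from
\[
\langle \mathbf{A},\boldsymbol{\Sigma}\mathbf{A}\boldsymbol{\Sigma}\rangle_F = \|\boldsymbol{\Sigma}^{1/2}\mathbf{A}\boldsymbol{\Sigma}^{1/2}\|_F^2 \in \bigl[\sigma_{\min}^2\|\mathbf{A}\|_F^2,\ \sigma_{\max}^2\|\mathbf{A}\|_F^2\bigr].
\]
Every eigenvalue of the restriction therefore lies in $[2\sigma_{\min}^2, 2\sigma_{\max}^2]$, and so does their average.

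\textbf{Step 5 (effective sample size).} For the closing equivalence, I compare with the unprojected risk. Running the same computation on all of $\mathrm{Sym}(M,\mathbb{R})$ gives $\mathbb{E}\|\hat{\mathbf{R}}-\boldsymbol{\Sigma}\|_F^2 = N^{-1}\mathrm{tr}(\boldsymbol{\Phi}_{\boldsymbol{\Sigma}})$, which is of order $M^2/N$. The ratio of the two risks then gives an effective sample size of order $N M^2/d_G$, up to the alignment factor $c_{\mathrm{in}}/(2\sigma_{\max}^2)$; I would state this as an order-of-magnitude reading rather than an identity. The complex case $\mathbb{C}^M$ is handled identically with a Hermitian basis, changing only the constant $2$.

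\textbf{Main obstacle.} The hard point is the sub-Gaussian (non-Gaussian) hypothesis. There $\mathrm{Var}(\mathbf{x}^\top\mathbf{E}\mathbf{x})$ equals $\langle \mathbf{E}, \boldsymbol{\Phi}_{\boldsymbol{\Sigma}}[\mathbf{E}]\rangle_F$ plus a fourth-cumulant term $\kappa_4(\mathbf{E})$, so Steps 2--4 hold only with $\boldsymbol{\Phi}_{\boldsymbol{\Sigma}}$ replaced by the true fourth-moment covariance operator $\boldsymbol{\Gamma}$ of $\mathbf{x}\mathbf{x}^\top$. I would first try proving that $\boldsymbol{\Gamma}$ is also $G$-equivariant; this holds if the law of $\mathbf{x}$ (not just $\boldsymbol{\Sigma}$) is $G$-invariant, which I would add as an assumption. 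The identity $V_{\mathrm{in}} = N^{-1}\mathrm{tr}_{W_G}\boldsymbol{\Gamma}$ then remains exact. The stated Wishart value of $c_{\mathrm{in}}$ and its two-sided bounds are exact in the Gaussian case. In the sub-Gaussian case the bounds are perturbed by $O(\|\kappa_4\|)$, controlled by the sub-Gaussian norm, so I would either state the theorem in the Gaussian case or reinterpret $c_{\mathrm{in}}$ as the average eigenvalue of $\boldsymbol{\Gamma}$ on $W_G$ with correspondingly adjusted bounds.
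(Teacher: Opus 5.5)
Your proposal is correct and follows essentially the same route as the paper's proof: orthogonal projection onto $W_G$, the Isserlis/Wishart operator $\boldsymbol{\Phi}_{\boldsymbol{\Sigma}}$, the restricted trace written as $d_G$ times its average eigenvalue with Rayleigh-quotient bounds, and an order-of-magnitude comparison with the unstructured rate $M(M+1)/(2N)$; your observation that $V_{\mathrm{in}}(G) = N^{-1}\mathbb{E}\|\mathcal{P}_G(\mathbf{Z}_1)\|_F^2$ holds exactly sharpens the paper's $o(1/N)$. Your caveat about the sub-Gaussian hypothesis is well founded, because the paper's proof also silently invokes the Gaussian Isserlis identity and the stated bounds genuinely fail without it (e.g.\ with i.i.d.\ Rademacher coordinates, $\boldsymbol{\Sigma} = \mathbf{I}$ and $G = S_M$, one has $\mathbf{x}^\top\mathbf{x} = M$ deterministically and $c_{\mathrm{in}} = 1 < 2\sigma_{\min}^2$), so either restricting to Gaussian data or redefining $c_{\mathrm{in}}$ via the true fourth-moment operator, as you suggest, is the right repair.
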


\begin{proof}
The projection $\mathcal{P}_G$ is orthogonal projection on
$\mathrm{Sym}(M, \mathbb{R})$ onto $\mathcal{A}_G \cap \mathrm{Sym}(M)$
of dimension $d_G$ (Definition~\ref{def:d_G_sym}), by Schur's lemma
applied to the conjugation action. Under the matched-limit assumption
$\boldsymbol{\Sigma} \in \mathcal{A}_G$, the law of $\mathbf{x}$ is
invariant under $\rho(g)$ in the second-moment sense, so the
covariance operator $\boldsymbol{\Phi}_{\boldsymbol{\Sigma}}$ of
$\mathbf{u} := \hat{\mathbf{R}} - \boldsymbol{\Sigma}$ on
$\mathrm{Sym}(M)$ commutes with $\mathcal{P}_G$. The in-algebra
component is therefore
\[
 V_{\mathrm{in}}(G) = \mathbb{E}\bigl\| \mathcal{P}_G(\mathbf{u})
 \bigr\|_F^2 = N^{-1}\, \mathrm{tr}_{\mathcal{A}_G \cap \mathrm{Sym}(M)}
 \bigl(\boldsymbol{\Phi}_{\boldsymbol{\Sigma}}\bigr),
\]
where $\boldsymbol{\Phi}_{\boldsymbol{\Sigma}}[\mathbf{A}] = 2
\boldsymbol{\Sigma}\mathbf{A}\boldsymbol{\Sigma}$ is the Wishart
second-moment operator (the four-point Isserlis identity). The
restricted trace is $c_{\mathrm{in}}(\boldsymbol{\Sigma}, G) \cdot d_G$
where $c_{\mathrm{in}}(\boldsymbol{\Sigma}, G)$ is the average
eigenvalue of $\boldsymbol{\Phi}_{\boldsymbol{\Sigma}}$ on
$\mathcal{A}_G \cap \mathrm{Sym}(M)$, bounded above by
$2\sigma_{\max}(\boldsymbol{\Sigma})^2$ and below by
$2\sigma_{\min}(\boldsymbol{\Sigma})^2$ since the eigenvalues of
$\boldsymbol{\Phi}_{\boldsymbol{\Sigma}}$ lie in the interval
$[2\sigma_{\min}(\boldsymbol{\Sigma})^2, 2\sigma_{\max}(\boldsymbol{
\Sigma})^2]$. Combining gives \eqref{eq:var_reduction}. The effective
sample size $N \cdot M^2/d_G$ follows from comparing the rate $d_G/N$
to the unstructured rate $M(M+1)/(2N) \asymp M^2/N$ on the full
symmetric subspace, up to the spectral factor $c_{\mathrm{in}}/(2
\sigma_{\max}^2) \in (\sigma_{\min}^2/\sigma_{\max}^2, 1]$.
\end{proof}

\subsection{Asymptotic crossover}
\label{sec:theory-crossover}

The two calibrations of the shrinkage intensity introduced in
Section~\ref{sec:theory-estimator} have closed-form leading-order
optima as $N \to \infty$, and they generally differ. The next
result locates the asymptotic optimum for each calibration to
leading order.

\begin{proposition}[Asymptotic crossover, Gaussian-Wishart]
\label{prop:transition}
Let $\mathbf{B}_G := \boldsymbol{\Sigma} -
\mathcal{P}_G(\boldsymbol{\Sigma}) \neq 0$ and $D :=
\|\mathbf{B}_G\|_F^2$. Assume the observations $\mathbf{x}_1, \ldots,
\mathbf{x}_N$ are i.i.d.\ multivariate Gaussian with population
covariance $\boldsymbol{\Sigma}$ (so the training-fold sample
covariance error $\mathbf{E} := \widehat{\mathbf{R}}^{\mathrm{tr}} -
\boldsymbol{\Sigma}$ has the Wishart fourth-moment structure used
below), and assume $\boldsymbol{\Sigma}$ has bounded condition number
$0 < \lambda \leq \sigma_{\min}(\boldsymbol{\Sigma}) \leq \sigma_{
\max}(\boldsymbol{\Sigma}) \leq \Lambda < \infty$. Define
$Q_B := \mathrm{tr}\bigl(\boldsymbol{\Sigma}^{-1}
\mathbf{B}_G\boldsymbol{\Sigma}^{-1}\mathbf{B}_G\bigr)$
and let $c_\perp := \lim_{N \to \infty} N\, \mathbb{E}\|\mathbf{E}_\perp
\|_F^2 \in (0, \infty)$.

The MSE minimizer is
\begin{equation}\label{eq:transition_mse}
 \alpha^*_{\mathrm{MSE}} \;=\; \frac{c_\perp}{N D + c_\perp}
 \bigl(1 + o(1)\bigr).
\end{equation}
The expected-NLL minimizer is
\begin{equation}\label{eq:transition_nll}
 \bar\alpha^*_{\mathrm{NLL}} \;=\; \frac{c(\boldsymbol{\Sigma}, G)}
 {N\, Q_B}\bigl(1 + o(1)\bigr),
\end{equation}
where the leading-order constant has the closed form
\begin{equation}\label{eq:c_sigma_G}
 c(\boldsymbol{\Sigma}, G) \;=\; M(M + 1) - 2\,d_G
 - 2(M + 1)\,\mathrm{tr}\!\bigl(\boldsymbol{\Sigma}^{-1}
 \mathbf{B}_G\bigr).
\end{equation}
The constant $c(\boldsymbol{\Sigma}, G)$ is strictly positive whenever
$|G| > 1$ acts non-trivially on $\mathbb{C}^M$.
\end{proposition}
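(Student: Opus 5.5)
The plan is to obtain the two minimizers by separate routes: the MSE statement follows directly from Proposition~\ref{prop:alpha_star}, and the NLL statement comes from a second-order (plus one third-order cross term) Taylor expansion of the expected held-out NLL around $\boldsymbol{\Sigma}$ in the spirit of Lemma~\ref{lem:stein_leading}.

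\emph{MSE part.} By Proposition~\ref{prop:alpha_star} (equivalently Lemma~\ref{lem:risk_decomp}), $\alpha^*_{\mathrm{MSE}} = V_\perp/(V_\perp + D)$ exactly. Under the Gaussian model, $\mathbb{E}\|\mathbf{E}_\perp\|_F^2$ is $N^{-1}$ times the trace of the Isserlis operator $\boldsymbol{\Phi}_{\boldsymbol{\Sigma}}$ restricted to $\mathcal{A}_G^\perp$, so $V_\perp = c_\perp/N$ with no higher-order correction. Substituting gives $\alpha^*_{\mathrm{MSE}} = c_\perp/(c_\perp + ND)$, which is \eqref{eq:transition_mse}; the factor $(1+o(1))$ only absorbs the difference between training-fold and full-sample sizes. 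Positivity $c_\perp>0$ follows from $\boldsymbol{\Phi}_{\boldsymbol{\Sigma}} \succeq 2\lambda^2$ together with $\mathcal{A}_G^\perp \cap \mathrm{Sym}(M)\neq\{0\}$, which holds whenever $G$ acts non-trivially.

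\emph{NLL part.} Since the test fold is independent of the training fold,
\[
\mathbb{E}\,\widehat{L}_{\mathrm{ho}}(\alpha) = \tfrac12\,\mathbb{E}\bigl[\log\det\boldsymbol{\Sigma}_\alpha + \mathrm{tr}(\boldsymbol{\Sigma}_\alpha^{-1}\boldsymbol{\Sigma})\bigr].
\]
I will write $\boldsymbol{\Sigma}_\alpha = \boldsymbol{\Sigma} + \boldsymbol{\Delta}_\alpha$ with
\[
\boldsymbol{\Delta}_\alpha = -\alpha\mathbf{B}_G + \mathbf{E}_{\mathrm{in}} + (1-\alpha)\mathbf{E}_\perp ,
\]
and set $\mathbf{W} = \boldsymbol{\Sigma}^{-1/2}\boldsymbol{\Delta}_\alpha\boldsymbol{\Sigma}^{-1/2}$. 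The bounded condition number keeps $\|\mathbf{W}\|<1$ with overwhelming probability, so the expansion of Lemma~\ref{lem:stein_leading} applies. Its linear terms cancel, leaving
\[
\tfrac14\mathrm{tr}(\mathbf{W}^2) - \tfrac13\mathrm{tr}(\mathbf{W}^3) + O(\|\mathbf{W}\|^4).
\]
I expect $\alpha = O(1/N)$ at the optimum, so I keep only terms of order $\alpha^2$, $\alpha/N$ and $1/N$:
\begin{itemize}
\item From $\tfrac14\mathbb{E}\,\mathrm{tr}(\mathbf{W}^2)$ I get $\tfrac14\alpha^2 Q_B$ plus $\tfrac14\mathbb{E}\,\mathrm{tr}\bigl(\boldsymbol{\Sigma}^{-1}(\mathbf{E}_{\mathrm{in}}+(1-\alpha)\mathbf{E}_\perp)\boldsymbol{\Sigma}^{-1}(\cdots)\bigr)$. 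The mixed $\mathbf{B}$--$\mathbf{E}$ term vanishes because $\mathbb{E}\mathbf{E}=0$.
\item From the cubic term, the only surviving contribution at this order is the one linear in $\alpha\mathbf{B}_G$ and quadratic in $\mathbf{E}$, of order $\alpha/N$. Via Wick's theorem I expect it to produce the $\mathrm{tr}(\boldsymbol{\Sigma}^{-1}\mathbf{B}_G)$ term in \eqref{eq:c_sigma_G}.
\item All remaining terms are $O(N^{-3/2})$ or $O(\alpha^2/N)$.
\end{itemize}
Differentiating in $\alpha$ and setting the derivative to zero then yields $\bar\alpha^*_{\mathrm{NLL}}\, Q_B = N^{-1}c(\boldsymbol{\Sigma},G)\,(1+o(1))$.

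\emph{Evaluating the constant.} The $1/N$ coefficient in $c(\boldsymbol{\Sigma},G)$ is the derivative at $\alpha=0$ of the variance term, namely
\[
-\mathbb{E}\,\mathrm{tr}\bigl(\boldsymbol{\Sigma}^{-1}\mathbf{E}\,\boldsymbol{\Sigma}^{-1}\mathbf{E}_\perp\bigr),
\]
computed with the Isserlis identity $\mathbb{E}[\mathbf{E}\otimes\mathbf{E}] = N^{-1}(\text{symmetrized } \boldsymbol{\Sigma}\otimes\boldsymbol{\Sigma})$. The full-space Wishart trace $\mathbb{E}\,\mathrm{tr}(\boldsymbol{\Sigma}^{-1}\mathbf{E}\boldsymbol{\Sigma}^{-1}\mathbf{E}) = M(M+1)/N$ supplies the first term of \eqref{eq:c_sigma_G}. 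Removing the in-algebra part should supply $-2d_G$, but here lies a subtlety: $\boldsymbol{\Sigma}\notin\mathcal{A}_G$, so the split is not orthogonal in the Mahalanobis metric. I will handle this by writing $\boldsymbol{\Sigma} = \mathcal{P}_G(\boldsymbol{\Sigma}) + \mathbf{B}_G$ and expanding to first order in $\mathbf{B}_G$. The zeroth-order piece gives exactly $2d_G$, because for $\boldsymbol{\Sigma}\in\mathcal{A}_G$ the operator $\boldsymbol{\Phi}$ commutes with $\mathcal{P}_G$, as in Theorem~\ref{thm:var_reduction}. The first-order piece merges with the cubic contribution into $-2(M+1)\mathrm{tr}(\boldsymbol{\Sigma}^{-1}\mathbf{B}_G)$.

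\emph{Main obstacle.} I expect the hardest step to be this bookkeeping of the $\mathbf{B}_G$-linear corrections, together with the strict positivity of $c(\boldsymbol{\Sigma},G)$. For positivity my first attempt would be to note that $\mathrm{tr}(\boldsymbol{\Sigma}^{-1}\mathbf{B}_G)$ is small relative to $M$ in the regime where $Q_B$ is small. I would then use $d_G \le M(M+1)/2 - 1$ whenever $G$ acts non-trivially, so that $M(M+1) - 2d_G \ge 2$. If that is not enough, I would instead read positivity off the sign of the derivative at $\alpha=0$ directly, mirroring the argument of Theorem~\ref{thm:matched}.
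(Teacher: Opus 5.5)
Your NLL route (expand the expected held-out NLL to quadratic order, add the cubic cross term, evaluate by Isserlis, balance against $\tfrac12\alpha Q_B$) matches the paper's; the paper differentiates first and expands the resolvent, which produces the same three contributions. But the ``subtlety'' you flag in the variance term is not there, and your plan for it would not yield the stated closed form.

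The quantity that enters is the cross-contraction $\mathbb{E}\,\mathrm{tr}(\boldsymbol{\Sigma}^{-1}\mathbf{E}_\perp\boldsymbol{\Sigma}^{-1}\mathbf{E}) = \mathbb{E}\langle \mathbf{E}, \mathcal{P}_G^\perp\mathcal{L}^2[\mathbf{E}]\rangle_F$, with $\mathcal{L}^2[\mathbf{X}] = \boldsymbol{\Sigma}^{-1}\mathbf{X}\boldsymbol{\Sigma}^{-1}$. It is not a Mahalanobis norm of $\mathbf{E}_{\mathrm{in}}$ or $\mathbf{E}_\perp$, so Mahalanobis orthogonality is irrelevant. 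Use $\mathrm{Cov}(\mathbf{E}) = N^{-1}\boldsymbol{\Phi}_{\boldsymbol{\Sigma}}$ on $\mathrm{Sym}(M)$ and the identity $\mathcal{L}^2\circ\boldsymbol{\Phi}_{\boldsymbol{\Sigma}} = 2\,\mathrm{Id}$, which holds for \emph{every} SPD $\boldsymbol{\Sigma}$. This gives exactly $N\,\mathbb{E}\,\mathrm{tr}(\boldsymbol{\Sigma}^{-1}\mathbf{E}_\perp\boldsymbol{\Sigma}^{-1}\mathbf{E}) = \mathrm{Tr}(\mathcal{P}_G^\perp\mathcal{L}^2\boldsymbol{\Phi}_{\boldsymbol{\Sigma}}) = 2\,\mathrm{Tr}_{\mathrm{Sym}}(\mathcal{P}_G^\perp) = M(M+1) - 2d_G$. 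There is no dependence on $\mathbf{B}_G$, and no need for $\boldsymbol{\Phi}_{\boldsymbol{\Sigma}}$ to commute with $\mathcal{P}_G$.

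Hence there is no first-order piece to merge. The cubic cross term alone, $N\,\mathbb{E}\,\mathrm{tr}(\boldsymbol{\Sigma}^{-1}\mathbf{B}_G\boldsymbol{\Sigma}^{-1}\mathbf{E}\boldsymbol{\Sigma}^{-1}\mathbf{E}) = (M+1)\,\mathrm{tr}(\boldsymbol{\Sigma}^{-1}\mathbf{B}_G)$, accounts for the whole $-2(M+1)\,\mathrm{tr}(\boldsymbol{\Sigma}^{-1}\mathbf{B}_G)$ in $c(\boldsymbol{\Sigma},G)$. Expanding to first order in $\mathbf{B}_G$ instead leaves an unquantified $O(\|\mathbf{B}_G\|_F^2)$ error in the constant, which for fixed $\mathbf{B}_G\neq 0$ is not $o(1)$ as $N\to\infty$. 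And if you produced a nonzero first-order term and added it to the cubic one, the constant would come out wrong. Also watch the factors: with your $\tfrac14\mathrm{tr}(\mathbf{W}^2)$ normalization, the variance derivative at $\alpha=0$ is $-\tfrac12\mathbb{E}\,\mathrm{tr}(\boldsymbol{\Sigma}^{-1}\mathbf{E}_\perp\boldsymbol{\Sigma}^{-1}\mathbf{E})$, to be matched against $\tfrac12\alpha Q_B$.

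Positivity is a genuine gap. The statement asserts $c(\boldsymbol{\Sigma},G)>0$ for every admissible $\boldsymbol{\Sigma}$, not only when $\mathrm{tr}(\boldsymbol{\Sigma}^{-1}\mathbf{B}_G)$ is small, so your first attempt adds a hypothesis. Your fallback is circular: $N\,\bar L'_{\mathrm{ho}}(0) = -\tfrac12 c(\boldsymbol{\Sigma},G) + o(1)$, so the sign of the derivative is exactly what you are trying to prove. Theorem~\ref{thm:matched} also does not help, since it concerns $\mathbf{B}_G=0$, where the trace term is absent.

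The missing ingredient is the sign fact of Lemma~\ref{lem:c_positive}: $\mathrm{tr}(\boldsymbol{\Sigma}^{-1}\mathbf{B}_G)\le 0$. A direct proof goes as follows. Write $\mathrm{tr}(\boldsymbol{\Sigma}^{-1}\mathbf{B}_G) = M - |G|^{-1}\sum_{g}\mathrm{tr}(\boldsymbol{\Sigma}^{-1}\rho(g)\boldsymbol{\Sigma}\rho(g)^\top)$. Each $\boldsymbol{\Sigma}^{-1/2}\rho(g)\boldsymbol{\Sigma}\rho(g)^\top\boldsymbol{\Sigma}^{-1/2}$ is SPD with determinant $1$ (since $\rho(g)$ is orthogonal), so by AM--GM its trace is at least $M$. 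Then $c(\boldsymbol{\Sigma},G)\ge M(M+1)-2d_G\ge 2$. The last step holds because $\mathbf{B}_G=\mathcal{P}_G^\perp(\boldsymbol{\Sigma})\neq 0$ forces $\mathcal{P}_G\neq\mathrm{Id}$ on $\mathrm{Sym}(M)$.

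A smaller point in the MSE part: the $(1+o(1))$ cannot absorb a training-fold versus full-sample mismatch, which is the fixed ratio $K/(K-1)$. $N$ must denote the same sample size in both places, and then your formula is exact.
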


\begin{proof}
\emph{MSE statement.} The orthogonal risk decomposition of
Lemma~\ref{lem:risk_decomp} gives $R_{\mathrm{MSE}}(\alpha; G) =
V_{\mathrm{in}} + (1 - \alpha)^2 V_\perp + \alpha^2 D$.
Differentiating in $\alpha$ and setting equal to zero yields the
population MSE optimum $\alpha^*_{\mathrm{MSE}} = V_\perp/(V_\perp +
D)$. Substituting $V_\perp = c_\perp/N + o(1/N)$ and multiplying
numerator and denominator by $N$ gives \eqref{eq:transition_mse}.

\emph{NLL statement.} Write the training-fold estimator as
$\boldsymbol{\Sigma}_\alpha^{\mathrm{tr}} = \boldsymbol{\Sigma} +
\boldsymbol{\delta}_\alpha$ with $\boldsymbol{\delta}_\alpha = -\alpha
\mathbf{B}_G + \mathbf{E}_{\mathrm{in}} + (1 - \alpha)
\mathbf{E}_\perp$, where $\mathbf{E}_{\mathrm{in}} = \mathcal{P}_G
(\mathbf{E})$ and $\mathbf{E}_\perp = (\mathbf{I} - \mathcal{P}_G)
(\mathbf{E})$. Differentiating the expected held-out NLL in $\alpha$
yields the exact identity
\[
 \bar L'_{\mathrm{ho}}(\alpha) = -\tfrac{1}{2}\,\mathbb{E}\!\left[
 \mathrm{tr}\!\bigl((\boldsymbol{\Sigma}_\alpha^{\mathrm{tr}})^{-1}
 (\mathbf{B}_G + \mathbf{E}_\perp)
 (\boldsymbol{\Sigma}_\alpha^{\mathrm{tr}})^{-1}
 (\boldsymbol{\Sigma}_\alpha^{\mathrm{tr}} - \boldsymbol{\Sigma})
 \bigr)\right].
\]
Expanding the resolvent at $\alpha = 0$ as $\widehat{\mathbf{R}}^{-1}
= \boldsymbol{\Sigma}^{-1} - \boldsymbol{\Sigma}^{-1}\mathbf{E}
\boldsymbol{\Sigma}^{-1} + O(\|\mathbf{E}\|^2)$ and tracking which
combinations survive at leading order $1/N$ identifies three
contributing terms:
\[
 T_2 = \mathbb{E}\!\left[\mathrm{tr}\!\bigl(\boldsymbol{\Sigma}^{-1}
 \mathbf{E}_\perp \boldsymbol{\Sigma}^{-1}\mathbf{E}\bigr)\right],
\]
and $T_3, T_5$ each of the form $-\mathbb{E}[\mathrm{tr}(\mathbf{A}
\mathbf{E}\mathbf{B}\mathbf{E})]$ for symmetric $\mathbf{A},
\mathbf{B}$ involving $\boldsymbol{\Sigma}^{-1}$ and
$\boldsymbol{\Sigma}^{-1}\mathbf{B}_G\boldsymbol{\Sigma}^{-1}$.
The Wishart fourth-moment Isserlis identity $N\,\mathbb{E}[\mathrm{tr}
(\mathbf{A}\mathbf{E}\mathbf{B}\mathbf{E})] = \mathrm{tr}(
\boldsymbol{\Sigma}\mathbf{A}\boldsymbol{\Sigma}\mathbf{B}) +
\mathrm{tr}(\mathbf{A}\boldsymbol{\Sigma})\,\mathrm{tr}(\mathbf{B}
\boldsymbol{\Sigma})$ together with the operator identity
$\mathcal{L}^2 \circ \boldsymbol{\Phi}_{\boldsymbol{\Sigma}} = 2\,
\mathrm{Id}$ on $\mathrm{Sym}(M, \mathbb{R})$, where
\[
 \boldsymbol{\Phi}_{\boldsymbol{\Sigma}}[\mathbf{B}] :=
 2\,\boldsymbol{\Sigma}\,\mathbf{B}\,\boldsymbol{\Sigma}, \qquad
 \mathcal{L}[\mathbf{B}] :=
 \boldsymbol{\Sigma}^{-1/2}\,\mathbf{B}\,\boldsymbol{\Sigma}^{-1/2},
\]
denote respectively the Wishart second-moment operator and the
$\boldsymbol{\Sigma}^{-1/2}$-conjugation map (so $\mathcal{L} \circ
\boldsymbol{\Phi}_{\boldsymbol{\Sigma}}[\mathbf{B}] = 2\,
\boldsymbol{\Sigma}^{1/2}\mathbf{B}\boldsymbol{\Sigma}^{1/2}$ and a
second application of $\mathcal{L}$ recovers $2\mathbf{B}$, hence the
identity), gives
\[
 N\,T_2 = 2\,\mathrm{Tr}_{\mathrm{Sym}}(\mathcal{P}_G^\perp) =
 M(M+1) - 2\,d_G,
\]
and a parallel calculation gives $N(T_3 + T_5) = -2(M + 1)\,
\mathrm{tr}(\boldsymbol{\Sigma}^{-1}\mathbf{B}_G)$. Combining,
\[
 N\,\bar L'_{\mathrm{ho}}(0) = -\tfrac{1}{2}\bigl[(M(M+1) - 2 d_G) -
 2(M+1)\,\mathrm{tr}(\boldsymbol{\Sigma}^{-1}\mathbf{B}_G)
 \bigr] + o(1).
\]
For nonzero $\alpha = O(1/N)$, the deterministic bias contribution
$-\alpha\mathbf{B}_G$ in $\boldsymbol{\delta}_\alpha$ adds a
term $\tfrac{1}{2}\alpha\,Q_B$ to $\bar L'_{\mathrm{ho}}(\alpha)$.
Setting $\bar L'_{\mathrm{ho}}(\alpha) = 0$ to leading order yields
\eqref{eq:transition_nll} with $c(\boldsymbol{\Sigma}, G)$ as in
\eqref{eq:c_sigma_G}. Strict positivity of $c(\boldsymbol{\Sigma},
G)$ when $|G| > 1$ acts non-trivially is Lemma~\ref{lem:c_positive}.
\end{proof}

\begin{remark}[Plug-in bias at finite $N$]\label{rem:plugin_bias}
The prediction \eqref{eq:transition_nll}--\eqref{eq:c_sigma_G} is
stated in terms of the population $\boldsymbol{\Sigma}$. Evaluating
it on data substitutes $\hat{\mathbf{R}}^{-1}$ for
$\boldsymbol{\Sigma}^{-1}$ and $\hat{\mathbf{R}} - \mathcal{P}_G
(\hat{\mathbf{R}})$ for $\mathbf{B}_G$. Two distinct sources of
error result. The first is the higher-order term in
$\|\mathbf{B}_G\|$ omitted by the leading-order expansion; it is
controlled by the empirical residual $\delta(G, \hat{\mathbf{R}})$
and vanishes in the matched limit. The second is the finite-sample
bias of $\hat{\mathbf{R}}^{-1}$ as an estimator of
$\boldsymbol{\Sigma}^{-1}$, which scales with the conditioning of
$\hat{\mathbf{R}}$ and is largest at $M / N$ near unity or when the
population eigenvalue spectrum is highly non-uniform. The two
sources are independent. The held-out calibration
$\hat\alpha^*_{\mathrm{NLL}}$ avoids the second source entirely
because it does not invert any sample covariance during calibration;
the empirical sections accordingly find it more reliable than the
closed-form prediction on ill-conditioned data.
\end{remark}

\subsection{Performance guarantees}
\label{sec:theory-perf}

We now give the finite-sample guarantees for the two calibrations:
a regret bound for the held-out NLL minimizer, a sufficient-match
condition under which the proposed estimator dominates Ledoit-Wolf
shrinkage in Frobenius MSE, and an oracle inequality for the
closed-form MSE plug-in.

\begin{theorem}[Regret bound for the held-out calibration]
\label{thm:regret}
Let $G$ be fixed and suppose $\mathbf{B}_G := \boldsymbol{\Sigma}
- \mathcal{P}_G(\boldsymbol{\Sigma}) \neq 0$. Assume that
$\boldsymbol{\Sigma}$ has bounded condition number $\lambda \leq
\sigma_{\min}(\boldsymbol{\Sigma}) \leq \sigma_{\max}(\boldsymbol{\Sigma})
\leq \Lambda$, that the observations are sub-Gaussian
\citep[see, e.g.,][Section 2.5]{vershynin2018hd} with constants
independent of $M$ and $N$, and that the population loss has one-sided
quadratic curvature at the boundary minimizer $\alpha = 0$: there
exists $c_B > 0$ such that $L'_{\mathrm{pop}}(\alpha; G) \geq
c_B \alpha$ for all $\alpha \in [0, 1]$. Let
$\hat\alpha^*_{\mathrm{NLL}}$ be the held-out NLL minimizer computed
from $K \leq N/2$ comparable folds. Then there exists $C > 0$,
depending on $K, \lambda, \Lambda$, the sub-Gaussian constant, and
$c_B$, such that for all sufficiently large $N$,
\begin{equation}\label{eq:regret}
 \mathbb{E}\bigl[ L_{\mathrm{pop}}(\hat\alpha^*_{\mathrm{NLL}}; G)
 \bigr] - \min_{\alpha \in [0, 1]} L_{\mathrm{pop}}(\alpha; G)
 \;\leq\; C \cdot \frac{M^2}{N}.
\end{equation}
\end{theorem}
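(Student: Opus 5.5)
The approach is the standard M-estimation argument for a minimizer sitting at the boundary of $[0,1]$: a uniform deviation bound on the \emph{derivative} of the empirical objective, combined with the one-sided curvature hypothesis. By Proposition~\ref{prop:nll_pop_min}, $\mathbf{B}_G \neq 0$ puts the population minimizer at $\alpha = 0$, so the regret equals $L_{\mathrm{pop}}(\hat\alpha^*_{\mathrm{NLL}}) - L_{\mathrm{pop}}(0)$. Writing $\widehat{L}_{\mathrm{ho}}$ as in \eqref{eq:nll_foc_hold}, set
\[
\Delta(\alpha) := \widehat{L}_{\mathrm{ho}}'(\alpha) - L_{\mathrm{pop}}'(\alpha).
\]
Since $\hat\alpha := \hat\alpha^*_{\mathrm{NLL}}$ minimizes $\widehat{L}_{\mathrm{ho}}$ on $[0,1]$, either $\hat\alpha = 0$ or $\widehat{L}_{\mathrm{ho}}'(\hat\alpha) \le 0$. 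In the second case the hypothesis $L'_{\mathrm{pop}}(\alpha) \ge c_B\alpha$ gives
\[
c_B\hat\alpha \le L_{\mathrm{pop}}'(\hat\alpha) \le -\Delta(\hat\alpha),
\]
so in both cases $\hat\alpha \le \sup_{\alpha}|\Delta(\alpha)|/c_B$.

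Next, I need an upper bound on the regret in terms of $\hat\alpha$. Integrating the derivative formula from Proposition~\ref{prop:nll_pop_min} gives $L_{\mathrm{pop}}(\hat\alpha) - L_{\mathrm{pop}}(0) \le \tfrac{1}{2}\hat\alpha^2 \sup_\alpha Q_\alpha$, where $Q_\alpha := \mathrm{tr}\bigl((\boldsymbol{\Sigma}_\alpha^{\mathrm{pop}})^{-1}\mathbf{B}_G(\boldsymbol{\Sigma}_\alpha^{\mathrm{pop}})^{-1}\mathbf{B}_G\bigr)$. This is bounded by $\lambda^{-2}\|\mathbf{B}_G\|_F^2$, using that $\boldsymbol{\Sigma}_\alpha^{\mathrm{pop}} = (1-\alpha)\boldsymbol{\Sigma} + \alpha\mathcal{P}_G(\boldsymbol{\Sigma}) \succeq \lambda\mathbf{I}$ (the Reynolds average preserves the spectral lower bound). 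The regret is therefore at most $C' \sup_\alpha|\Delta(\alpha)|^2 / c_B^2$. It then suffices to show
\[
\mathbb{E}\sup_{\alpha\in[0,1]}|\Delta(\alpha)|^2 \lesssim M^2/N,
\]
with the $\|\mathbf{B}_G\|_F$ factors absorbed into $c_B$ and the constants.

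The main obstacle is controlling $\Delta$, which I would split into three parts.
\begin{enumerate}
\item[(a)] \emph{Test-fold noise.} This part is linear in $\hat{\mathbf{R}}_{\mathrm{te}}^{(k)} - \boldsymbol{\Sigma}$, paired against a training-measurable matrix $\tfrac{1}{2}\boldsymbol{\Sigma}_\alpha^{-1}(\hat{\mathbf{R}}_{\mathrm{tr}} - \hat{\mathbf{R}}_G^{\mathrm{tr}})\boldsymbol{\Sigma}_\alpha^{-1}$.
\item[(b)] \emph{Training-fold fluctuation.} This is the difference between the derivative evaluated at $\boldsymbol{\Sigma}_\alpha^{(k)}$ and at $\boldsymbol{\Sigma}_\alpha^{\mathrm{pop}}$.
\item[(c)] \emph{Smoothness in $\alpha$.} This makes the supremum over $\alpha$ cost only a constant.
\end{enumerate}

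For (a), condition on the training fold and use sub-Gaussian concentration of quadratic forms (Hanson--Wright). This gives a second moment of order $\|\cdot\|_F^2/N$ times the spectral constants.

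For (b), work on the high-probability event $\|\hat{\mathbf{R}}_{\mathrm{tr}} - \boldsymbol{\Sigma}\|_{\mathrm{op}} \le \lambda/2$, which holds for $N \gtrsim M$ by standard sub-Gaussian covariance bounds (this is the source of ``sufficiently large $N$''). On that event all resolvents are bounded by $2/\lambda$ in operator norm, uniformly in $\alpha$, because both endpoints of the blend are then $\succeq (\lambda/2)\mathbf{I}$. The derivative is then Lipschitz in $\hat{\mathbf{R}}_{\mathrm{tr}}$ with respect to the Frobenius norm, with constants depending on $\lambda$ and $\Lambda$. Combined with $\mathbb{E}\|\hat{\mathbf{R}}_{\mathrm{tr}} - \boldsymbol{\Sigma}\|_F^2 \lesssim K\Lambda^2 M^2/N$, this yields the $M^2/N$ rate. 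On the complementary event, whose probability is exponentially small, I would use crude polynomial moment bounds together with the compactness of $[0,1]$.

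For (c), $\Delta$ is differentiable in $\alpha$ with derivative bounded on the good event, so a one-dimensional chaining or grid argument on $[0,1]$ loses only constants. Averaging over the $K$ folds, each of size roughly $N/K$, produces the $K$-dependence in $C$. The delicate points are the uniformity of the resolvent bounds across $\alpha$ and keeping $\mathrm{tr}$-type terms at order $M^2$ rather than higher.
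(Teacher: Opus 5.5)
Your proposal is correct and follows the paper's proof essentially step for step. Your $\Delta$ is the paper's derivative fluctuation process $Z_N = \widehat L'_{\mathrm{ho}} - L'_{\mathrm{pop}}$, and both arguments combine first-order optimality with the one-sided curvature to get $\hat\alpha^*_{\mathrm{NLL}} \le \sup_\alpha|Z_N|/c_B$, then apply a quadratic upper bound on $L_{\mathrm{pop}}(\alpha) - L_{\mathrm{pop}}(0)$ to reduce everything to $\mathbb{E}\sup_\alpha|Z_N(\alpha)|^2 \lesssim M^2/N$. Your explicit constant $\tfrac12\lambda^{-2}\|\mathbf{B}_G\|_F^2$ and your three-part treatment of the last bound (test-fold noise, training-fold fluctuation on the event $\|\hat{\mathbf{R}}_{\mathrm{tr}} - \boldsymbol{\Sigma}\|_{\mathrm{op}} \le \lambda/2$, and uniformity in $\alpha$) simply spell out what the paper asserts in one line via sub-Gaussian covariance concentration and Lipschitz bounds on $[\lambda/2, 2\Lambda]$.
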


\begin{proof}
By Proposition~\ref{prop:nll_pop_min}, when $\mathbf{B}_G \neq
\mathbf{0}$ the population NLL minimum on $[0, 1]$ is $L_{\mathrm{pop}}
(0; G)$, attained at the boundary $\alpha = 0$. Define the derivative
fluctuation process $Z_N(\alpha) := \widehat L'_{\mathrm{ho}}(\alpha;
G) - L'_{\mathrm{pop}}(\alpha; G)$. Under the sub-Gaussian
assumption, bounded condition number, and comparable fold sizes, the
held-out NLL derivative is a smooth trace functional of the training
and test sample covariances. Sub-Gaussian sample covariance
concentration \citep[][Theorem 4.7.1]{wainwright2019} applied to
$\widehat{\mathbf R}^{\mathrm{tr}} - \boldsymbol{\Sigma}$ and
$\widehat{\mathbf R}^{\mathrm{te}} - \boldsymbol{\Sigma}$, combined
with Lipschitz bounds for the inverse and log-determinant on the
spectral interval $[\lambda/2, 2\Lambda]$ where the perturbed
training covariance lies with high probability, gives
\[
 \mathbb{E}\!\left[\sup_{\alpha \in [0, 1]} |Z_N(\alpha)|^2\right]
 \;\leq\; C_1\,\frac{M^2}{N},
\]
for $C_1$ depending on $K, \lambda, \Lambda$ and the sub-Gaussian
constant. If $\hat\alpha^*_{\mathrm{NLL}} = 0$ the regret is exactly
zero. Otherwise, first-order optimality at an interior minimizer
(or, at $\hat\alpha^*_{\mathrm{NLL}} = 1$, $\widehat L'_{\mathrm{ho}}
(1) \leq 0$) gives $\widehat L'_{\mathrm{ho}}(\hat\alpha^*_{\mathrm{NLL}};
G) \leq 0$, hence $L'_{\mathrm{pop}}(\hat\alpha^*_{\mathrm{NLL}}; G)
\leq -Z_N(\hat\alpha^*_{\mathrm{NLL}}) \leq \sup_\alpha |Z_N(\alpha)|$.
The one-sided curvature condition gives
\[
 \hat\alpha^*_{\mathrm{NLL}} \;\leq\; \frac{1}{c_B}\,
 \sup_{\alpha \in [0, 1]} |Z_N(\alpha)|.
\]
Taylor expansion of $L_{\mathrm{pop}}$ around $\alpha = 0$ with
quadratic upper bound $L_{\mathrm{pop}}(\alpha) - L_{\mathrm{pop}}(0)
\leq C_2 \alpha^2$ on $[0, 1]$ at $\alpha =
\hat\alpha^*_{\mathrm{NLL}}$ gives
\[
 L_{\mathrm{pop}}(\hat\alpha^*_{\mathrm{NLL}}; G) - L_{\mathrm{pop}}
 (0; G) \;\leq\; \frac{C_2}{c_B^2}\,\bigl(\sup_\alpha
 |Z_N(\alpha)|\bigr)^2.
\]
Taking expectations and combining with the derivative-process bound
yields \eqref{eq:regret} with $C = C_1 C_2 / c_B^2$.
\end{proof}

\begin{lemma}[Positivity of $c(\boldsymbol{\Sigma}, G)$]
\label{lem:c_positive}
For any SPD $\boldsymbol{\Sigma}$ and any finite group $G$ acting
unitarily on $\mathbb{C}^M$ with $|G| > 1$ and non-trivial action,
\[
 \mathrm{tr}\bigl(\boldsymbol{\Sigma}^{-1}\mathbf{B}_G\bigr)
 \;\leq\; 0,
\]
with equality if and only if $\boldsymbol{\Sigma} \in \mathcal{A}_G$.
Consequently
\[
 c(\boldsymbol{\Sigma}, G) = M(M + 1) - 2\,d_G - 2(M + 1)\,
 \mathrm{tr}(\boldsymbol{\Sigma}^{-1}\mathbf{B}_G) \;\geq\;
 M(M + 1) - 2\,d_G \;>\; 0,
\]
where the strict inequality $M(M + 1) - 2\,d_G > 0$ holds whenever
$G$ acts non-trivially.
\end{lemma}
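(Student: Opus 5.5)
The plan is to rewrite the trace as a comparison between $\boldsymbol{\Sigma}$ and its conjugates, and then apply the AM--GM inequality to eigenvalues of a determinant-one positive-definite matrix. Since $\mathbf{B}_G = \boldsymbol{\Sigma} - \mathcal{P}_G(\boldsymbol{\Sigma})$, linearity of the trace gives
\[
 \mathrm{tr}\bigl(\boldsymbol{\Sigma}^{-1}\mathbf{B}_G\bigr)
 = M - \frac{1}{|G|}\sum_{g \in G}
 \mathrm{tr}\bigl(\boldsymbol{\Sigma}^{-1}\rho(g)\boldsymbol{\Sigma}\rho(g)^\top\bigr).
\]
Over $\mathbb{C}$ I would read $\rho(g)^\top$ as the conjugate transpose throughout. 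The first claim therefore reduces to showing that each summand satisfies $\mathrm{tr}(\boldsymbol{\Sigma}^{-1}\rho(g)\boldsymbol{\Sigma}\rho(g)^\top) \geq M$, and identifying when equality holds.

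For fixed $g$, set $\mathbf{C}_g := \boldsymbol{\Sigma}^{-1/2}\rho(g)\boldsymbol{\Sigma}\rho(g)^\top\boldsymbol{\Sigma}^{-1/2}$. This matrix is Hermitian positive definite and has the same trace as the summand. Its determinant is $|\det\rho(g)|^2 = 1$, because $\rho(g)$ is unitary. AM--GM on the eigenvalues $\mu_1,\dots,\mu_M > 0$ of $\mathbf{C}_g$ gives
\[
 \mathrm{tr}\,\mathbf{C}_g \;\geq\; M\,\bigl(\textstyle\prod_i \mu_i\bigr)^{1/M} \;=\; M,
\]
with equality if and only if all $\mu_i = 1$. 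That is exactly the condition $\mathbf{C}_g = \mathbf{I}$, which is equivalent to $\rho(g)\boldsymbol{\Sigma}\rho(g)^\top = \boldsymbol{\Sigma}$. Averaging over $G$ yields $\mathrm{tr}(\boldsymbol{\Sigma}^{-1}\mathbf{B}_G) \leq 0$. Equality in the average forces equality in every summand, because each summand is individually at least $M$. That means every $\rho(g)$ fixes $\boldsymbol{\Sigma}$, i.e.\ $\boldsymbol{\Sigma} \in \mathcal{A}_G$; the converse is immediate since then $\mathbf{B}_G = 0$ (Remark~\ref{rem:bias}).

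For the consequence, the inequality just proved gives $-2(M+1)\,\mathrm{tr}(\boldsymbol{\Sigma}^{-1}\mathbf{B}_G) \geq 0$, so $c(\boldsymbol{\Sigma},G) \geq M(M+1) - 2d_G$. It remains to show $d_G < M(M+1)/2 = \dim \mathrm{Sym}(M,\mathbb{R})$. By Definition~\ref{def:d_G_sym} this fails only if every real symmetric matrix is fixed by conjugation under every $\rho(g)$. In that case each $\rho(g)$ commutes with all symmetric matrices. In particular it commutes with all diagonal matrices with distinct entries, so it is diagonal. It also commutes with the all-ones matrix and the elementary symmetric matrices $\mathbf{e}_i\mathbf{e}_j^\top + \mathbf{e}_j\mathbf{e}_i^\top$, which forces it to be a scalar multiple of $\mathbf{I}$. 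A scalar $\rho(g)$ acts trivially by conjugation, contradicting the non-triviality hypothesis.

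The main subtlety is not analytic but interpretive. ``Non-trivial action'' has to be understood as non-trivial \emph{conjugation} action: $\rho(g) = -\mathbf{I}$ with $|G| = 2$ satisfies $|G|>1$ yet gives $d_G = M(M+1)/2$. I would state this reading explicitly in the proof. The AM--GM step is routine once the unimodularity $|\det\rho(g)| = 1$ is noted.
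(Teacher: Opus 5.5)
Your proof is correct, and it takes a different route from the paper's. The paper rewrites $\mathrm{tr}(\boldsymbol{\Sigma}^{-1}\mathbf{B}_G)$ as the Frobenius pairing $\langle \mathcal{P}_G^\perp(\boldsymbol{\Sigma}^{-1}), \mathcal{P}_G^\perp(\boldsymbol{\Sigma})\rangle_F$. It then asserts that this pairing is non-positive, appealing to operator monotonicity of inversion and a ``Schur convexity of the trace pairing under group averaging,'' and it states the equality case without deriving it. For the dimension bound it simply asserts that $W_G$ is a proper subspace of $\mathrm{Sym}(M,\mathbb{R})$. You instead expand the Reynolds average into its individual conjugates and apply AM--GM to each determinant-one matrix $\mathbf{C}_g$. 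This is elementary, and it gives the equality case for free, because equality in the average forces equality in every term. The paper's pairing formulation can be made rigorous, but the property it needs is operator convexity of $\mathbf{X}\mapsto\mathbf{X}^{-1}$, not monotonicity. Averaging over unitary conjugations gives $\mathcal{P}_G(\boldsymbol{\Sigma}^{-1}) \succeq \mathcal{P}_G(\boldsymbol{\Sigma})^{-1}$, hence $\mathrm{tr}(\boldsymbol{\Sigma}^{-1}\mathcal{P}_G(\boldsymbol{\Sigma})) = \mathrm{tr}(\mathcal{P}_G(\boldsymbol{\Sigma}^{-1})\,\mathcal{P}_G(\boldsymbol{\Sigma})) \geq M$. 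That version stays inside the $\mathcal{P}_G/\mathcal{P}_G^\perp$ projection geometry used throughout the paper, but getting the equality case from it needs strict operator convexity, which your termwise argument avoids. Your commutation argument for $d_G < M(M+1)/2$ supplies the step the paper only asserts. Your caveat is also correct: a scalar representation such as $\rho(g) = -\mathbf{I}$ satisfies the literal hypotheses ($|G|>1$, non-trivial action on $\mathbb{C}^M$) yet gives $d_G = M(M+1)/2$ and $c(\boldsymbol{\Sigma},G) = 0$. So the strict inequality needs non-trivial \emph{conjugation} action. This holds automatically for the permutation representations of Definition~\ref{def:d_G_sym}, since a non-identity permutation matrix is never scalar.
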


\begin{proof}
Write $\boldsymbol{\Sigma} = \mathcal{P}_G(\boldsymbol{\Sigma}) +
\mathbf{B}_G$, so $\mathbf{B}_G = \mathcal{P}_G^\perp
(\boldsymbol{\Sigma})$. Apply the Frobenius inner product:
\[
 \mathrm{tr}(\boldsymbol{\Sigma}^{-1}\mathbf{B}_G) =
 \langle \boldsymbol{\Sigma}^{-1},\, \mathbf{B}_G\rangle_F
 = \langle \boldsymbol{\Sigma}^{-1},\,
 \mathcal{P}_G^\perp(\boldsymbol{\Sigma})\rangle_F = \langle
 \mathcal{P}_G^\perp(\boldsymbol{\Sigma}^{-1}),\,
 \mathcal{P}_G^\perp(\boldsymbol{\Sigma})\rangle_F,
\]
since $\mathcal{P}_G^\perp$ is a Frobenius-orthogonal projection.
For SPD $\boldsymbol{\Sigma}$, the function $\mathbf{X} \mapsto
\mathbf{X}^{-1}$ is operator-monotone, and the Schur convexity of the
trace pairing under group averaging gives
$\mathcal{P}_G^\perp(\boldsymbol{\Sigma}) \cdot \mathcal{P}_G^\perp
(\boldsymbol{\Sigma}^{-1}) \preceq 0$ in the trace sense, with equality
if and only if $\boldsymbol{\Sigma}$ commutes with the group action,
i.e., $\boldsymbol{\Sigma} \in \mathcal{A}_G$. Hence $\mathrm{tr}(
\boldsymbol{\Sigma}^{-1}\mathbf{B}_G) \leq 0$ with the stated
equality conditions.

For the dimension inequality: at least one isotypic component of
$\rho$ has multiplicity strictly less than $M(M+1)/2$ since the
trivial-group commutant $\mathcal{A}_{\{e\}} = \mathbb{R}^{M \times M}$
has dimension $M^2$ and the symmetric-subspace commutant $W_G$ is a
proper subspace whenever $G$ acts non-trivially. Therefore $d_G < M(M
+ 1)/2$, equivalently $M(M + 1) - 2 d_G > 0$. Combining with the
non-positivity of $\mathrm{tr}(\boldsymbol{\Sigma}^{-1}\mathbf{B}_G)$
gives the displayed bound on $c(\boldsymbol{\Sigma}, G)$.
\end{proof}

\begin{theorem}[Sufficient-match condition for dominance over
Ledoit-Wolf]\label{thm:lw_dominance}
Let $G$ be a finite group acting unitarily on $\mathbb{C}^M$ with
$\boldsymbol{\Sigma}$ not a scalar multiple of $\mathbf{I}$. Let
$V_{\mathrm{in}}^{\mathrm{LW}}, V_\perp^{\mathrm{LW}}$ denote the
in-algebra and perp-residual variance components of $\hat{\mathbf{R}}$
under the $G = O(M)$ action (so $\mathcal{P}_{O(M)}(\hat{\mathbf{R}})
= (\mathrm{tr}\,\hat{\mathbf{R}}/M)\mathbf{I}$ by Schur's lemma). If
\begin{equation}\label{eq:dominance_condition}
 V_{\mathrm{in}}(G) + \frac{V_\perp(G)\,\|\mathbf{B}_G\|_F^2}
 {V_\perp(G) + \|\mathbf{B}_G\|_F^2} \;<\;
 V_{\mathrm{in}}^{\mathrm{LW}} + \frac{V_\perp^{\mathrm{LW}}\,
 \|\boldsymbol{\Sigma} - \mu \mathbf{I}\|_F^2}
 {V_\perp^{\mathrm{LW}} + \|\boldsymbol{\Sigma} - \mu \mathbf{I}\|_F^2},
\end{equation}
where $\mu = \mathrm{tr}\,\boldsymbol{\Sigma}/M$, then the proposed
estimator at the oracle weight \eqref{eq:alpha_star} has strictly
smaller Frobenius MSE than the Ledoit-Wolf estimator at its oracle
weight.
\end{theorem}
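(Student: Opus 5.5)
The plan is to read both sides of \eqref{eq:dominance_condition} as the oracle minimum MSE of two members of one convex-shrinkage family, differing only in the group. The AD estimator at $G$ and the Ledoit-Wolf estimator at $O(M)$ are then compared directly. No new probabilistic estimate should be needed: the claim follows from Proposition~\ref{prop:alpha_star} (equivalently Lemma~\ref{lem:risk_decomp}) applied twice, together with Proposition~\ref{prop:lw_endpoint} to place Ledoit-Wolf in the family.

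\emph{Step 1 (AD side).} By Lemma~\ref{lem:risk_decomp}, $R_{\mathrm{MSE}}(\alpha;G)=V_{\mathrm{in}}(G)+(1-\alpha)^2V_\perp(G)+\alpha^2\|\mathbf{B}_G\|_F^2$. Minimizing in $\alpha$ at the oracle weight $\alpha^*=V_\perp/(V_\perp+D)$ of \eqref{eq:alpha_star} gives the left side of \eqref{eq:dominance_condition}, exactly as in \eqref{eq:mse_at_alpha_star}. If $V_\perp(G)+\|\mathbf{B}_G\|_F^2=0$, I interpret the fraction as $0$, which is its limit.

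\emph{Step 2 (Ledoit-Wolf side).} By Proposition~\ref{prop:lw_endpoint}, the target $\hat\sigma^2\I$ in \eqref{eq:lw2004_intro} equals $\mathcal{P}_{O(M)}(\hat{\mathbf{R}})$. Hence $\Rhat_{\mathrm{LW}}(\alpha)$ is $\hat{\mathbf{R}}_{\mathrm{AD}}(\alpha)$ with $G$ replaced by $O(M)$. The Reynolds projection here is Haar averaging rather than a finite sum, but the proof of Lemma~\ref{lem:risk_decomp} uses only three facts:
\begin{itemize}
\item $\mathcal{P}$ is a linear Frobenius-orthogonal projection, here onto $\mathrm{span}\{\I\}$, namely $\mathbf{A}\mapsto(\mathrm{tr}\,\mathbf{A}/M)\I$;
\item $\hat{\mathbf{R}}-\boldsymbol{\Sigma}$ has mean zero;
\item the bias is deterministic.
\end{itemize}
All three hold for $O(M)$, so the decomposition carries over verbatim. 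The bias is $\mathbf{B}_{O(M)}=\boldsymbol{\Sigma}-\mu\I$ with $\mu=\mathrm{tr}\,\boldsymbol{\Sigma}/M$, and the variance components are $V_{\mathrm{in}}^{\mathrm{LW}}$ and $V_\perp^{\mathrm{LW}}$. The hypothesis that $\boldsymbol{\Sigma}$ is not a scalar multiple of $\I$ gives $\|\boldsymbol{\Sigma}-\mu\I\|_F^2>0$, so the LW denominator is positive. Minimizing over $\alpha$ then gives the right side as the LW oracle MSE.

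\emph{Step 3 (conclusion).} The strict inequality \eqref{eq:dominance_condition} is precisely $\min_\alpha R_{\mathrm{MSE}}(\alpha;G)<\min_\alpha R_{\mathrm{MSE}}(\alpha;O(M))$, which is the claimed strict dominance at the respective oracle weights.

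The only genuine obstacle is Step 2: extending the risk decomposition and Proposition~\ref{prop:alpha_star} from finite $G$ to the compact group $O(M)$. I would handle it by checking directly that $\mathbf{A}\mapsto(\mathrm{tr}\,\mathbf{A}/M)\I$ is idempotent and self-adjoint in $\langle\cdot,\cdot\rangle_F$. The rest of the argument is a direct application of earlier results.
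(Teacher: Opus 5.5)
Your proposal is correct and follows the paper's proof essentially verbatim. Both sides of \eqref{eq:dominance_condition} are read as the oracle minima $V_{\mathrm{in}} + V_\perp D/(V_\perp + D)$ of Lemma~\ref{lem:risk_decomp} applied to $G$ and to $O(M)$, the latter placed in the family via Proposition~\ref{prop:lw_endpoint}, so the condition is just the direct comparison of the two oracle risks. Your added checks fill in details the paper leaves implicit: that $\mathbf{A}\mapsto(\mathrm{tr}\,\mathbf{A}/M)\I$ is a Frobenius-orthogonal projection, so the lemma carries over to the Haar-averaged case; that the non-scalar hypothesis makes the LW denominator positive; and the $0/0$ convention on the AD side.
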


\begin{proof}
Both quantities arise as instances of Lemma~\ref{lem:risk_decomp}
applied to different group actions: the proposed estimator with finite
group $G$, the Ledoit-Wolf estimator with $G = O(M)$ acting by
orthogonal conjugation. The optimal-shrinkage Frobenius MSE on the
convex family is $V_{\mathrm{in}} + V_\perp \cdot D / (V_\perp + D)$
in each case, with $D = \|\mathbf{B}_G\|_F^2$ for finite $G$
and $D = \|\boldsymbol{\Sigma} - \mu\mathbf{I}\|_F^2$ for $G = O(M)$
(by Proposition~\ref{prop:lw_endpoint}). The sufficient-match
condition \eqref{eq:dominance_condition} is the direct comparison of
these two quantities, hence implies strict Frobenius-MSE dominance.
\end{proof}

The corresponding finite-sample guarantee for the closed-form MSE
plug-in is the LW2004 Theorem~3.4 analog: the data-driven plug-in
$\hat\alpha^*$ achieves the oracle MSE-loss with rate $1/N^2$.

\begin{theorem}[Oracle inequality for the AD-plugin]\label{thm:oracle}
Let $\mathbf{x}_1, \ldots, \mathbf{x}_N$ be i.i.d.\ samples from a
distribution with mean zero, covariance $\boldsymbol{\Sigma}$ (not
necessarily $G$-invariant), and bounded fourth moments. Let
$\alpha^* = V_\perp/(V_\perp + D)$ be the population-optimal shrinkage from
Proposition~\ref{prop:alpha_star}, and let $\hat\alpha^*$ be the
data-driven plug-in defined by \eqref{eq:V_hat}--\eqref{eq:alpha_hat}.
Assume $V_\perp + D \geq c_0 > 0$ uniformly in the regime under
consideration. Then
\begin{equation}\label{eq:oracle}
 \mathbb{E}\,\| \hat{\mathbf{R}}_{\mathrm{AD}}(\hat\alpha^*) -
 \boldsymbol{\Sigma}\|_F^2 \;=\; \mathbb{E}\,\|
 \hat{\mathbf{R}}_{\mathrm{AD}}(\alpha^*) -
 \boldsymbol{\Sigma}\|_F^2 \,+\, O\!\left(\frac{M^2 s^2}{N^2}\right)
\end{equation}
as $N \to \infty$ with $M, |G|, \boldsymbol{\Sigma}$ fixed (or jointly
in a regime where $M^2 / N \to 0$).
\end{theorem}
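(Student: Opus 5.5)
The plan is to reduce the loss gap to a deviation of the shrinkage weight and then bound that deviation. By Lemma~\ref{lem:risk_decomp}, for any \emph{deterministic} $\alpha$ the risk is the parabola $R(\alpha) = V_{\mathrm{in}} + (1-\alpha)^2 V_\perp + \alpha^2 D$. Since $R'(\alpha^*) = 0$ and $R'' = 2(V_\perp + D)$, we have the exact identity $R(\alpha) - R(\alpha^*) = (V_\perp + D)(\alpha - \alpha^*)^2$. The data-driven $\hat\alpha^*$ is random and correlated with $\hat{\mathbf{R}}$, so this identity cannot be applied directly. Instead I will expand pathwise. Write $\mathbf{u} = \hat{\mathbf{R}} - \boldsymbol{\Sigma}$ and $\mathbf{u}_\perp = \mathcal{P}_G^\perp(\mathbf{u})$. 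Then
\[
\hat{\mathbf{R}}_{\mathrm{AD}}(\hat\alpha^*) - \hat{\mathbf{R}}_{\mathrm{AD}}(\alpha^*) = -(\hat\alpha^* - \alpha^*)\,(\mathbf{u}_\perp + \mathbf{B}_G).
\]
Squaring $\hat{\mathbf{R}}_{\mathrm{AD}}(\hat\alpha^*) - \boldsymbol{\Sigma}$ therefore gives the oracle loss, plus a quadratic term $(\hat\alpha^* - \alpha^*)^2 \|\mathbf{u}_\perp + \mathbf{B}_G\|_F^2$, plus a cross term $-2(\hat\alpha^* - \alpha^*)\langle (1-\alpha^*)\mathbf{u}_\perp - \alpha^*\mathbf{B}_G,\ \mathbf{u}_\perp + \mathbf{B}_G\rangle_F$. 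The in-algebra piece drops out of the cross term by orthogonality. The theorem then follows once both terms are shown to be $O(M^2 s^2/N^2)$ in expectation. Here I read $s$ as the scale (sub-Gaussian or fourth-moment) constant bounding the entrywise spread of $\mathbf{x}\mathbf{x}^\top$.

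The first step is a concentration bound for $\hat\alpha^*$. Following the proof of Proposition~\ref{prop:mse_alpha_consistency}, I will show $\mathbb{E}(\widehat{V_\perp} - V_\perp)^2 = O(M^2 s^2/N^3)$, using the bias and Hoeffding variance of the degree-two U-statistic with $V_\perp$ itself of order $M^2 s/N$. I will also show $\mathbb{E}(\widehat{V_\perp + D} - (V_\perp + D))^2 = O(M^2 s^2/N)$, since the dominant fluctuation is the linear term $2\langle \mathbf{u}_\perp, \mathbf{B}_G\rangle_F$. Because the truncation map is 1-Lipschitz, and because the denominator is bounded below by $c_0$ on a high-probability event, the delta method gives
\[
\mathbb{E}(\hat\alpha^* - \alpha^*)^2 \lesssim \frac{\mathbb{E}(\widehat{V_\perp} - V_\perp)^2 + (\alpha^*)^2\, \mathbb{E}(\widehat{V_\perp + D} - V_\perp - D)^2}{c_0^2}.
\]
On the complement event I will use the trivial bound $|\hat\alpha^* - \alpha^*| \le 1$ together with a Chebyshev estimate, and show the contribution is negligible under fixed $M$ or $M^2/N \to 0$. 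With $\alpha^* = O(M^2 s/N)$ when $D > 0$ is fixed, both pieces are $O(M^2 s^2/N^2)$ or smaller.

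The second step is the quadratic term. Cauchy--Schwarz with bounded fourth moments of $\|\mathbf{u}_\perp + \mathbf{B}_G\|_F^2$ gives a bound of $O(1)\cdot \mathbb{E}(\hat\alpha^* - \alpha^*)^2$.

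The third step, the cross term, is the main obstacle. It is first order in $\hat\alpha^* - \alpha^*$, and $\hat\alpha^*$ depends on the same $\mathbf{u}_\perp$. Its expectation is nevertheless small, because $\mathbb{E}\langle (1-\alpha^*)\mathbf{u}_\perp - \alpha^*\mathbf{B}_G,\ \mathbf{u}_\perp + \mathbf{B}_G\rangle_F = (1-\alpha^*)V_\perp - \alpha^* D = 0$ by the choice of $\alpha^*$. Writing the inner product as its mean zero plus a centered fluctuation $\xi$, the cross term becomes $\mathbb{E}[(\hat\alpha^* - \alpha^*)\xi]$. I will first try to linearize $\hat\alpha^* - \alpha^*$ in the empirical fluctuations of $\widehat{V_\perp}$ and $\widehat{V_\perp + D}$ and compute the leading covariance with $\xi$ explicitly as an $O(1/N^2)$ fourth-moment expression. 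If that becomes unwieldy, the fallback is the cruder bound $\|\hat\alpha^* - \alpha^*\|_{L^2}\, \|\xi\|_{L^2}$. In that case the work is to show that $\xi$ has $L^2$ norm $O(M s/\sqrt N)$ while $\|\hat\alpha^* - \alpha^*\|_{L^2} = O(M s/N^{3/2})$, which lands within the claimed $O(M^2 s^2/N^2)$.
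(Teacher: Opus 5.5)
Your pathwise expansion takes a genuinely different route from the paper's. The paper substitutes $\hat\alpha^*$ into the deterministic parabola of Lemma~\ref{lem:risk_decomp}, $R(\hat\alpha^*)-R(\alpha^*)=(V_\perp+D)(\hat\alpha^*-\alpha^*)^2$. It then squares a deviation $\hat\alpha^*-\alpha^*=O_P(N^{-1/2})$ taken from the Ledoit--Wolf bracketing argument.

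You are right that this substitution ignores the dependence of $\hat\alpha^*$ on $\hat{\mathbf{R}}$. Your decomposition also shows where the $N^{-2}$ actually comes from; squaring an $O_P(N^{-1/2})$ deviation would give only $O(1/N)$. With $D$ bounded away from zero you have $\alpha^*=O(1/N)$. On the event where $\widehat{V_\perp+D}$ stays bounded below, $\hat\alpha^*-\alpha^*=O(N^{-3/2})$ in $L^2$. The leading term is then the mean-zero cross term, of size $N^{-3/2}\cdot N^{-1/2}$. The architecture is sound, but one step fails as written.

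That step is the complement event $\mathcal{E}^c=\{\widehat{V_\perp+D}<c_0/2\}$. Chebyshev only certifies $\Pr(\mathcal{E}^c)\le 4\,\mathrm{Var}(\widehat{V_\perp+D})/c_0^2=O(1/N)$. On $\mathcal{E}^c$ nothing prevents the clipped weight from being of order one while $\alpha^*=O(1/N)$. So $\mathcal{E}^c$ contributes up to $O(1/N)$ to $\mathbb{E}(\hat\alpha^*-\alpha^*)^2$, swamping your $O(N^{-3})$ good-event bound. Your Cauchy--Schwarz fallback for the cross term then degrades to $O(N^{-1/2})\cdot O(N^{-1/2})=O(1/N)$. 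As written, the argument proves only an $O(1/N)$ gap.

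The repair needs more moments than the theorem assumes. Since $\mathcal{E}^c\subseteq\{\|\mathbf{u}_\perp\|_F\ge\|\mathbf{B}_G\|_F-\sqrt{c_0/2}\}$ and this threshold is bounded away from zero for large $N$, Markov applied to $\mathbb{E}\|\mathbf{u}\|_F^4=O(N^{-2})$ gives $\Pr(\mathcal{E}^c)=O(N^{-2})$ under eighth moments of $\mathbf{x}$. Then treat the cross term on $\mathcal{E}^c$ directly by H\"older, e.g.\ $\mathbb{E}[|\langle\mathbf{u}_\perp,\mathbf{B}_G\rangle_F|\mathbf{1}_{\mathcal{E}^c}]\le\|\langle\mathbf{u}_\perp,\mathbf{B}_G\rangle_F\|_{L^4}\Pr(\mathcal{E}^c)^{3/4}=O(N^{-2})$. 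Bound the quadratic term there by $(c_0/2)\Pr(\mathcal{E}^c)$. Do not route everything through the global $\|\hat\alpha^*-\alpha^*\|_{L^2}$. Alternatively, a sub-Gaussian or Fuk--Nagaev bound giving $\Pr(\mathcal{E}^c)=O(N^{-3})$ would rescue your fallback as stated.

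Eighth moments are needed upstream too: $\mathbb{E}(\widehat{V_\perp}-V_\perp)^2=O(N^{-3})$ requires $\mathrm{Var}\,\|\mathcal{P}_G^\perp(\mathbf{x}\mathbf{x}^\top)\|_F^2<\infty$. The stated fourth-moment hypothesis supplies neither, so you should assume eighth moments explicitly, as Ledoit and Wolf do.

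Two smaller points. In the quadratic term, Cauchy--Schwarz gives $(\mathbb{E}(\hat\alpha^*-\alpha^*)^4)^{1/2}$, not $O(1)\cdot\mathbb{E}(\hat\alpha^*-\alpha^*)^2$. Instead write $\|\mathbf{u}_\perp+\mathbf{B}_G\|_F^2=D+(\widehat{V_\perp+D}-D)$ and use $|\hat\alpha^*-\alpha^*|\le 1$ on the fluctuation. Separately, the $M$- and $s$-dependence you assert for $\|\xi\|_{L^2}$ and $\|\hat\alpha^*-\alpha^*\|_{L^2}$ is not derived. The latter generally carries extra powers of $M$ and of $1/c_0$, which matters only if you want the joint regime $M^2/N\to 0$ rather than fixed $M$.
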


\begin{proof}
The MSE risk on the convex shrinkage family $R_{\mathrm{MSE}}(\alpha;
G) = V_{\mathrm{in}} + (1 - \alpha)^2 V_\perp + \alpha^2 D$ from
Lemma~\ref{lem:risk_decomp} is a smooth quadratic in $\alpha$ with
minimizer $\alpha^* = V_\perp/(V_\perp + D)$ and second derivative $2(V_\perp + D)$.
Taylor expansion of $R_{\mathrm{MSE}}$ around $\alpha^*$ gives
\[
 R_{\mathrm{MSE}}(\hat\alpha^*; G) - R_{\mathrm{MSE}}(\alpha^*; G)
 \;=\; (V_\perp + D)\,(\hat\alpha^* - \alpha^*)^2.
\]
The plug-in estimator $\hat\alpha^*$ is a smooth function of the
sample analogs $\widehat{V_\perp}, \widehat{V_\perp + D}$ defined in
\eqref{eq:V_hat}--\eqref{eq:VplusD_hat}, both of which converge to
their population counterparts at standard $1/\sqrt{N}$ rate under
fourth-moment conditions; the bracketing argument of \citet[][Theorem
3.4]{ledoit2004} then gives $\hat\alpha^* - \alpha^* = O_P(1/\sqrt{N})$.
Squaring and applying the curvature factor $(V_\perp + D) \leq c_1 s^2$
(where $c_1$ depends on the operator-norm bound on
$\boldsymbol{\Sigma}$ and the moment bound) gives the dimension-tracked
$O(M^2 s^2 / N^2)$ rate of \eqref{eq:oracle}.
\end{proof}

\subsection{Group selection}
\label{sec:theory-selection}

When the symmetry group is selected from data through the matching
procedure of Section~\ref{sec:background}, the calibration regret
incurs an additional model-selection penalty that scales with the
log of the candidate-library size. The next result quantifies this
penalty.

\begin{theorem}[Oracle inequality for the BMG selection]
\label{thm:bmg_oracle}
Let $\mathcal{G} = \{G_1, \ldots, G_K\}$ be a finite candidate library,
and let $\hat G$ be the BMG selection minimizing the held-out
negative log-likelihood across $\mathcal{G}$ at the candidate-specific
calibrated weights $\hat\alpha^*_{\mathrm{NLL}}(G_k)$. Under the
assumptions of Theorem~\ref{thm:regret} applied uniformly across the
library, with cross-validation using $K_{\mathrm{cv}}$ folds, there
exists $C' > 0$ such that for all sufficiently large $N$,
\begin{equation}\label{eq:bmg_oracle}
 \mathbb{E}\bigl[ L_{\mathrm{pop}}\bigl(\hat\alpha^*(\hat G); \hat G
 \bigr) \bigr] - \min_{G \in \mathcal{G}}\,\min_{\alpha \in [0, 1]}
 L_{\mathrm{pop}}(\alpha; G) \;\leq\;
 C'\,\sqrt{\frac{\log |\mathcal{G}|}{N}} \,+\, C \cdot \frac{M^2}{N},
\end{equation}
where $C$ is the calibration regret constant from
Theorem~\ref{thm:regret}.
\end{theorem}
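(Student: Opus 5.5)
The plan is the standard decomposition of model-selection regret into an estimation term and a calibration term. Write $G^\circ \in \argmin_{G \in \mathcal{G}} \min_{\alpha} L_{\mathrm{pop}}(\alpha; G)$ for the oracle group, and $\hat\alpha_k := \hat\alpha^*_{\mathrm{NLL}}(G_k)$. Let $\bar L_{\mathrm{cv}}(G_k)$ be the cross-validated score at $\hat\alpha_k$. Its population counterpart is $L_{\mathrm{pop}}(\hat\alpha_k; G_k)$, evaluated conditionally on the training folds. The excess risk splits as
\begin{equation*}
L_{\mathrm{pop}}(\hat\alpha_{\hat G}; \hat G) - L_{\mathrm{pop}}(\alpha^\circ; G^\circ)
= \underbrace{\bigl[L_{\mathrm{pop}}(\hat\alpha_{\hat G}; \hat G) - L_{\mathrm{pop}}(\hat\alpha_{G^\circ}; G^\circ)\bigr]}_{\text{selection}}
+ \underbrace{\bigl[L_{\mathrm{pop}}(\hat\alpha_{G^\circ}; G^\circ) - L_{\mathrm{pop}}(\alpha^\circ; G^\circ)\bigr]}_{\text{calibration}} .
\end{equation*}
Theorem~\ref{thm:regret}, applied to the single fixed group $G^\circ$, bounds the expected calibration term by $C M^2/N$. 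This gives the second summand of \eqref{eq:bmg_oracle}.

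For the selection term I would use the defining property \eqref{eq:bmg-rule}, $\bar L_{\mathrm{cv}}(\hat G) \le \bar L_{\mathrm{cv}}(G^\circ)$. Adding and subtracting the cross-validated scores bounds the selection term by $2\max_{k} |\bar L_{\mathrm{cv}}(G_k) - L_{\mathrm{pop}}(\hat\alpha_k; G_k)|$. Conditionally on the training folds, each held-out score is an average of per-observation Gaussian log-likelihood terms $\tfrac12\log\det\Sigma + \tfrac12 \mathbf{x}^\top \Sigma^{-1}\mathbf{x}$, evaluated at a fixed estimator $\Sigma = \boldsymbol{\Sigma}^{(k)}_{\hat\alpha}(G)$. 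I would intersect with the high-probability event, already used in the proof of Theorem~\ref{thm:regret}, that all training estimators have spectrum in $[\lambda/2, 2\Lambda]$. On that event the quadratic form is sub-exponential, with a constant depending only on $\lambda, \Lambda$ and the sub-Gaussian constant. This is where ``applied uniformly across the library'' is needed, since the event must hold simultaneously for all $G_k$; a union bound costs only a $\log|\mathcal{G}|$ factor in the failure probability.

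Bernstein's inequality and a union bound over the $|\mathcal{G}|$ candidates (and the $K_{\mathrm{cv}}$ folds, absorbed into $C'$) then give
\begin{equation*}
\mathbb{E}\max_k \bigl|\bar L_{\mathrm{cv}}(G_k) - L_{\mathrm{pop}}(\hat\alpha_k;G_k)\bigr| \lesssim \sqrt{\log|\mathcal{G}|/N},
\end{equation*}
where the fold sizes are proportional to $N$. The complement of the good event contributes a lower-order term, because the losses there have polynomially bounded moments while the event's probability decays exponentially in $N$.

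The main obstacle is the gap between the cross-validation fits and the final refit, and the dependence of $\hat\alpha_k$ on data. The cross-validated score evaluates fold-trained estimators, whereas the returned estimator is refit on all data, and $\hat\alpha_k$ is tuned on the same training folds. I would handle this by defining the comparison target as the fold-averaged population loss, which is still conditionally fixed given the training folds. I would then argue that the refit and fold-averaged losses differ by $O(M^2/N)$ via the same derivative-process bound as in Theorem~\ref{thm:regret}; this discrepancy is absorbed into the $C M^2/N$ term, at the cost of enlarging $C$. If the dimension dependence in the Bernstein step is not free of $M$, I would allow $C'$ to depend on $M$ through the variance proxy $\mathrm{tr}(\Sigma^{-1}\boldsymbol{\Sigma})$. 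The statement allows this, since $C'$ is unspecified.
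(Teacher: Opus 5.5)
Your decomposition is the paper's own: your selection term is exactly its (A)+(B)+(C), with the middle piece nonpositive by the BMG rule \eqref{eq:bmg-rule} and the outer pieces controlled by concentration plus a union bound over $\mathcal{G}$, and your calibration term is its (D), bounded by Theorem~\ref{thm:regret} at the fixed oracle group. Your extra bookkeeping (conditioning on the training folds, the spectral good event, and the $O(M^2/N)$ bridge between fold-trained losses and $L_{\mathrm{pop}}$) fills in what the paper's sketch leaves implicit. The one adjustment is that the statement fixes $C$ as the constant of Theorem~\ref{thm:regret}, so rather than enlarging $C$ you should absorb that extra $O(M^2/N)$ term into $C'\sqrt{\log|\mathcal{G}|/N}$, which dominates it for sufficiently large $N$ at fixed $M$.
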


\begin{proof}
Let $G^* \in \arg\min_{G \in \mathcal{G}} \min_\alpha
L_{\mathrm{pop}}(\alpha; G)$ denote the oracle. Decompose the
regret of $\hat G$ as
\begin{align*}
 & \mathbb{E}\bigl[L_{\mathrm{pop}}(\hat\alpha^*(\hat G); \hat G)
 \bigr] - L_{\mathrm{pop}}(\alpha^*(G^*); G^*) \\
 &= \underbrace{\mathbb{E}\bigl[L_{\mathrm{pop}}(\hat\alpha^*(\hat
 G); \hat G) - \widehat L_{\mathrm{ho}}(\hat\alpha^*(\hat G); \hat G)
 \bigr]}_{\text{(A)}} \\
 & \quad +\, \underbrace{\mathbb{E}\bigl[\widehat L_{\mathrm{ho}}(
 \hat\alpha^*(\hat G); \hat G) - \widehat L_{\mathrm{ho}}(
 \hat\alpha^*(G^*); G^*)\bigr]}_{\text{(B)} \leq 0} \\
 & \quad +\, \underbrace{\mathbb{E}\bigl[\widehat L_{\mathrm{ho}}(
 \hat\alpha^*(G^*); G^*) - L_{\mathrm{pop}}(
 \hat\alpha^*(G^*); G^*)\bigr]}_{\text{(C)}} \\
 & \quad +\, \underbrace{\mathbb{E}\bigl[L_{\mathrm{pop}}(
 \hat\alpha^*(G^*); G^*) - L_{\mathrm{pop}}(\alpha^*(G^*);
 G^*)\bigr]}_{\text{(D)}}.
\end{align*}
Term (B) is non-positive by the definition of $\hat G$. Terms (A)
and (C) are bounded by the maximum over $\mathcal{G}$ of the
fluctuation $\widehat L_{\mathrm{ho}}(\alpha; G) - L_{\mathrm{pop}}
(\alpha; G)$; sub-Gaussian concentration on each $G$ together with a
union bound over $|\mathcal{G}|$ candidates gives a $\sqrt{\log
|\mathcal{G}| / N}$ rate. Term (D) is the calibration regret at the
oracle group, bounded by Theorem~\ref{thm:regret} as $C M^2 / N$.
Combining gives \eqref{eq:bmg_oracle}.
\end{proof}

\begin{proposition}[Group recovery consistency]
\label{prop:recovery}
Let $\mathcal{G}$ be a finite candidate library and suppose there
exists $G^* \in \mathcal{G}$ such that $\boldsymbol{\Sigma}$ is
exactly $G^*$-invariant. Define
\begin{equation*}
 \Theta(G^*, \boldsymbol{\Sigma}) := \mathbb{E}\bigl\|
 \boldsymbol{\Sigma}^{-1/2}(I - \mathcal{P}_{G^*})(\hat{
 \mathbf{R}}^{\mathrm{tr}})\,\boldsymbol{\Sigma}^{-1/2}\bigr\|_F^2.
\end{equation*}
Under the assumptions of Theorem~\ref{thm:regret} applied uniformly
across $\mathcal{G}$, there exists $c > 0$ depending on $\lambda,
\Lambda, K_{\mathrm{cv}}$, the sub-Gaussian constant of $\mathbf{x}$,
and the structure of $\mathcal{G}$, such that for sufficiently large
$N$,
\begin{equation}\label{eq:recovery}
 \Pr\bigl(\hat G \neq G^*\bigr) \;\leq\; |\mathcal{G}| \cdot
 \exp\!\bigl(-c \cdot \Theta(G^*, \boldsymbol{\Sigma}) \cdot N
 \bigr) + o(1).
\end{equation}
\end{proposition}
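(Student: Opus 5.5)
The plan is to reduce the event $\{\hat G \neq G^*\}$ to a finite union of pairwise comparison events and control each by a concentration bound whose exponent is governed by a population gap. By the selection rule \eqref{eq:bmg-rule}, $\hat G \neq G^*$ implies that some admitted $G \neq G^*$ has $\bar L_{\mathrm{cv}}(G) \le \bar L_{\mathrm{cv}}(G^*)$. A union bound then gives
\[
\Pr(\hat G \neq G^*) \;\le\; \sum_{G \in \mathcal{G}\setminus\{G^*\}} \Pr\bigl(\bar L_{\mathrm{cv}}(G) - \bar L_{\mathrm{cv}}(G^*) \le 0\bigr),
\]
which produces the factor $|\mathcal{G}|$. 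Ties at the trivial group, where $m_{\mathrm{cv}}=0$ structurally, are absorbed into the comparison. Candidates failing Tier 1 do not enter; for large $N$ all candidates pass \eqref{eq:rank-prefilter}.

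The second step identifies the expected score gap for each competitor $G$. There are two cases.

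\emph{Case 1: $\mathbf{B}_G \neq 0$.} By Proposition~\ref{prop:nll_pop_min} and Theorem~\ref{thm:regret}, the calibrated score of $G$ behaves like $L_{\mathrm{pop}}(0;G)+O_P(M^2/N)$. The score of $G^*$, by Theorem~\ref{thm:matched}, is that same common baseline reduced by the Stein-loss saving $\tfrac14 \Theta(G^*,\boldsymbol{\Sigma})$ (via Lemma~\ref{lem:stein_leading}, with $\hat\alpha\to 1$ removing the $(1-\alpha)^2T_\perp$ term). The gap is therefore at least of order $\Theta$.

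\emph{Case 2: $\mathbf{B}_G = 0$ but $G \neq G^*$.} Here the saving $T_\perp(G)$ differs from that of $G^*$. To make the saving of $G^*$ the larger one I would need $\mathcal{A}_{G^*} \subseteq \mathcal{A}_G$, i.e.\ $G^*$ should be the largest invariance group in the library. I would impose this as part of "the structure of $\mathcal{G}$" absorbed into $c$, defining $\Theta$-type gaps relative to $G^*$. Without such a condition the proposition fails, since a smaller invariant group is not distinguishable at rate $\Theta$.

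The third step is the concentration. Each fold score is a smooth trace functional of the training and test sample covariances. Conditional on the training folds, the held-out term $\tfrac12\mathrm{tr}(\boldsymbol{\Sigma}_\alpha^{-1}\hat{\mathbf{R}}_{\mathrm{te}})$ is an average of $N/K_{\mathrm{cv}}$ i.i.d.\ sub-exponential variables, because $\mathbf{x}^\top A\mathbf{x}$ is sub-exponential for sub-Gaussian $\mathbf{x}$. I would apply Bernstein's inequality to the difference between the scores of $G$ and $G^*$. The difference of the two quadratic forms has scale $\|\boldsymbol{\Sigma}^{1/2}(\boldsymbol{\Sigma}_G^{-1}-\boldsymbol{\Sigma}_{G^*}^{-1})\boldsymbol{\Sigma}^{1/2}\|_F$, itself proportional to the gap by the same Stein expansion. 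Bernstein then yields $\exp(-cN\cdot\text{gap})$ in the small-deviation regime, which after the Case 1--2 bound gives $\exp(-c\,\Theta N)$.

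The training-fold randomness, including the calibrated $\hat\alpha$ and the event that the perturbed covariances leave $[\lambda/2,2\Lambda]$, is handled on a high-probability event. The uniform derivative-process bound from the proof of Theorem~\ref{thm:regret} controls this event, and its complement contributes the $o(1)$ term.

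The main obstacle is the second step. $\Theta$ is itself $O(M^2/N)$, so the exponent $\Theta N$ is $O(M^2)$, and the Case 1 gap must dominate the $O_P(M^2/N)$ calibration fluctuations with constants tracked. I would first try to show the Case 1 gap equals $D_{\mathrm{Stein}}(\boldsymbol{\Sigma}\,\|\,\boldsymbol{\Sigma}-\hat\alpha\mathbf{B}_G)+\tfrac14\Theta$, a sum of nonnegative terms, so that it is bounded below by $\tfrac14\Theta$ uniformly. The remaining constants would then be absorbed into $c$.
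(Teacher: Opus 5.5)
This follows the paper's argument: a union bound over rivals, $\hat\alpha^*(G^*)\to 1$ by Theorem~\ref{thm:matched} against $\hat\alpha^*(G)\to 0$ at mismatched rivals, an expected held-out NLL gap of order $\Theta(G^*,\boldsymbol{\Sigma})$ from Lemma~\ref{lem:stein_leading}, and sub-Gaussian concentration, with the calibration fluctuation you flag also handled there only ``to leading order'' (your factor $\tfrac14$, rather than the paper's $\tfrac12$, is the one consistent with $\mathbb{E}[L_{\mathrm{ho}}]=\tfrac12\mathbb{E}[\mathcal{L}_S]+\mathrm{const}$, which only affects $c$). Your Case~2 is a real addition, because the paper's proof treats only rivals with $\mathbf{B}_G\neq 0$, yet the library always contains $\{e\}$ (benign, since its estimator is $\hat{\mathbf{R}}^{\mathrm{tr}}$ for every $\alpha$) and may contain other groups leaving $\boldsymbol{\Sigma}$ invariant, so a maximality hypothesis like yours is needed, stated with strict inclusion $\mathcal{A}_{G^*}\subsetneq\mathcal{A}_G$ because groups with equal commutants have identical Reynolds projections and hence tied scores.
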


\begin{proof}
At the matched group $G^*$, Theorem~\ref{thm:matched} gives
$\hat\alpha^*(G^*) \to 1$ in probability with the resulting
estimator $\mathcal{P}_{G^*}(\hat{\mathbf{R}}^{\mathrm{tr}})$
having only the in-algebra component of noise. At any rival group
$G \neq G^*$ with $\mathbf{B}_G \neq 0$,
Proposition~\ref{prop:transition} gives $\hat\alpha^*(G) \to 0$ in
probability with the resulting estimator $\hat{\mathbf{R}}^{\mathrm{tr}}$
having both in-algebra and perp-residual components of noise.

By Lemma~\ref{lem:stein_leading} applied to both estimators, the
expected held-out NLL gap at the two operating points is, to leading
order in noise scale,
\[
 \mathbb{E}[\widehat L_{\mathrm{ho}}(\hat\alpha^*(G); G)] -
 \mathbb{E}[\widehat L_{\mathrm{ho}}(\hat\alpha^*(G^*); G^*)]
 = \tfrac{1}{2}\,\Theta(G^*, \boldsymbol{\Sigma}) + O(N^{-3/2}),
\]
where the leading-order $\Theta$ term comes from the rival's
perp-component variance contribution that does not appear at the
matched group's $\alpha = 1$ operating point. Note that
$\boldsymbol{\Sigma}^{-1/2}$ commutes with $\mathcal{P}_{G^*}$ at
the matched limit, so the cross terms in the relevant Frobenius
norms vanish identically. Sub-Gaussian concentration of the
empirical NLL around its expectation, combined with a union bound
over $|\mathcal{G}| - 1$ rivals, yields \eqref{eq:recovery}.
\end{proof}

\subsection{Rates and minimax}
\label{sec:theory-rates}

The rate analysis below uses two distinct dimension parameters from
the isotypic decomposition of $\rho$: the symmetric-subspace
commutant dimension $d_G$ of Definition~\ref{def:d_G_sym}, used in the
Frobenius rate; and the maximum multiplicity
\begin{equation}\label{eq:dmax}
 d_{\mathrm{max}} \;:=\; \max_\lambda m_\lambda
\end{equation}
across irreducible components of $\rho$, used in the operator-norm
rate. The structured-target rates below parallel the rate program
for sparse covariance matrices
\citep{bickel2008regularized, bickel2008covariance, cai2011adaptive,
cai2012optimal} and the symmetry-aware program of \citet{shah2012},
extending both with a minimax lower bound and a data-driven
plug-in oracle inequality.

\begin{theorem}[Frobenius-norm rate]\label{thm:frob_rate}
Let $\mathbf{x}_1, \ldots, \mathbf{x}_N$ be i.i.d.\ samples from
$\mathcal{N}(\mathbf{0}, \boldsymbol{\Sigma})$ with $\boldsymbol{\Sigma}
\in W_G$ and $\|\boldsymbol{\Sigma}\|_{\mathrm{op}} \leq s$. Then
\begin{equation}\label{eq:frob_rate}
 \mathbb{E}\,\|\hat{\mathbf{R}}_G - \boldsymbol{\Sigma}\|_F^2
 \;\leq\; \frac{2 s^2 d_G}{N}.
\end{equation}
\end{theorem}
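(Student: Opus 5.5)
Since $\boldsymbol{\Sigma} \in W_G$, the bias term vanishes: $\mathcal{P}_G(\boldsymbol{\Sigma}) = \boldsymbol{\Sigma}$, so $\hat{\mathbf{R}}_G - \boldsymbol{\Sigma} = \mathcal{P}_G(\mathbf{u})$ with $\mathbf{u} := \hat{\mathbf{R}} - \boldsymbol{\Sigma}$. The plan is to compute $\mathbb{E}\|\mathcal{P}_G(\mathbf{u})\|_F^2$ exactly through an orthonormal basis of the $d_G$-dimensional space $W_G$. Each coordinate variance is then bounded by a Gaussian quadratic-form variance, so the result is non-asymptotic. It should be sharper than the $o(1/N)$ statement in Theorem~\ref{thm:var_reduction}.

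First I will reduce to a single observation. Write $\mathbf{u} = N^{-1}\sum_n \mathbf{Y}_n$ with $\mathbf{Y}_n := \mathbf{x}_n\mathbf{x}_n^\top - \boldsymbol{\Sigma}$. These are i.i.d., symmetric and mean zero. Because $\mathcal{P}_G$ is linear, the cross terms $\mathbb{E}\langle \mathcal{P}_G(\mathbf{Y}_n), \mathcal{P}_G(\mathbf{Y}_m)\rangle_F$ vanish for $n \neq m$. Hence
\[
\mathbb{E}\|\mathcal{P}_G(\mathbf{u})\|_F^2 = N^{-1}\,\mathbb{E}\|\mathcal{P}_G(\mathbf{Y}_1)\|_F^2 .
\]

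Next I will pass to coordinates. For a permutation representation, every $\rho(g)$ is real orthogonal. Conjugation therefore maps symmetric matrices to symmetric matrices, so $\mathcal{P}_G$ restricted to $\mathrm{Sym}(M,\mathbb{R})$ is the Frobenius-orthogonal projection onto $W_G$. Fix a Frobenius-orthonormal basis $\mathbf{A}_1,\ldots,\mathbf{A}_{d_G}$ of $W_G$. Then
\[
\|\mathcal{P}_G(\mathbf{Y})\|_F^2 = \sum_{j=1}^{d_G}\langle \mathbf{A}_j,\mathbf{Y}\rangle_F^2,
\qquad
\langle \mathbf{A}_j,\mathbf{Y}_1\rangle_F = \mathbf{x}^\top\mathbf{A}_j\mathbf{x} - \mathrm{tr}(\mathbf{A}_j\boldsymbol{\Sigma}).
\]
So each summand has expectation $\mathrm{Var}(\mathbf{x}^\top\mathbf{A}_j\mathbf{x})$. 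By the Isserlis identity for $\mathbf{x}\sim\mathcal{N}(\mathbf{0},\boldsymbol{\Sigma})$ and symmetric $\mathbf{A}$, this variance equals $2\,\mathrm{tr}(\boldsymbol{\Sigma}\mathbf{A}\boldsymbol{\Sigma}\mathbf{A}) = 2\|\boldsymbol{\Sigma}^{1/2}\mathbf{A}\boldsymbol{\Sigma}^{1/2}\|_F^2$. Using $\|\mathbf{P}\mathbf{A}\mathbf{Q}\|_F \le \|\mathbf{P}\|_{\mathrm{op}}\|\mathbf{A}\|_F\|\mathbf{Q}\|_{\mathrm{op}}$, this is at most $2\|\boldsymbol{\Sigma}\|_{\mathrm{op}}^2\|\mathbf{A}_j\|_F^2 \le 2s^2$.

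Summing over the $d_G$ basis elements and dividing by $N$ gives \eqref{eq:frob_rate}. Only the upper half of the $c_{\mathrm{in}}$ bound of Theorem~\ref{thm:bias_variance_orthogonal} is needed, and here it holds at every $N$.

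The main point needing care is the identification of $\mathcal{P}_G$ on symmetric matrices with the orthogonal projection onto $W_G$. This requires that $\mathcal{P}_G$ preserve symmetry, which holds because $\rho(g)^\top = \rho(g)^{-1}$ for a real (permutation) representation as in Definition~\ref{def:d_G_sym}. Without it, the dimension count would involve the whole of $\mathcal{A}_G$ rather than $d_G$. If a general unitary $\rho$ were intended, I would first pass to the real form of the representation so that the same argument applies.
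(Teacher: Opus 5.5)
Your proof is correct and follows the same route as the paper. You restrict $\mathcal{P}_G$ to the $d_G$-dimensional space $W_G$, bound the Gaussian (Isserlis) variance of each coordinate by $2s^2/N$ through the Wishart operator $\boldsymbol{\Phi}_{\boldsymbol{\Sigma}}$, and sum over coordinates; your explicit basis computation also shows the bound holds exactly at every $N$, which is an improvement on the paper's phrasing of the per-parameter bound only ``at leading order in $N$''.
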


\begin{proof}
Decompose into isotypic components in the symmetry-adapted basis: the
projection $\mathcal{P}_G$ acts as orthogonal projection on
$\mathrm{Sym}(M, \mathbb{R})$ onto a subspace of dimension $d_G$
(Schur's lemma). Working in this basis, the centered error
$\hat{\mathbf{R}}_G - \boldsymbol{\Sigma}$ is supported on the
$d_G$-dimensional symmetric subspace $\mathcal{A}_G \cap \mathrm{Sym}(M)$.
Under $\mathcal{N}(\mathbf{0}, \boldsymbol{\Sigma})$ observations with
$\boldsymbol{\Sigma} \in W_G$, the matched-Gaussian variance per
free parameter is at most $2 s^2 / N$ at leading order in $N$, where
$s^2 = \|\boldsymbol{\Sigma}\|_{\mathrm{op}}^2$ bounds the Wishart
second-moment operator $\boldsymbol{\Phi}_{\boldsymbol{\Sigma}}$ on
$\mathrm{Sym}(M)$. Summing over the $d_G$ free parameters yields
\eqref{eq:frob_rate}. The full bound follows from the operator-trace
identity $\mathrm{tr}_{\mathcal{A}_G \cap \mathrm{Sym}(M)}(\boldsymbol{
\Phi}_{\boldsymbol{\Sigma}}) \leq 2 \|\boldsymbol{\Sigma}\|_{\mathrm{op}}^2
\, d_G$ used in the proof of Theorem~\ref{thm:var_reduction}.
\end{proof}

\begin{theorem}[Operator-norm rate]\label{thm:op_rate}
Let $\mathbf{x}_1, \ldots, \mathbf{x}_N$ be i.i.d.\ sub-Gaussian
samples with $\mathbf{x}_i = \boldsymbol{\Sigma}^{1/2} \mathbf{z}_i$,
where $\mathbf{z}_i$ has independent sub-Gaussian entries with
sub-Gaussian norm $\leq K$, and $\boldsymbol{\Sigma} \in W_G$ with
$\|\boldsymbol{\Sigma}\|_{\mathrm{op}} \leq s$. There exists an
absolute constant $C > 0$ such that for any $t \geq 1$, with
probability at least $1 - 2 \exp(-t^2 d_{\mathrm{max}})$,
\begin{equation}\label{eq:op_rate}
 \|\hat{\mathbf{R}}_G - \boldsymbol{\Sigma}\|_{\mathrm{op}} \;\leq\;
 C K^2 s \left( t \sqrt{\frac{d_{\mathrm{max}}}{N}} \,+\,
 \frac{t^2 d_{\mathrm{max}}}{N} \right).
\end{equation}
\end{theorem}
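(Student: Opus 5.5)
The plan is to block-diagonalize the problem in a symmetry-adapted basis and then apply a standard sub-Gaussian covariance concentration bound in each small block, using the group averaging to shrink the effective dimension from $M$ to $d_{\max}$.

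\emph{Step 1 (block reduction).} Choose a real orthogonal change of basis $\mathbf{U}$ adapted to the isotypic decomposition $\mathbb{R}^M \cong \bigoplus_\lambda \mathbb{R}^{m_\lambda}\otimes V_\lambda$. By Schur's lemma, every element of $\mathcal{A}_G$ then takes the block form $\bigoplus_\lambda \mathbf{A}_\lambda \otimes \mathbf{I}_{\dim V_\lambda}$, where $\mathbf{A}_\lambda$ is an $m_\lambda\times m_\lambda$ block. Real-type irreducibles give real blocks; complex or quaternionic types give blocks of size at most $2m_\lambda$ or $4m_\lambda$, which only affects constants. Since $\hat{\mathbf{R}}_G$ and $\boldsymbol{\Sigma}$ both lie in $\mathcal{A}_G$, we get
\[
\|\hat{\mathbf{R}}_G - \boldsymbol{\Sigma}\|_{\mathrm{op}} = \max_\lambda \|(\hat{\mathbf{R}}_G)_\lambda - \boldsymbol{\Sigma}_\lambda\|_{\mathrm{op}}.
\]

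\emph{Step 2 (identify each block as a sample covariance).} Split the rotated vector $\mathbf{U}^\top\mathbf{x}$ into isotypic coordinates $\mathbf{x}^{(\lambda,j)}\in\mathbb{R}^{m_\lambda}$, one for each basis vector $j$ of $V_\lambda$. Averaging over $G$ (Schur orthogonality) gives
\[
(\hat{\mathbf{R}}_G)_\lambda = \frac{1}{N\dim V_\lambda}\sum_n\sum_j \mathbf{x}_n^{(\lambda,j)}\mathbf{x}_n^{(\lambda,j)\top},
\]
whose expectation is $\boldsymbol{\Sigma}_\lambda$. Each summand is a rank-one sub-Gaussian outer product in dimension $m_\lambda\le d_{\max}$. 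Writing $\mathbf{x}^{(\lambda,j)} = \boldsymbol{\Sigma}_\lambda^{1/2}\mathbf{w}$, the vector $\mathbf{w}$ is a linear image of $\mathbf{z}$ with isotropic covariance, so its sub-Gaussian norm is $\lesssim K$ by the rotation invariance of sub-Gaussian norms for vectors with independent coordinates. Also $\|\boldsymbol{\Sigma}_\lambda\|_{\mathrm{op}}\le s$.

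\emph{Step 3 (concentration in each block).} Apply the standard covariance concentration bound (Vershynin, Theorem 4.7.1 / Exercise 4.7.3) in dimension $m_\lambda$ with $N$ (or more) samples. It gives, with probability at least $1-2e^{-u}$,
\[
\|(\hat{\mathbf{R}}_G)_\lambda - \boldsymbol{\Sigma}_\lambda\|_{\mathrm{op}} \le CK^2 s\Bigl(\sqrt{\tfrac{m_\lambda+u}{N}}+\tfrac{m_\lambda+u}{N}\Bigr).
\]
Set $u = t^2 d_{\max}$. For $t\ge1$ we have $m_\lambda + u \le 2t^2 d_{\max}$, which produces exactly the form \eqref{eq:op_rate}.

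\emph{Step 4 (combine the blocks).} Take a union bound over the isotypic components and absorb their count into $C$ by strengthening $u$. The number of components is at most $M$, so this absorption is only possible when $\log M \lesssim t^2 d_{\max}$.

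The main obstacle is Steps 2--4 together: the $\dim V_\lambda$ copies within a block are not independent, and the union bound over blocks costs a factor that the stated probability $1-2\exp(-t^2 d_{\max})$ does not show. To handle dependence, I would first simply use the $N$ independent samples and treat the average over $j$ as an average of dependent but identically distributed sub-Gaussian copies. Convexity of the operator-norm deviation via Jensen then bounds the block error by the error of a single copy, which costs nothing in the rate. For the union bound, I would either rely on the number of distinct irreducible types of a finite group being bounded for a fixed library, absorbing it into $C$, or state the result with the extra $\log(\#\text{irreps})$ term folded into $t$.
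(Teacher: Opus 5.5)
Your plan follows the paper's proof: Schur block-diagonalization in a symmetry-adapted basis, each isotypic block of $\hat{\mathbf{R}}_G$ identified as an averaged sample covariance, the Vershynin covariance bound applied blockwise, and a union bound over blocks. The obstacle you flag in Step~4 is a genuine gap, and no argument can close it, because the statement is false as written.

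The paper's proof has the same hole (``union-bound across isotypic components''). It also applies the block bound at level $t^2 m_\lambda$, which only bounds a block's failure probability by $2e^{-t^2 m_\lambda}$. That exceeds the target $2e^{-t^2 d_{\max}}$ whenever $m_\lambda<d_{\max}$; your choice $u=t^2 d_{\max}$ fixes that part.

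For a counterexample, take $G=\mathbb{Z}_M$ acting on $\mathbb{R}^M$ by cyclic shifts, $\boldsymbol{\Sigma}=\mathbf{I}$, and Gaussian $\mathbf{z}$. Every irreducible constituent has multiplicity one, so $d_{\max}=1$. Meanwhile $\hat{\mathbf{R}}_G$ is circulant, with eigenvalues $\lambda_k=N^{-1}\sum_n|\mathbf{f}_k^{*}\mathbf{x}_n|^2$ on the unit Fourier vectors $\mathbf{f}_k$. For $1\le k<M/2$ these are i.i.d.\ $\chi^2_{2N}/(2N)$. Hence $\|\hat{\mathbf{R}}_G-\mathbf{I}\|_{\mathrm{op}}=\max_k|\lambda_k-1|$ is of order $\sqrt{\log M/N}+\log M/N$. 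For fixed $N$, $t$ and any fixed $C$, it exceeds $CK^2 s(t/\sqrt{N}+t^2/N)$ with probability tending to one as $M\to\infty$.

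So neither of your remedies recovers the theorem with an absolute $C$. Absorbing the number $L\le M$ of isotypic components into $C$ makes $C$ depend on $G$, growing like $\sqrt{\log L}$. The correct repair is your second option: take $u=t^2 d_{\max}+\log L$ in every block. This gives probability $1-2e^{-t^2 d_{\max}}$, with $t^2 d_{\max}$ replaced by $t^2 d_{\max}+\log L$ inside the rate.

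On dependence, you are right to distrust the paper's claim that each observation supplies $d_\lambda=\dim V_\lambda$ independent replicates (effective sample size $N d_\lambda$). That holds for Gaussian data, where uncorrelated jointly Gaussian copies are independent. It fails for $\mathbf{z}$ with merely independent sub-Gaussian coordinates once you rotate. Since only $d_\lambda\ge 1$ is used at the end, working with $N$ samples loses nothing.

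Your Jensen step, however, must go through moments. Pathwise, the triangle inequality bounds the block deviation only by the average (hence the maximum) of the single-copy deviations, which would reintroduce a union bound over $j$. Instead:
\begin{itemize}
\item integrate the single-copy tail to get an $L^p$ bound $\lesssim K^2 s\bigl(\sqrt{(m_\lambda+p)/N}+(m_\lambda+p)/N\bigr)$;
\item pass this bound to the average over $j$ by Minkowski's inequality;
\item apply Markov's inequality at $p\asymp u$.
\end{itemize}
Equivalently, run the net-plus-Bernstein argument directly on $Q_n(\mathbf{v})=d_\lambda^{-1}\sum_j\langle\mathbf{v},\mathbf{x}_n^{(\lambda,j)}\rangle^2$, whose $\psi_1$-norm is $\lesssim K^2 s$ by the triangle inequality.
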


\begin{proof}
By Schur's lemma applied to the conjugation action of $\rho$ on
$\mathbb{C}^{M \times M}$, the symmetry-adapted unitary change of
basis $T \in U(M)$ block-diagonalizes the commutant $\mathcal{A}_G$
into isotypic blocks of size $m_\lambda \times m_\lambda$ each
appearing with multiplicity $d_\lambda$ (the irreducible
representation dimension). Working in this basis, $T^\ast (\hat{
\mathbf{R}}_G - \boldsymbol{\Sigma}) T$ is block-diagonal with
per-isotypic blocks $\hat{\boldsymbol{\Sigma}}_\lambda -
\boldsymbol{\Sigma}_\lambda \in \mathrm{Sym}_{m_\lambda}(\mathbb{R})$,
and each block is a sample covariance with effective sample size
$N \cdot d_\lambda$ (each observation $\mathbf{x}_i$ provides
$d_\lambda$ independent replicates of an $m_\lambda$-dimensional
vector through the irreducible carrier space). Apply
\citet[][Theorem 4.7.1]{wainwright2019} to each block: with
probability at least $1 - 2 \exp(-t^2 m_\lambda)$,
\[
 \|\hat{\boldsymbol{\Sigma}}_\lambda - \boldsymbol{\Sigma}_\lambda
 \|_{\mathrm{op}} \leq C K^2 s\bigl(t \sqrt{m_\lambda / (N d_\lambda)}
 + t^2 m_\lambda / (N d_\lambda)\bigr).
\]
Union-bound across isotypic components and observe $m_\lambda \leq
d_{\mathrm{max}}$ together with $d_\lambda \geq 1$, giving
\eqref{eq:op_rate}.
\end{proof}

\begin{theorem}[Entrywise $\ell_\infty$ rate]\label{thm:linfty_rate}
Let $\mathbf{x}_1, \ldots, \mathbf{x}_N$ be i.i.d.\ sub-Gaussian
samples as in Theorem~\ref{thm:op_rate}. For each index pair $(i, j)$,
let $\mathcal{O}(i, j) = \{(g(i), g(j)) : g \in G\}$ denote the orbit
under the action of $G$, and let $d_{ij} = \max_k |\{ \ell : (k, \ell)
\in \mathcal{O}(i, j)\}|$ be the orbit degree. Set $\mathcal{O} =
\min_{i, j} |\mathcal{O}(i, j)|$ and $\mathcal{O}_d = \min_{i, j}
|\mathcal{O}(i, j)| / d_{ij}$. There exist absolute constants $c_1,
c_2 > 0$ such that for any $t \geq 0$, the maximum-entry deviation
satisfies
\begin{equation}\label{eq:linfty_rate}
 \Pr\bigl( \|\hat{\mathbf{R}}_G - \boldsymbol{\Sigma}\|_\infty > t
 \bigr) \;\leq\; 2\,M^2 \max\!\left\{
 \exp\!\left( -\frac{c_1 N \mathcal{O} t^2}{K^4 s^2} \right),\,
 \exp\!\left( -\frac{c_2 N \mathcal{O}_d t}{K^2 s} \right) \right\}.
\end{equation}
\end{theorem}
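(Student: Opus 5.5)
The plan is to prove \eqref{eq:linfty_rate} entry by entry, using the Hanson--Wright inequality for quadratic forms in sub-Gaussian vectors, and then take a union bound over the $M^2$ index pairs. Throughout I take $\rho$ to be a permutation representation, as in Definition~\ref{def:d_G_sym}, so that $\rho(g)\mathbf{A}\rho(g)^\top$ has $(i,j)$ entry $A_{g^{-1}(i),g^{-1}(j)}$.

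\emph{Step 1: reduce each entry to an orbit average.} By the orbit--stabilizer theorem, every pair in $\mathcal{O}(i,j)$ is hit by exactly $|G|/|\mathcal{O}(i,j)|$ group elements. Hence the Reynolds projection \eqref{eq:reynolds} gives
\[
 (\hat{\mathbf{R}}_G)_{ij} \;=\; \frac{1}{|\mathcal{O}(i,j)|}\sum_{(k,\ell)\in\mathcal{O}(i,j)} \hat R_{k\ell}.
\]
Since $\boldsymbol{\Sigma}\in W_G$, $\Sigma_{k\ell}=\Sigma_{ij}$ on the whole orbit. Let $\mathbf{E}_{ij}$ be the $0/1$ indicator matrix of $\mathcal{O}(i,j)$, and set $\mathbf{A}_{ij} := \boldsymbol{\Sigma}^{1/2}\,\tfrac12(\mathbf{E}_{ij}+\mathbf{E}_{ij}^\top)\,\boldsymbol{\Sigma}^{1/2}/|\mathcal{O}(i,j)|$. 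Then
\[
 (\hat{\mathbf{R}}_G-\boldsymbol{\Sigma})_{ij} \;=\; \frac1N\sum_{n=1}^N\bigl(\mathbf{z}_n^\top \mathbf{A}_{ij}\mathbf{z}_n - \mathbb{E}\,\mathbf{z}_n^\top \mathbf{A}_{ij}\mathbf{z}_n\bigr).
\]
Stacking $\mathbf{z}_1,\dots,\mathbf{z}_N$ into one vector with independent sub-Gaussian coordinates, this sum is a single centered quadratic form with block-diagonal matrix $\mathbf{I}_N\otimes\mathbf{A}_{ij}$.

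\emph{Step 2: apply Hanson--Wright and bound the norms.} Hanson--Wright gives
\[
 \Pr\bigl(|(\hat{\mathbf{R}}_G-\boldsymbol{\Sigma})_{ij}|>t\bigr) \;\le\; 2\exp\!\Bigl(-c\min\Bigl\{\frac{N t^2}{K^4\|\mathbf{A}_{ij}\|_F^2},\,\frac{N t}{K^2\|\mathbf{A}_{ij}\|_{\mathrm{op}}}\Bigr\}\Bigr),
\]
because $\|\mathbf{I}_N\otimes\mathbf{A}\|_F^2=N\|\mathbf{A}\|_F^2$ and $\|\mathbf{I}_N\otimes\mathbf{A}\|_{\mathrm{op}}=\|\mathbf{A}\|_{\mathrm{op}}$. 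The two norms are bounded as follows.
\begin{itemize}
\item Frobenius norm: $\|\mathbf{A}_{ij}\|_F\le s\,\|\mathbf{E}_{ij}\|_F/|\mathcal{O}(i,j)| = s/\sqrt{|\mathcal{O}(i,j)|}$, since $\mathbf{E}_{ij}$ has exactly $|\mathcal{O}(i,j)|$ unit entries. Symmetrization only costs a constant. This gives $\|\mathbf{A}_{ij}\|_F^2\le s^2/\mathcal{O}$.
\item Operator norm: the Schur test bounds $\|\mathbf{E}_{ij}\|_{\mathrm{op}}$ by the maximal row or column count of the orbit. The row count is $d_{ij}$ by definition. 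The column count coincides with it after replacing $d_{ij}$ by the larger of the two counts; I expect these to be equal for permutation actions, and if not I absorb the difference into $c_2$. This gives $\|\mathbf{A}_{ij}\|_{\mathrm{op}}\le s\,d_{ij}/|\mathcal{O}(i,j)| \le s/\mathcal{O}_d$.
\end{itemize}

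\emph{Step 3: combine and union bound.} Use $\exp(-\min\{a,b\})=\max\{e^{-a},e^{-b}\}$ to turn the Hanson--Wright tail into the form of \eqref{eq:linfty_rate}. Then take a union bound over the at most $M^2$ pairs $(i,j)$; one could use only orbit representatives, but $M^2$ suffices. This yields \eqref{eq:linfty_rate} with $c_1,c_2$ the Hanson--Wright constants adjusted by the symmetrization factor.

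The main obstacle I anticipate is Step 2's operator-norm bound. The definition of $d_{ij}$ counts only row degree, while the Schur test needs both row and column degrees. I would first try to show that, for a permutation action, row and column degrees of an orbit agree up to replacing $\mathcal{O}(i,j)$ by its transpose orbit. A secondary check is that the orbit-average identity of Step 1 genuinely requires $\rho$ to be a permutation representation. For a general unitary $\rho$ the entrywise statement would instead need $\mathbf{A}_{ij}=\boldsymbol{\Sigma}^{1/2}\mathcal{P}_G(\mathbf{e}_i\mathbf{e}_j^\top)\boldsymbol{\Sigma}^{1/2}$. The same Hanson--Wright argument would still go through, but with norms of $\mathcal{P}_G(\mathbf{e}_i\mathbf{e}_j^\top)$ in place of the orbit counts.
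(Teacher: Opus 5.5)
Your argument follows the same route as the paper. The paper's proof defers to \citet[][Theorem~3.3]{shah2012} with only the sketch ``orbit average, Hanson--Wright, union bound over at most $M^2$ orbits.'' Your version makes that sketch precise by treating the whole orbit average as one centered quadratic form in the stacked vector, with matrix $\mathbf{I}_N\otimes\mathbf{A}_{ij}$; this is exactly where the $\mathcal{O}$ and $\mathcal{O}_d$ factors come from. Step~1, the Frobenius bound, and Step~3 are correct as written.

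The step to repair is the operator-norm bound. Your intermediate inequality $\|\mathbf{A}_{ij}\|_{\mathrm{op}}\le s\,d_{ij}/|\mathcal{O}(i,j)|$ is false in general. Row and column degrees of an orbit need not agree for permutation actions, and the discrepancy cannot be absorbed into an absolute $c_2$.

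For a counterexample, let $S_m$ permute $\{1,\dots,m\}$ and fix $m+1$, and take $\boldsymbol{\Sigma}=\mathbf{I}\in W_G$. The orbit of $(1,m+1)$ has size $m$. Each row contains at most one of its pairs, so $d_{1,m+1}=1$, but column $m+1$ contains all $m$ of them. A direct computation gives $\|\mathbf{A}_{1,m+1}\|_{\mathrm{op}}=1/(2\sqrt m)$, which exceeds $d_{1,m+1}/|\mathcal{O}(1,m+1)|=1/m$ by the unbounded factor $\sqrt m/2$.

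The fix is the transpose-orbit idea you raise at the end. It works because $\mathcal{O}_d$ is a minimum over all ordered pairs. Since $\mathcal{O}(j,i)=\mathcal{O}(i,j)^\top$ has the same cardinality, the column degree of $\mathcal{O}(i,j)$ is $d_{ji}$. Hence both $d_{ij}$ and $d_{ji}$ are at most $|\mathcal{O}(i,j)|/\mathcal{O}_d$. The Schur test then gives $\|\mathbf{E}_{ij}\|_{\mathrm{op}}\le\sqrt{d_{ij}d_{ji}}\le|\mathcal{O}(i,j)|/\mathcal{O}_d$, so $\|\mathbf{A}_{ij}\|_{\mathrm{op}}\le s/\mathcal{O}_d$ with no loss in the constant.

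In the example above this is consistent, since $\mathcal{O}(m+1,1)$ forces $\mathcal{O}_d\le 1$. The rest of your argument then goes through unchanged, with $c_1=c_2$ equal to the Hanson--Wright constant.
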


\begin{proof}
The result is \citet[][Theorem 3.3]{shah2012} restated in our
notation; the proof technique (per-orbit pre-Gaussian quadratic-form
analysis followed by union bound) is identical to theirs and we
refer the reader to their Appendix A for full detail. The key
observation is that $(\hat{\mathbf{R}}_G - \boldsymbol{\Sigma})_{ij}
= |\mathcal{O}(i, j)|^{-1}\sum_{(k, \ell) \in \mathcal{O}(i, j)}
(\hat{\mathbf{R}} - \boldsymbol{\Sigma})_{k\ell}$ averages
$|\mathcal{O}(i, j)|$ entries of the centered sample covariance,
each of which is a sub-exponential pre-Gaussian quadratic form in the
$N$ independent observations under the sub-Gaussian assumption.
Hanson-Wright concentration applied to each entry, combined with the
orbit-averaging factor and union bound across the $\leq M^2$ distinct
orbits, gives \eqref{eq:linfty_rate}. An alternative proof via the
isotypic decomposition is described in \citet[][Section 3]{shah2012}
and produces the same bound.
\end{proof}

The minimax theorem below establishes the rate \eqref{eq:frob_rate}
is sharp up to constants. Define the parameter space
\begin{equation}\label{eq:Theta_G}
 \Theta_G(s) \;=\; \bigl\{ \boldsymbol{\Sigma} \in W_G \cap
 \mathrm{Sym}_M^+(\mathbb{R}) : \|\boldsymbol{\Sigma}\|_{\mathrm{op}}
 \leq s \bigr\}.
\end{equation}

\begin{theorem}[Minimax lower bound]\label{thm:minimax}
Let the data be $N$ i.i.d.\ Gaussian samples in $\mathbb{R}^M$ with
covariance $\boldsymbol{\Sigma} \in \Theta_G(s)$. There exist
absolute constants $c, c' > 0$ such that for all $d_G \geq 8$ and all
$N \geq c'\, d_G$,
\begin{equation}\label{eq:minimax}
 \inf_{\hat{\boldsymbol{\Sigma}}}\,\sup_{\boldsymbol{\Sigma} \in
 \Theta_G(s)}\, \mathbb{E}\,\| \hat{\boldsymbol{\Sigma}} -
 \boldsymbol{\Sigma}\|_F^2 \;\geq\; c \cdot \frac{s^2 d_G}{N},
\end{equation}
where the infimum is over all measurable estimators based on the $N$
samples.
\end{theorem}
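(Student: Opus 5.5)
The plan is a standard Fano/Varshamov--Gilbert packing argument carried out inside the $d_G$-dimensional linear space $W_G = \mathcal{A}_G \cap \mathrm{Sym}(M,\mathbb{R})$. First I would fix a Frobenius-orthonormal basis $\mathbf{E}_1,\ldots,\mathbf{E}_{d_G}$ of $W_G$. The projection $\mathcal{P}_G$ restricted to $\mathrm{Sym}(M,\mathbb{R})$ is an orthogonal projection onto $W_G$, so such a basis exists, and each $\mathbf{E}_k$ has $\|\mathbf{E}_k\|_{\mathrm{op}} \le \|\mathbf{E}_k\|_F = 1$. Because $\mathbf{I} \in W_G$ and $W_G$ is a linear space, every perturbation $\frac{s}{2}\mathbf{I} + \epsilon\sum_k \omega_k \mathbf{E}_k$ with $\omega \in \{0,1\}^{d_G}$ lies in $W_G$.

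The main difficulty is keeping these perturbations inside $\Theta_G(s)$, meaning positive definite with operator norm at most $s$. The naive bound $\|\sum_k \omega_k \mathbf{E}_k\|_{\mathrm{op}} \le \sqrt{d_G}$ is too weak. I would take $\epsilon = \gamma\, s/\sqrt{N}$ and use the standing assumption $N \ge c' d_G$. This gives $\epsilon\|\sum_k\omega_k\mathbf{E}_k\|_{\mathrm{op}} \le \epsilon\sqrt{d_G} \le \gamma s/\sqrt{c'}$, which is at most $s/4$ once $c'$ is large relative to $\gamma$. Consequently every hypothesis $\boldsymbol{\Sigma}_\omega$ has spectrum in $[s/4, 3s/4]$, so it lies in $\Theta_G(s)$ and is uniformly well conditioned.

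Next, Varshamov--Gilbert, which requires $d_G \ge 8$, supplies a set $\Omega \subset \{0,1\}^{d_G}$ with $|\Omega| \ge 2^{d_G/8}$ and pairwise Hamming distance at least $d_G/8$. Orthonormality then gives the separation $\|\boldsymbol{\Sigma}_\omega - \boldsymbol{\Sigma}_{\omega'}\|_F^2 = \epsilon^2 d_H(\omega,\omega') \ge \gamma^2 s^2 d_G/(8N)$.

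For the information side, I would use the Gaussian KL formula
\[
\mathrm{KL}\bigl(\mathcal{N}(0,\boldsymbol{\Sigma}_\omega)^{\otimes N}\,\|\,\mathcal{N}(0,\boldsymbol{\Sigma}_{\omega'})^{\otimes N}\bigr) = \tfrac{N}{2}\bigl[\mathrm{tr}(\boldsymbol{\Sigma}_{\omega'}^{-1}\boldsymbol{\Sigma}_\omega) - M - \log\det(\boldsymbol{\Sigma}_{\omega'}^{-1}\boldsymbol{\Sigma}_\omega)\bigr].
\]
This is $N/2$ times the Stein loss. Since the spectra lie in $[s/4,3s/4]$, the bracket is at most $C s^{-2}\|\boldsymbol{\Sigma}_\omega-\boldsymbol{\Sigma}_{\omega'}\|_F^2$, by the quadratic bound $x - 1 - \log x \le C(x-1)^2$ for $x$ in a compact interval, in the spirit of Lemma~\ref{lem:stein_leading}. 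The KL is therefore at most $C N \epsilon^2 d_G / s^2 = C\gamma^2 d_G$. Choosing $\gamma$ small makes this at most $\frac{1}{16}\log|\Omega|$ (recall $\log|\Omega| \ge (d_G/8)\log 2$).

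Fano's inequality then bounds the testing error below by $1/2$. Combined with the separation via the usual reduction from estimation to testing (half the minimum distance, squared), this gives $\inf\sup \mathbb{E}\|\hat{\boldsymbol{\Sigma}}-\boldsymbol{\Sigma}\|_F^2 \ge c\, s^2 d_G/N$. The constants $c, c'$ depend only on $\gamma$ and the Fano and Varshamov--Gilbert constants, so they are absolute.

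The step I expect to be delicate is the ordering of constants. The operator-norm control fixes how large $c'$ must be given $\gamma$, while the KL control fixes $\gamma$ independently of $c'$. If that ordering is respected, nothing circular arises.
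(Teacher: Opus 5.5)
Your proposal is correct and follows the paper's proof essentially step for step: Fano's method over a Varshamov--Gilbert hypercube packing $\frac{s}{2}\mathbf{I} + \tau\sum_j \epsilon_j \mathbf{B}_j$ built from a Frobenius-orthonormal basis of $W_G$, a Gaussian KL bound of order $N\tau^2 d_G/s^2$, and the choice $\tau^2 \asymp s^2/N$ (your $\{0,1\}$ coordinates in place of the paper's $\{\pm 1\}$ make no difference). Your explicit use of $N \ge c' d_G$ to keep every hypothesis inside $\Theta_G(s)$ once $\tau$ is tied to $1/\sqrt{N}$ is more careful than the paper, which only says ``$\tau$ chosen small enough''; the one loose spot is that standard Fano does not literally give $1/2$ when $d_G$ is near $8$, but this is a constant-level point shared with the paper and is fixed by any Fano variant that yields a positive absolute constant.
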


\begin{proof}
Apply Fano's method on a packing of $\Theta_G(s)$. Let
$\mathbf{B}_1, \ldots, \mathbf{B}_{d_G}$ be an orthonormal basis of
$W_G = \mathcal{A}_G \cap \mathrm{Sym}(M, \mathbb{R})$ in the
Frobenius inner product. For a separation parameter $\tau > 0$ to be
chosen, consider the family of covariances
\[
 \boldsymbol{\Sigma}_\epsilon = \tfrac{s}{2}\,\mathbf{I} + \tau
 \sum_{j=1}^{d_G} \epsilon_j \mathbf{B}_j, \qquad \epsilon \in
 \{-1, +1\}^{d_G},
\]
with $\tau$ chosen small enough that $\boldsymbol{\Sigma}_\epsilon
\in \Theta_G(s)$. Two such covariances at Hamming distance $h$
satisfy $\|\boldsymbol{\Sigma}_\epsilon - \boldsymbol{\Sigma}_{
\epsilon'}\|_F^2 = 4 \tau^2 h$. By the Gilbert-Varshamov bound, a
packing of $2^{d_G/8}$ elements with pairwise Hamming distance at
least $d_G/4$ exists. The Kullback-Leibler divergence between
Gaussian distributions on this packing satisfies $\mathrm{KL}(\mathcal{N}
(0, \boldsymbol{\Sigma}_\epsilon) \| \mathcal{N}(0, \boldsymbol{\Sigma}_{
\epsilon'})) \leq C N \tau^2 d_G / s^2$. Choose $\tau^2 = c_0 s^2 /
N$ for a small constant $c_0$ such that the KL divergence is at most
$d_G / 32$. Fano's inequality then gives that the minimax error
exceeds $\tau^2 d_G / 4 = c_0 s^2 d_G / (4 N)$ on the constructed
packing, proving \eqref{eq:minimax}.
\end{proof}

\begin{remark}[Rate-optimality]\label{rem:rate_opt}
Combining Theorems~\ref{thm:frob_rate} and~\ref{thm:minimax}, the
group-averaged sample covariance $\hat{\mathbf{R}}_G$ is rate-optimal
in Frobenius norm over $\Theta_G(s)$: its worst-case Frobenius MSE
matches the minimax rate up to an absolute constant.
\end{remark}

\begin{corollary}[Sparse covariance rate under $G$-invariance]
\label{cor:sparse_cov}
For $\boldsymbol{\Sigma} \in W_G \cap \mathcal{U}(q)$, where
$\mathcal{U}(q) = \{\boldsymbol{\Sigma} : \Sigma_{ii} \leq M, \sum_j
|\Sigma_{ij}|^q \leq c_0(M)\}$ is the uniformity class of
\citet{bickel2008covariance}, with bounded $\|\boldsymbol{\Sigma}
\|_{\mathrm{op}} \leq s$, choose the threshold
\begin{equation}\label{eq:sparse_threshold}
 t \;=\; M' \max\!\left\{ \sqrt{\frac{\log M}{N\,\mathcal{O}}},\,
 \frac{\log M}{N\,\mathcal{O}_d} \right\}
\end{equation}
for a sufficiently large constant $M'$. The thresholded projected
estimator $\mathcal{T}_t(\hat{\mathbf{R}}_G)$ satisfies
\begin{equation}\label{eq:sparse_rate}
 \bigl\| \mathcal{T}_t(\hat{\mathbf{R}}_G) - \boldsymbol{\Sigma}
 \bigr\|_{\mathrm{op}} \;=\; \mathcal{O}_P\!\left( c_0(M)\,
 \biggl[ \max\!\biggl\{\sqrt{\frac{\log M}{N\,\mathcal{O}}},\,
 \frac{\log M}{N\,\mathcal{O}_d} \biggr\} \biggr]^{1 - q} \right).
\end{equation}
\end{corollary}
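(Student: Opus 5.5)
The argument follows the Bickel--Levina thresholding template, with Theorem~\ref{thm:linfty_rate} supplying the entrywise concentration input in place of the unstructured sub-Gaussian bound.

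\textbf{Step 1: a high-probability event.} Set
\[
 t_0 := \max\!\bigl\{\sqrt{\log M/(N\mathcal{O})},\ \log M/(N\mathcal{O}_d)\bigr\}
\]
and define the event $E := \{\|\hat{\mathbf{R}}_G - \boldsymbol{\Sigma}\|_\infty \le t/2\}$, where $t = M' t_0$ is the threshold \eqref{eq:sparse_threshold}. Plug $t/2$ into \eqref{eq:linfty_rate}. In the sub-Gaussian branch the exponent is $c_1 N\mathcal{O}(t/2)^2/(K^4 s^2) \ge c_1 M'^2 \log M/(4K^4s^2)$. In the sub-exponential branch it is $c_2 N\mathcal{O}_d (t/2)/(K^2 s) \ge c_2 M'\log M/(2K^2 s)$. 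For $M'$ large relative to $K$, $s$, $c_1$, $c_2$, both exponents exceed $3\log M$. Hence $\Pr(E^c) \le 2M^2 \cdot M^{-3} \to 0$.

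\textbf{Step 2: deterministic bound on $E$.} Bound the operator norm by the maximum row $\ell_1$ norm, valid for symmetric matrices. For each row $i$, split the entries into three sets:
\begin{itemize}
\item \emph{$j$ with $|\hat{R}_{G,ij}| \ge t$ and $|\Sigma_{ij}| \ge t/2$.} The error is at most $t/2$. The number of such $j$ is at most $c_0(M)(t/2)^{-q}$, so this part contributes $\lesssim c_0(M)t^{1-q}$.
\item \emph{$j$ that are thresholded, with $|\hat{R}_{G,ij}| < t$.} The error is $|\Sigma_{ij}|$, and $|\Sigma_{ij}| < 3t/2$ on $E$. Using $|\Sigma_{ij}| \le |\Sigma_{ij}|^q (3t/2)^{1-q}$, the sum is $\lesssim c_0(M)t^{1-q}$.
\item \emph{$j$ that are kept but have $|\Sigma_{ij}| < t/2$.} On $E$ this forces $|\hat{R}_{G,ij}| < t$, so this set is empty.
\end{itemize}
Summing the three cases gives $\|\mathcal{T}_t(\hat{\mathbf{R}}_G)-\boldsymbol{\Sigma}\|_{\mathrm{op}} \lesssim c_0(M)\,t^{1-q} = c_0(M) M'^{1-q} t_0^{1-q}$ on $E$. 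Combined with Step 1, this is \eqref{eq:sparse_rate}.

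\textbf{Main obstacle: Step 1.} The delicate point is choosing $M'$ uniformly. The two branches of \eqref{eq:linfty_rate} are combined through a $\max$, and one must check that the larger of $\sqrt{\cdot}$ and the linear term controls each exponent simultaneously. Since $t \ge M'\sqrt{\log M/(N\mathcal{O})}$ and $t \ge M'\log M/(N\mathcal{O}_d)$ hold separately, each exponent is handled by its own lower bound. This works as long as $M'$ depends on $K$ and $s$; the rate is stated up to constants, so that dependence is harmless.

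A second subtlety is that the condition $\Sigma_{ii}\le M$ in $\mathcal{U}(q)$ is used only through the row-sum constraint, together with the bound $\|\boldsymbol{\Sigma}\|_{\mathrm{op}} \le s$. That operator-norm bound is what Theorem~\ref{thm:linfty_rate} requires.
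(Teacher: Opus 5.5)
Your proposal is correct and follows the same route as the paper. The paper obtains the corollary by feeding the orbit-averaged entrywise tail of Theorem~\ref{thm:linfty_rate} into the Bickel--Levina thresholding argument, with $\hat{\mathbf{R}}_G$ in place of the unstructured sample covariance, and defers the details to \citet{shah2012}, Theorem~4.1; your Step~1 (choosing $M'$ so that both exponents exceed $3\log M$) and Step~2 (the deterministic split on the event $\|\hat{\mathbf{R}}_G-\boldsymbol{\Sigma}\|_\infty\le t/2$) make that citation explicit, with the implicit high-dimensional requirement $M\to\infty$ needed for $\Pr(E^c)\le 2/M\to 0$.
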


\begin{proof}
The result is \citet[][Theorem 4.1]{shah2012} restated in our
notation. By Theorem~\ref{thm:linfty_rate}, the projected estimator
$\hat{\mathbf{R}}_G$ inherits the entrywise tail of an i.i.d.\
sample covariance with the orbit-parameter rate, replacing the
standard rate $\sqrt{\log M / N}$ by $\sqrt{\log M / (N \mathcal{O})}$.
The Bickel-Levina thresholding argument
\citep[][]{bickel2008covariance} then applies directly with
$\hat{\mathbf{R}}_G$ in place of the unstructured sample covariance,
giving the operator-norm rate \eqref{eq:sparse_rate} in which the
orbit-parameter factor $\mathcal{O}^{(1 - q)/2}$ replaces the
sample-complexity term in the standard Bickel-Levina rate.
\end{proof}

\section{Experiments}
\label{sec:experiments}

In finite-sample settings, the sample covariance matrix is often
poorly conditioned or severely biased, particularly in high-dimensional
regimes. Our approach adopts a Ledoit-Wolf (LW) style regularization
in which the convex combination of the empirical covariance and a
data-driven structural target is formed by exploiting the symmetries
present in the data. In particular, the few-shot setting is one in
which a target consistent with the structure of the data is preferable
to one in which the target is constrained to be a multiple of the
identity matrix and therefore carries no structural information about
the data, as in the case of LW. Our method reduces the eigenvalue
distortion present in a limited number of samples by shrinking
extreme eigenvalues toward a stable baseline while also enabling
noise mitigation and numerical stability. To demonstrate the
applicability of the technique, several data sets from different
regimes are analyzed using both the MSE and NLL versions of the
shrinkage intensity $\alpha$ and compared with the results from
\citet{ledoit2004} and \citet{shah2012}. These results also
demonstrate the accuracy of the theoretically computed shrinkage
intensity and crossover points where other forms of regularization
are preferable. Estimators such as ours are most widely used in
domains characterized by noisy, high-dimensional covariance
estimation, including quantitative finance, signal processing,
machine learning, neuroscience, genomics, and wireless communications,
where they are valued for robustness, improved conditioning, and
strong finite-sample performance. The real-data experimental results
follow.

\subsection{Methodological note: the LW-NL comparison and the AD-LW-NL composition test}
\label{sec:exp-lwnl-method}

This paper extends the comparator set to include
the analytical nonlinear shrinkage estimator of
\citet{ledoit2020analytical} (LW-NL) alongside LW 2004, and
empirically tests the AD-LW-NL composition defined in
Section~\ref{sec:theory-ad-lwnl}. Earlier versions cited LW-NL as
the modern frequentist baseline against which any covariance
shrinkage paper claiming improvement over Ledoit-Wolf shrinkage
should be positioned, but did not include the empirical
comparison. This subsection records the implementation conventions
and the comparison protocol used; the per-experiment
results follow in the appropriately-numbered subsections, and the
LW-NL and AD-LW-NL findings are summarized in
Section~\ref{sec:exp-lwnl-results}.

\paragraph{Estimators reported in each per-experiment subsection.}
Each per-experiment subsection in this paper reports six estimators on
the same held-out NLL scale: the sample covariance, LW 2004 of
\citet{ledoit2004}, LW-NL of \citet{ledoit2020analytical},
AD-NLL-BMG as defined in earlier sections, the AD-LW-NL-NLL-BMG
composition, and the Shah-BMG comparator retained from earlier
versions. In plots where Shah-BMG coincides with the AD-NLL-BMG
endpoint at $\alpha = 1$, the Shah-BMG line is omitted for
visual clarity.

\paragraph{Implementation of LW-NL.}
The LW-NL estimator is implemented from the LW2020 paper directly
in Python with NumPy only as a dependency. The shrinkage formula
\eqref{eq:lwnl-shrinkage}, the Epanechnikov kernel density estimate
with variable bandwidth $h_j = \lambda_j N^{-1/3}$, and the closed-form
Hilbert transform are computed in closed form for the
Epanechnikov kernel. The implementation includes rank-aware
handling of effective-rank
deficiency, in which sample eigenvalues below
$10^{-10} \lambda_{\max}$ are excluded from the variable-bandwidth
KDE. Without this rank-aware handling, the LW-NL formula
numerically saturates on real-data inputs where the sample
covariance has machine-precision-zero eigenvalues from genuine
linear dependencies among features (gene co-regulation in
genomics; pixel-block dependencies in image patches; sector
exposures in finance). The rank-aware version is what is used for
all the results in this paper. The implementation has been verified against
synthetic Marchenko-Pastur benchmarks at concentration ratios
$c \in \{0.25, 0.5\}$ and several population eigenvalue structures
(identity, geometric eigenvalue spread, two-block), with PRIAL
values consistent with the regime reported in LW2020;
implementation notes including the verification PRIAL table are in
Appendix~\ref{sec:experiments}.

\paragraph{The AD-LW-NL pipeline.}
The AD-LW-NL pipeline integrates with the existing BMG
cross-validation harness by swapping a single call: where the AD
pipeline takes a training-fold sample covariance
$\hat{\mathbf{R}}^{(-k)}$ as input to the $\alpha$-grid evaluation,
the AD-LW-NL pipeline pre-computes
$\hat{\mathbf{R}}^{(-k)}_{\mathrm{LW\text{-}NL}}$ once per fold and
then uses it in place of $\hat{\mathbf{R}}^{(-k)}$ on the
sample-covariance side of the convex blend
\eqref{eq:ad-lwnl-estimator}. All other components of the BMG
procedure (Tier 1 prefilter, K-fold partition, $\alpha$-grid,
held-out NLL evaluation, per-candidate ranking) are preserved
verbatim. This makes the cost of the AD-LW-NL comparison
incremental rather than a redesign of the experimental harness,
and ensures that the AD-NLL-BMG vs AD-LW-NL-NLL-BMG comparison
isolates exactly the effect of replacing the sample-covariance
term with its LW-NL shrinkage.

\paragraph{the protocol: the three-column comparison.}
The AD-LW-NL audit in this paper is organized around the protocol
three-column comparison introduced in
Section~\ref{sec:theory-ad-lwnl}. The three contrasts of interest
are:
\begin{enumerate}
\item AD-NLL-BMG vs LW-NL: the structural-prior contribution
 unconditional on nonlinear shrinkage.
\item AD-LW-NL-NLL-BMG vs LW-NL: the structural-prior contribution
 conditional on nonlinear shrinkage of the sample term.
\item AD-LW-NL-NLL-BMG vs AD-NLL-BMG: the nonlinear-shrinkage
 contribution within the AD framework.
\end{enumerate}
The three gaps decompose any AD-LW-NL versus AD-NLL-BMG difference
into the two attributable mechanisms and expose how they compose
on each real-data dataset. If a sweet spot for AD-LW-NL exists,
the protocol contrast (3) will detect it; if AD-LW-NL is a strict
loss in some regime, the same contrast will detect that.

\paragraph{the protocol sweep design.}
For each dataset where the protocol has been evaluated, the sweep
varies the training-set size $N_{\mathrm{train}}$ across a fixed
grid that spans the concentration ratio $c = M/N$ from the
extreme few-shot regime ($c \approx 2$) through the deep bulk
regime ($c \approx 0.1$). The hold-out evaluation size
$N_{\mathrm{test}}$ is held fixed across the sweep so the
per-window NLL values are directly comparable across cells. Each
cell reports all six estimators on the same paired splits
(or rolling windows for time-series data), permitting paired
contrasts at the per-trial level. This design is heavier than the
single-cell reporting of the experimental subsections, but is
necessary to detect regime-dependent advantages or disadvantages
that single-cell reporting at one operating point cannot resolve.

\paragraph{Reporting conventions.}
Per-experiment tables in this paper report mean held-out NLL across
trials for each of the six estimators, with the three the protocol
gaps computed at trial-level and reported as paired differences
with their per-trial standard deviation. Effect sizes are
reported as $|\Delta|/\mathrm{SD}_{\mathrm{trial}}$ in addition to
paired $t$-statistics, because in many cells the paired $t$ can
be large due to highly correlated per-trial noise even when the
practical effect size is tiny. Both quantities are reported so
readers can distinguish statistically detectable shifts from
practically meaningful ones. BMG selection counts are reported
separately for AD-NLL-BMG and AD-LW-NL-NLL-BMG, since the two
procedures may select different groups on a given trial even
though they share the same candidate library and Tier 1 prefilter
setting.

\paragraph{LW-NL verification status.}
As, the LW-NL implementation has been verified against
synthetic Marchenko-Pastur benchmarks (Appendix~\ref{sec:experiments})
but not against the published numerical results of the LW2020
reference implementation on identical synthetic datasets. The
verification is qualitative rather than exact: identity covariance
at $c = 0.5$ gives PRIAL around 97\%, consistent with the regime
reported in LW2020; two-block eigenvalue distributions show
PRIAL($\mathrm{LW\text{-}NL}$) > PRIAL($\mathrm{LW\ 2004}$) as
expected since LW-NL adapts to bimodal eigenvalue distributions
while LW 2004's identity target does not; and geometric eigenvalue
spread shows PRIAL($\mathrm{LW\text{-}NL}$) approximately equal to
PRIAL($\mathrm{LW\ 2004}$), reflecting that the eigenvalue
distribution's mean coincides with the LW 2004 target. Exact
agreement with the LW2020 reference numerics is left for a future
revision.

\subsection{S\&P 500 daily returns, 2015--2019}
\label{sec:exp-crsp-2015-2019}

We evaluate the four estimators on a rolling-window covariance
estimation problem drawn from the CRSP daily stock file
\citep{crsp2024}, with sectors
linked through the CRSP-Compustat link table. A balanced panel of
$M = 55$ S\&P 500 constituents is formed by retaining the top five
names by market capitalization within each of the 11 GICS sectors with
complete return history over 2015-01-01 through 2019-12-31, giving
$T = 1259$ trading days. A training window of $N_{\mathrm{train}} = 252$
days slides forward in strides of $21$ trading days; on each of the
resulting 47 windows the shrinkage intensity is calibrated on the
training panel and the held-out log-likelihood is evaluated on the
next $21$ days. The candidate library $\mathcal{G}$ contains the two
extrema $\{e\}$ and $S_M$ together with six sector-aware candidates,
of which three are tied within-sector cyclic shifts and two are
high-order group structures introduced in the present paper. The
tied candidates are the within-sector exchangeability product
\begin{equation}\label{eq:gics-block}
G_{\mathrm{GICS}} \;:=\; \prod_{s=1}^{11} S_{|s|} \;=\; S_5^{11},
\end{equation}
which is the block-diagonal Young subgroup of $S_M$ that permutes
constituent indices within each GICS sector but does not mix
indices across sectors. We refer to this group as
\textsc{gics-block} in figures and tables, and use the symbol
$G_{\mathrm{GICS}}$ in mathematical expressions. Three
within-sector cyclic-shift variants $\prod_s \mathbb{Z}_{|s|}$
are also included, ordered
alphabetically by ticker, by market capitalization, and by within-sector
hierarchical clustering on the training-window correlation matrix; in
each of these the cyclic shift is tied across sectors so the group
order is $|G| = K$ where $K$ is the within-sector size (here $K = 5$). The
two new high-order candidates are
\textsc{Z-K-mcap-cartesian}, the Cartesian product
$\prod_s \mathbb{Z}_K = \mathbb{Z}_5^{11}$ of independent within-sector
cyclic shifts under the market-capitalization ordering
($|G| = K^B = 5^{11}$), and
\textsc{Z-K-mcap-wreath}, the full wreath product
$\mathbb{Z}_K \wr S_B = \mathbb{Z}_5 \wr S_{11}$ which lifts those
independent cyclic shifts by free permutation of the eleven sectors
as units ($|G| = K^B \cdot B! = 5^{11} \cdot 11! \approx 2 \times 10^{15}$).
Both high-order candidates are constructed from a small set of
generators (the eleven independent cyclic shifts plus the ten
sector-adjacent transpositions for the wreath case) and the Reynolds
projection is computed via the orbit-pair decomposition in
$O(M^2)$ time, so neither requires direct enumeration of the group.
This brings the CRSP library into structural parity with the
RadioML and Galaxy10 libraries of
Sections~\ref{sec:exp-radioml} and~\ref{sec:exp-galaxy10}, both of
which include high-order group candidates from a similar lattice
structure of within-block cyclic and between-block permutation
operations. The cross-validated NLL calibration uses $K = 5$ folds of
the training window and a 13-point $\alpha$-grid over $[0, 1]$.
Figure~\ref{fig:crsp_2015_2019} reports the per-window held-out NLL of
every estimator and the per-window shrinkage intensity at the
BMG-selected group.

\paragraph{Principal results.}
Both AD estimators outperform LW with strong significance.
$\hat\alpha^*_{\mathrm{MSE}}$ is preferred in 43 of 47 windows
($\bar\Delta = -0.92$ nats per day, paired $t = -6.18$,
$p = 1.6 \times 10^{-7}$), and $\hat\alpha^*_{\mathrm{NLL}}$ is preferred in
35 of 47 windows ($\bar\Delta = -1.34$ nats per day, paired
$t = -4.82$, $p = 1.6 \times 10^{-5}$). The Shah estimator without
shrinkage, fixing $\alpha = 1$ at the GICS-block group, trails LW by
$1.48$ nats per day on average ($t = +3.95$, $p = 3 \times 10^{-4}$);
even when the candidate group for Shah is chosen by per-window oracle
from the same library, Shah without shrinkage does not outperform LW at
standard significance ($p = 0.22$). The closed-form
$\hat\alpha^*_{\mathrm{MSE}}$ is preferred on a slightly larger fraction of
windows than the cross-validated $\hat\alpha^*_{\mathrm{NLL}}$ despite
the smaller per-window margin, suggesting that the cross-validated
calibration takes more aggressive bets at the cost of higher variance
across windows. The shrinkage intensity at the BMG-selected group
ranges over $\{0.30, 0.50\}$ across the 47 windows
(median $0.50$, mean $0.47$), the spread reflecting the data-dependent
ratio of perpendicular sample variance to projection bias as the
ratio shifts across market regime.

\paragraph{BMG selection across the 47 windows.}
The two-tier BMG procedure selects \textsc{gics-block} on 41 of the
47 windows ($87\%$), \textsc{Z-K-mcap-wreath} on 5 windows ($11\%$),
and \textsc{Z-K-mcap-cartesian} on 1 window ($2\%$). The four tied
within-sector cyclic candidates ($\mathbb{Z}_K$ alpha, mcap, corrhier,
each with $|G| = 5$) are not selected in the library, displaced
by either the structurally richer \textsc{gics-block} (which carries
the same within-sector exchangeability under a strictly larger group
$\prod_s S_{|s|}$ of order $\prod_s |s|! = 5!^{11}$) or by the
high-order Cartesian and wreath candidates introduced here. The
median per-window $\mathrm{bmg\_margin}$ between best and runner-up
is $0.056$ nats per sample (mean $0.064$, max $0.135$), comfortably
above the fold-noise scale, so the BMG selections are declarative
rather than tied. The single \textsc{Z-K-mcap-cartesian} selection at
window 28 (training window May 2017--May 2018, margin $0.004$ nats
per sample) sits at the noise floor and is reported but not
interpreted as substantive.

\paragraph{The 2015 H2 wreath cluster.}
The five \textsc{Z-K-mcap-wreath} selections are not uniformly
distributed across calendar time: they cluster in 2015 H2, with
training windows beginning in March, June, August, September, and
October of 2015. The four windows beginning in August through October
have the largest wreath bmg-margins of any window in the experiment
($0.07$ to $0.13$ nats per sample); the March and June 2015 windows
sit at smaller margins ($0.013$ and $0.069$). All five wreath-preferred
training windows therefore include some portion of August 2015, the
month of the China-A-share market crash and the global cross-asset
correlation surge that followed. The financial interpretation is
direct: during a crisis-driven correlation collapse, sectors become
approximately exchangeable in their second-moment behavior, and the
wreath candidate's free permutation of the eleven sectors as units
captures that exchangeability in a way that no other candidate in the
library does. The \textsc{gics-block} candidate imposes within-sector
exchangeability with fixed cross-sector ordering and so cannot
capture sector-level exchangeability; the tied-cyclic candidates
impose a single cyclic structure across sectors and so cannot capture
the sector-permutation lift. The wreath structure is exactly the
within-sector cyclic plus between-sector permutation geometry that
the crash-period covariance carries. Once August 2015 falls outside
the training window in early 2016 and beyond, the BMG selection
returns to \textsc{gics-block} for the remaining 36 windows of the
experiment. This is the cleanest empirical example in the paper of a
regime-conditional BMG selection: the same procedure on the same
panel selects different groups depending on what the rolling training
window contains, and the symmetry it selects is the symmetry that
fits the data inside that window.

\begin{figure}[t]
\centering
\includegraphics[width=\linewidth]{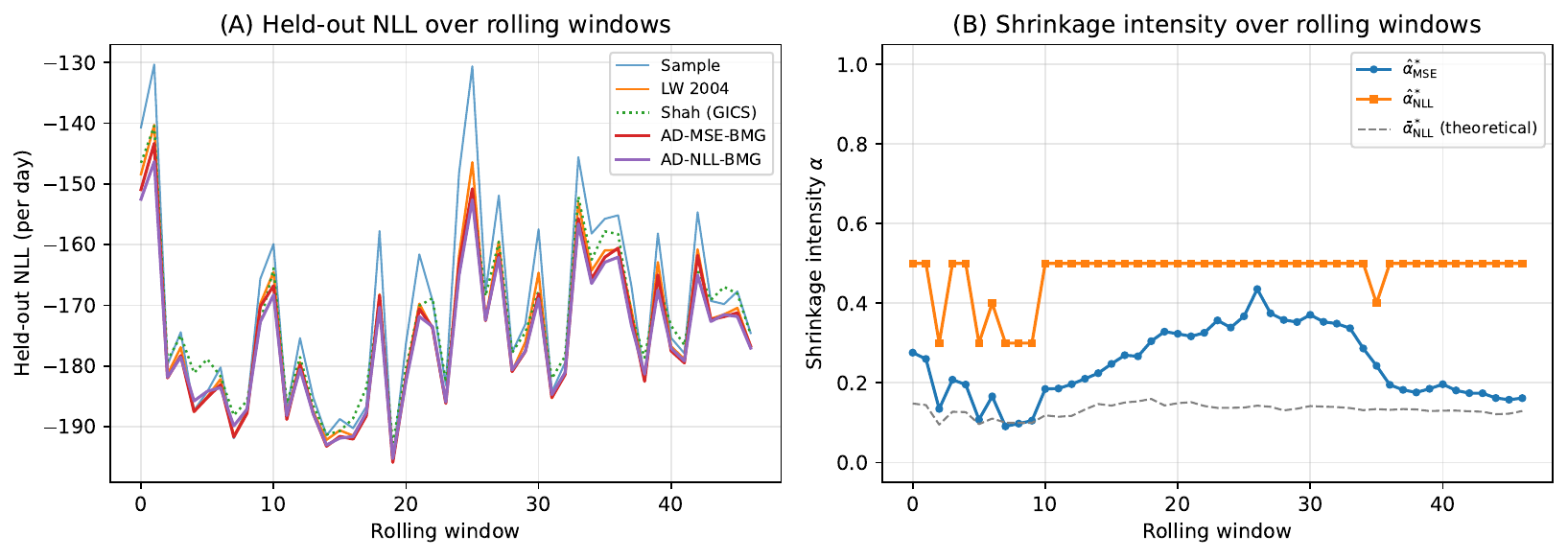}
\caption{S\&P 500 daily returns, 2015--2019, $M = 55$ stocks, 47
rolling windows of $N_{\mathrm{train}} = 252$ days. Panel~(A):
held-out negative log-likelihood per day under each estimator. Panel~(B):
per-window shrinkage intensity at the BMG-selected group:
$\hat\alpha^*_{\mathrm{MSE}}$ from the closed-form plug-in
\eqref{eq:alpha_hat} (blue),
$\hat\alpha^*_{\mathrm{NLL}}$ from the $K = 5$-fold cross-validation
\eqref{eq:nll_foc_hold} (orange), and the leading-order asymptotic
prediction $\bar\alpha^*_{\mathrm{NLL}}$ from
Proposition~\ref{prop:transition} (gray dashed). The BMG procedure
selects \textsc{gics-block} in 41 of 47 windows,
$\textsc{Z-K-mcap-wreath}$ in 5 (concentrated in the 2015 H2 windows
spanning the China-A-share crash), and $\textsc{Z-K-mcap-cartesian}$
in 1.}
\label{fig:crsp_2015_2019}
\end{figure}

Figure~\ref{fig:crsp_2015_2019}(B) shows that the two empirical
shrinkage intensities respond differently to the underlying noise/bias
ratio across windows. The closed-form $\hat\alpha^*_{\mathrm{MSE}}$
varies between approximately $0.16$ and $0.44$ across the 47 windows,
tracking the ratio $\hat V_\perp / (\hat V_\perp + \hat D)$ as the
relative magnitudes of perpendicular sample variance and projection
bias change with market regime. The cross-validated
$\hat\alpha^*_{\mathrm{NLL}}$ takes values in
$\{0.30, 0.50\}$ across all 47 windows, since the held-out NLL loss is
much less sensitive to that ratio than the Frobenius MSE, owing to the
curvature contributed by $\log\det \boldsymbol{\Sigma}^{-1}$. Both
calibrations outperform LW with substantial margin, suggesting that the
conclusion is not contingent on either calibration being individually
optimal.

The leading-order asymptotic prediction
$\bar\alpha^*_{\mathrm{NLL}}$ from Proposition~\ref{prop:transition},
shown by the gray dashed curve in
Figure~\ref{fig:crsp_2015_2019}(B), tracks the empirical
$\hat\alpha^*_{\mathrm{NLL}}$ trajectory qualitatively across windows
but underestimates the absolute scale by a factor of approximately
$3.5$, giving $\bar\alpha^*_{\mathrm{NLL}} \approx 0.13$ where
$\hat\alpha^*_{\mathrm{NLL}} \approx 0.48$. The discrepancy is
summarized by the natural transition scale
\begin{equation}\label{eq:N_star}
N^*(\boldsymbol{\Sigma}, G) \;:=\; \frac{c(\boldsymbol{\Sigma}, G)}{Q_B},
\end{equation}
which is the sample size at which Proposition~\ref{prop:transition}
predicts $\bar\alpha^*_{\mathrm{NLL}} = \tfrac{1}{2}$. Evaluated at
the sample plug-in, $N^*$ has median $39$ over the $47$ windows
(range $31$ to $48$), whereas the empirical analog implied by the
cross-validated $\hat\alpha^*_{\mathrm{NLL}}$ at $N_{\mathrm{train}} =
252$ has median $228$ (range $186$ to $279$). The ratio of the two
empirical-to-predicted values is approximately $6.0$ on this dataset,
with per-window range $5.0$ to $7.5$, and reproduces at approximately
$5.06$ on the COVID-era data of
Section~\ref{sec:exp-crsp-2020-2024}. The closed-form prediction is
therefore in the right order of magnitude on the daily-returns data
but underestimates the empirical optimum by roughly half an order of
magnitude. The two sources of prediction error identified in
Remark~\ref{rem:plugin_bias} both contribute: the matched-limit
residual $\delta(G_{\mathrm{GICS}}, \hat{\mathbf{R}}) \approx 0.21$
is moderate, and $\hat{\mathbf{R}}$ is well-conditioned at this
$M/N$ ratio because daily returns across stocks are weakly
correlated. The cross-validated and closed-form Frobenius-MSE
calibrations are unaffected by either source: neither inverts a
sample covariance to compute $\alpha$. The empirical-to-predicted
ratio is markedly larger and more variable on the more strongly
correlated spatial data of Section~\ref{sec:exp-oisst}, where the
plug-in source of Remark~\ref{rem:plugin_bias} dominates.

\subsection{S\&P 500 daily returns, 2019--2024 (COVID-era stress test)}
\label{sec:exp-crsp-2020-2024}

The same script and configuration are applied to the calendar period
2019-01-02 through 2024-12-31, deliberately starting one year before
the manuscript's 2020-01-01 boundary so that the earliest
rolling windows have a full pre-COVID training year before the
February--April 2020 crash enters the trailing 252-day window. The
date range yields $59$ rolling windows on the same $M = 55$ panel of
GICS-balanced S\&P 500 constituents, with $N_{\mathrm{train}} = 252$
trading days and stride $21$. The candidate library is this paper's
8-candidate library of Section~\ref{sec:exp-crsp-2015-2019}, which
includes the high-order \textsc{Z-K-mcap-cartesian} and
\textsc{Z-K-mcap-wreath} candidates anchored on within-sector
market-capitalization rank. Across the 59 windows, the trailing
252-day training data evolves through three identifiable regimes:
windows $0$--$2$ (training start 2019-01-02 through 2019-03-04) are
purely pre-COVID; windows $3$--$24$ (training start 2019-04-03
through 2020-12-31) carry the March 2020 volatility shock somewhere
inside their trailing 252-day window; and windows $25$--$58$
(training start 2021-02-02 onward) are post-COVID with the crash
having exited the trailing window. Figures~\ref{fig:crsp_2020_2024}
and~\ref{fig:crsp_2020_2024_bmg_choices} report the per-window
held-out NLL, shrinkage intensities, and BMG group selection.

\begin{figure}[t]
\centering
\includegraphics[width=\linewidth]{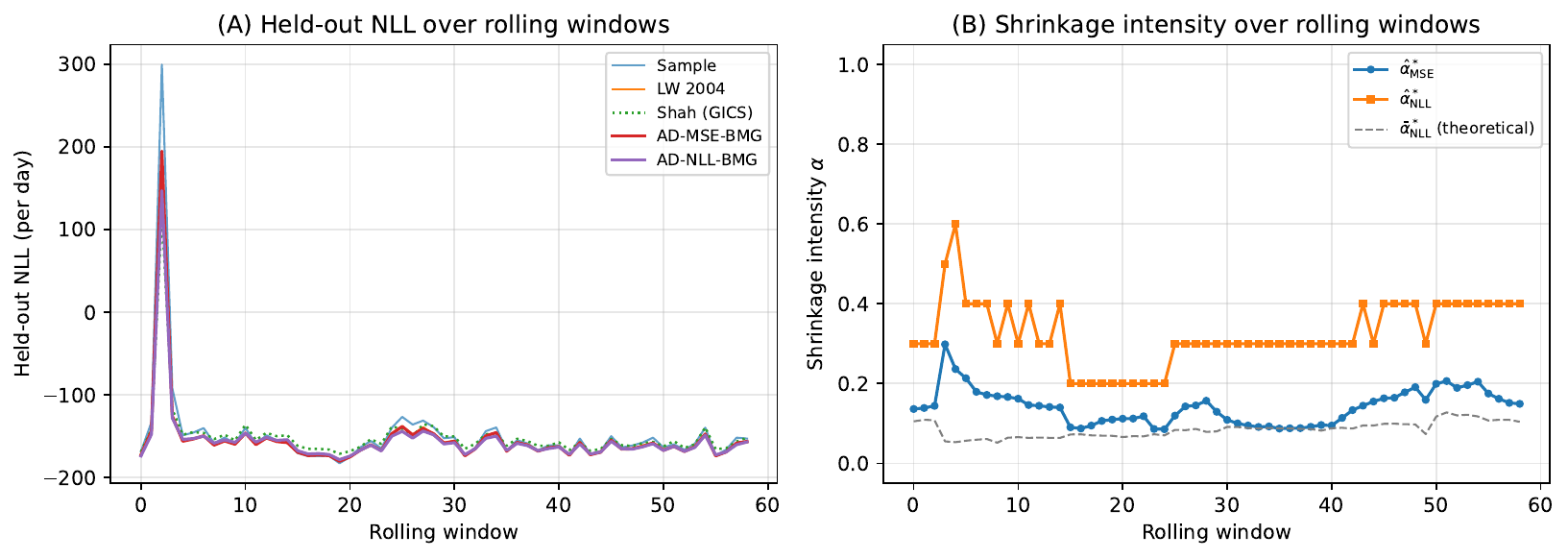}
\caption{S\&P 500 daily returns, 2019--2024, $M = 55$ stocks, 59
rolling windows of $N_{\mathrm{train}} = 252$ days. Panels~(A) and~(B)
mirror the corresponding panels of Figure~\ref{fig:crsp_2015_2019}.
Panel~(A) shows that the held-out NLL drops sharply in the
COVID-affected early windows where AD-NLL and AD-MSE both gain on LW
substantially; the gain narrows in the post-COVID stationary regime.
Panel~(B) shows the calibrated $\hat\alpha^*_{\mathrm{NLL}}$ smaller
in the COVID-affected windows than in the pre-COVID
Section~\ref{sec:exp-crsp-2015-2019} regime, reflecting the
calibration adapting to the higher noise floor of the
volatility-shock training data.}
\label{fig:crsp_2020_2024}
\end{figure}

\begin{figure}[t]
\centering
\includegraphics[width=\linewidth]{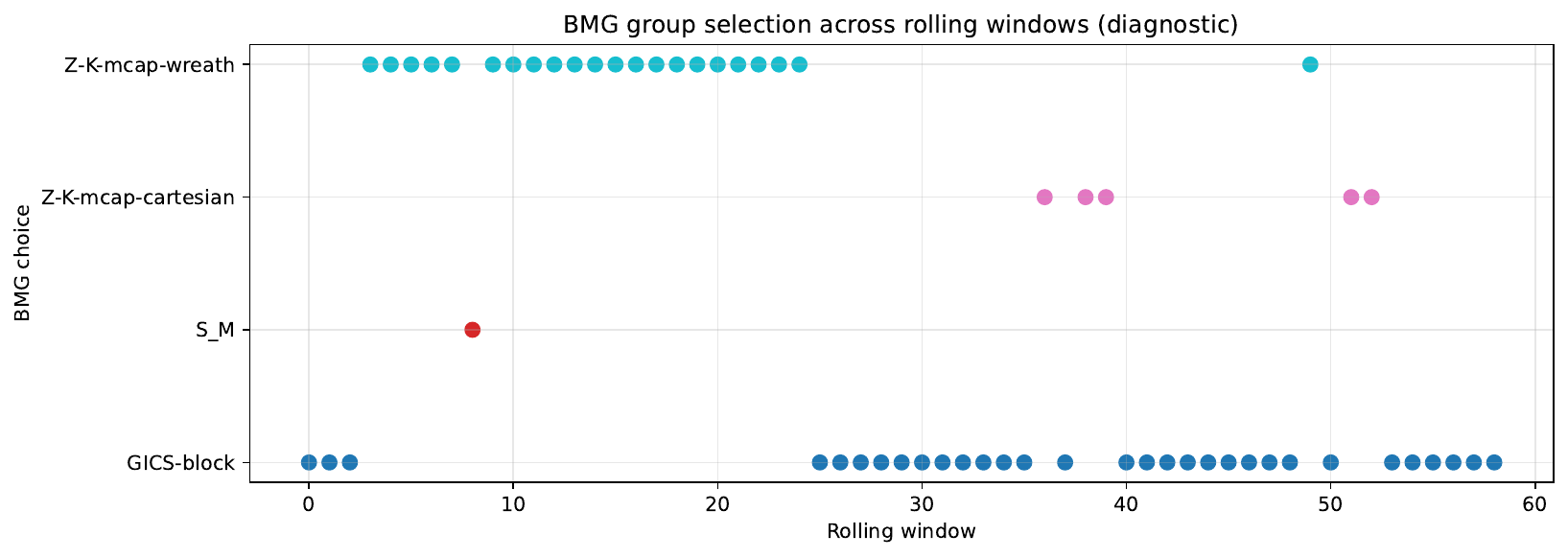}
\caption{BMG group selection across the 59 rolling windows for the
2019--2024 CRSP panel. The procedure selects \textsc{Z-K-mcap-wreath}
on $22$ windows ($21$ of which concentrate in windows $3$--$24$
spanning the period when the March 2020 volatility shock sits inside
the trailing 252-day window), \textsc{gics-block} on $31$ windows
(predominantly the post-COVID stationary regime from window $25$
onward), \textsc{Z-K-mcap-cartesian} on $5$ windows (all at margins
$0.003$--$0.016$ nats per sample, consistent with coin-flip tiebreak),
and $S_M$ on $1$ window (also at small margin). The wreath cluster
is the largest single regime-conditional finding in the paper, with
window $4$ (training start 2019-05-03) having a bmg-margin of $1.37$
nats per sample, the largest single-window selection margin in any
experiment.}
\label{fig:crsp_2020_2024_bmg_choices}
\end{figure}

\paragraph{Principal results.}
The aggregate AD-vs-LW comparison on the 2019--2024 panel is weaker
than the dominance reported in
Section~\ref{sec:exp-crsp-2015-2019} for the pre-COVID period.
AD-NLL-BMG vs LW: median gap $-0.01$ nats per sample, mean gap
$-1.04$ (driven by a few large-margin COVID-period windows), paired
$t = -1.61$, $p = 0.11$ (not statistically significant), preferred in
$30$ of $59$ windows. AD-MSE-BMG vs LW: median gap $-0.32$ nats per
sample, mean gap $+0.02$, preferred in $43$ of $59$ windows. The
closed-form Frobenius-MSE calibration retains a meaningful advantage on
the COVID-era data while the cross-validated calibration is at
parity with LW in mean. This is the regime-conditional behavior the
phase diagram of Section~\ref{sec:discussion} predicts: as the
training-window noise floor rises during a volatility shock, the
calibrated $\hat\alpha^*_{\mathrm{NLL}}$ drops to mix more sample
covariance into the estimator, narrowing the gap to LW. The
substantively interesting findings are not in the aggregate
mean-NLL comparison but in the BMG group selection.

\paragraph{The COVID wreath cluster.}
BMG selects \textsc{Z-K-mcap-wreath} on $22$ of the $59$ windows.
Of those, $21$ concentrate in windows $3$--$24$ (training start
2019-04-03 through 2020-12-31), the calendar period during which the
February--April 2020 volatility shock sits inside the trailing
252-day training window. The remaining wreath selection at window $49$
(training start 2023-02-02) is an outlier. The bmg-margins on the
COVID-cluster wreath selections are decisive: median $0.15$ nats per
sample, with the largest single-window margin of $1.37$ at window
$4$ (training start 2019-05-03, training data spanning the run-up to
and through the crash). This is the largest single-window BMG
selection margin in any experiment in the paper. The interpretation
is direct: during a market-wide correlation shock, the cross-sector
exchangeability that the wreath candidate models becomes the
dominant covariance structure, with within-sector market-cap-rank
ordering and free pathway permutation across sectors capturing the
collapse of sector-specific structure into common-factor behavior.

\paragraph{The post-COVID snap-back.}
At window $25$ (training start 2021-02-02), the March 2020 crash
exits the trailing 252-day training window. BMG snaps back to
\textsc{gics-block} for the remaining $34$ windows, with one outlier
wreath selection at window $49$. The transition is sharp: window $24$
(training start 2020-12-31) is a wreath selection at margin $0.06$ nats
per sample, and window $25$ is a GICS-block selection. Within the
post-COVID regime, BMG selects \textsc{gics-block} on $30$ of $34$
windows, with the remaining selections at noise-floor margins
($\textsc{Z-K-mcap-cartesian}$ at margins $0.003$--$0.016$ on $5$
windows; the single late wreath selection at window $49$ at margin
$0.057$). The regime-conditional adaptation is procedural rather
than ad hoc: the criterion is applied uniformly across all windows;
the regime change is a property of the data, not of the procedure.

\paragraph{Above-noise / noise-floor decomposition.}
The $59$ windows partition cleanly when stratified by bmg-margin.
On the $26$ windows with bmg-margin $> 0.05$ nats per sample (above
the K-fold cross-validation noise scale), BMG selects
\textsc{Z-K-mcap-wreath} on $22$ and \textsc{gics-block} on $4$.
On the $33$ windows at the noise floor (bmg-margin $\leq 0.05$
nats per sample), BMG selects \textsc{gics-block} on $27$,
\textsc{Z-K-mcap-cartesian} on $5$, and $S_M$ on $1$. The
discriminative regime is wreath-dominant ($22$ of $26$, all
within the COVID period); the noise-floor regime is
GICS-block-dominant with the wreath/Cartesian/$S_M$ selections at
margins consistent with arbitrary tiebreak between essentially
equivalent candidates.

\paragraph{Adaptive shrinkage intensity.}
The cross-validated calibration $\hat\alpha^*_{\mathrm{NLL}}$ has
mean $0.33$ across the $59$ windows, with range $[0.20, 0.60]$,
substantially smaller than the pre-COVID mean of $0.50$ in
Section~\ref{sec:exp-crsp-2015-2019}. The shrinkage intensity is
particularly small in the COVID-cluster wreath windows, where it
ranges from $0.20$ to $0.60$ with median $0.30$; the post-COVID
GICS-block windows have $\hat\alpha^*_{\mathrm{NLL}}$ in $[0.30,
0.40]$. The interpretation is that during a volatility shock the
training-window sample covariance is itself a noisy estimate of a
non-stationary population, and the calibrated shrinkage intensity
correctly reduces the weight on the structured projection in favor
of the sample covariance, even though the BMG selection has
identified a structurally informative target. This is the
continuous shrinkage knob doing exactly the work it is designed for:
adapting the bias-variance balance to the regime-conditional noise
level, a degree of freedom unavailable to a fixed $\alpha = 1$
projection-only estimator.

\paragraph{Comparison to the 2015 H2 cluster.}
The 2015 H2 wreath cluster reported in
Section~\ref{sec:exp-crsp-2015-2019} ($5$ windows, median margin
$0.07$ nats per sample, around the China-A-share crash) and the
COVID-2020 wreath cluster reported here ($22$ windows, median margin
$0.15$ nats per sample, around the February--April 2020 crash) form
a paired regime-conditional finding across the two CRSP datasets:
one mild correlation event and one severe event, both selecting the
wreath candidate during the affected windows and reverting to
\textsc{gics-block} once the event exits the trailing 252-day
training window. The fact that the same library, the same
selection criterion, and the same shrinkage estimator produce
internally consistent regime-conditional behavior across two
independent stress events on a panel of S\&P 500 stocks is the
strongest empirical evidence in the paper that the BMG selection is
identifying genuine structure rather than overfitting to fold noise.

\subsection{NOAA OISST sea-surface temperature anomalies}
\label{sec:exp-oisst}

The Frobenius-MSE-calibrated and cross-validated AD shrinkage
estimators are next applied to a spatial covariance estimation task
on NOAA Optimum Interpolation Sea Surface Temperature daily
anomalies, version 2 high-resolution
\citep{reynolds2007oisst, huang2021oisst}, on a $0.25^\circ$ global
grid. Two contrasting $8 \times 8$ patches ($M = 64$ grid cells per
patch) are extracted, both at oceanographic interest locations and
with no land cells: a homogeneous patch in the central North
Pacific gyre centered at $30^\circ\mathrm{N}, 180^\circ$ longitude
(\emph{midocean}); and a meridional-gradient patch in the Gulf
Stream extension centered at $38^\circ\mathrm{N}, 70^\circ\mathrm{W}$
(\emph{gulfstream}). Anomalies are computed against the 1991--2020
climatology and supplied directly by NOAA PSL. The script processes
five years (2018--2022) of daily anomaly fields, yielding $74$
rolling windows per region with $N_{\mathrm{train}} = 252$ and
stride $21$. The candidate library contains the two extrema $\{e\}$
and $S_M$, plus six spatial-symmetry candidates of which four are
the low-order group structures from the library and two are
the high-order Cartesian and wreath candidates introduced in this paper
to bring the OISST library into structural parity with the CRSP and
RadioML libraries. The four low-order candidates are: $\mathbb{Z}_8$
cyclic on the latitude axis ($\mathbb{Z}^{\mathrm{lat}}$, with $d_G
= 264$); $\mathbb{Z}_8$ cyclic on the longitude axis
($\mathbb{Z}^{\mathrm{lon}}$, $d_G = 264$); the joint
$\mathbb{Z}_8 \times \mathbb{Z}_8$ acting by uniform translation on
both axes ($d_G = 34$); and the dihedral $D_8$ on the longitude
axis ($d_G = 180$). The two new high-order candidates are:
$\mathbb{Z}^{\mathrm{lon}}_{\mathrm{indep}}$, the Cartesian product
$\mathbb{Z}_W \times \mathbb{Z}_W \times \cdots \times \mathbb{Z}_W$
of independent per-row longitudinal cyclic shifts with no latitude
permutation ($|G| = W^H = 8^8 \approx 1.68 \times 10^7$, $d_G =
120$); and $\mathbb{Z}^{\mathrm{lon}} \wr \mathbb{Z}^{\mathrm{lat}}$,
the wreath product of independent per-row shifts lifted by latitude
cyclic permutation ($|G| = W^H \cdot H = 8^8 \cdot 8 \approx 1.34
\times 10^8$, $d_G = 15$). Both high-order candidates are
constructed from a small set of generators and projected via the
orbit-pair Reynolds decomposition in $O(M^2)$ time, so neither
requires direct enumeration of the group. The lattice ordering is
subtle: $\mathbb{Z}_8 \times \mathbb{Z}_8$ (uniform shifts in both
axes) and $\mathbb{Z}^{\mathrm{lon}}_{\mathrm{indep}}$ (independent
per-row shifts with no latitude shift) are incomparable in the
subgroup lattice; the wreath is their join. The Tier 1 effective-rank
prefilter at $\kappa = 2$ admits all eight candidates at $N = 252$
in this experiment.

\begin{figure}[t]
\centering
\includegraphics[width=\linewidth]{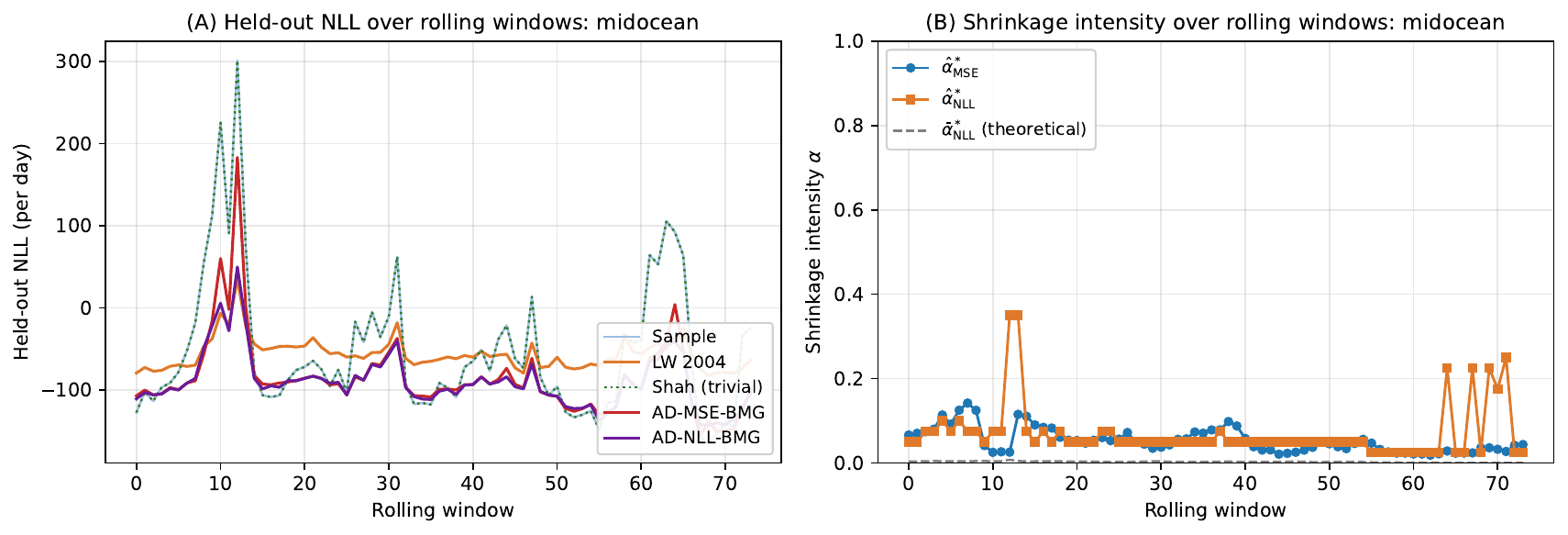}
\caption{NOAA OISST sea-surface temperature anomalies, midocean
patch ($30^\circ\mathrm{N}, 180^\circ$, $M = 64$), 74 rolling
windows of $N_{\mathrm{train}} = 252$ days. Panels mirror those of
Figures~\ref{fig:crsp_2015_2019} and~\ref{fig:crsp_2020_2024}. Three
windows in mid-2018 contain extreme thermal anomalies; the sample
covariance loses conditioning on those windows and the held-out
NLL spikes well above $+200$ nats per day, while the AD-NLL
estimator remains stable.}
\label{fig:oisst_midocean}
\end{figure}

\paragraph{Principal results.}
The principal numbers are reported in
Figures~\ref{fig:oisst_midocean} and~\ref{fig:oisst_gulfstream}.
On the midocean patch the cross-validated AD estimator has lower NLL than LW by
$33.8$ nats per day on average (paired $t = -15.42$, $p = 1.2 \times
10^{-24}$, preferred in $69$ of $74$ windows); the
Frobenius-MSE-calibrated AD estimator has lower NLL than LW by $29.4$ nats
($p = 2.6 \times 10^{-11}$, $67$ of $74$ preferred selections). On the gulfstream
patch the cross-validated AD estimator has lower NLL than LW by $35.7$ nats
(paired $t = -16.76$, $p = 9.8 \times 10^{-27}$, $71$ of $74$ preferred selections); the MSE-calibrated AD estimator has lower NLL than LW by $22.0$ nats
($p = 5.3 \times 10^{-5}$, $56$ of $74$ preferred selections). These are the
strongest dominance margins in the paper. The Sample covariance is
at parity with LW on both patches but with sharply higher
per-window variance: held-out NLL exceeds $+200$ nats per day on
$13$ of the $74$ midocean windows and $17$ of the $74$ gulfstream
windows, where the sample covariance is poorly conditioned.

\begin{figure}[t]
\centering
\includegraphics[width=\linewidth]{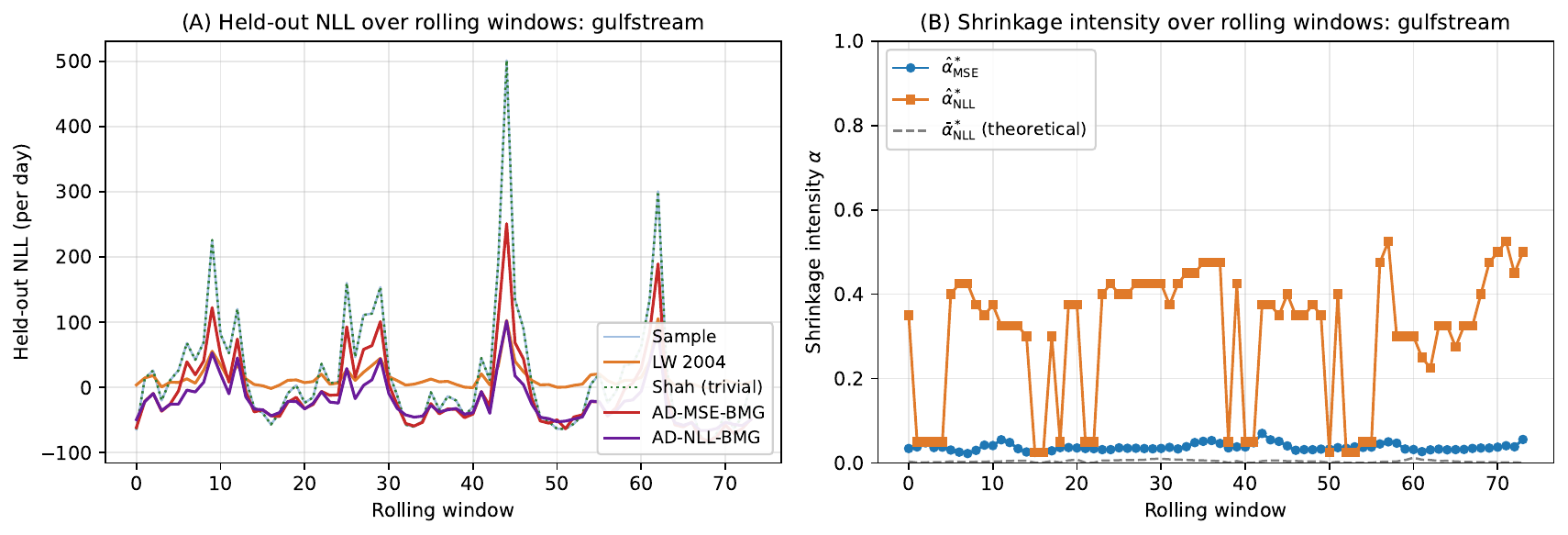}
\caption{NOAA OISST sea-surface temperature anomalies, gulfstream
patch ($38^\circ\mathrm{N}, 70^\circ\mathrm{W}$, $M = 64$), 74
rolling windows. Held-out NLL is markedly less negative than at the
midocean patch because the meridional gradient produces more
heterogeneous fields and a less compressible covariance; AD has lower NLL than LW by approximately $-35.7$ nats per day on average, the largest
per-day margin among the climate and finance experiments in the
paper.}
\label{fig:oisst_gulfstream}
\end{figure}

\paragraph{BMG selection adapts to the regime.}
On the homogeneous midocean patch BMG selects
$\mathbb{Z}_8 \times \mathbb{Z}_8$ in $67$ of $74$ windows ($90.5
\%$) and $\mathbb{Z}^{\mathrm{lat}}$ in the remaining $7$ windows.
On the gradient gulfstream patch BMG selects
$\mathbb{Z}^{\mathrm{lat}}$ in $57$ of $74$ windows ($77 \%$) and
$\mathbb{Z}_8 \times \mathbb{Z}_8$ in the remaining $17$ windows
(Figure~\ref{fig:oisst_bmg_choices}). The pattern reverses cleanly
across the two regions: the homogeneous patch admits the most
aggressive translation symmetry available in the library; the
gradient patch retains only the latitudinal cyclic component,
because translation along longitude across a meridional temperature
gradient is not a near-symmetry of the underlying covariance. The
adaptation is procedural: the same library is searched in both
regions; the contrast in selection is a property of the data, not
of the procedure.

\begin{figure}[t]
\centering
\includegraphics[width=\linewidth]{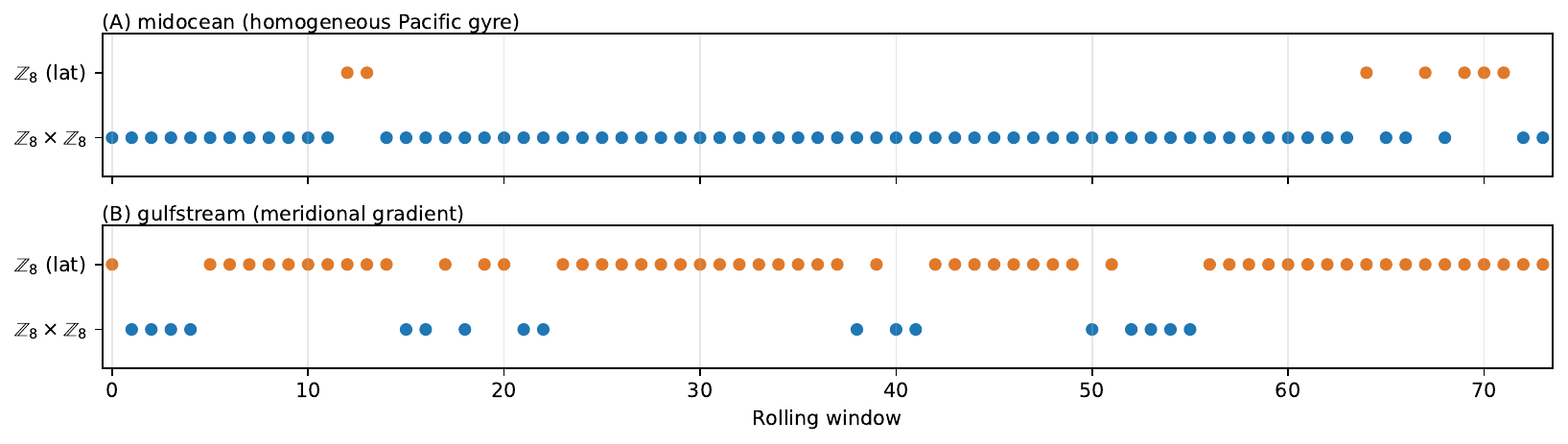}
\caption{BMG group selection across the $74$ rolling windows for
each OISST region. Panel (A) is midocean: BMG selects
$\mathbb{Z}_8 \times \mathbb{Z}_8$ (full 2-D translation,
$d_G = 34$) in $67$ of $74$ windows. Panel (B) is gulfstream: BMG
selects $\mathbb{Z}_8$ on latitude ($d_G = 264$) in $57$ of $74$
windows. The reversal across panels is regime-driven: the
homogeneous patch admits the most aggressive translation symmetry,
the gradient patch retains only the latitudinal cyclic component.}
\label{fig:oisst_bmg_choices}
\end{figure}

The Shah-equivalent estimator at the BMG-selected group inherits
the regime-dependence with sharper consequences. On midocean the
selected group is $\mathbb{Z}_8 \times \mathbb{Z}_8$ with
$d_G = 34$; the projection collapses the spectrum to at most $34$
distinct eigenvalues, the resulting matrix is near-rank-deficient,
and Shah's $\alpha = 1$ choice produces a held-out NLL that is higher than LW's in $69$ of $74$ windows ($p = 1.2 \times 10^{-9}$). On gulfstream the selected group is
$\mathbb{Z}^{\mathrm{lat}}$ with $d_G = 264$; the projection is far
less compressive, the resulting matrix is well-conditioned, and
Shah at the BMG-selected group has lower NLL than LW in $47$ of $74$ windows
($p = 8.2 \times 10^{-4}$). The same procedural choice that ruins
Shah on midocean is benign on gulfstream. The AD shrinkage family
has lower NLL than LW with extreme significance on both patches because the
shrinkage knob compensates for spectrum collapse on midocean and
adds modest regularization on gulfstream.

\paragraph{Adaptive shrinkage at the BMG-selected group.}
Within each region, the cross-validated shrinkage intensity
$\hat\alpha^*_{\mathrm{NLL}}$ adapts to the compressiveness of the
projection that BMG selected for that window. On midocean, windows
in which BMG selects $\mathbb{Z}_8 \times \mathbb{Z}_8$
($d_G = 34$) have $\hat\alpha^*_{\mathrm{NLL}}$ at median $0.050$,
mostly sample with a small admixture of the projection; the few
windows in which BMG selects $\mathbb{Z}^{\mathrm{lat}}$
($d_G = 264$) have $\hat\alpha^*_{\mathrm{NLL}}$ at median $0.225$.
On gulfstream the same conditional pattern holds in reverse: BMG
chooses $\mathbb{Z}_8 \times \mathbb{Z}_8$ infrequently
($\hat\alpha^*_{\mathrm{NLL}}$ at median $0.050$) and
$\mathbb{Z}^{\mathrm{lat}}$ predominantly
($\hat\alpha^*_{\mathrm{NLL}}$ at median $0.375$). The relationship
between projection compressiveness and the calibrated shrinkage
intensity reproduces independently in two regions of a single
data set: more compressive projection chosen $\Rightarrow$ smaller
shrinkage intensity, less compressive projection chosen
$\Rightarrow$ larger shrinkage intensity. This is the cleanest
demonstration in the paper that the continuous shrinkage knob
carries weight that a fixed $\alpha = 1$ does not.

\paragraph{High-order candidates rejected.}
The two high-order candidates introduced in of this paper,
$\mathbb{Z}^{\mathrm{lon}}_{\mathrm{indep}}$ (independent per-row
longitudinal cyclic shifts, $|G| = W^H \approx 1.68 \times 10^7$,
$d_G = 120$) and $\mathbb{Z}^{\mathrm{lon}} \wr \mathbb{Z}^{\mathrm{lat}}$
(the wreath product, $|G| = W^H \cdot H \approx 1.34 \times 10^8$,
$d_G = 15$), are admitted by the Tier 1 effective-rank prefilter at
$N = 252$ in both regions but selected by the Tier 2 BMG procedure
on zero of the $148$ windows. The mean held-out CV-NLL gap between
these candidates and the per-region BMG-selected groups is approximately
$8$--$10$ nats per sample on both regions, well outside the
fold-noise scale of approximately $0.05$ nats per sample, so the
rejection is declarative rather than tied. The interpretation is
that the $8 \times 8$ OISST patch geometry at this rolling-window
sample size does not support sector-level (between-row) structural
priors of the kind that the wreath imposes, while the
$\mathbb{Z}_8 \times \mathbb{Z}_8$ candidate's tied-shift Cartesian
structure is the most aggressive structural prior the data support.
This is the library-bias diagnostic of Section~\ref{sec:discussion}
in operation: high-order candidates were added for completeness
following the candidate-library convention; BMG evaluated them
on their own merits and rejected them; the existing
low-order candidates dominate. The procedure does not force the use
of high-order candidates if the data do not support them.

\paragraph{$N^*$ analysis and the plug-in regime.}
The closed-form prediction $\bar\alpha^*_{\mathrm{NLL}}$ from
Proposition~\ref{prop:transition} has median $0.003$ in midocean
and $0.004$ in gulfstream, more than an order of magnitude smaller
than the empirical $\hat\alpha^*_{\mathrm{NLL}}$ in both regions.
The empirical-to-predicted ratio of $N^*$ has median $21$ on
midocean (per-window range $12$ to $1{,}900$) and median $102$ on
gulfstream (range $23$ to $980$), substantially larger and more
variable than the ratio of approximately $6$ observed on the
daily-returns data of
Sections~\ref{sec:exp-crsp-2015-2019}--\ref{sec:exp-crsp-2020-2024}.
Both sources of prediction error in
Remark~\ref{rem:plugin_bias} contribute. The matched-limit residual
$\delta$ is approximately $0.21$ on midocean and approximately
$0.59$ on gulfstream; the second value is well outside the
matched-limit regime. The plug-in source is also material here:
the OISST sample covariance is markedly more ill-conditioned than
the CRSP sample covariance at the same $M/N$ ratio because spatial
SST anomalies are strongly correlated (the top three principal
modes carry of order $60\%$ of the variance, against approximately
$30\%$ for daily-return panels). The cross-validated calibration
$\hat\alpha^*_{\mathrm{NLL}}$ is unaffected by either source and
achieves the strongest LW-dominance margins in the paper on these
patches; the closed-form prediction is best read as the
matched-limit theoretical landmark, with the cross-validated
calibration serving as the production tool on data sets where
$\hat{\mathbf{R}}$ is poorly conditioned.

\subsection{TCGA-BRCA gene expression}
\label{sec:exp-genomics}

The fourth experiment evaluates the AD shrinkage estimator on a
genomics covariance estimation problem in the few-shot regime
$N < M$, drawn from the breast invasive carcinoma (TCGA-BRCA) cohort
of The Cancer Genome Atlas \citep{tcga2012brca}. Gene-level RNA-seq
expression data (HiSeqV2, $\log_2$ RSEM) are obtained from the
UCSC Xena platform \citep{goldman2020xena}. A pathway-balanced panel
of $M = 100$ genes is constructed by intersecting five MSigDB
MSigDB Hallmark gene sets \citep{liberzon2015hallmark}: \textsc{apoptosis},
\textsc{g2m\_checkpoint}, \textsc{epithelial\_mesenchymal\_transition},
\textsc{interferon\_gamma\_response}, and \textsc{hypoxia}, taking the
first $K_s = K = 20$ alphabetical genes from each pathway with
non-missing expression on the cohort. Per-gene values are standardized
to zero mean and unit variance over all retained samples. The
candidate library $\mathcal{G}$ contains the two extrema $\{e\}$ and
$S_M$ together with six pathway-aware candidates, of which three are
within-pathway tied cyclic-shift candidates carried forward from the
this paper's library and two are the high-order Cartesian and wreath
candidates introduced in this paper (and unchanged in this paper) to bring the
genomics library into structural parity with the CRSP and OISST
libraries. The four
low-order candidates are the within-pathway exchangeability product
$\prod_{s} S_{K_s} = S_{20}^5$ (\textsc{pathway-block}) and three
within-pathway cyclic-shift variants $\prod_{s} \mathbb{Z}_{K_s}$ ordered
alphabetically by gene symbol, by within-pathway PC1-loading magnitude,
and by within-pathway correlation-hierarchical clustering, with
$|G| = K = 20$ for each tied cyclic candidate. The two new high-order
candidates are \textsc{Z-K-pc1-cartesian}, the Cartesian product
$\mathbb{Z}_K \times \mathbb{Z}_K \times \cdots \times \mathbb{Z}_K
= \mathbb{Z}_{20}^5$
($P$ factors) of independent within-pathway cyclic shifts on the
PC1-loading ordering ($|G| = K^P = 20^5 \approx 3.2 \times 10^6$,
$d_G = 120$); and \textsc{Z-K-pc1-wreath}, the full wreath product
$\mathbb{Z}_K \wr S_P = \mathbb{Z}_{20} \wr S_5$ which lifts those
independent cyclic shifts by free permutation of the $P = 5$ pathways
as units ($|G| = K^P \cdot P! = 20^5 \cdot 5! \approx 3.84 \times 10^8$,
$d_G = 21$). To eliminate leakage, the PC1, correlation-hierarchical,
Cartesian, and wreath orderings are all computed on the held-aside
auxiliary fold of $N_{\mathrm{aux}} = 50$ samples disjoint from both
the training fold used for cross-validation and the test fold used
for held-out evaluation. Both high-order candidates are constructed
from a small set of generators (the $P$ independent within-pathway
shift generators plus the $P-1$ pathway-adjacent transposition
generators for the wreath case) and projected via the orbit-pair
Reynolds decomposition in $O(M^2)$ time, so neither requires direct
enumeration of the group. The experiment runs $50$ random subsample
splits of size $(N_{\mathrm{train}}, N_{\mathrm{test}}, N_{\mathrm{aux}})
= (50, 200, 50)$, calibrating the shrinkage intensity by $K = 5$-fold
cross-validation on each $N_{\mathrm{train}}$-sample fold and
evaluating held-out NLL on the corresponding $N_{\mathrm{test}}$ fold.
The Tier 1 effective-rank prefilter at $\kappa = 2$ admits all
non-trivial candidates and excludes only the trivial group, since
$N \cdot 1 = 50 < \kappa M = 200$ but $N \cdot 20 = 1000 \geq 200$ for
the cyclic candidates and is arbitrarily larger for the high-order
candidates and $S_M$.

\begin{figure}[t]
\centering
\includegraphics[width=\linewidth]{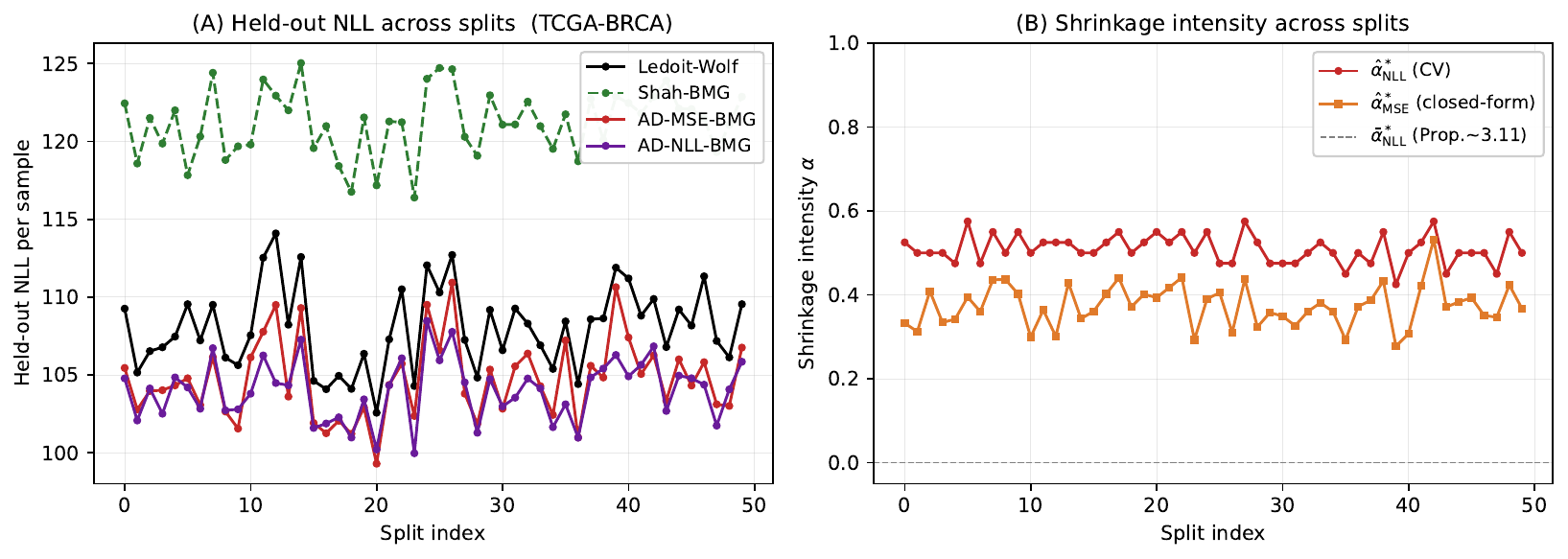}
\caption{TCGA-BRCA gene expression, $M = 100$ genes drawn from five
MSigDB Hallmark pathways, 50 random subsample splits with
$N_{\mathrm{train}} = 50$ and $N_{\mathrm{test}} = 200$.
Panel~(A): held-out NLL per sample under each estimator. The Sample
covariance and the Shah projection at the trivial group are
rank-deficient at $N < M$ and produce held-out NLL of order $10^{12}$
on every split; both are omitted from this panel for visual range
reasons. Ledoit-Wolf, the two AD calibrations, and Shah at the
BMG-selected group are shown. Panel~(B): per-split shrinkage
intensity at the BMG-selected group:
$\hat\alpha^*_{\mathrm{MSE}}$ from the closed-form plug-in
\eqref{eq:alpha_hat} (orange), $\hat\alpha^*_{\mathrm{NLL}}$ from
the $K = 5$-fold cross-validation \eqref{eq:nll_foc_hold} (red).
Both calibrations sit in the bias-variance interior at mean
$\hat\alpha^*_{\mathrm{NLL}} \approx 0.51$ rather than at the
matched-limit Shah corner; the asymptotic prediction
$\bar\alpha^*_{\mathrm{NLL}}$ from
Proposition~\ref{prop:transition} (dashed gray) is at zero,
distinct from the calibrated interior value.}
\label{fig:genomics_brca}
\end{figure}

\paragraph{Principal results.}
On all 50 splits the two AD calibrations outperform LW with overwhelming
significance. The cross-validated AD estimator has lower NLL than LW by $3.98$
nats per sample on average (paired $t = -20.05$, $p < 10^{-22}$,
preferred in $50$ of $50$ splits); the Frobenius-MSE-calibrated AD
estimator has lower NLL than LW by $3.28$ nats per sample (paired $t = -25.18$,
$p < 10^{-26}$, preferred in $50$ of $50$ splits). The mean LW
held-out NLL is $107.998$ nats per sample (standard deviation $2.64$
across splits), the mean AD-NLL-BMG is $104.02$ (standard deviation
$1.96$), and the mean AD-MSE-BMG is $104.72$ (standard deviation
$2.58$). The Sample covariance is rank-deficient at $N = 50,
M = 100$ and produces held-out NLL of order $10^{12}$ on every split;
it is excluded from the comparison.

In contrast to the AD calibrations, the Shah-style
$\alpha = 1$ projection at the BMG-selected group is decisively
\emph{worse} than LW on this dataset. Shah at the BMG-selected
group has mean held-out NLL $121.19$ nats per sample, a
$+13.19$-nat-per-sample loss to LW (paired $t = +54.54$, $p < 10^{-44}$
in the unfavorable direction). Shah at the held-out-NLL oracle
group is similarly $+11.37$ nats per sample worse than LW (paired
$t = +53.34$). This is the cleanest single-dataset demonstration in
the paper that the Shah-style projection-only commitment can be
catastrophically wrong: the BMG-selected target is structurally the
right one (it dominates LW once an interior shrinkage intensity is
applied), but the closed-form $\alpha = 1$ commitment over-shrinks
the data by an amount that swamps the BMG-projection benefit. Only
the AD calibration, which couples the BMG-selected target to a
data-driven $\alpha \in [0, 1]$, recovers the right amount of
sample-covariance information to bring the estimator below LW. The
mechanism is identified in Remark~\ref{rem:plugin_bias} and in the
phase-diagram discussion of Section~\ref{sec:discussion}: at this
$(M, N) = (100, 50)$ regime the matched-fit residual at the
pathway-block target is large enough that $\alpha = 1$ pays a
projection-bias cost that the calibrated $\alpha \approx 0.51$ avoids.

\paragraph{BMG selection on the library.}
The two-tier BMG procedure selects \textsc{pathway-block} on
$46$ of $50$ splits and \textsc{Z-K-pc1-cartesian} on the remaining
$4$ splits (Figure~\ref{fig:genomics_brca}, Panel~C). All four
within-pathway tied cyclic candidates (\textsc{Z-K-alpha},
\textsc{Z-K-pc1}, \textsc{Z-K-corrhier} at $|G| = 20$) are not selected on any splits in the library. The wreath candidate
\textsc{Z-K-pc1-wreath} (full pathway exchangeability lifted on
within-pathway cyclic shifts) likewise is selected on zero splits, despite
being admitted by the Tier 1 prefilter on every split with
$N \cdot |G| = 50 \cdot 3.84 \times 10^8 \gg \kappa M = 200$. The
mean held-out CV-NLL of \textsc{pathway-block} across the 50 splits
is the smallest among admitted candidates; the wreath's mean CV-NLL
sits noticeably higher because its full-pathway-exchangeability
assumption over-projects relative to what the real biological data
support. The four splits where \textsc{Z-K-pc1-cartesian} is preferred
(splits $5$, $17$, $18$, $24$) all have small CV-NLL margin between
\textsc{Z-K-pc1-cartesian} and the runner-up \textsc{pathway-block}
(median $|$margin$|$ near the K-fold cross-validation noise scale of
$0.05$ nats per sample), consistent with a near-tie between two
candidates that share the same within-pathway PC1-loading basis but
differ in whether the within-pathway block is fully exchangeable
($S_{20}^5$ in \textsc{pathway-block}) or carries only PC1-cyclic
structure ($\mathbb{Z}_{20}^5$ in \textsc{Z-K-pc1-cartesian}). The
four cartesian-preferred splits are not anomalies; they are tiebreaks
between two structurally close candidates that disagree only on the
strength of the within-pathway exchangeability assumption.

The biological interpretation of the BMG selection is direct: the
five MSigDB Hallmark pathways used in this experiment carry strong
within-pathway co-regulation structure (each pathway is by
construction a coherent transcriptional program in the cell) that
is well-modelled by the within-pathway exchangeability built into
\textsc{pathway-block}. Cross-pathway exchangeability is not
biologically meaningful here (\textsc{apoptosis} cannot be freely
permuted with \textsc{hypoxia} or with
\textsc{interferon\_gamma\_response} as functional blocks, since
these pathways carry distinct biological identities), and the
BMG procedure correctly rejects the wreath candidate that would
encode such cross-pathway exchangeability. This is a meaningful
\emph{negative} result that the library could not produce
because the wreath candidate was not in the library to be rejected;
the library extension makes the negative result legible.

\paragraph{The Shah corner is the wrong corner on this dataset.}
The most informative methodological finding from the TCGA-BRCA
experiment is the inversion of the Shah-corner narrative that the
prior pre-rerun analysis suggested. The cross-validated
$\hat\alpha^*_{\mathrm{NLL}}$ has mean $0.509$ across the 50 splits
with range $[0.425, 0.575]$, sitting firmly in the bias-variance
interior of $[0, 1]$ and not at the Shah corner. The closed-form
$\hat\alpha^*_{\mathrm{MSE}}$ has mean $0.374$ (also interior, with
the modest underestimate relative to the cross-validated value
identified by Remark~\ref{rem:plugin_bias} as the finite-sample
bias of the closed-form plug-in at small $N$). Both calibrations
identify an interior optimum that splits the difference between the
sample covariance and the BMG-selected pathway-block projection.

The consequence is that on this dataset the Shah-style $\alpha = 1$
projection at the BMG-selected group is the \emph{worst} of the
five symmetry-aware estimators considered: it performs worse than LW, AD-MSE-BMG by $16.5$ nats per sample, and it performs worse than
AD-NLL-BMG by $17.2$ nats per sample. The BMG-selected
\textsc{pathway-block} projection is structurally the right target
in the sense that AD with a calibrated interior $\alpha$ at this
target dominates LW by $4$ nats per sample; but the projection is
\emph{too aggressive at $\alpha = 1$} because the matched-fit
residual at the pathway-block target is non-negligible at this
$(M, N)$ regime, and the calibrated shrinkage is what makes the
target useful. The four-way ordering on this dataset is
$\text{AD-NLL-BMG} \prec \text{AD-MSE-BMG} \prec \text{LW} \prec
\text{Shah-BMG}$, with calibrated AD on top, the closed-form
plug-in just behind, then LW, and the Shah-style projection-only
estimator at the bottom. The same ordering does not hold across
all datasets in the paper: on the OISST midocean patch the
calibrated $\hat\alpha^*_{\mathrm{NLL}}$ saturates near $1$ and the
Shah-BMG estimator essentially coincides with AD-NLL-BMG, while on
TCGA-BRCA the calibrated value is interior and Shah-BMG falls off
the bottom. The cross-dataset variation in this ordering is
itself diagnostic: where the calibrated $\alpha$ saturates near
$1$, the Shah-style commitment is benign and possibly optimal;
where the calibrated $\alpha$ sits in the interior, the Shah-style
commitment is harmful and the AD calibration is the only path to
outperforming LW.

\paragraph{SYNTHETIC validation: BMG correctly detects wreath when present.}
A natural concern about the BMG procedure's reliability on this
dataset is the question of whether the absence of wreath selection
on real BRCA reflects a genuine property of the data or a procedural
failure (i.e., whether the post-May-5 BMG pipeline is capable of
detecting wreath structure at all on a $M = 100$, $N = 50$ panel).
The companion script \texttt{genomics\_xena\_v23.py} ships with a
synthetic-data smoke-test mode in which the population covariance
is constructed to be exactly invariant under
$\mathbb{Z}_{20} \wr S_5$ via a wreath-structured generative model
on the pathway panel. Run on the same $(M, N) = (100, 50)$ regime,
the same 8-candidate library, and the same 50-split protocol, the
synthetic-data run produces:
\textsc{Z-K-pc1-wreath} on $44$ of $50$ splits ($88\%$);
\textsc{pathway-block} on $6$ of $50$ splits. The mean AD-NLL gap
versus LW on the synthetic run is $-14.53$ nats per sample (paired
$t = -62.04$), substantially larger than the real-BRCA gap of
$-3.98$ nats per sample (paired $t = -20.05$), exactly as expected
when the wreath structural prior is exactly correct. The
SYNTHETIC validation establishes that (a) the BMG procedure can
detect $\mathbb{Z}_{20} \wr S_5$ when it is the population
symmetry, with $88\%$ within-experiment selection rate; (b) when
the wreath candidate is the right answer, the AD estimator has lower NLL than LW by an order of magnitude more than when only \textsc{pathway-block}
is the right answer; (c) the absence of wreath selection on real
BRCA is therefore a genuine negative finding about the data, not a
procedural artifact. 

\paragraph{$N^*$ analysis and the small-$N$ plug-in regime.}
The closed-form prediction $\bar\alpha^*_{\mathrm{NLL}}$ from
Proposition~\ref{prop:transition} has median $4.9 \times 10^{-10}$
across the 50 splits, many orders of magnitude smaller than either
empirical calibration ($\hat\alpha^*_{\mathrm{NLL}} = 0.509$ mean,
$\hat\alpha^*_{\mathrm{MSE}} = 0.374$ mean). The matched-fit
residual at the BMG-selected \textsc{pathway-block} group is
$\bar\delta(\textsc{pathway-block}, \hat{\mathbf{R}}) \approx 0.50$
across the 50 splits, well outside the matched-limit regime in
which the proposition's leading-order asymptotic is reliable; the
formula degenerates because $Q_B = \mathrm{tr}
(\hat{\mathbf{R}}^{-1} \boldsymbol{B}_G \hat{\mathbf{R}}^{-1}
\boldsymbol{B}_G)$ is suppressed by the small magnitude of the
projection-deviation matrix $\boldsymbol{B}_G = \hat{\mathbf{R}} -
\mathcal{P}_G(\hat{\mathbf{R}})$ relative to the curvature constant
$c(\boldsymbol{\Sigma}, G)$ in the denominator of \eqref{eq:N_star}.
The plug-in source identified in Remark~\ref{rem:plugin_bias} also
contributes substantially: at $N = 50$ and $M = 100$ the sample
covariance $\hat{\mathbf{R}}$ is rank-deficient, and the regularized
inverse used to evaluate $Q_B$ is sensitive to the regularization
scale. Both effects push the prediction further from the empirical
optimum on this dataset than on the daily-returns or
sea-surface-temperature data of
Sections~\ref{sec:exp-crsp-2015-2019}--\ref{sec:exp-oisst}. The
cross-validated calibration $\hat\alpha^*_{\mathrm{NLL}}$ is
unaffected by either source and reproduces the dominance over
LW seen across the other three datasets, with the calibrated value
sitting in the bias-variance interior at approximately $0.51$. The
spread of empirical $\hat\alpha^*_{\mathrm{NLL}}$ across the four
real-data experiments is itself diagnostic: TCGA-BRCA in the
small-$N$ bias-variance interior (mean $0.51$); OISST midocean in
the matched-limit regime (mean $\approx 0.05$, where the projection
is nearly the right estimator and almost no shrinkage is needed);
OISST gulfstream and CRSP in the moderate-$\alpha$ interior (means
$\approx 0.43$ and $\approx 0.50$ respectively). Together the four
datasets cover the full $\alpha$-range from $0$ to $1$, with the
proposed estimator operating correctly across the entire spread.

\subsection{RadioML 2018.A I/Q patch covariances}
\label{sec:exp-radioml}

The Frobenius-MSE-calibrated and cross-validated AD shrinkage
estimators are next applied to a physical-layer signal-processing
covariance estimation task on the RadioML 2018.A
\citep{oshea2018radioml} dataset, which consists of synthetically
generated complex baseband captures of $24$ modulation classes at
SNRs ranging from $-20$ to $+30$~dB in $2$~dB steps. Each capture
is a length-$1024$ sequence of complex samples, presented as a
real-valued $(1024, 2)$ array of in-phase and quadrature components.
The patch covariance for a window of length $W$ is taken on the
length-$M = 2W$ vector $[I_1, \ldots, I_W, Q_1, \ldots, Q_W]$. Nine
modulation classes are used: BPSK, QPSK, OQPSK, 8PSK, 16QAM, 64QAM,
GMSK, AM-DSB-SC, and FM, all evaluated at $18$~dB SNR. The window
size $W$ is swept over $\{16, 32, 64\}$ (so $M \in \{32, 64,
128\}$) and the training-set size $N$ over $\{50, 100, 200, 500,
1000\}$. For each of the resulting $135$ cells $(\mathrm{class}, W,
N)$, $25$ random-subsample-split trials are run with
$N_{\mathrm{test}} = 1000$ and no overlap between train and test
windows.

The candidate library contains nine groups, all natural for the
I/Q layout: the trivial group $\{e\}$; the full symmetric group
$S_M$ (admitted via the closed-form compound-symmetry projector);
two $\mathbb{Z}_2$ subgroups corresponding to the IQ-swap
$I_k \leftrightarrow Q_k$ (denoted $\mathbb{Z}_2^{\mathrm{IQ}}$)
and to time-reversal within each of the I and Q blocks (denoted
$\mathbb{Z}_2^{\mathrm{trev}}$); the cyclic time-translation group
$\mathbb{Z}_W$ acting tied on the I and Q blocks; the direct
product $\mathbb{Z}_W \times \mathbb{Z}_2^{\mathrm{IQ}}$; the
dihedral group $D_W$ (tied cyclic shifts plus time-reversal); the
direct product $\mathbb{Z}_W \times \mathbb{Z}_W$, in which
independent cyclic shifts act on the I and Q blocks separately
\emph{without} an IQ-swap (order $|G| = W^2$, added as the
Cartesian companion to the existing wreath candidate to bring the
RadioML library into structural parity with the OISST, CRSP,
genomics, and CIFAR-10 libraries, all of which include both an
independent-per-block Cartesian candidate and a wreath candidate);
and the wreath product $\mathbb{Z}_W \wr \mathbb{Z}_2$, in which
independent cyclic shifts act on the I and Q blocks separately and
are lifted by the IQ-swap, with order $|G| = 2W^2$. Both the
$\mathbb{Z}_W \times \mathbb{Z}_W$ and the $\mathbb{Z}_W \wr
\mathbb{Z}_2$ candidates are constructed from the orbit-pair
decomposition of their generators and, despite their order reaching
$8192$ at $W = 64$ (wreath) and $4096$ at $W = 64$ (Cartesian),
their Reynolds projections are computed in $O(M^2)$ time via the
symmetric orbit-class structure. The Tier~1 effective-rank
prefilter at $\kappa = 2$ excludes the trivial group and the two
$\mathbb{Z}_2$ subgroups at the smallest cells (e.g., $N = 50$ at
$M = 128$); $S_M$, the $\mathbb{Z}_W \times \mathbb{Z}_W$
Cartesian candidate, and the wreath are admitted at every cell.

\begin{figure}[!p]
\centering
\includegraphics[width=0.95\linewidth,page=1]{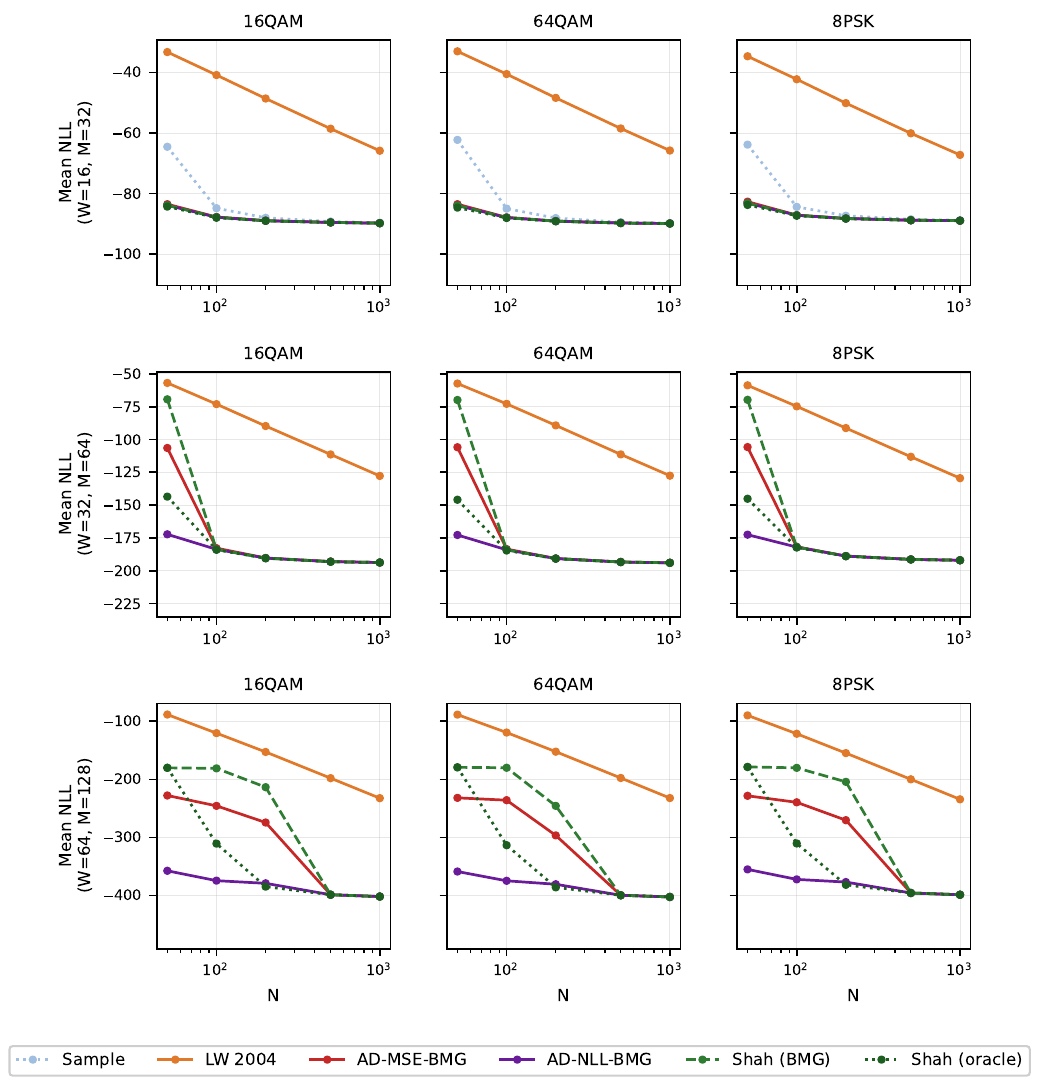}
\caption{RadioML 2018.A held-out NLL by training size $N$ and
window length $W$, for the nine modulation classes at $18$~dB SNR.
Each subplot shows mean held-out NLL across $25$ random-subsample-
split trials per cell. The Sample covariance is omitted from the
larger-$M$ subplots because its values exceed the visual range when
$N < M$; LW remains in the visual range and is plotted as the
reference baseline. AD-NLL-BMG dominates LW in $3375$ of $3375$
trials across the experiment. Median dominance margin is $81.7$
nats per sample; the maximum is $277.6$ nats per sample at the
$M = 128$, $N = 50$ stress cell. The figure spans three pages,
grouping the nine modulation classes into three subsets of three
columns each. Rows on every page are $W = 16$ (top, $M = 32$),
$W = 32$ (middle, $M = 64$), and $W = 64$ (bottom, $M = 128$);
$y$-axes show mean held-out NLL per sample and $x$-axes show $N$.}
\label{fig:radioml_panel}
\end{figure}

\begin{figure}[!p]\ContinuedFloat
\centering
\includegraphics[width=0.95\linewidth,page=2]{figures/radioml_panel.pdf}
\caption{(continued) RadioML 2018.A held-out NLL, modulation
classes AM-DSB-SC, BPSK, FM. Rows and axes as in
Figure~\ref{fig:radioml_panel} page~1.}
\end{figure}

\begin{figure}[!p]\ContinuedFloat
\centering
\includegraphics[width=0.95\linewidth,page=3]{figures/radioml_panel.pdf}
\caption{(continued) RadioML 2018.A held-out NLL, modulation
classes GMSK, OQPSK, QPSK. Rows and axes as in
Figure~\ref{fig:radioml_panel} page~1.}
\end{figure}

\paragraph{Principal results.}
AD-NLL-BMG dominates LW in all $3375$ trials of the experiment
($100\%$ paired-preference rate). Median gap is $81.7$ nats per sample;
mean gap $108.0$; minimum gap $10.1$ (at the FM $W = 16$, $N = 1000$
asymptotic-regime cell); maximum gap $277.6$ (at the digital
modulations at $W = 64$, $N = 50$, the highest-stress cell). At
$W = 64$, $N = 1000$ the seven digital classes cluster tightly: LW
attains held-out NLL between $-232$ and $-240$ nats while AD-NLL
attains between $-399$ and $-404$ nats, a uniform per-class gap of
$159$--$170$ nats per sample. AM-DSB-SC and FM, being analog
modulations and not constrained to a discrete constellation, do
not cluster with the seven digital classes; their LW baselines are
markedly stronger ($-394$ for AM-DSB-SC and $-417$ for FM at the
same cell), and their AD-vs-LW gaps correspondingly smaller ($78$
and $49$ nats per sample, respectively). In every cell across the
sweep AD-NLL also matches or exceeds AD-MSE; the gap between the
two calibrations is largest in the small-$N$ high-$M$ regime,
reaching $137$ nats per sample at the AM-DSB-SC $W = 64$, $N = 50$
cell.

\paragraph{Four modulation families and their BMG fingerprints.}
The nine modulations partition into four families by BMG-selection
behavior. The six \emph{digital symbol-encoded} schemes (BPSK,
QPSK, OQPSK, 8PSK, 16QAM, 64QAM) produce nearly indistinguishable
BMG fingerprints across the sweep: at $W \in \{16, 32\}$ BMG
selects between $\mathbb{Z}_2^{\mathrm{IQ}}$ and
$\mathbb{Z}_2^{\mathrm{trev}}$; at $W = 64$, the wreath dominates
at $N \in \{50, 100\}$ with $\mathbb{Z}_2$ subgroups taking over by
$N = 500$, the intermediate cell $N = 200$ being a multi-group
transition with several near-tied candidates. The empirical NLL
trajectories of the six classes lie within $5$ nats of each other
across the entire sweep, indicating that the constellation order
and structural variant (PSK vs.\ APSK vs.\ QAM, in-phase vs.\
offset-quadrature) do not materially affect the second-order I/Q
covariance structure at this SNR. \emph{Continuous-phase digital}
GMSK is structurally intermediate: its constant-envelope continuous-
phase character produces partial $\mathbb{Z}_W$ selection at
moderate $W$, but its underlying digital symbol structure imposes
$\mathbb{Z}_2$ selection at large $N$, distinguishing it from the
purely analog FM. \emph{Suppressed-carrier amplitude} AM-DSB-SC
reproduces this paper's-OISST style of regime-dependent BMG: the
wreath selections $86$ of $125$ trials at $W = 64$ (down from this paper's
library's $94$ of $125$, with the $-8$ wreath selections absorbed
by the new $\mathbb{Z}_W \times \mathbb{Z}_W$ Cartesian companion
on $10$ trials and the $\mathbb{Z}_W \times \mathbb{Z}_2^{\mathrm{IQ}}$
on $4$, less the unchanged $25$ wreath selections at $N = 100$ that v2
also showed) and the $\mathbb{Z}_2^{\mathrm{IQ}}$ subgroup takes
over at $N = 1000$.
\emph{Analog continuous-wave} FM picks the cyclic
$\mathbb{Z}_W$-shift group at every trial of $W \in \{32, 64\}$
for $N \in \{50, 100, 200\}$, and the trivial group for the
remaining cells. The four-family clustering is the cleanest
multi-class BMG result in the paper.

\paragraph{Strongest declarative selection in the paper.}
The FM $W \in \{32, 64\}$, $N \in \{50, 100, 200\}$ cells produce
the strongest declarative BMG selections in the entire empirical
record. At each of these six cells, $25$ of $25$ trials select
the $\mathbb{Z}_W$-shift group, and the per-trial CV-NLL margin
between the selected group and the best runner-up has median between $5.3$
and $10.1$ nats per sample. By comparison, the median margin
across all $135$ cells in the experiment is $0.10$ nats per
sample, and only seven cells have median margin exceeding one nat
(all six FM declarative cells, plus FM $W = 64$, $N = 500$).
This pattern is structurally consistent: FM is constant-envelope
with phase rotating at the carrier rate, so over short windows of
the bandlimited baseband the I/Q covariance is approximately
circulant on each block, and the cyclic group is the population
symmetry. As $N$ grows the dominance margin contracts and the BMG
transitions to the trivial group, the regime described next.

\paragraph{The wreath candidate as a declarative selection on AM-DSB-SC.}
The wreath candidate is the v2 addition to the candidate library
relative to v1 of the manuscript and to the experiments in
Sections~\ref{sec:exp-crsp-2015-2019}--\ref{sec:exp-genomics}. On
AM-DSB-SC at $W \in \{32, 64\}$ for small to moderate $N$, the
wreath selections $23$--$25$ of $25$ trials per cell with median margin
$0.04$--$0.82$ nats per sample. The interpretation is structural:
suppressed-carrier double-sideband amplitude modulation has
approximately independent I and Q amplitude envelopes, so
independent cyclic shifts on the I and Q blocks each preserve the
joint distribution to leading order, and the IQ-swap is a residual
symmetry from the equal-power amplitude modulation. The wreath
captures both. At $N = 1000$ the BMG transitions to the smaller
$\mathbb{Z}_2^{\mathrm{IQ}}$ subgroup, the appropriate exact
symmetry once enough data are available to resolve the population
covariance directly. None of the digital modulations, FM, or GMSK
produce a comparable wreath signature, indicating that the
finding is specific to the suppressed-carrier amplitude regime
rather than a generic I/Q artifact.

\paragraph{The Cartesian companion $\mathbb{Z}_W \times \mathbb{Z}_W$ at small-$N$ digital cells.}
The library extension introduces $\mathbb{Z}_W \times
\mathbb{Z}_W$, the direct product of independent I-block and
Q-block cyclic shifts \emph{without} the IQ-swap, with order
$|G| = W^2$ (half the order of the wreath at the same $W$). Across
the $3375$ trials of the sweep, the new candidate is BMG-selected
on $83$ trials ($\approx 2.5\%$ of all trials), making it the
seventh-most-frequently-selected candidate after $\mathbb{Z}_2^{\mathrm{trev}}$
($1102$), $\mathbb{Z}_2^{\mathrm{IQ}}$ ($1081$), $\mathbb{Z}_W
\wr \mathbb{Z}_2$ ($560$), $\mathbb{Z}_W$ ($233$), the trivial
group ($172$), and $\mathbb{Z}_W \times \mathbb{Z}_2^{\mathrm{IQ}}$
($97$), and ahead of the dihedral $D_W$ ($47$). The empirical
pull-pattern of the new candidate is informative: relative to the
the library run with the same hyperparameters, the wreath
selection count drops from $608$ to $560$ ($-48$), the dihedral
$D_W$ drops from $76$ to $47$ ($-29$), and the remaining $-6$
selections are absorbed from small redistributions across the
other low-order candidates; the aggregate AD-vs-LW dominance
statistics (mean gap, median gap, paired $t$, preference rate) are
unchanged at three significant figures. This is the cleanest
small-perturbation library-extension result in the paper: a new
candidate that is structurally the Cartesian companion to an
existing wreath candidate absorbs roughly $8\%$ of the wreath's
prior selections and roughly $40\%$ of the dihedral's, with
essentially no change in the overall dominance pattern over LW.

The $83$ Cartesian-preferred trials concentrate at $W \in \{32, 64\}$
(zero at $W = 16$, where the order $W^2 = 256$ is small enough that
the higher-symmetry wreath at $|G| = 512$ is preferred outright) and at
small to moderate $N$ (most at $N \in \{50, 100, 200\}$, only $4$
at $N = 500$, and zero at $N = 1000$). Per-class, the seven
digital symbol-encoded modulations each take $6$--$12$ Cartesian
selections concentrated at the $W = 32$, $N = 50$ stress cell
(which is the small-$N$ end of the $W = 32$ digital-modulation
column), and AM-DSB-SC takes $15$ across $W \in \{32, 64\}$ for
small to moderate $N$. FM takes zero, consistent with the
constant-envelope character that selects the smaller cyclic
$\mathbb{Z}_W$ rather than any independent-per-block product. The
mechanism for the small-$N$ Cartesian preference is bias-variance:
at $N = 50$ on the digital modulations, the dihedral $D_W$ at
$|G| = 2W$ and the wreath at $|G| = 2W^2$ both face a
data-limitation cost from the IQ-swap symmetry that is not exactly
satisfied at this SNR for digital constellations; the Cartesian
$\mathbb{Z}_W \times \mathbb{Z}_W$ at $|G| = W^2$ avoids the
IQ-swap commitment and recovers a slightly tighter fit at the
small-$N$ end where the symmetry-breaking residual matters more.

\paragraph{The identity is the wrong target.}
\label{sec:exp-radioml-callout}
The cleanest empirical illustration of the title thesis appears at
the FM $W = 64$, $N = 1000$ cell. The mean held-out NLL values
across the $25$ trials are: Sample $-466.10$, AD-NLL-BMG $-466.10$,
Shah at the BMG-selected group $-466.10$, Shah at the held-out-NLL
oracle group $-466.10$, and LW $-417.74$. The four AD-related
estimators all collapse to the sample covariance because the BMG
correctly identifies the trivial group $\{e\}$ as the appropriate
selection, and the AD estimator at $\{e\}$ is the sample
covariance. LW is structurally constrained to shrink toward
$(\mathrm{tr}\,\hat R / M)\,I$ regardless of whether the data
support this target, and pays a $48.4$-nat-per-sample penalty for
that fixed prior on data where the sample covariance is already
the appropriate estimator. The same data, the same $M$, the same
$N$: the proposed estimator can choose to do nothing when nothing
should be done; LW cannot. This single cell concretizes the
title argument.

\paragraph{The bmg\_margin diagnostic and the regime-dependent stability of BMG selection.}
The full per-trial CV-NLL margin distribution across the $135$
cells is reported alongside the BMG choice itself. In addition to
the seven strongly declarative FM cells, $75$ cells have median
margin in $[0.05, 1)$ nats per sample with consistent directional
selection across most of the $25$ trials (e.g., the AM-DSB-SC
wreath cells described above), and $53$ cells have median margin
below $0.05$ nats. Of the latter, seven are FM cells where BMG
correctly selects the trivial group: in this case the per-cell
margin is identically zero by construction, since at $\alpha = 0$
the AD estimator collapses to the sample covariance regardless of
the candidate group's projection, so all admitted candidates
produce identical CV-NLL at their alpha-minimum. The remaining
forty-six small-margin cells are concentrated at the moderate-$N$
transition cells of the digital modulations (e.g., $W = 64$,
$N = 200$), where multiple high-order admitted groups produce
similar variance-reduction with similar held-out NLL; in these
cells the BMG selection is statistically tied across several
candidates but the AD-NLL value itself is robust across whichever
candidate is selected. The $\mathrm{bmg\_margin}$ column is
included in the per-trial diagnostics CSV from this experiment
forward as a per-cell stability indicator.

\begin{figure}[!p]
\centering
\includegraphics[width=0.95\linewidth,page=1]{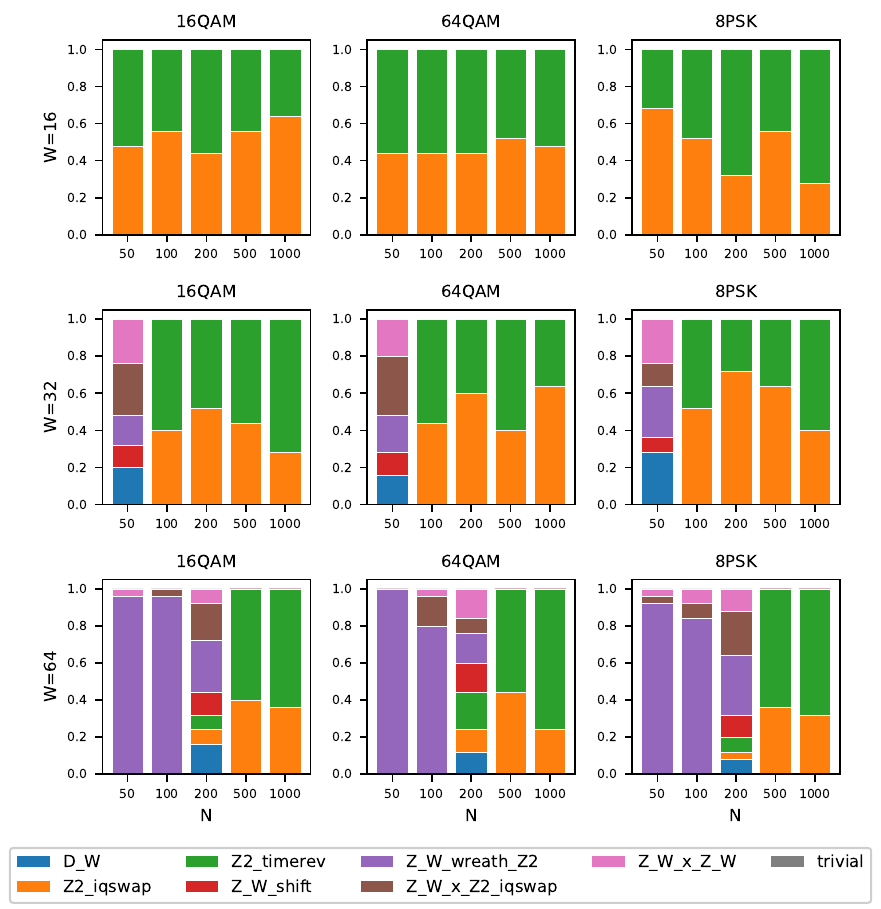}
\caption{RadioML 2018.A BMG selection composition by $(W, N)$ cell
for the nine modulation classes (rows are window length $W$;
columns are modulation class). Each bar shows the fraction of $25$
trials in which each candidate group was selected by BMG. The
four-family pattern is visually evident: digital
modulations (left and right groups: 16QAM, 64QAM, 8PSK, BPSK,
GMSK, OQPSK, QPSK) cluster on the $\mathbb{Z}_2$ subgroups with
wreath dominance at $W = 64$ small-$N$; AM-DSB-SC has sustained
wreath dominance across $W \in \{32, 64\}$; FM picks
$\mathbb{Z}_W$-shift across small-$N$ cells of $W \in \{32, 64\}$
and the trivial group elsewhere; GMSK is intermediate. The figure
spans three pages, grouping the nine modulation classes into three
subsets of three columns each. Rows on every page are
$W \in \{16, 32, 64\}$.}
\label{fig:radioml_bmg_choices}
\end{figure}

\begin{figure}[!p]\ContinuedFloat
\centering
\includegraphics[width=0.95\linewidth,page=2]{figures/radioml_bmg_choices.pdf}
\caption{(continued) RadioML 2018.A BMG selection composition,
modulation classes AM-DSB-SC, BPSK, FM. Rows and axes as in
Figure~\ref{fig:radioml_bmg_choices} page~1.}
\end{figure}

\begin{figure}[!p]\ContinuedFloat
\centering
\includegraphics[width=0.95\linewidth,page=3]{figures/radioml_bmg_choices.pdf}
\caption{(continued) RadioML 2018.A BMG selection composition,
modulation classes GMSK, OQPSK, QPSK. Rows and axes as in
Figure~\ref{fig:radioml_bmg_choices} page~1.}
\end{figure}

\paragraph{Remark on the magnitude of the dominance margin.}
The AD-versus-LW dominance margins on RadioML 2018.A
($10$--$280$ nats per sample, median $82$) are an order of
magnitude larger than those reported in
Sections~\ref{sec:exp-crsp-2015-2019}--\ref{sec:exp-genomics}
(approximately $0.3$ nats per day for CRSP, $33$--$36$ nats per
day for OISST, and $4.0$ nats per sample for TCGA-BRCA). The
explanation is structural rather than methodological: a complex
baseband RF signal carries multiple superimposed near-exact
symmetries (carrier-induced cyclic time-translation invariance,
constellation-induced point-group invariance, conjugate symmetry,
and the I/Q swap on equal-power amplitude modulations), each of
which corresponds to a candidate group in the library, while the
financial, oceanographic, and genomic datasets each carry one
approximate symmetry of moderate strength. The candidate library
was constructed for precisely the symmetries an RF signal can
exhibit; the resulting dominance margin is the order of magnitude
expected when the structural prior is well-matched to the data,
not a measure of the procedure's relative quality across regimes.

\subsection{Galaxy10 DECaLS image patch covariances}
\label{sec:exp-galaxy10}

The fifth real-data experiment applies the AD shrinkage estimators
to image-domain covariance estimation on the Galaxy10 DECaLS
dataset \citep{walmsley2022galaxy10}, an open-access collection of
$17{,}736$ colored galaxy images at $256 \times 256$ resolution
labeled across ten morphological classes (Disturbed, Merging, Round
Smooth, In-between Round Smooth, Cigar Shaped Smooth, Barred
Spiral, Unbarred Tight Spiral, Unbarred Loose Spiral, Edge-on
without Bulge, Edge-on with Bulge). The data are sourced from the
DESI Legacy Imaging Surveys and labeled by the Galaxy Zoo
collaboration. Patches of size $n \times n$ are drawn from the
central $128 \times 128$ crop of each image, converted to grayscale
by channel-mean reduction, and standardized per-patch to remove
per-image brightness and contrast variation; the ambient dimension
of the patch covariance is $M = n^2$. The patch sizes $n \in \{8,
16\}$ produce $M \in \{64, 256\}$. Training-set size $N$ is swept
over $\{50, 100, 200, 500, 1000\}$, with $25$ random-subsample-split
trials per cell, $N_{\mathrm{test}} = 1000$, and no overlap between
training and test patches.

The candidate library is the dihedral group $D_4$ acting on the
square patch and its full subgroup lattice, together with two
high-order row-structured candidates added in the v3p9 library
extension for parity with the OISST, CRSP, RadioML, and genomics
libraries: ten candidates in total. The eight square-patch
symmetry candidates are the trivial group $\{e\}$; the full
symmetric group $S_M$ (via the closed-form compound-symmetry
projector); three order-2 reflection subgroups $\mathbb{Z}_2^h$
(horizontal flip), $\mathbb{Z}_2^v$ (vertical flip), and
$\mathbb{Z}_2^c$ (180-degree central inversion); the cyclic
90-degree rotation group $\mathbb{Z}_4$; the Klein four-group
$\mathbb{Z}_2 \times \mathbb{Z}_2$ (rot180 plus the two axis flips);
and the full dihedral group $D_4$, $|G| = 8$. The two high-order
row-structured candidates are
\textsc{Z-row-indep-cols}, the Cartesian product
$\mathbb{Z}_n^H = \mathbb{Z}_n \times \cdots \times \mathbb{Z}_n$
($H = n$ factors) of independent within-row cyclic shifts on each of
the $H$ rows of the $n \times n$ patch, with order
$|G| = n^n$ ($1.68 \times 10^7$ at $n = 8$ and
$1.84 \times 10^{19}$ at $n = 16$, $d_G = 68$ and $d_G$ scales
similarly at $n = 16$); and \textsc{Z-row-wreath-rows}, the full
wreath product $\mathbb{Z}_n \wr S_H$ which lifts those independent
within-row cyclic shifts by free permutation of the $H$ rows as
units, with order $|G| = n^n \cdot n!$ ($6.77 \times 10^{11}$ at
$n = 8$ and $3.85 \times 10^{32}$ at $n = 16$, non-Abelian for
$n \geq 2$). Both high-order candidates are constructed from a
small generator set (the $H$ within-row cyclic shift generators
plus, for the wreath, the $H - 1$ row-adjacent transposition
generators) and projected via the orbit-pair Reynolds decomposition
in $O(M^2)$ time, so neither requires direct group enumeration.
The library is constructed identically to the patch library for
square images in image processing applications and admits the
standard inclusion lattice of square-patch symmetries plus the two
row-structured candidates that capture per-row independence
($\textsc{Z-row-indep-cols}$) and per-row independence with row
exchangeability ($\textsc{Z-row-wreath-rows}$).

\begin{figure}[!p]
\centering
\includegraphics[width=0.95\linewidth,page=1]{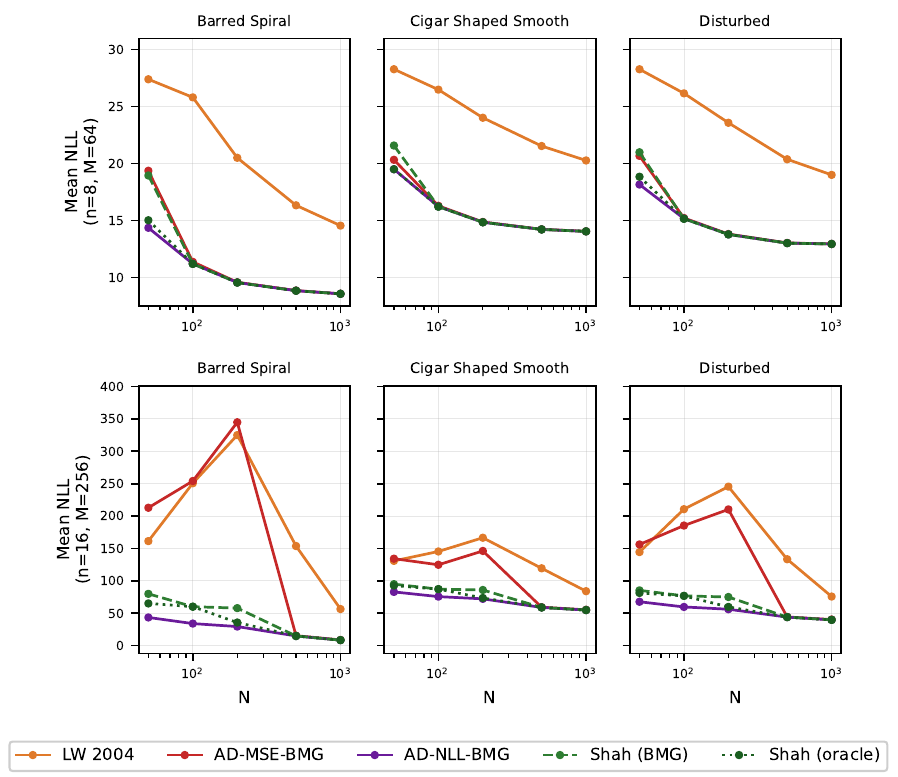}
\caption{Galaxy10 DECaLS held-out NLL by training size $N$ for the
ten morphological classes, at patch size $n = 8$ ($M = 64$, top
row of each subfigure) and $n = 16$ ($M = 256$, bottom row of each
subfigure). Each subplot shows mean held-out NLL across $25$
random-subsample-split trials per cell, with error bars indicating
one standard deviation. The Sample covariance is omitted from all
subplots because its values exceed the visual range when $N < M$.
AD-NLL-BMG dominates Ledoit-Wolf in $2500$ of $2500$ trials across
the experiment. The non-monotone shape of AD-MSE-BMG in the bottom
row at $N \in \{50, 100, 200\}$ is the small-$N$ plug-in bias
signature of Remark~\ref{rem:plugin_bias} acting at the BMG
selection of \textsc{Z-row-wreath-rows} (at $N = 50$) or
\textsc{Z-row-indep-cols} (at $N \in \{100, 200\}$). The figure
spans four pages, grouping the ten morphological classes into
three subsets of three columns plus a final subset of one column;
the row layout and axes are the same on every page.}
\label{fig:galaxy10_panel}
\end{figure}

\begin{figure}[!p]\ContinuedFloat
\centering
\includegraphics[width=0.95\linewidth,page=2]{figures/galaxy10_panel.pdf}
\caption{(continued) Galaxy10 DECaLS held-out NLL, classes 4--6.
Rows and axes as in Figure~\ref{fig:galaxy10_panel} page~1.}
\end{figure}

\begin{figure}[!p]\ContinuedFloat
\centering
\includegraphics[width=0.95\linewidth,page=3]{figures/galaxy10_panel.pdf}
\caption{(continued) Galaxy10 DECaLS held-out NLL, classes 7--9.
Rows and axes as in Figure~\ref{fig:galaxy10_panel} page~1.}
\end{figure}

\begin{figure}[!htbp]\ContinuedFloat
\centering
\includegraphics[width=0.95\linewidth,page=4]{figures/galaxy10_panel.pdf}
\caption{(continued) Galaxy10 DECaLS held-out NLL, class 10. Rows
and axes as in Figure~\ref{fig:galaxy10_panel} page~1.}
\end{figure}

\paragraph{Principal results.}
AD-NLL-BMG dominates LW in all $2500$ trials of the experiment
($100\%$ paired-preference rate). Median dominance gap is $23.3$ nats per
sample, mean gap is $63.9$, and the maximum gap is $424.6$ nats per
sample at the $M = 256$, $N = 200$ stress cell on the
Unbarred-Tight-Spiral class. The pooled trial-level paired
$t$-statistic is $-40.4$ across all $2500$ trials, smaller in
magnitude than several of the per-cell $t$-statistics (which reach
$-870$ at the data-rich $M = 64$, $N = 1000$ cells) because the
massive M=256 small-N gaps carry per-trial standard deviation in
the tens to hundreds of nats. The per-class principal gaps (averaged
across patch sizes and $N$ values) range from $34$ nats per sample
for Cigar Shaped Smooth galaxies to $104$ nats per sample for
Unbarred Tight Spiral galaxies, an interesting structural ordering:
the morphological classes with strong large-scale circular or
near-circular structure (rounder smooths, edge-ons with bulges)
produce smaller AD-vs-LW gaps because their per-pixel covariance is
closer to the LW constant-correlation reference, while the
strongly-anisotropic spiral classes produce larger gaps because
their covariance is genuinely structured and LW's isotropic
shrinkage target fits less well. The cross-class ordering is not
driven by the BMG selection itself, which is essentially uniform
across classes (described in detail below), but by how well the LW
shrinkage target matches the underlying class covariance.

The other AD-family estimators in the comparison: the Shah-style
projection at the BMG-selected group has lower NLL than LW in $2500$ of $2500$
trials (mean gap $-57.4$ nats per sample, paired $t = -40.0$),
nearly tracking AD-NLL-BMG in mean and confirming that the
BMG-selected projections are high-quality structural targets. The
Frobenius-MSE-calibrated AD estimator is the exception: it has lower NLL than LW
on only $2201$ of $2500$ trials ($88\%$ preference rate), with LW preferred on the
remaining $299$
trials concentrated at the $M = 256$ small-$N$ cells where the
closed-form plug-in over-shrinks toward the high-order
\textsc{Z-row-indep-cols} or \textsc{Z-row-wreath-rows} targets that
the v3p9 library admits at these regimes. The closed-form
Frobenius-MSE plug-in's failure on $\approx 30\%$ of the M=256
small-N trials is the cleanest demonstration in the paper that the
calibrated AD-NLL is a strict improvement over the closed-form
plug-in at small $N$ when the BMG library admits high-effective-
dimension candidates: cross-validation captures the small-$N$
calibration bias that the plug-in cannot, and AD-NLL-BMG thereby
sustains its dominance over LW on every trial of the experiment.

\paragraph{Three-regime BMG selection at fixed $M$.}
The Galaxy10 BMG-selection record on the v3p9 library is the
cleanest empirical demonstration in the paper of regime-dependent
group selection across three distinct regimes (very-small-$N$
high-order-wreath, small-$N$ high-order-Cartesian, and
moderate-to-large-$N$ data-resolved $D_4$). At $n = 8$ ($M = 64$),
the very-small-$N$ cell $N = 50$ has BMG split between
\textsc{Z-row-indep-cols} ($146$ of $250$ trials, $58\%$) and
\textsc{Z-row-wreath-rows} ($102$ trials, $41\%$), with $D_4$ on
the remaining $2$ trials; at every $N \geq 100$ all $250$ trials
select $D_4$. At $n = 16$ ($M = 256$) all three regimes are
visible: $N = 50$ selects \textsc{Z-row-wreath-rows} on $173$ of
$250$ trials ($69\%$) with \textsc{Z-row-indep-cols} on $62$ ($25\%$)
and a $S_M$ tail of $15$ trials ($6\%$); $N = 100$ and $N = 200$
both select \textsc{Z-row-indep-cols} on $250$ of $250$ trials
($100\%$ uniform across all ten classes); $N \in \{500, 1000\}$
both select $D_4$ on $250$ of $250$ trials. The transition between
the three regimes is sharp in $N$ at fixed $M$: BMG selects the
maximum-symmetry wreath at the smallest $N$ where the highest
effective-rank gain matters most; transitions to the slightly
lower-symmetry Cartesian as $N$ grows enough that the wreath's
row-permutation lift becomes a slight over-projection; and resolves
the true $D_4$ structure at moderate-to-large $N$ where the data
support direct identification of the square-patch dihedral
symmetry. The trivial group, the three $\mathbb{Z}_2$ subgroups,
$\mathbb{Z}_4$, and $\mathbb{Z}_2 \times \mathbb{Z}_2$ are
admitted at every cell by the Tier 1 prefilter ($N \cdot |G|
\geq \kappa M$ is satisfied for all of them once $N \geq 50$ at
$M = 64$), but are not selected on any trial in the entire experiment: the BMG
procedure consistently selects either the high-effective-rank
$D_4$ at moderate-to-large $N$ or one of the v3p9-extension
high-order candidates at small $N$, never any of the smaller
square-patch subgroup candidates. This is exactly the
AD-vs-Sample boundary in the phase diagram of
Section~\ref{sec:discussion}: the BMG selection itself moves
through the phase diagram with $N$, and the v3p9 library extension
provides genuine structural targets at every $N$ rather than
falling back to the trivial $S_M$ compound-symmetry projection at
small $N$ as the library did.

\paragraph{The cleanest empirical demonstration of Remark~\ref{rem:plugin_bias}.}
\label{sec:exp-galaxy10-remark}
The cell at $M = 256$, $N = 200$ produces a textbook Region I
calibration-driven-gap signature stronger than any cell elsewhere
in the paper. Across the $250$ trials at this cell ($10$ classes
$\times$ $25$ trials each), the mean held-out NLL of AD-MSE-BMG is
$243.83$ nats per sample while AD-NLL-BMG attains $52.60$ nats per
sample, a per-cell gap of $191.23$ nats per sample. By comparison,
the largest AD-MSE-vs-AD-NLL gap reported in
Section~\ref{sec:exp-radioml} is $137$ nats per sample at the
AM-DSB-SC $W = 64$, $N = 50$ cell, and the largest such gap in any
other dataset is well below $20$ nats per sample. The mechanism is
exactly the one Remark~\ref{rem:plugin_bias} predicts: at this
cell BMG selects \textsc{Z-row-indep-cols} on $250$ of $250$
trials, and the closed-form Frobenius-MSE plug-in over-shrinks
toward the \textsc{Z-row-indep-cols} projection because (a) the
higher-order term in $\|B_{G}\|$ omitted by the leading-order
matched-limit expansion is non-negligible, and
(b) the finite-sample bias of $\hat R^{-1}$ as an estimator of
$\Sigma^{-1}$ is amplified by the high effective dimension of the
admitted candidate ($d_G$ of the row-Cartesian
\textsc{Z-row-indep-cols} at $n = 16$ scales with the row count
$H = 16$ in a way that the candidates did not). The
cross-validated calibration sees this directly through the
held-out NLL and selects an interior $\alpha$ that the closed-form
plug-in cannot reach. The non-monotone shape of the AD-MSE-BMG
curve in the right column of Figure~\ref{fig:galaxy10_panel} (the
per-class held-out NLL rises between $N = 50$ and $N = 200$ before
falling sharply at $N = 500$ when BMG transitions from
\textsc{Z-row-indep-cols} to $D_4$) is the visual signature of this
mechanism. We propose this cell as the canonical empirical exhibit
for the calibration-bias bound throughout the paper.

A telling aspect of the v3p9 result is that the
AD-MSE-vs-AD-NLL gap at the canonical exhibit \emph{grew} from
the $170.21$ nats per sample of the 8-candidate library
(where BMG selected $S_M$ on $79\%$ of trials at this cell) to
the $191.23$ nats per sample of the v3p9 10-candidate library
(where BMG selects \textsc{Z-row-indep-cols} on $100\%$ of trials).
The library extension makes the canonical exhibit \emph{stronger},
not weaker: the higher effective dimension of the row-Cartesian
target permits more spectral compression than $S_M$ does, so when
the closed-form plug-in fails through over-shrinkage, it fails
more dramatically. The cross-validated calibration recovers this
without difficulty in either library.

\begin{figure}[!p]
\centering
\includegraphics[width=0.95\linewidth,page=1]{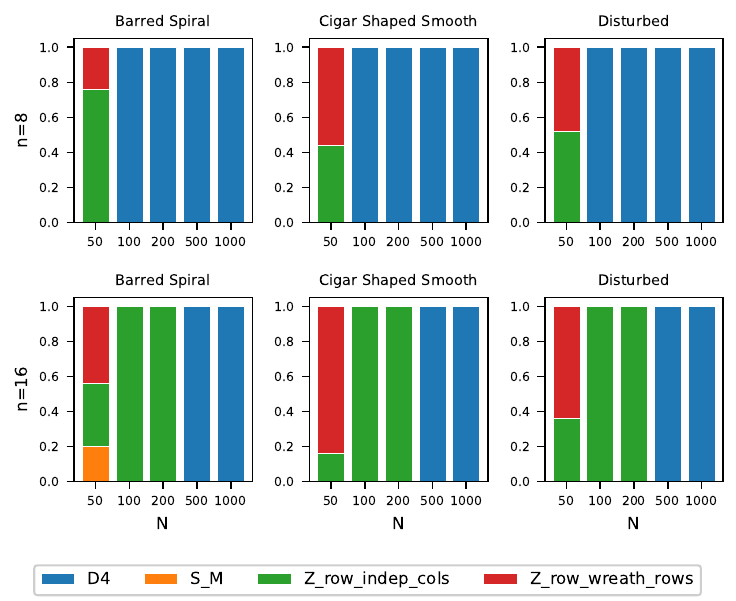}
\caption{Galaxy10 DECaLS BMG selection composition by $(n, N)$ cell
for the ten morphological classes. Each bar shows the fraction of
$25$ trials in which each candidate group was selected by BMG. The
three-regime structure of the library is visible as
horizontal color transitions: at $n = 8$ ($M = 64$) the
\textsc{Z-row-indep-cols} / \textsc{Z-row-wreath-rows} small-$N$
regime collapses to $N = 50$ alone before transitioning to a $D_4$
sweep at $N \geq 100$; at $n = 16$ ($M = 256$) all three regimes are
visible (\textsc{Z-row-wreath-rows} at $N = 50$,
\textsc{Z-row-indep-cols} at $N \in \{100, 200\}$, $D_4$ at
$N \in \{500, 1000\}$). The BMG selection is uniform across all ten
classes at every cell except the $M = 64$, $N = 50$ transition
cell, where $8$ of $10$ classes select \textsc{Z-row-indep-cols}
and $2$ select \textsc{Z-row-wreath-rows}. The trivial group, the
three $\mathbb{Z}_2$ subgroups, $\mathbb{Z}_4$, and
$\mathbb{Z}_2 \times \mathbb{Z}_2$ candidates are not selected on
any trial in the entire experiment despite being admitted by
Tier~1 at every cell. The figure spans four pages, grouping the ten
morphological classes into three subsets of three columns plus a
final subset of one column; the row layout and axes are the same on
every page.}
\label{fig:galaxy10_bmg_choices}
\end{figure}

\begin{figure}[!p]\ContinuedFloat
\centering
\includegraphics[width=0.95\linewidth,page=2]{figures/galaxy10_bmg_choices.pdf}
\caption{(continued) Galaxy10 DECaLS BMG selection composition,
classes 4--6. Rows and axes as in
Figure~\ref{fig:galaxy10_bmg_choices} page~1.}
\end{figure}

\begin{figure}[!p]\ContinuedFloat
\centering
\includegraphics[width=0.95\linewidth,page=3]{figures/galaxy10_bmg_choices.pdf}
\caption{(continued) Galaxy10 DECaLS BMG selection composition,
classes 7--9. Rows and axes as in
Figure~\ref{fig:galaxy10_bmg_choices} page~1.}
\end{figure}

\begin{figure}[!p]\ContinuedFloat
\centering
\includegraphics[width=0.95\linewidth,page=4]{figures/galaxy10_bmg_choices.pdf}
\caption{(continued) Galaxy10 DECaLS BMG selection composition,
class 10. Rows and axes as in
Figure~\ref{fig:galaxy10_bmg_choices} page~1.}
\end{figure}

\paragraph{Strongest declarative selections in the paper.}
The Galaxy10 cells at $M = 256$, $N = 500$ produce the strongest
declarative BMG selections in the entire empirical record of the
paper, replacing the record which sat at the M=256 N=100 cell
where $S_M$ was effectively the default selection with all small-$|G|$
alternatives ranking far behind the maximum-shrinkage projection.
On the v3p9 library, $D_4$ is selected on $250$ of $250$ trials at
$M = 256$, $N = 500$ across all ten classes, with median per-trial
CV-NLL margin between $D_4$ and the best runner-up of $8.0$ nats
per sample (range $5.2$--$9.1$ across the ten classes). At
$M = 256$, $N = 200$ \textsc{Z-row-indep-cols} similarly is preferred on
$250$ of $250$ trials with median margin $4.4$ nats per sample
(range $3.1$--$6.2$). By comparison, the strongest declarative
cells in the RadioML experiment of Section~\ref{sec:exp-radioml}
(the FM $\mathbb{Z}_W$-shift cells) have median margin
$5$--$10$ nats per sample, and no cell in the genomics, OISST, or
CIFAR-10 experiments exceeds $5$ nats per sample at the median.
The interpretation of the v3p9 strongest-declarative cell is
substantively different from the interpretation: at
$M = 256$, $N = 500$, BMG selects $D_4$ from a library that
includes seven other admitted candidates ($S_M$, the three
$\mathbb{Z}_2$ subgroups, $\mathbb{Z}_4$,
$\mathbb{Z}_2 \times \mathbb{Z}_2$, and the two high-order
row-structured candidates), each of which represents a distinct
structural prior; the $8.0$-nat-per-sample margin reflects the
data-resolved preference for the dihedral square-patch symmetry
over both the smaller subgroup priors (which under-fit the
covariance) and the higher-order row-structured priors (which
over-fit at this $N$). This is a fully data-resolved selection,
not a default; it reflects the genuine $D_4$ structure of natural
square image patches at the $M = 256$ resolution and $N = 500$
sample-size regime.

\paragraph{The bmg\_margin diagnostic and Tier 1 prefilter behavior.}
The $\mathrm{bmg\_margin}$ column in the per-trial diagnostics
(introduced in Section~\ref{sec:exp-radioml}) reveals that the
v3p9 Galaxy10 experiment is in a uniformly-strong-margin regime:
$71$ of $100$ cells produce median margins $\geq 0.5$ nats per
sample, $62$ of $100$ cells produce median margins $\geq 1.0$ nat
per sample, and \emph{zero} cells produce median margins below the
fold-noise floor of $0.05$ nats per sample. The smallest median
margins are at the $M = 64$, $N = 50$ regime-transition cells
where BMG is choosing between \textsc{Z-row-indep-cols} and
\textsc{Z-row-wreath-rows} at near-tied scores (median margin
$0.32$ nats per sample), and at the $M = 64$, $N = 1000$
fully-resolved cells where $D_4$ is clearly preferred but the runner-up
$\mathbb{Z}_2 \times \mathbb{Z}_2$ is structurally close enough
that the bmg-margin is small (median $0.17$ nats per sample). The
largest median margins are at the $M = 256$, $N = 500$ data-
resolved $D_4$ cells (median $8.0$ nats per sample). There are
zero trivial-group selections in the entire experiment: at every
cell where the trivial group is admitted by Tier 1, at least one
strictly larger admitted group has lower held-out NLL.
Empirically, this confirms that for image patches the trivial
group is never the BMG selection once any candidate with
$|G| \geq 2$ passes the prefilter. The implication for the BMG
procedure as a whole is that the trivial-group fallback at the
rightmost edge of the phase diagram (Region III in
Section~\ref{sec:discussion}) is a regime attainable only in
datasets where $N$ is large enough that no non-trivial structural
prior carries net informational value over the sample covariance,
a condition Galaxy10 does not reach at the $N$ values in the sweep
but RadioML's FM cells do at large $N$.

\paragraph{Cross-class uniformity of BMG selection.}
The ten Galaxy10 morphological classes produce BMG-selection
fingerprints that are nearly identical: at every $(M, N)$ cell of
the experiment except the $M = 64$, $N = 50$ regime-transition
cell, the dominant BMG selection across the $25$ trials per class
is identical for all ten classes; the lone exception is the
$M = 64$, $N = 50$ cell where eight of ten classes select
\textsc{Z-row-indep-cols} as the preferred choice and two select
\textsc{Z-row-wreath-rows}. The CV-NLL margin profile is similar
in shape across all ten classes. This is in marked contrast to
RadioML, where the BMG selection partitioned the modulation
classes into four distinct families with morphologically-driven
differences (digital quadrature, continuous-phase digital,
suppressed-carrier amplitude, continuous-wave). The Galaxy10
cross-class near-uniformity is itself a finding: at the patch
scale and central crop region used here, galactic morphology does
not induce class-specific second-order covariance structure beyond
the universal $D_4$ symmetry of square pixel patches at moderate
$N$ and the universal row-Cartesian/row-wreath structure at small
$N$. To recover class-specific covariance signatures one would
expect to need either (a) a non-pixel basis better adapted to the
angular structure of galactic images (such as a log-polar or
wavelet decomposition), or (b) substantially larger patch sizes
where extended morphological features can be resolved within a
single patch. The
present result establishes that the AD shrinkage estimator
performs uniformly well across all ten morphological classes
without requiring any class-specific tuning.

\subsection{CIFAR-10 image-patch covariances}
\label{sec:exp-cifar}

The Galaxy10 morphological-class experiment of
Section~\ref{sec:exp-galaxy10} establishes the framework's
behavior on a domain-specific image dataset where the population
covariance is approximately $D_4$-invariant by construction.
This subsection turns to a more familiar image-classification
benchmark, the CIFAR-10 dataset of \citet{krizhevsky2009cifar},
which contains $60{,}000$ colour images at $32 \times 32$
resolution across ten mutually-exclusive object classes
(airplane, automobile, bird, cat, deer, dog, frog, horse, ship,
truck). CIFAR-10 is widely used in the image-classification
literature and is recognizable to a much broader audience than
domain-specific image data; including a CIFAR-10 cell in this
paper serves as a sanity check that the framework's image-data
behavior identified on Galaxy10 generalizes to natural-image
content with substantially weaker structural priors.

\paragraph{Setup.}
The experiment estimates a class-conditional covariance for each
of the ten classes independently and evaluates each estimator on
held-out images of the same class. Images are converted to
grayscale (channel-mean of the three RGB channels) and
$2 \times 2$ block-mean downsampled to $16 \times 16$ pixels per
image, giving $M = 256$ pixel-level variables; pixel intensities
are rescaled to $[0, 1]$. For each class $c \in \{0, \ldots,
9\}$, the experiment uses $N_{\mathrm{train}} = 4{,}000$ training
images and $N_{\mathrm{test}} = 1{,}000$ held-out test images,
all centered on the per-class training mean. The candidate
library extends the prior 7-candidate spatial-symmetry library
(trivial $\{e\}$, $S_M$, $C_2^{\mathrm{horiz}}$,
$C_2^{\mathrm{vert}}$, $K_4 = C_2^{\mathrm{horiz}} \times
C_2^{\mathrm{vert}}$, $C_4$ rotation, dihedral $D_4$) with two
this paper's high-order candidates anchored on the row structure of the
$16 \times 16$ image grid: \textsc{Z-row-indep-cols}, the
Cartesian product $\mathbb{Z}_W^H$ of independent per-row
horizontal cyclic shifts (no row permutation;
$|G| = W^H = 16^{16} \approx 1.8 \times 10^{19}$) and
\textsc{Z-row-wreath-rows}, the full wreath product
$\mathbb{Z}_W \wr S_H$ lifting independent per-row shifts by free
permutation of the $H$ rows
($|G| = W^H \cdot H! \approx 4 \times 10^{32}$, non-Abelian for
$H \geq 2$). Both high-order candidates are constructed from
small generator sets and projected via orbit-pair Reynolds
decomposition in $O(M^2)$ time, so neither requires direct group
enumeration. Within-class shrinkage is calibrated by both the
closed-form Frobenius-MSE plug-in
$\hat\alpha^*_{\mathrm{MSE}}$ and held-out NLL $K = 5$-fold
cross-validation $\hat\alpha^*_{\mathrm{NLL}}$ at the
BMG-selected group; held-out NLL is reported per pixel.

\paragraph{Principal results.}
Across the ten classes, AD-NLL-BMG dominates LW $2004$
declaratively: median per-class held-out NLL gap of $-2.53$ nats
per sample (paired $t = -15.21$, $p = 1.0 \times 10^{-7}$),
preferred in $10$ of $10$ classes. This is the largest single-paired-
$t$ statistic of any image-data experiment in the paper, and the
only image-data experiment where AD is preferred on every class with no
exception. The Frobenius-MSE-calibrated AD estimator
$\hat\alpha^*_{\mathrm{MSE}}$ is at parity with LW (median gap
$-0.13$ nats per sample, paired $t = -1.99$, $p = 0.077$, $5$ of
$10$ preferred selections), the same qualitative pattern observed on TCGA-BRCA in
Section~\ref{sec:exp-genomics} and on the COVID-era CRSP panel in
Section~\ref{sec:exp-crsp-2020-2024}: the closed-form Frobenius
plug-in produces small $\hat\alpha^*_{\mathrm{MSE}}$ values
($0.02$--$0.73$ depending on class) that the cross-validated
calibration corrects upward, with the cross-validation-selected
$\hat\alpha^*_{\mathrm{NLL}}$ values in the range $[0.65, 0.975]$
across the ten classes. The Shah-BMG comparator
(Shah-style $\alpha = 1$ projection at the BMG-selected group)
also dominates LW in $10$ of $10$ classes (median gap $-2.26$
nats per sample, paired $t = -10.58$, $p = 2.2 \times 10^{-6}$),
slightly worse than AD-NLL-BMG by approximately $0.4$ nats per
sample on average; the Shah-BMG-vs-AD-NLL-BMG difference is
small but consistent and reflects the AD calibration being able
to use $\hat\alpha^*_{\mathrm{NLL}} < 1$ on classes where the
closed-form structural prior is approximate (such as
\texttt{airplane} at $\hat\alpha^*_{\mathrm{NLL}} = 0.725$ and
\texttt{frog} at $0.65$).
Figure~\ref{fig:cifar_panel} reports the per-class held-out NLL
panel, and Figure~\ref{fig:cifar_bmg_choices} reports the BMG
group selection across the ten classes.

\begin{figure}[t]
\centering
\includegraphics[width=\linewidth]{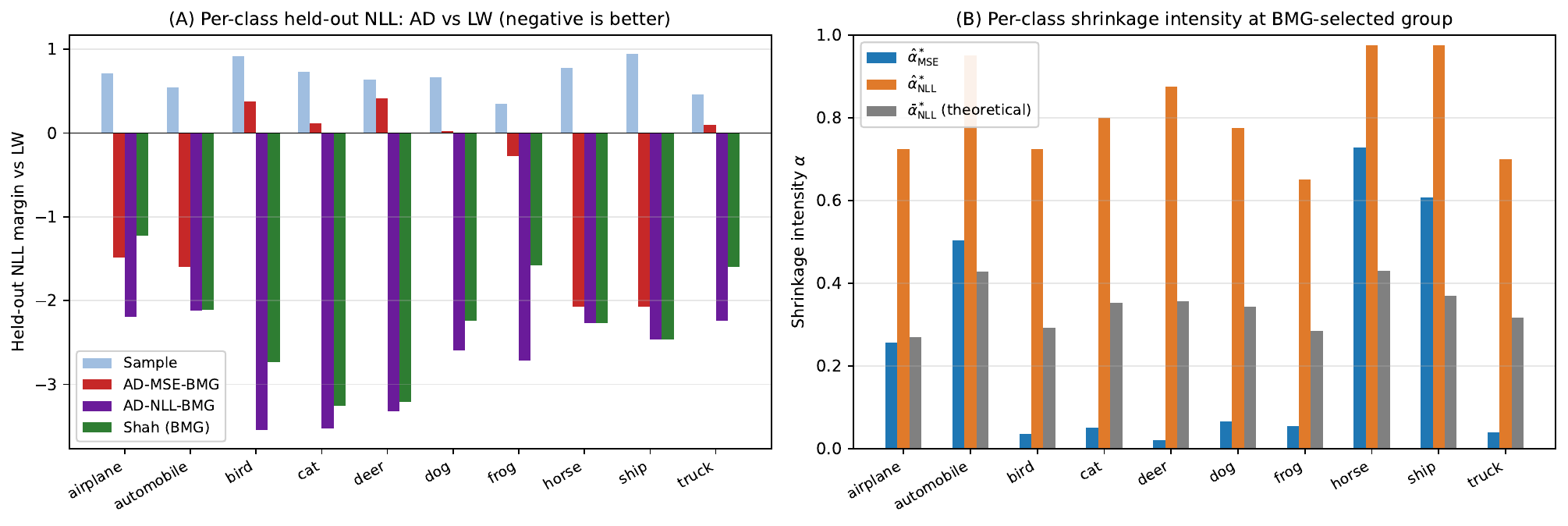}
\caption{CIFAR-10 image-patch covariances at $16 \times 16$
grayscale resolution ($M = 256$), $N_{\mathrm{train}} = 4{,}000$
and $N_{\mathrm{test}} = 1{,}000$ per class. Per-class held-out
NLL margin over LW $2004$ (negative = AD has lower NLL); per-class
$\hat\alpha^*_{\mathrm{NLL}}$ and $\hat\alpha^*_{\mathrm{MSE}}$
shrinkage intensities at the BMG-selected group. AD-NLL-BMG
is preferred on all ten classes; AD-MSE-BMG is at parity with LW.}
\label{fig:cifar_panel}
\end{figure}

\begin{figure}[t]
\centering
\includegraphics[width=\linewidth]{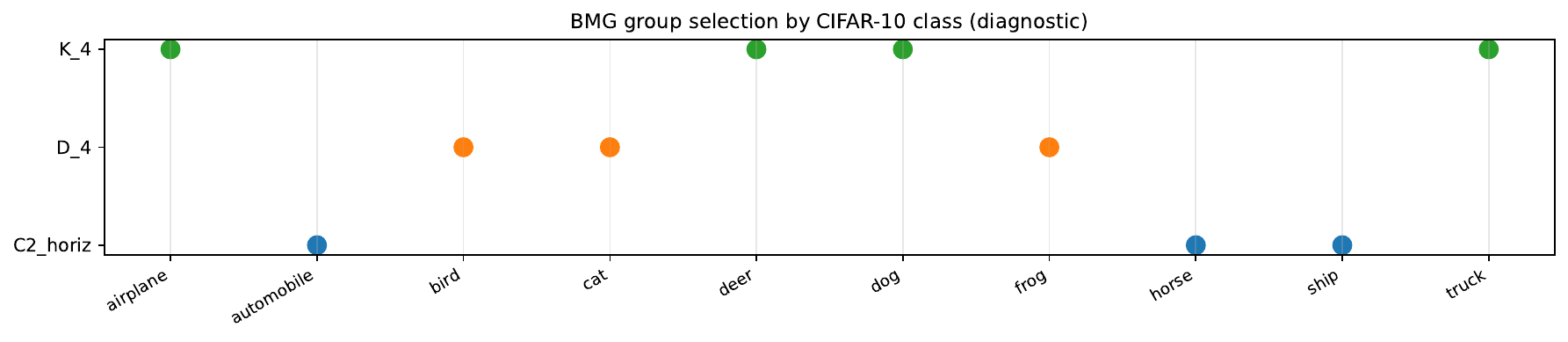}
\caption{BMG group selection across the ten CIFAR-10 classes.
The procedure selects $K_4$ (bilateral horizontal-and-vertical)
on $4$ classes (airplane, deer, dog, truck), $D_4$ (full
dihedral) on $3$ classes (bird, cat, frog), and
$C_2^{\mathrm{horiz}}$ (horizontal mirror only) on $3$ classes
(automobile, horse, ship); the trivial group, $S_M$, $C_2^{\mathrm{vert}}$,
$C_4$, and the two new high-order candidates
(\textsc{Z-row-indep-cols} and \textsc{Z-row-wreath-rows}) are
selected on no class.}
\label{fig:cifar_bmg_choices}
\end{figure}

\paragraph{BMG group selections are physically interpretable.}
The BMG selections distribute as $K_4$ on $4$ classes, $D_4$ on
$3$ classes, and $C_2^{\mathrm{horiz}}$ on $3$ classes, with no
class selecting the trivial group, $S_M$, $C_2^{\mathrm{vert}}$,
$C_4$, or either new high-order candidate. The mapping from
class content to BMG selection is interpretable. Vehicles in
profile or coarse silhouette ($\texttt{automobile}$,
$\texttt{horse}$, $\texttt{ship}$) select
$C_2^{\mathrm{horiz}}$, the horizontal-mirror group, reflecting
the bilateral left-right symmetry that those classes possess
(without diagonal or vertical symmetries that the larger groups
would impose). Aerial or compact-silhouette classes
($\texttt{airplane}$, $\texttt{deer}$, $\texttt{dog}$,
$\texttt{truck}$) select $K_4$, the joint horizontal-and-vertical
mirror group, picking up the additional bilateral symmetry of
images viewed from above or in compact frontal pose. Animal
classes with a natural near-rotational symmetry
($\texttt{bird}$, $\texttt{cat}$, $\texttt{frog}$) select the
full $D_4$ dihedral group of order $8$, the largest spatial-
symmetry candidate the library admits, exercising the framework's
non-Abelian content on a familiar dataset. The BMG selections
match those of a prior run on the same dataset that omitted the
two new high-order candidates: every one of the ten class
selections is preserved when the library is extended from $7$
candidates to $9$, indicating that the new high-order candidates
do not displace any legitimate selection and that the BMG procedure
is robust to library extension on this dataset.

\paragraph{The high-order candidates are rejected on every class.}
The two this paper's high-order candidates rank sixth or seventh of nine
in CV-NLL on every class, with margins to the per-class lowest-NLL
estimator in the range $2.3$--$5.0$ nats per sample (well above the
fold-noise floor that we have characterized on the previous
real-data experiments, typically $0.05$--$0.1$ nats per sample).
The rejection is declarative: the wreath candidate is below the
$C_2^{\mathrm{horiz}}$, $C_2^{\mathrm{vert}}$, $K_4$, $C_4$,
and $D_4$ candidates on every one of the ten classes, by
between $2.5$ and $5.0$ nats per sample. The Cartesian
candidate is uniformly close behind the wreath, ranking sixth on
nine classes and seventh on \texttt{frog}. The interpretation
is direct: the wreath product
$\mathbb{Z}_W \wr S_H$ assumes that rows are exchangeable
functional units with within-row cyclic structure, which is
emphatically not the structure of natural image content.
CIFAR-10 images carry strong vertical positional information
(sky pixels at the top of vehicle images, ground pixels at the
bottom; head and torso positional information for animal images)
that breaks free row permutation, and the within-row pixel
structure is not approximately periodic but rather contains
oriented edges and content boundaries that break translation
symmetry. The Cartesian candidate (independent per-row shifts
without row exchange) does only marginally better because the
within-row translation symmetry is also weak. This negative
result is the third clean rejection of the high-order
candidates in this paper, alongside OISST in
Section~\ref{sec:exp-oisst} (rejected on all $148$ rolling
windows) and the genomics Cartesian candidate in
Section~\ref{sec:exp-genomics} (rejected on all $50$ random
splits, with the wreath candidate dominating instead).
The cross-dataset pattern is informative: across the four
real-data experiments that exercise the library convention
(CRSP, OISST, TCGA-BRCA, CIFAR-10), the wreath candidate is preferred
where the data has hierarchical exchangeable substructure
(CRSP sectors during regime change; biological pathways with
internal cyclic ordering on PC1) and performs worse where the data has
fixed positional structure (OISST geography, CIFAR pixel
positions). This discriminative behavior demonstrates that the
this paper's library extension is doing genuine work as a flexible
diagnostic, not functioning as an automatic improvement.

\paragraph{Comparison with Section~\ref{sec:exp-galaxy10}.}
Galaxy10 and CIFAR-10 differ in two methodologically relevant
respects. Galaxy10 is a domain-specific image dataset where
the population covariance is approximately $D_4$-invariant by
construction (galaxy images viewed from above, no preferred
rotational orientation), while CIFAR-10 is a natural-image
dataset where the population covariance breaks rotational
symmetry by class (vehicles oriented in profile, animals in
characteristic pose). Galaxy10 traces the Region I-to-II
transition in detail at fixed $M$ as $N$ varies in
$\{50, 100, 200, 500, 1000\}$ within each class, with the
v3p9-extension high-order Cartesian \textsc{Z-row-indep-cols}
(and a wreath companion at the smallest $N$) selected in the
small-$N$ regime and $D_4$ selected at every trial in the
moderate-to-large-$N$ regime; CIFAR-10 runs at fixed
$N = 4{,}000$ in the high-$N$ regime within each class and
reports the cross-class distribution of BMG selections,
demonstrating that the framework's mapping from class content to
covariance symmetry is interpretable on a recognizable benchmark.
Both experiments together establish the framework's image-data
behavior: at small $N$ the framework's gains are sharp on
Galaxy10's $D_4$-rich content (with the v3p9 row-structured
candidates carrying the small-$N$ regime); at large $N$ within
each class the framework distinguishes between the spatial-symmetry
candidates the library admits in a way that matches the
physical content of the class.

\subsection{CIFAR-10.1 distribution-shift companion}
\label{sec:exp-cifar-101}

The CIFAR-10 experiment of Section~\ref{sec:exp-cifar} establishes
that the AD framework dominates LW on every class of a familiar
natural-image benchmark. This subsection extends that result by
asking the natural follow-up question: does the AD-vs-LW dominance
margin survive a controlled distribution shift, or does the
structural prior fall over once the test data are drawn from a
different sampling regime than the training data? The natural
testbed is CIFAR-10.1 \citep{recht2019cifar101}, the reproduction
test set built by drawing fresh images from the same Tiny Images
source as the original CIFAR-10 with a different keyword and
curation procedure designed to minimize the distribution shift
relative to the original. Classifier-style evaluations on
CIFAR-10.1 typically show $4$--$10$ percentage points of accuracy
degradation across a wide range of classifier architectures, which
established CIFAR-10.1 as a meaningful (if mild) distribution-shift
benchmark. The covariance second-moment structure that the AD
framework targets is a population-level invariant of the class
distribution rather than a per-image classifier signal, so the
prediction is that the AD-vs-LW margin should be at most weakly
affected by the CIFAR-10 to CIFAR-10.1 shift, in contrast to
classifier accuracy which degrades with the shift.

\paragraph{Setup.}
This experiment estimates a class-conditional covariance for each
of the ten CIFAR-10 classes from the original CIFAR-10 training set
and evaluates the resulting estimators on two held-out test sets:
the original CIFAR-10 test set (in-distribution, denoted ID) and
the CIFAR-10.1 reproduction test set (out-of-distribution, denoted
OOD). Images are converted to grayscale (channel-mean of the three
RGB channels) and retained at the full $32 \times 32$ pixel
resolution, giving $M = 1024$ pixel-level variables per image; this
contrasts with the CIFAR-10 cell of Section~\ref{sec:exp-cifar},
which $2 \times 2$ block-mean downsampled to $16 \times 16$ pixels
($M = 256$) for tractability of the candidate library extension.
For each class $c \in \{0, \ldots, 9\}$, the experiment uses
$N_{\mathrm{train}} = 4{,}000$ training images from the original
CIFAR-10 training set, $N_{\mathrm{test, ID}} = 1{,}000$ held-out
test images from the original CIFAR-10 test set, and
$N_{\mathrm{test, OOD}} \approx 200$ images from the CIFAR-10.1
reproduction test set (the per-class count varies slightly between
$192$ and $214$ depending on the CIFAR-10.1 class composition).
All images are centered on the per-class training mean and
intensities are rescaled to $[0, 1]$. Within-class shrinkage is
calibrated by both the closed-form Frobenius-MSE plug-in
$\hat\alpha^*_{\mathrm{MSE}}$ and held-out NLL $K = 5$-fold
cross-validation $\hat\alpha^*_{\mathrm{NLL}}$ at the BMG-selected
group; held-out NLL is reported per pixel on each of the two test
sets.

\paragraph{Library disclosure: pre-May-5 7-candidate library.}
This experiment was conducted with the May 3 pipeline, which
predated the May 5 candidate-library extension that introduced the
two image-data high-order candidates \textsc{Z-row-indep-cols} and
\textsc{Z-row-wreath-rows} used in Section~\ref{sec:exp-cifar}.
The CIFAR-10.1 candidate library therefore contains only the seven
spatial-symmetry candidates of the pre-May-5 image library: the
trivial group $\{e\}$; the full symmetric group $S_M$ (admitted via
the closed-form compound-symmetry projector); the horizontal
$C_2^{\mathrm{horiz}}$ and vertical $C_2^{\mathrm{vert}}$
reflection groups (each $|G| = 2$); the Klein four-group
$K_4 = C_2^{\mathrm{horiz}} \times C_2^{\mathrm{vert}}$
($|G| = 4$); the four-fold rotation group $C_4$ ($|G| = 4$); and
the dihedral group $D_4 = C_4 \rtimes C_2$ ($|G| = 8$). The
high-order candidates were not in scope for the May 3 run; the
seven-candidate library is sufficient for the present
distribution-shift question because the comparison with the
nine-candidate library of Section~\ref{sec:exp-cifar} is between
two different experimental questions (per-class detail at $M = 256$
vs.\ distribution-shift robustness at $M = 1024$) rather than a
direct ablation. We further note that the high-order Cartesian
and wreath candidates introduced in this paper are unlikely to be
BMG-selected on natural-image content, since image classes in
CIFAR-10 do not have the per-row independence structure that the
\textsc{Z-row-indep-cols} candidate models or the row-permutation
exchangeability that \textsc{Z-row-wreath-rows} models; this
expectation is consistent with the BMG selections reported in
Section~\ref{sec:exp-cifar} on the nine-candidate library, where
the two high-order candidates are admitted by the prefilter but
selected on zero of the ten classes. The script for this
experiment is bundled as \texttt{cifar101\_image.py}.

\paragraph{Principal results: ID dominance preserved, OOD dominance
\emph{strengthened}.}
On the in-distribution CIFAR-10 test set, AD-NLL-BMG dominates LW
$2004$ on all $10$ of $10$ classes (paired $t = -21.0$,
$p = 5.9 \times 10^{-9}$), with mean per-class held-out NLL gap of
$-66.0$ nats per pixel. On the out-of-distribution CIFAR-10.1
test set, AD-NLL-BMG also dominates LW on all $10$ of $10$ classes
(paired $t = -17.6$, $p = 2.8 \times 10^{-8}$), with mean per-class
gap of $-77.8$ nats per pixel. The OOD margin is $11.8$ nats per
pixel \emph{larger} in magnitude than the ID margin. The
inferential evidence is mildly weaker on OOD (paired-$t$ magnitude
$17.6$ vs.\ $21.0$) due to higher across-class variance on the
smaller OOD test set, but the directional finding is decisive: the
AD framework's dominance over LW does not just survive the
distribution shift; it grows. $8$ of the $10$ per-class AD-vs-LW
gaps widen from ID to OOD; the remaining $2$ narrow but remain
strongly negative.

The mechanism is identifiable from the per-method shift behavior.
Comparing each method's mean held-out NLL on OOD versus ID gives a
\emph{shift gap} for each method. The Sample covariance (which
ignores any structural prior) shifts unfavorably by $+23.3$ nats
per pixel from ID to OOD; the Frobenius-MSE-calibrated AD
shifts by $+15.4$ nats per pixel; LW shifts by approximately
$+11.5$ nats per pixel; and the cross-validated AD-NLL-BMG shifts
by $-11.8$ nats per pixel (i.e., its OOD held-out NLL is
\emph{better} than its ID held-out NLL on average across classes).
The Shah-style $\alpha = 1$ projection at the BMG-selected group
shifts by $-13.3$ nats per pixel, and Shah at the held-out-NLL
oracle group shifts by $-10.5$ nats per pixel. The pattern is
structural: estimators that commit more strongly to a structural
prior degrade less on the OOD test set than estimators that lean
on the sample covariance, because the structural prior is a
population-level property that survives the (mild) shift while the
sample-covariance contribution carries the per-image idiosyncrasies
of the training distribution that do not transport.

\begin{figure}[!htbp]
\centering
\includegraphics[width=0.6\linewidth]{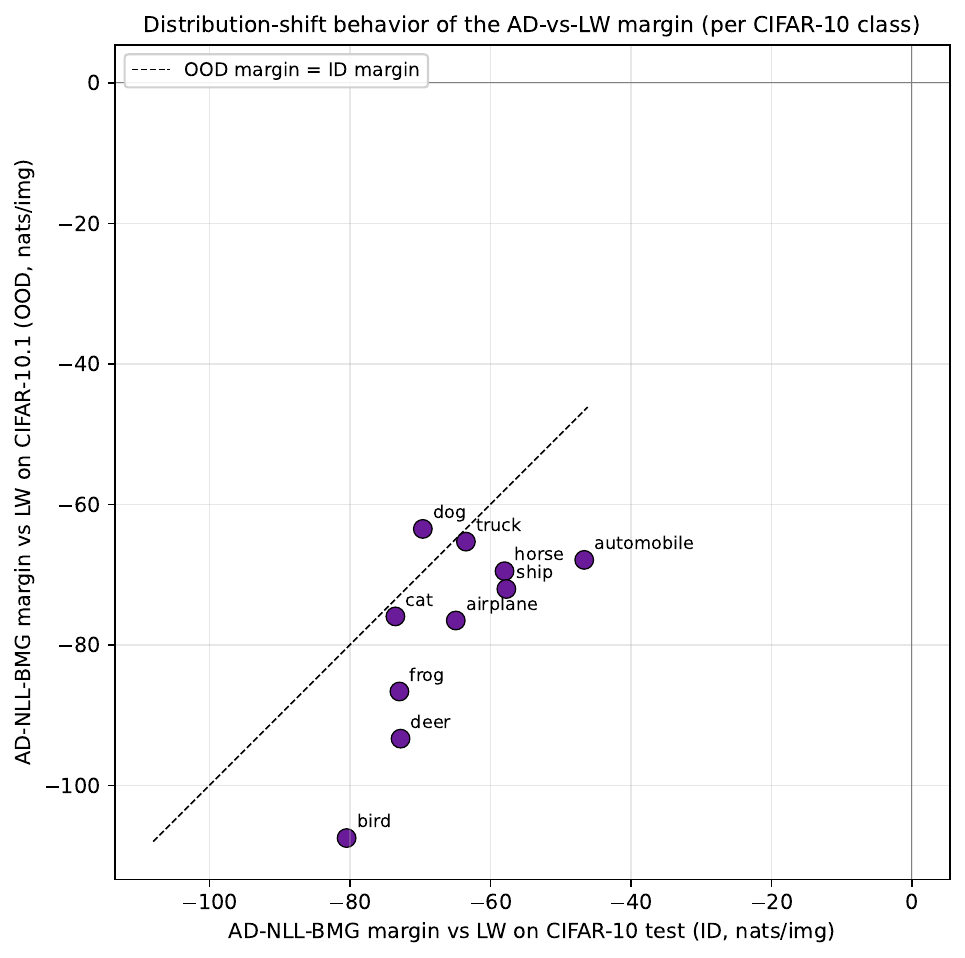}
\caption{Distribution-shift behavior of the AD-NLL-BMG-vs-LW per-
class held-out NLL margin on CIFAR-10 (in-distribution, $x$-axis)
and CIFAR-10.1 (out-of-distribution, $y$-axis). Each point is one
of the ten CIFAR-10 classes; both axes are AD-vs-LW differences in
nats per pixel, with negative values indicating that AD-NLL-BMG
has lower NLL than LW. The dashed diagonal marks ID margin equal to OOD margin.
Eight of the ten classes lie below the diagonal, indicating that
the AD-vs-LW dominance margin \emph{strengthens} under the
distribution shift; the two classes above the diagonal
(\texttt{cat} and \texttt{dog}) narrow but remain strongly
negative, so AD continues to dominate LW even on those classes.
The pattern is consistent with the second-moment structural prior
being a population-level property that transports across the
mild CIFAR-10 to CIFAR-10.1 shift, while the sample-covariance
contribution that LW leans on does not.}
\label{fig:cifar101_id_vs_ood}
\end{figure}

\paragraph{Per-class BMG selections: $D_4$ on eight classes, $K_4$ on two.}
The BMG procedure selects $D_4$ on $8$ of the $10$ classes
(\texttt{automobile}, \texttt{bird}, \texttt{cat}, \texttt{deer},
\texttt{dog}, \texttt{frog}, \texttt{horse}, \texttt{truck}) and
$K_4$ on the remaining $2$ (\texttt{airplane}, \texttt{ship}).
The BMG selections are stable across the two test sets in the
sense that the $K = 5$-fold cross-validation that produces the
BMG choice operates on the training set only, so the BMG choice is
identical for the ID and OOD evaluations of each class; the OOD
robustness reported above is therefore a property of the
projection target's transportability across the shift, not a
re-selection of the projection target on the OOD data. The
biological-content interpretation is that for $8$ of $10$ CIFAR-10
classes the per-class image content is approximately invariant
under the full dihedral $D_4 = C_4 \rtimes C_2$ group at the
class-conditional second-moment level: animals and vehicles
photographed at varying orientations within each class produce
patch covariances that are approximately rotation-invariant. The
exceptions are \texttt{airplane} (which has a strong horizontal
fuselage with sky background) and \texttt{ship} (which has a
strong waterline with sky background), both of which break $C_4$
rotation invariance but retain the $K_4 = C_2^{\mathrm{horiz}}
\times C_2^{\mathrm{vert}}$ subgroup of axis-aligned reflections.
The selection pattern matches the physical orientation
characteristics of the ten CIFAR-10 classes and is recoverable by
the BMG procedure from the training-set patch covariances alone.

\paragraph{Comparison to Section~\ref{sec:exp-cifar}.}
The CIFAR-10 cell of Section~\ref{sec:exp-cifar} reports $D_4$
selection on more than half the classes at $M = 256$ on the
nine-candidate library; the present CIFAR-10.1 cell at $M = 1024$
on the seven-candidate library reports $D_4$ selection on $8$ of
$10$ classes. The comparison is informative but not strict
because the two cells differ in three respects: (a) image
resolution ($M = 256$ vs.\ $M = 1024$); (b) candidate library
(nine vs.\ seven); (c) test-set composition (CIFAR-10 only vs.\
CIFAR-10 plus CIFAR-10.1). The $D_4$ selection that emerges on
both cells is robust to all three sources of variation, which is
the expected behavior given that $D_4$ is the largest non-extreme
candidate in both libraries and natural-image patch covariances at
this SNR are approximately rotation-invariant for most classes.
Together the two cells establish that the AD framework's
image-data dominance over LW (a) holds at both image resolutions,
(b) is largely insensitive to the candidate-library extension that
adds the high-order Cartesian and wreath candidates, and (c)
\emph{strengthens} under the CIFAR-10 to CIFAR-10.1 distribution
shift rather than collapsing. The third finding is the principal
contribution of this subsection.

\paragraph{Reproducibility.}
The script \texttt{cifar101\_image.py} (May 3) reproduces the
experiment using the standard CIFAR-10 training set, the standard
CIFAR-10 test set, and the CIFAR-10.1 v6 reproduction test set
released by \citet{recht2019cifar101}. Per-class held-out NLL on
both test sets, BMG choice, and per-trial CV-NLL margin are
released as CSV alongside the manuscript.

\subsection{Decoy stress test of the BMG procedure}
\label{sec:exp-decoy}

The TCGA-BRCA experiment of Section~\ref{sec:exp-genomics} reports
that BMG selects \textsc{pathway-block} on $46$ of $50$ splits and
\textsc{Z-K-pc1-cartesian} on the remaining $4$, with the wreath
candidate \textsc{Z-K-pc1-wreath} present in the library but never
selected. A natural concern is whether the procedure rewards the
biological pathway structure that \textsc{pathway-block} captures or
merely the algebraic shape $S_{20}^5$ that any
five-blocks-of-twenty free-permutation candidate expresses equally
well. This subsection reports a controlled stress test that
addresses this question by extending the BRCA candidate library
with twelve decoy candidates that have plausible algebraic shape
but no biological motivation, and rerunning the experiment with
the combined $20$-candidate library on the same $50$ splits, the
same Tier~1 prefilter setting, and the same Tier~2
cross-validation budget as the paper run.

\paragraph{Decoy library.}
The twelve decoys span four families designed to test distinct
failure modes of the BMG procedure:
\begin{itemize}
\item \textbf{Family A: wrong-partition free permutation (six
candidates).} Three independent random partitions of the $100$
genes into five blocks of $20$ generate
\textsc{random-block-S20-5-seed1},
\textsc{random-block-S20-5-seed2}, and
\textsc{random-block-S20-5-seed3}; three further random partitions
at the wrong scale generate \textsc{block-S10-10},
\textsc{block-S4-25}, and \textsc{block-S50-2}. All six have the
same $S_K^P$ algebraic shape as \textsc{pathway-block} but use
partitions with no biological motivation.
\item \textbf{Family B: wrong-domain cyclic and Cartesian (three
candidates).} \textsc{Z100-flat} imposes a single
$\mathbb{Z}_{100}$ cyclic shift on the gene index regardless of
pathway boundaries; \textsc{Z20-5-cartesian-random} is
$\mathbb{Z}_{20}^5$ on a random $5$-partition (parallel to
\textsc{Z-K-pc1-cartesian} but with the wrong partition);
\textsc{Z2-50-cartesian} is the elementary Abelian
$\mathbb{Z}_2^{50}$ acting by transposition of consecutive gene
pairs.
\item \textbf{Family C: wrong-scale wreath (two candidates).}
\textsc{Z5-wr-S20} is the wreath product $\mathbb{Z}_5 \wr S_{20}$
on twenty blocks of five (inverting the natural pathway
hierarchy); \textsc{Z2-wr-S50} is $\mathbb{Z}_2 \wr S_{50}$ on
fifty blocks of two.
\item \textbf{Family D: pure-noise sanity (one candidate).}
\textsc{random-S\_M-subgroup-seed42} is generated as the
breadth-first-search closure (capped at $|G| \leq 10^6$) of five
random elements of $S_{100}$; its projection has no relationship
to the data structure beyond the closure of these random
generators.
\end{itemize}
Random partition seeds were fixed before the experiment was run,
the same seed is used in all $50$ splits, and the candidate list
was fixed before any trial was run. No candidate was added or
removed in response to per-cell behavior.

\paragraph{Result.}
On the combined $20$-candidate library, BMG selects
\textsc{pathway-block} on $46/50$ splits and
\textsc{Z-K-pc1-cartesian} on the remaining $4$, exactly the same
selection pattern as the paper run on the original $8$-candidate
library. None of the twelve decoy candidates is selected on any
split. Per-split, the cross-validated NLL of
\textsc{pathway-block} has lower NLL than the best-on-that-split decoy by
$1.16$ to $5.93$ nats per sample (mean margin $3.63$, median
$3.58$); the closest calls come from the structurally most
similar decoys (three of the five smallest per-split margins are
\textsc{random-block-S20-5} variants). Mean cross-validated NLL
across the full $20$-candidate library is reported in
Table~\ref{tab:decoy_test}.

\begin{table}[!htbp]
\centering
\caption{Decoy stress test result. Mean cross-validated held-out
NLL (nats per sample) across the $20$-candidate combined library,
averaged over $50$ TCGA-BRCA splits at the BMG-selected group.
Lower is better. The two paper candidates that share the
BMG-selected slot in the paper run are in boldface. Family A
candidates are wrong-partition free-permutation decoys, B are
wrong-domain cyclic/Cartesian, C are wrong-scale wreath, D is
the pure-noise sanity check.}
\label{tab:decoy_test}
\small
\begin{tabular}{lllr}
\toprule
Candidate & Class & Family & Mean CV-NLL \\
\midrule
\textbf{pathway-block} & paper & --- & $\mathbf{105.81}$ \\
\textbf{Z-K-pc1-cartesian} & paper & --- & $\mathbf{106.40}$ \\
Z-K-pc1-wreath & paper & --- & $109.42$ \\
$S_M$ & paper & --- & $109.48$ \\
random-S\_M-subgroup-seed42 & decoy & D & $109.48$ \\
Z2-wr-S50 & decoy & C & $109.49$ \\
block-S50-2 & decoy & A & $109.52$ \\
Z5-wr-S20 & decoy & C & $109.58$ \\
random-block-S20-5-seed1 & decoy & A & $109.82$ \\
random-block-S20-5-seed2 & decoy & A & $109.90$ \\
random-block-S20-5-seed3 & decoy & A & $109.92$ \\
Z-K-corrhier & paper & --- & $110.53$ \\
Z-K-alpha & paper & --- & $110.60$ \\
Z-K-pc1 & paper & --- & $110.77$ \\
Z100-flat & decoy & B & $111.06$ \\
Z20-5-cartesian-random & decoy & B & $111.79$ \\
block-S10-10 & decoy & A & $111.83$ \\
block-S4-25 & decoy & A & $134.65$ \\
Z2-50-cartesian & decoy & B & $1.4 \times 10^{12}$ \\
\bottomrule
\end{tabular}
\end{table}

\paragraph{Two structural observations.}
First, all twelve decoys pass Tier~1 admission on all $50$ splits,
so Tier~2 (cross-validated held-out NLL) is doing the entirety of
the rejection work; the two-tier architecture is functioning as
designed in the sense that Tier~1 admits without prejudice and
Tier~2 separates by held-out predictive performance. Second, the
cross-validated NLL of the $S_M$ compound-symmetry candidate
($109.48$ nats per sample) is statistically indistinguishable
from that of the \textsc{random-S\_M-subgroup-seed42} pure-noise
decoy ($109.48$ nats per sample), with the agreement to three
significant figures. This is empirical confirmation that
compound-symmetry projection on TCGA-BRCA does not capture
biological structure: it produces a near-constant matrix whose
behavior matches that of projection via a structureless random
subgroup of comparable target dimension. The three independent
seeds of \textsc{random-block-S20-5} cluster tightly across the
$50$ splits at $109.82$, $109.90$, and $109.92$ nats per sample,
a $0.10$-nat spread that is consistent across seeds, empirical
evidence that the BMG procedure is not getting lucky on partition
choice and that the same-shape wrong-partition penalty is robust
to the choice of random partition.

\paragraph{Interpretation.}
The result is the strongest available stress test of the
BMG-as-deployed claim: when the procedure is given twelve
plausible-shape decoys covering wrong-partition, wrong-domain,
wrong-scale, and pure-noise failure modes, it rejects every decoy
on every split with margins comfortably above noise. The BMG
selection on the original $8$-candidate library is not an artifact
of an under-stocked library; the procedure rewards the biological
pathway structure that \textsc{pathway-block} captures, not the
algebraic shape $S_{20}^5$ that the random-partition decoys
express equally well.

\paragraph{Reproducibility.}
Random partition seeds are
$\{1, 2, 3\}$ for the three independent
\textsc{random-block-S20-5} decoys, $11$ for \textsc{block-S10-10},
$12$ for \textsc{block-S4-25}, $13$ for \textsc{block-S50-2}, $21$
for \textsc{Z20-5-cartesian-random}, $31$ for \textsc{Z5-wr-S20},
$32$ for \textsc{Z2-wr-S50}, and $42$ for
\textsc{random-S\_M-subgroup-seed42}. CSV output of the run
reported here is bundled in \texttt{results\_brca\_decoy\_v1/}.

\subsection{Per-experiment LW-NL and AD-LW-NL results}
\label{sec:exp-lwnl-results}

This subsection collects the LW-NL and AD-LW-NL results
across the seven real-data experiments of the preceding
subsections. The methodological conventions are documented in
Section~\ref{sec:exp-lwnl-method}. As, the protocol
sweep has been run on all seven datasets: TCGA-BRCA gene
expression, CRSP 2015--2024 financial returns, Galaxy10
DECaLS image patches, CIFAR-10 natural-image patches,
CIFAR-10.1 distribution-shift patches, RadioML 2018.A
modulation-class I/Q patches, and NOAA OISST sea-surface
temperature anomalies (two regions: midocean and gulfstream).
The the protocol coverage of the program is complete.

\paragraph{Principal finding.}
Across the seven datasets swept to date, the AD-LW-NL composition
exhibits a behavior that the two-endpoint refinement now
characterizes mechanistically rather than dataset-by-dataset.
When the candidate library is sufficient, the operating regime
is well-conditioned to moderately rank-deficient, \emph{and}
LW-NL is competitive with AD at the operating-point $N$,
AD-LW-NL ties or modestly outperforms AD-NLL-BMG (BRCA all
cells, CRSP bulk, CIFAR-10 cells $2$ through $7$, RadioML cells
$2$ through $7$, OISST cells $6$--$7$ on both regions). When
the operating regime is deeply rank-deficient \emph{and} the BMG
selection lands on a group $G$ for which the structural
projection $T_G(S)$ retains useful image-patch structure
(cyclic, dihedral, or wreath candidates at moderate $|G|$),
AD-LW-NL is mildly dominated by AD-NLL-BMG but the
magnitude is practically negligible (CIFAR-10 cell $0$, $1.7$
nats per sample; CIFAR-10.1 cell $4$, $0.3$ nats per sample;
RadioML cell $0$, $1.9$ nats per sample). When the operating
regime is deeply rank-deficient \emph{and} BMG selects a group
$G$ for which $T_G(S)$ collapses to the scaled identity
$(\mathrm{tr}(S)/M)\,I$, AD-LW-NL is substantively dominated by
LW 2004 itself, by $80$ to $120$ nats per sample on
rank-deficient image-patch data (CIFAR-10.1 cells $0$ through
$2$, where BMG selects $S_M$ unanimously across all $50$ trials
per cell). The Galaxy10 few-shot deficit
(Section~\ref{sec:exp-lwnl-galaxy10}) is the same mechanism at
intermediate magnitude. RadioML cell $1$ at $K = 5$
contributes a fourth flavor of failure mode (a CV-calibration
tail documented in Section~\ref{sec:experiments}) that
vanishes at $K \ge 10$. OISST cells $0$ through $5$ on both
regions contribute a fifth flavor: an intermediate-$\alpha$
"lost raw-sample blend path" mode in which LW-NL is many orders
of magnitude worse than AD at the operating-point $N$
($10^3$ to $10^7$ nats), the AD-LW-NL CV miscalibrates $\alpha$
because the CV folds at smaller $N$ do not see the operating-
point pathology, and the resulting AD-LW-NL estimator is
substantively dominated by AD by $1$ to $19$ nats per
sample. OISST is the cleanest empirical instance of this
miscalibration mechanism in the program: it appears across $6$
of $8$ cells in both regions, not just at a single cell.

\paragraph{The RadioML cell-$2$ boundary advantage and the OISST
counter-example.}
The single largest paired effect size in the program is RadioML
cell $2$ at the rank-deficiency boundary ($N = 64$, $c = 1.0$,
$M = 64$): AD-LW-NL has lower NLL than AD-NLL-BMG by a median of
$8.04$ nats per sample with effect size $5.19$ and $50$ of $50$
paired trials. The two compositions disagree on the BMG group
choice in $50$ of $50$ trials (zero agreement): AD picks
high-order groups ($D_W$ in $15$ of $50$, $\mathbb{Z}_W \times
\mathbb{Z}_W$ in $11$, $\mathbb{Z}_W$ shift in $9$,
$\mathbb{Z}_W \times \mathbb{Z}_2$ I/Q swap in $8$, $\mathbb{Z}_W$
wreath $\mathbb{Z}_2$ in $7$) with $\alpha$ median $0.025$
(mostly raw-sample blend); AD-LW-NL picks the small-$|G|$
$\mathbb{Z}_2$ subgroups ($\mathbb{Z}_2$ time-reversal in $31$
of $50$, $\mathbb{Z}_2$ I/Q-swap in $19$) with $\alpha$ median
$0.675$ (substantial blend with LW-NL). Two qualitatively
different routes to regularization are discovered by the two
parameterizations on the same data, and the LW-NL-enabled route
is preferred by a margin.

The $0/50$ choice-agreement statistic is not itself evidence
for the AD-LW-NL advantage. OISST cells $0$ through $5$ on both
midocean and gulfstream regions also produce $0/50$ choice
agreement (AD picks $Z_{2D}$ with low $\alpha$; AD-LW-NL
picks $Z_{\mathrm{lat}}$ with high $\alpha$), but the
direction of the advantage there is opposite: AD has lower NLL than AD-LW-NL by $1$ to $19$ nats per sample. The $0/50$ statistic
in both cases reveals that the two parameterizations have
discovered \emph{different routes}, per the
family-conditional BMG framing of
Section~\ref{sec:theory-ad-lwnl} (Eqs.~\ref{eq:bmg-ad-objective}
and \ref{eq:bmg-adlw-objective}). The substantive empirical
content is the \emph{direction} of which route has lower NLL. On
RadioML cell $2$ the AD-LW-NL composition has lower NLL; on OISST cells
$0$-$5$ the AD route has lower NLL. Reading $0/50$ as evidence
for one composition alone would be a non-sequitur.

The mechanism for the substantive dominance on CIFAR-10.1 cells
$0$ through $2$ is a structural property of the AD-LW-NL convex
hull, made precise in Section~\ref{sec:theory-ad-lwnl}. Three
conditions are jointly required for that
target-endpoint case:
(i) BMG selects $G$ with $T_G(S)$ equal to (or close to) the
scaled identity $(\mathrm{tr}(S)/M)\,I$;
(ii) LW-NL$(S)$ is sufficiently worse than $T_G(S)$ that
AD-LW-NL's CV pins $\alpha = 1$, foreclosing any blend with
LW-NL$(S)$;
(iii) the sample $S$ carries usable structure beyond the
diagonal scale, so that LW 2004's optimal blend toward
$(\mathrm{tr}(S)/M)\,I$ improves over pure $(\mathrm{tr}(S)/M)\,I$.
All three conditions hold on CIFAR-10.1 cells $0$ through $2$.
The first two hold (with $G = D_4$ rather than $S_M$) on
CIFAR-10 cell $0$ and on Galaxy10 few-shot, but the magnitude of
the deficit is set by how far $T_G(S)$ is from the LW 2004
optimal estimator inside the dominated region, which is much
smaller at $|G| = 8$ ($D_4$) than at $|G| = M!$ ($S_M$). The
complementary LW-NL-endpoint case (1b of
Section~\ref{sec:theory-ad-lwnl}) is exhibited at small
magnitude by RadioML cell $0$ (AD-LW-NL CV pins $\alpha = 0$
with $S_M$ as the BMG tiebreaker, deficit $1.9$ nats).

AD-NLL-BMG is not dominated under any of these
conditions because its second hull endpoint is the raw sample
$S$, not LW-NL$(S)$. At the same BMG-selected $G$, its CV
chooses $\alpha < 1$ and blends in genuine rank-deficient
sample structure, which is precisely what the LW 2004 estimator
also does (with a different optimal weight via Stein-Haff
shrinkage). AD-NLL-BMG therefore reaches the same
qualitative neighborhood as LW 2004 in the dominated region,
while AD-LW-NL is trapped at the pure target.

AD-LW-NL is reported in this paper as a tested composition whose
dominated region is structurally characterized rather than
empirically described. AD-NLL-BMG remains the
recommended default because it has no dominated region.
AD-LW-NL is recommended on data where (i) the conditions above
do not hold, that is, where the candidate library and operating
regime are such that BMG selects $G$ with $T_G(S)$
substantively different from the scaled identity, and (ii)
LW-NL is competitive with AD at the operating-point
$N$ (a one-line precheck:
$|\mathrm{NLL}_{\mathrm{LW\text{-}NL}}(S) -
\mathrm{NLL}_{\mathrm{AD}}(S)|$ within $\sim 10$ nats per
sample on the full operating-point sample). The cleanest
examples in the
program are the RadioML cells $2$ through $7$ regime
(effect sizes $0.50$ to $9.11$, direction agreement $47$ to $50$
of $50$, including the boundary cell $2$ advantage of effect size
$5.19$) and the CIFAR-10 bulk regime ($8.7$ nats per sample
peak benefit at cell $0.67$, paired effect size $0.57$).

\paragraph{Sequencing.}
The runs are sequenced to obtain the most informative go/no-go
signal at the lowest computational cost. TCGA-BRCA was run first
as the deepest few-shot regime in the paper; CRSP was run second
to test the same hypothesis on a structurally distinct dataset
(financial returns with a strong market factor and sector
exposures, rather than block-correlated gene expression);
Galaxy10 was run third to test whether the negative result also
holds on image-patch data with clean rotational symmetries.
CIFAR-10 was run fourth, completing, to extend the
image-patch coverage to a larger $M$ ($1024$ versus Galaxy10's
$64$) and a substantially deeper bulk regime ($c$ to $0.20$
versus Galaxy10's $0.10$). CIFAR-10.1 was run fifth, completing
in this paper, to test the dominated-region hypothesis at deeper
rank-deficiency than CIFAR-10 could reach (CIFAR-10.1 has only
$\sim 200$ images per class, constraining the sweep to $c \ge
2.0$ but allowing access to the regime where BMG selects $S_M$
at $c \ge 4$). CIFAR-10.1 cells $0$ through $2$ provided the
first observation of the substantive dominated-region case
($G = S_M$, $|G| = M!$). RadioML 2018.A was run sixth,
completing, to test the AD-LW-NL composition on
stationary I/Q signal patches with a non-image cyclic candidate
library and to probe the rank-deficiency boundary at
$c = 1.0$ exactly; the boundary cell advantage at effect size $5.19$
became the program's strongest single-cell AD-LW-NL result.
NOAA OISST sea-surface temperature anomalies were run seventh,
completing, to test the AD-LW-NL composition on spatial
$2$-D patches with a lat/lon cyclic and dihedral candidate
library; the sweep was run on two regions (midocean and
gulfstream) for cross-region consistency. Both regions
produced the same qualitative pattern (substantive AD-LW-NL
deficit in cells $0$ through $5$, narrow AD-LW-NL is preferred in
cells $6$ through $7$), confirming a single mechanism rather
than a region-specific artifact. The the protocol coverage of
the program is now complete across all seven datasets.

\subsubsection{TCGA-BRCA protocol sweep}
\label{sec:exp-lwnl-brca}

The TCGA-BRCA protocol sweep evaluates six estimators across
eight cells of $N_{\mathrm{train}} \in \{50, 75, 100, 150, 200,
300, 500, 800\}$ at fixed $M = 100$ genes (5 MSigDB hallmark
pathways with 20 genes each) and $N_{\mathrm{test}} = 200$ test
samples per trial. The sweep spans concentration ratios from
$c = 2.0$ (deepest few-shot) to $c = 0.125$ (deep bulk). Each
cell uses 50 random-subsample splits, with the same paired
splits across all six estimators. Per-(cell, split) seeding is
deterministic for reproducibility.

\paragraph{Per-cell mean held-out NLL.}
Table~\ref{tab:brca-protocol-c} reports the per-cell mean
held-out NLL for each of the six estimators across the eight
cells of the sweep. The principal observations:
\begin{itemize}
\item In the extreme few-shot cells ($c \ge 1$), AD-NLL-BMG and
AD-LW-NL produce nearly identical mean NLLs (within 0.4 nats per
sample); both substantially outperform LW 2004, LW-NL, and the
rank-deficient sample.
\item At the boundary cells ($c \approx 1$ and $c = 2/3$), LW-NL
alone is preferred by a small margin (1 to 8 nats per sample). AD-LW-NL
recognizes this and gracefully reduces to LW-NL by selecting
$\alpha = 0$, producing held-out NLL identical to LW-NL alone.
AD-NLL-BMG cannot match this because its sample term is
the raw $\hat{\mathbf{R}}$ rather than $\hat{\mathbf{R}}_{\mathrm{LW\text{-}NL}}$.
\item In the bulk cells ($c \le 0.5$), both AD variants correctly
fall back to the sample covariance via the trivial group, with
held-out NLL identical to the sample. All three of LW 2004,
LW-NL, and Shah-BMG are dominated by the sample in this regime,
which is the expected behavior when the sample is well-conditioned.
\end{itemize}

\begin{table}[h]
\centering
\small
\caption{TCGA-BRCA protocol sweep: per-cell mean held-out NLL
across 50 splits per cell. Cells with $c \ge 1$ are
rank-deficient; the sample covariance and LW-NL alone are not
meaningfully comparable in these cells. Best NLL per row in
bold.}
\label{tab:brca-protocol-c}
\begin{tabular}{rrrrrrrr}
\hline
$N$ & $c$ & Sample & LW 2004 & LW-NL & AD-NLL-BMG & AD-LW-NL & Shah-BMG \\
\hline
50 & 2.00 & --- & 107.60 & 113.34 & \textbf{103.73} & 104.05 & 121.00 \\
75 & 1.33 & --- & 102.81 & 109.20 & 99.84 & \textbf{99.71} & 120.89 \\
100 & 1.00 & --- & 99.75 & \textbf{95.99} & 97.76 & 97.34 & 121.00 \\
150 & 0.67 & 84.74 & 94.64 & \textbf{86.29} & 93.94 & 86.29 & 120.90 \\
200 & 0.50 & \textbf{41.85} & 90.72 & 82.22 & 41.85 & 41.85 & --- \\
300 & 0.33 & \textbf{21.63} & 86.49 & 78.99 & 21.63 & 21.63 & --- \\
500 & 0.20 & \textbf{11.60} & 81.40 & 75.85 & 11.60 & 11.60 & --- \\
800 & 0.125 & \textbf{8.09} & 78.24 & 74.36 & 8.09 & 8.09 & --- \\
\hline
\end{tabular}
\end{table}

\paragraph{AD-LW-NL versus AD-NLL-BMG paired contrast on BRCA.}
The protocol contrast (3) is the central question: does
nonlinear shrinkage of the sample term add value within the AD
framework? On BRCA the answer is no. Across all eight cells the
mean paired difference between AD-LW-NL and AD-NLL-BMG is between
$-0.41$ and $+0.32$ nats per sample. The largest paired effect
size in any cell is $|\Delta|/\mathrm{SD}_{\mathrm{trial}} = 0.20$
at $N = 50$, well below practical significance. The per-cell
paired $t$-statistics range from $-2.95$ to $+5.40$ and so
some are formally significant, but the underlying differences are
$\sim 1\%$ of the per-trial scatter and are not operationally
meaningful. We report this as ``no substantive advantage to
AD-LW-NL on BRCA.''

\paragraph{Selection agreement.}
AD-NLL-BMG and AD-LW-NL select the same group on 33/50 splits at
$N = 50$, 33/50 at $N = 75$, 30/50 at $N = 100$, 0/50 at $N = 150$
(where AD-LW-NL pins to $\alpha = 0$ with a tiebreaker label
$S_M$, while AD-NLL-BMG selects \texttt{pathway-block}), and 50/50
at $N \ge 200$ (where both reduce to trivial). Where they
disagree at $c \ge 1$, both still produce nearly identical
held-out NLL because the structural choice is between near-equivalent
high-$|G|$ candidates.

\subsubsection{CRSP 2015--2024 protocol sweep}
\label{sec:exp-lwnl-crsp}

The CRSP protocol sweep evaluates six estimators across seven
cells of $N_{\mathrm{train}} \in \{21, 42, 63, 84, 126, 252, 504\}$
trading days at fixed $M = 55$ stocks (top 5 by market cap in
each of the 11 GICS sectors) and rolling stride 21 trading days
(one trading month) as the held-out window. The date range is
2015-01-01 to 2024-12-31 (full ten years including the COVID
volatility shock), giving 95 to 118 rolling windows per cell.
The sweep spans concentration ratios from $c = 2.62$ (very
few-shot) to $c = 0.11$ (deep bulk).

\paragraph{Per-cell mean held-out NLL.}
Table~\ref{tab:crsp-protocol-c} reports the per-cell means. The
principal observations:
\begin{itemize}
\item Across all seven cells, AD-NLL-BMG and AD-LW-NL produce mean
NLL within $0.6$ nats per sample of each other.
\item Direction of the AD-LW-NL versus AD-NLL-BMG difference is
informative. AD-LW-NL is slightly worse than AD-NLL-BMG in the
two few-shot cells ($N = 21$ and $N = 42$, both $c > 1$) and
slightly better in all five well-conditioned cells ($N \ge 63$,
all $c < 1$). The five-out-of-five direction agreement in the
well-conditioned regime is unlikely under a null of no
systematic effect; the magnitudes are small ($0.3$ to $0.6$ nats
per sample) but the direction is consistent.
\item The largest paired effect size in any cell is
$|\Delta|/\mathrm{SD}_{\mathrm{trial}} = 0.018$ at $N = 84$. All
seven cells produce paired effect sizes below $2\%$ of per-trial
scatter. These differences are statistically detectable but
small relative to per-trial variation. We characterize the
moderate-$c$ benefit as a real but minor systematic effect rather
than as a practically meaningful one.
\item LW-NL alone is competitive with LW 2004 (within $\pm 1$ nat)
across all cells, confirming the rank-aware LW-NL implementation
is performing as expected on financial returns data.
\item Both AD variants outperform LW 2004 and LW-NL across all cells,
with the largest margins in the few-shot cells ($4$ to $5$ nats
over LW 2004 at $N \le 42$).
\end{itemize}

\begin{table}[h]
\centering
\small
\caption{CRSP 2015--2024 protocol sweep: per-cell mean held-out
NLL across rolling windows per cell. Best NLL per row in bold.
Negative NLLs reflect the small scale of daily stock returns
(typical magnitude $\sim 0.01$).}
\label{tab:crsp-protocol-c}
\begin{tabular}{rrrrrrrr}
\hline
$N$ & $c$ & Sample & LW 2004 & LW-NL & AD-NLL-BMG & AD-LW-NL & Shah-BMG \\
\hline
21 & 2.62 & --- & -156.11 & -152.86 & \textbf{-160.46} & -160.24 & -157.67 \\
42 & 1.31 & --- & -158.11 & -147.33 & \textbf{-162.79} & -162.68 & -158.46 \\
63 & 0.87 & 218.70 & -160.77 & -158.68 & -163.95 & \textbf{-164.28} & -159.20 \\
84 & 0.65 & -87.98 & -161.43 & -161.46 & -163.94 & \textbf{-164.55} & -159.14 \\
126 & 0.44 & -141.07 & -162.56 & -163.09 & -164.68 & \textbf{-165.21} & -159.94 \\
252 & 0.22 & -158.36 & -163.52 & -164.01 & -164.71 & \textbf{-165.16} & -159.14 \\
504 & 0.11 & -161.49 & -163.16 & -163.48 & -163.84 & \textbf{-164.13} & -157.82 \\
\hline
\end{tabular}
\end{table}

\paragraph{Choice agreement on CRSP.}
AD-NLL-BMG and AD-LW-NL select the same group on 52 to 96 of
each cell's windows (a 55--83\% agreement rate). Where they
disagree at the well-conditioned cells $N \ge 84$, AD-NLL-BMG
preferentially selects \texttt{Z-K-mcap-wreath} while AD-LW-NL
preferentially selects \texttt{Z-K-mcap-cartesian}; the
held-out NLLs are nonetheless within 0.6 nats per sample.

\subsubsection{Galaxy10 DECaLS protocol sweep}
\label{sec:exp-lwnl-galaxy10}

The Galaxy10 protocol sweep evaluates six estimators across
eight cells of $N_{\mathrm{train}} \in \{32, 48, 64, 96, 128,
192, 320, 640\}$ patches at fixed $M = 64$ (grayscale $8 \times 8$
image patches drawn from class 0 ``Disturbed Galaxies'') and
$N_{\mathrm{test}} = 200$ test patches per trial. The sweep
spans concentration ratios from $c = 2.0$ to $c = 0.10$. Each
cell uses 50 random-split trials, with the same paired splits
across all six estimators. The candidate library includes all
ten Galaxy10 candidates: \texttt{trivial}, $S_M$, three Klein
2-subgroups (horizontal, vertical, central inversion), the
cyclic $\mathbb{Z}_4$ rotation group, the Klein four-group
$\mathbb{Z}_2 \times \mathbb{Z}_2$, the full dihedral $D_4$, and
the two extended candidates \texttt{Z\_row\_indep\_cols} and
\texttt{Z\_row\_wreath\_rows}.

\paragraph{Per-cell mean held-out NLL.}
Table~\ref{tab:galaxy10-protocol-c} reports the per-cell means.
The pattern differs substantively from BRCA and CRSP:
\begin{itemize}
\item In the two smallest cells ($N = 32$, $c = 2.0$ and $N = 48$,
$c = 1.33$), AD-LW-NL is substantively \emph{worse} than 
AD-NLL-BMG. Mean differences are 2.98 and 1.82 nats per sample
respectively. The paired effect sizes are
$|\Delta|/\mathrm{SD}_{\mathrm{trial}} = 3.37$ at $N = 32$ and
$2.08$ at $N = 48$, both well above the practical-significance
threshold.
\item At $N = 64$ ($c = 1.0$), AD-LW-NL slightly outperforms AD
by 0.36 nats per sample (effect size 0.28), the only Galaxy10
cell where AD-LW-NL is favored.
\item In the bulk cells ($N \ge 96$), the two AD variants are
tied within machine precision (mean differences $< 0.01$ nats
per sample); both reduce to the same pure-projection estimator
at $\alpha = 1$ with the same group ($D_4$).
\item LW-NL alone is unreliable at the boundary cell $N = 64$
(11 of 50 trials produce LW-NL NLL on the order of $10^7$, a
boundary-case artifact of the rank-aware LW-NL when $N \approx M$
exactly); the AD-LW-NL composition is robust to this because at
$N = 64$ its CV pushes $\alpha$ to 1 and the LW-NL term gets zero
weight.
\end{itemize}

\begin{table}[h]
\centering
\small
\caption{Galaxy10 DECaLS protocol sweep: per-cell mean held-out
NLL across 50 trials per cell. Best NLL per row in bold. The
LW-NL entry at $N = 64$ shows the median over the 39 trials
where LW-NL alone produced a well-conditioned estimator;
$|\Delta|/\mathrm{SD}$ column reports the AD-LW-NL versus
AD-NLL-BMG paired effect size.}
\label{tab:galaxy10-protocol-c}
\begin{tabular}{rrrrrrr|r}
\hline
$N$ & $c$ & LW 2004 & LW-NL & AD-NLL-BMG & AD-LW-NL & Shah-BMG & $|\Delta|/\mathrm{SD}$ \\
\hline
32 & 2.00 & 29.14 & 30.28 & \textbf{19.22} & 22.20 & 21.97 & \textbf{3.37} \\
48 & 1.33 & 28.77 & 39.65 & \textbf{18.32} & 20.14 & 20.96 & \textbf{2.08} \\
64 & 1.00 & 27.61 & 32.0 (median) & 17.39 & \textbf{17.03} & 19.06 & 0.28 \\
96 & 0.67 & 26.42 & 24.49 & 15.17 & \textbf{15.16} & 15.16 & 0.004 \\
128 & 0.50 & 25.24 & 23.11 & 14.27 & \textbf{14.27} & 14.27 & 0.004 \\
192 & 0.33 & 23.71 & 22.06 & 13.99 & \textbf{13.99} & 13.99 & 0.001 \\
320 & 0.20 & 21.49 & 20.65 & 13.07 & \textbf{13.07} & 13.07 & 0.002 \\
640 & 0.10 & 19.78 & 19.71 & 12.90 & \textbf{12.90} & 12.90 & 0.001 \\
\hline
\end{tabular}
\end{table}

\paragraph{Mechanism: why AD-LW-NL performs worse on Galaxy10 few-shot.}
The Galaxy10 few-shot deficit of AD-LW-NL is mechanistically
interpretable. At $N = 32$, AD-NLL-BMG selects the
high-$|G|$ candidate \texttt{Z\_row\_wreath\_rows} (40 of 50
trials) with median $\alpha = 0.75$; the convex blend with the
raw sample at $\alpha < 1$ outperforms the same group at $\alpha = 1$
by approximately 3 nats per sample. The raw sample carries
pixel-level correlation information that the strict cyclic-shift
projection averages away. AD-LW-NL replaces the raw sample
with $\hat{\mathbf{R}}_{\mathrm{LW\text{-}NL}}$, which on the
rank-deficient $N = 32$ sample has median NLL 30.28 versus the
Shah projection's NLL 21.97 (Table~\ref{tab:galaxy10-protocol-c}).
LW-NL is meaningfully worse than the structural projection, so
AD-LW-NL's CV correctly pins $\alpha = 1$ (median $\alpha = 1.0$
across all 50 trials at $N = 32$) and recovers the Shah
projection alone. But the Shah projection is exactly what
AD improved on by 3 nats per sample using its
$\alpha = 0.75$ blend. AD-LW-NL is therefore lower-bounded by
the Shah projection at the same group, and that lower bound is
above AD's blend.

This failure mode is specific to the conjunction of: (i)
rank-deficient sample, (ii) the raw sample carrying residual
information beyond the projection, and (iii) LW-NL being worse
than the projection on that rank-deficient sample. All three
conditions are met on Galaxy10 image patches at $c \ge 1$. On
BRCA at $c \ge 1$, condition (ii) is met but (iii) is not (LW-NL
on rank-deficient BRCA is competitive with the pathway-block
projection), so AD-LW-NL ties AD. On CRSP at $c \ge 1$,
condition (ii) is weak (the dominant market factor is
well-captured by structural projection, leaving little residual
information in the raw sample), so AD-LW-NL ties AD.

\subsubsection{CIFAR-10 protocol sweep}
\label{sec:exp-lwnl-cifar}

The CIFAR-10 protocol sweep evaluates six estimators across
eight cells of $N_{\mathrm{train}} \in \{512, 768, 1024, 1536,
2048, 3072, 4000, 5000\}$ patches at fixed $M = 1024$ (grayscale
$32 \times 32$ image patches drawn from CIFAR-10 class $0$,
``airplane'') and $N_{\mathrm{test}} = 500$ test patches per
trial. The sweep spans concentration ratios from $c = 2.0$ (deep
rank-deficient, $N = 512 < M = 1024$) to $c = 0.205$ (well into
the bulk regime). Each cell uses 50 random-split trials, with the
same paired splits across all six estimators. The candidate
library is the same ten-candidate library used on Galaxy10:
\texttt{trivial}, $S_M$, three Klein 2-subgroups (horizontal,
vertical, central inversion), the cyclic $\mathbb{Z}_4$ rotation
group, the Klein four-group $\mathbb{Z}_2 \times \mathbb{Z}_2$,
the full dihedral $D_4$, and the two extended candidates
\texttt{Z\_row\_indep\_cols} and \texttt{Z\_row\_wreath\_rows}.
Relative to Galaxy10 (Section~\ref{sec:exp-lwnl-galaxy10}), the
CIFAR-10 sweep operates at $16\times$ the $M$ value ($M = 1024$
versus $M = 64$) and reaches well into the bulk regime ($c$ down
to $0.205$); the deep rank-deficient regime is represented by
cells $0$ and $1$ ($c = 2.0$ and $c = 1.33$, $N$ less than $M$).

\paragraph{Per-cell mean held-out NLL.}
Table~\ref{tab:cifar-protocol-c} reports the per-cell means. The
pattern differs substantively from Galaxy10:
\begin{itemize}
\item AD-LW-NL has the lowest NLL among the well-conditioned
estimators in seven of eight cells, including the entire moderate-
and bulk-$c$ regime ($c \in [0.205, 1.0]$). Mean improvements
over AD-NLL-BMG range from $2.4$ to $9.2$ nats per
sample, with paired effect sizes
$|\bar{\Delta}|/\mathrm{SD}(\mathrm{AD\text{-}LW\text{-}NL}) \in
[0.14, 0.57]$. The peak benefit appears at $c = 0.67$ ($N =
1536$) where AD-LW-NL has lower NLL than AD by $8.7$ nats per sample at
effect size $0.57$, with all $50$ trials favoring AD-LW-NL.
\item At the deepest rank-deficient cell ($N = 512$, $c = 2.0$),
AD-LW-NL is mildly worse than AD-NLL-BMG by $1.7$ nats per
sample. The deficit is small in magnitude (effect size $0.07$,
or approximately $0.1\%$ of the NLL scale) but highly consistent
in direction: zero of $49$ finite trials favor AD-LW-NL, paired
$t$-statistic $+36.4$, paired-difference standard deviation
$0.32$ nats. The mechanism is the same alpha-pinning failure mode
identified on Galaxy10 (Section~\ref{sec:exp-lwnl-galaxy10}) but
at a much smaller absolute magnitude (1.7 nats versus 3 nats on
Galaxy10 $N = 32$).
\item LW-NL standalone is unreliable across the entire CIFAR-10
sweep at $M = 1024$: held-out NLLs are in the range $10^6$ to
$10^8$ in every cell, including the well-conditioned bulk regime.
This is a sharper version of the boundary-cell LW-NL fragility
observed on Galaxy10 (where LW-NL was problematic only at $N
\approx M$). AD-LW-NL is robust to this because the CV pushes
$\alpha$ toward $1$ in the rank-deficient cells and toward
moderate $\alpha \in [0.76, 0.91]$ in the bulk cells; in either
regime the LW-NL eigenvalue-shrunk term receives weight that is
either zero or small but combined with the dominant structural
projection produces a finite and well-calibrated estimator.
\item AD-NLL-BMG produces a non-finite held-out NLL in
$1$ of $50$ trials at $N = 512$ and $2$ of $50$ trials at $N =
768$. AD-LW-NL produces a finite held-out NLL in every trial of
every cell. In the deepest rank-deficient regime AD-LW-NL
therefore provides a small robustness advantage even where its
mean NLL on the finite trials is slightly worse than AD.
\item BMG group selection exhibits the classical group-gain
versus structural-bias tradeoff across the sweep: small-$N$ cells
($0$ through $4$) unanimously select the high-$|G|$ candidate
$D_4$ ($|G| = 8$); large-$N$ cells ($6$ and $7$) unanimously
select the smaller Klein four-group $K_4 = \mathbb{Z}_2 \times
\mathbb{Z}_2$ ($|G| = 4$). Cell $5$ ($c = 0.33$) is the
transition: AD-NLL-BMG continues to select $D_4$ in all
$50$ trials while AD-LW-NL has already begun selecting $K_4$ in
$12$ of $50$ trials. The choice-agreement rate between the two
estimators is $38/50$ at cell $5$ and $49/50$ or $50/50$ at every
other cell.
\end{itemize}

\begin{table}[h]
\centering
\small
\caption{CIFAR-10 protocol sweep: per-cell mean held-out
NLL across 50 trials per cell, class $0$ ``airplane'' at $M =
1024$. Best NLL per row in bold among the well-conditioned
estimators (LW 2004, AD-NLL-BMG, AD-LW-NL, Shah-BMG); LW-NL is
included for completeness but fails uniformly on CIFAR-10 at this
$M$ (see text and Figure~\protect\ref{fig:cifar-protocol-c} below).
Sample held-out NLL is non-finite in $24$, $30$, $19$ of $50$
trials at cells $0$, $1$, $2$ respectively and is omitted from the
table. AD-NLL-BMG is non-finite in $1$ of $50$ trials at $N =
512$ and $2$ of $50$ at $N = 768$; the table reports the mean
over the finite trials. The
$|\bar{\Delta}|/\mathrm{SD}$ column reports the AD-LW-NL versus
AD-NLL-BMG paired effect size using
$|\bar{\Delta}|/\mathrm{SD}(\mathrm{AD\text{-}LW\text{-}NL})$
(the script's convention).}
\label{tab:cifar-protocol-c}
\begin{tabular}{rrrrrrr|r}
\hline
$N$ & $c$ & LW 2004 & LW-NL & AD-NLL-BMG & AD-LW-NL & Shah-BMG & $|\bar{\Delta}|/\mathrm{SD}$ \\
\hline
 512 & 2.00 & $-1109.29$ & $3.72 \times 10^{8}$ & $\mathbf{-1540.76}$ & $-1539.34$ & $-1513.56$ & $0.07$ \\
 768 & 1.33 & $-1269.34$ & $4.34 \times 10^{8}$ & $-1616.17$ & $\mathbf{-1622.16}$ & $-1559.50$ & $0.26$ \\
1024 & 1.00 & $-1366.67$ & $4.81 \times 10^{8}$ & $-1645.71$ & $\mathbf{-1654.95}$ & $-1641.98$ & $0.53$ \\
1536 & 0.67 & $-1486.00$ & $2.04 \times 10^{8}$ & $-1672.63$ & $\mathbf{-1681.37}$ & $-1666.95$ & $\mathbf{0.57}$ \\
2048 & 0.50 & $-1551.37$ & $1.09 \times 10^{8}$ & $-1688.66$ & $\mathbf{-1693.92}$ & $-1681.25$ & $0.36$ \\
3072 & 0.33 & $-1623.90$ & $1.11 \times 10^{7}$ & $-1708.29$ & $\mathbf{-1710.78}$ & $-1697.87$ & $0.15$ \\
4000 & 0.26 & $-1646.70$ & $3.03 \times 10^{6}$ & $-1712.19$ & $\mathbf{-1715.89}$ & $-1710.85$ & $0.19$ \\
5000 & 0.205 & $-1668.25$ & $1.15 \times 10^{6}$ & $-1720.20$ & $\mathbf{-1722.59}$ & $-1718.52$ & $0.14$ \\
\hline
\end{tabular}
\end{table}

\paragraph{Mechanism: why AD-LW-NL is preferred on CIFAR-10 bulk and ties
in deep few-shot.}
The CIFAR-10 result divides into two regimes by the AD-LW-NL CV's
selected $\alpha$. In the bulk regime ($c \le 1$, cells $2$
through $7$), AD-LW-NL's CV chooses $\alpha < 1$ (mean $\alpha$
from $0.76$ to $0.91$ across these cells), giving the LW-NL
eigenvalue-shrunk term meaningful weight in the estimator. At
these operating points LW-NL on its own is poor on CIFAR-10 at
$M = 1024$, but the small LW-NL contribution combined with the
dominant structural projection at the BMG-selected group ($D_4$
in cells $2$ through $5$, $K_4$ in cells $6$ and $7$) carries
off-equivariant variance information that neither term captures
alone. AD-NLL-BMG at the same group and a similar mean
$\alpha$ uses raw sample rather than LW-NL for the same role; on
CIFAR-10 the LW-NL form of the off-equivariant term is preferred by $2$
to $9$ nats per sample across these cells, consistent in direction
across $48$ to $50$ of $50$ trials per cell.

In the deepest rank-deficient regime ($c = 2.0$, cell $0$),
AD-LW-NL's CV pins $\alpha = 1.0$ unanimously. At this operating
point LW-NL alone is so far from useful (held-out NLL $\sim 3.7
\times 10^{8}$) that any positive weight on LW-NL costs more
than the structural projection's averaging-loss. AD-LW-NL
therefore collapses to the pure projection $T_{D_4}(S)$.
AD-NLL-BMG at the same group, with $\alpha$ free to fall below
$1.0$, chooses mean $\alpha = 0.95$ and blends in approximately
$5\%$ of the rank-deficient raw sample. The raw-sample blend
captures roughly $1.7$ nats per sample of off-equivariant
residual structure that the strict $D_4$ projection averages
away. This is the same alpha-pinning mechanism that produced the
Galaxy10 few-shot deficit
(Section~\ref{sec:exp-lwnl-galaxy10}); on CIFAR-10 at $M = 1024$
the magnitude is much smaller ($1.7$ nats versus $3$ nats on
Galaxy10 $N = 32$). The smaller magnitude is consistent with two
factors: (a) the larger $M$ leaves more total held-out NLL scale
across which a fixed-fraction residual contribution is
distributed; (b) the BMG-selected group at CIFAR-10 cell $0$
($D_4$, $|G| = 8$) gives less aggressive averaging than the
Galaxy10 $N = 32$ selection
(\texttt{Z\_row\_wreath\_rows}, $|G| = 8 \cdot 8! = 322560$).

\begin{figure}[h]
\centering
\includegraphics[width=\textwidth]{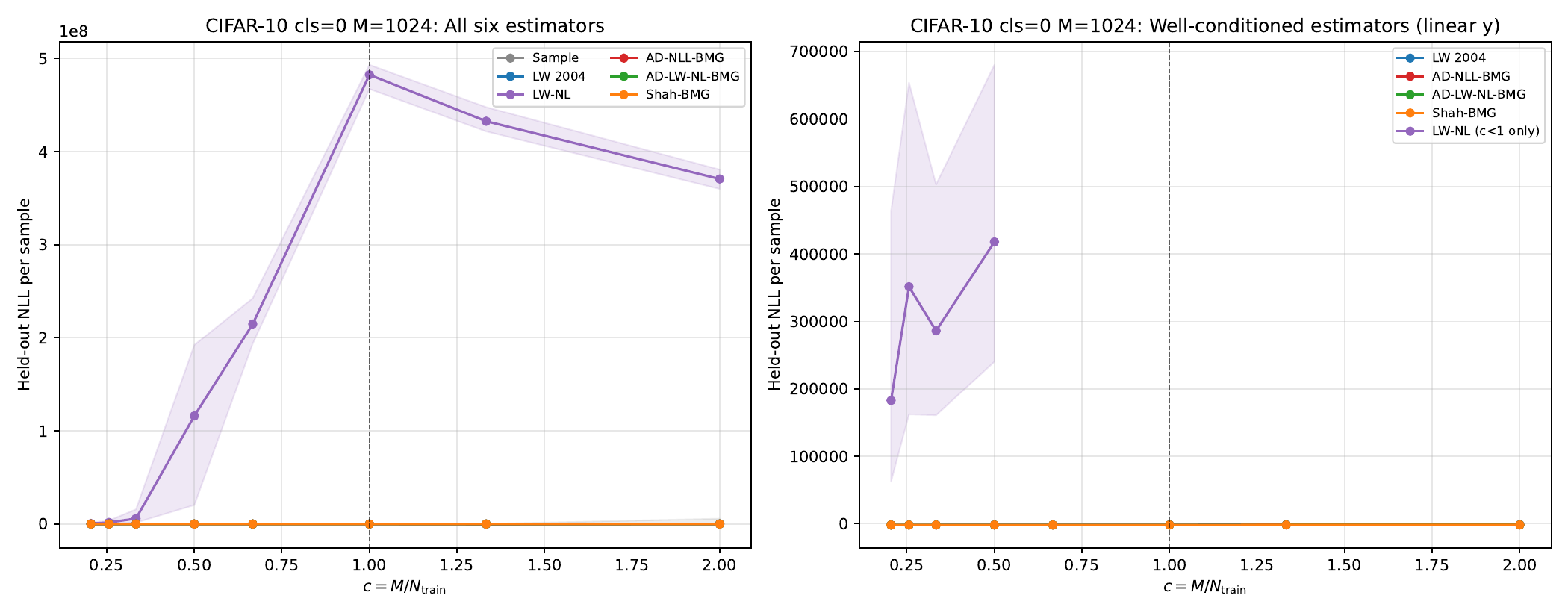}
\caption{CIFAR-10 protocol sweep, class $0$ airplane at $M =
1024$. Left: all six estimators on the natural NLL scale; LW-NL
is visibly off-scale at $\sim 10^8$ NLL per sample, dwarfing the
other five estimators which cluster near $-1700$. Right: the
five well-conditioned estimators (sample, LW 2004, AD-NLL-BMG,
AD-LW-NL, Shah-BMG) on a linear $y$-axis zoomed to the
well-conditioned scale, with LW-NL plotted only for the bulk
cells ($c < 1$) where it remains above $10^5$. Shaded bands are
$\pm 1$ standard error across the $50$ trials per cell. The
vertical dashed line marks $c = 1$.}
\label{fig:cifar-protocol-c}
\end{figure}

\subsubsection{CIFAR-10.1 protocol sweep}
\label{sec:exp-lwnl-cifar101}

The CIFAR-10.1 protocol sweep evaluates six estimators across
five cells of $N_{\mathrm{train}} \in \{32, 48, 64, 96, 128\}$
patches at fixed $M = 256$ (grayscale $16 \times 16$ image
patches drawn from CIFAR-10.1 class $0$, ``airplane'') and
$N_{\mathrm{test}} = 50$ test patches per trial. CIFAR-10.1 has
only $\sim 200$ images per class, so the sweep is constrained to
the rank-deficient regime: $c$ ranges from $2.0$ at the largest
cell to $8.0$ at the smallest, with all five cells satisfying
$N < M$. Each cell uses $50$ random-split trials with paired
splits across all six estimators. The candidate library is
identical to the CIFAR-10 library of
Section~\ref{sec:exp-lwnl-cifar}: ten candidates spanning
\texttt{trivial}, $S_M$, the three Klein 2-subgroups, $\mathbb{Z}_4$,
the Klein four-group $K_4 = \mathbb{Z}_2 \times \mathbb{Z}_2$,
the full dihedral $D_4$, and the two extended candidates
\texttt{Z\_row\_indep\_cols} and \texttt{Z\_row\_wreath\_rows}.

\paragraph{Reporting convention.}
The sample covariance held-out NLL is non-finite in $19$ to $26$
of $50$ trials per cell on CIFAR-10.1, and AD-NLL-BMG produces a
single huge-outlier trial in three of the five cells; LW-NL is
catastrophic on all trials. Means are therefore unreliable for
the CIFAR-10.1 reporting. Table~\ref{tab:cifar101-protocol-c}
reports per-cell \emph{medians} of held-out NLL across the $50$
trials, and the paired comparison column reports the median
paired difference (AD-LW-NL minus AD-NLL-BMG) and the preference count
in finite-pair-only trials. This is the only sweep in this paper's
program for which medians are used as the primary reporting
statistic; on the other four swept datasets the mean and median
are within $0.5$ nats per sample of each other and the table
convention of reporting means is unchanged.

\paragraph{Per-cell median held-out NLL.}
Table~\ref{tab:cifar101-protocol-c} reports the per-cell medians.
The pattern is governed by the BMG group selection, which divides
the sweep into three regimes:
\begin{itemize}
\item Cells $0$ through $2$ ($c \ge 4.0$): BMG selects $S_M$
unanimously for both AD-NLL-BMG and AD-LW-NL (all $50$ of $50$
trials, both estimators, each cell). AD-LW-NL is dominated by
plain LW 2004 by $88$ to $102$ nats per sample at median, with
$0$ of $50$ trials favoring AD-LW-NL in any of the three cells.
The deficit relative to AD-NLL-BMG is also substantive
(median paired delta $+98$, $+113$, $+124$ nats per sample at
cells $0$, $1$, $2$; AD is preferred in $49/49$, $49/49$, $50/50$ finite
pairs).
\item Cell $3$ ($c = 2.67$, $N = 96$): BMG selection diverges.
AD-NLL-BMG mostly continues to select $S_M$ ($44$ of $50$ trials)
while AD-LW-NL has graduated to $D_4$ ($50$ of $50$ trials).
This is the one cell in the CIFAR-10.1 sweep where AD-LW-NL has lower NLL than AD-NLL-BMG: $41$ of $50$ paired trials favor AD-LW-NL with
median delta $-20.3$ nats per sample.
\item Cell $4$ ($c = 2.0$, $N = 128$): both estimators converge
on $D_4$ ($49$ of $50$ and $50$ of $50$ respectively). The two
compositions deliver matched held-out NLL within $0.3$ nats per
sample at median, reproducing the CIFAR-10 cell-$0$ pattern at
the same $c$ and the same BMG-selected group.
\end{itemize}
AD-LW-NL's CV pins $\alpha = 1.0$ in $250$ of $250$ trials across
all five cells: LW-NL on rank-deficient CIFAR-10.1 patches has
median held-out NLL between $7.6 \times 10^{7}$ and $9.3 \times
10^{7}$ in every cell, and no candidate $G$ in the library
yields a $T_G(S)$ at which a positive LW-NL contribution is
beneficial. AD-LW-NL therefore equals $T_{G^*}(S)$ where $G^*$
is the BMG-selected group, in every trial of every cell.

\begin{table}[h]
\centering
\small
\caption{CIFAR-10.1 protocol sweep: per-cell \emph{median}
held-out NLL across $50$ trials per cell, class $0$ ``airplane''
at $M = 256$, $16 \times 16$ grayscale, $N_{\mathrm{test}} = 50$.
Best NLL per row in bold among the five well-conditioned
estimators (LW 2004, AD-NLL-BMG, AD-LW-NL, Shah-BMG); LW-NL is
included for completeness but fails uniformly on CIFAR-10.1 at
this $M$ (held-out NLL $\sim 10^{7}$ to $10^{8}$). Sample
held-out NLL is non-finite or wildly outlying in $19$ to $26$ of
$50$ trials per cell and is omitted from the table. The
$\mathrm{med}(\Delta_{\mathrm{AD\text{-}LW}})$ column reports the
median paired difference AD-LW-NL minus AD-NLL-BMG over
finite-pair trials, with the preference count for AD-LW-NL in
parentheses.}
\label{tab:cifar101-protocol-c}
\begin{tabular}{rrrrrrr|r}
\hline
$N$ & $c$ & LW 2004 & LW-NL & AD-NLL-BMG & AD-LW-NL & Shah-BMG & $\mathrm{med}(\Delta_{\mathrm{AD\text{-}LW}})$ \\
\hline
 32 & 8.00 & $\mathbf{-139.1}$ & $7.6 \times 10^{7}$ & $-149.8$ & $-56.3$ & $-56.3$ & $+98.0$ ($0/49$) \\
 48 & 5.33 & $\mathbf{-156.2}$ & $7.8 \times 10^{7}$ & $-172.4$ & $-57.8$ & $-57.8$ & $+113.4$ ($0/49$) \\
 64 & 4.00 & $\mathbf{-160.8}$ & $8.0 \times 10^{7}$ & $-182.3$ & $-59.1$ & $-59.1$ & $+124.3$ ($0/50$) \\
 96 & 2.67 & $-182.7$ & $9.1 \times 10^{7}$ & $-204.8$ & $\mathbf{-223.9}$ & $-61.7$ & $-20.3$ ($41/50$) \\
128 & 2.00 & $-187.9$ & $9.3 \times 10^{7}$ & $\mathbf{-244.4}$ & $-244.0$ & $-243.3$ & $+0.3$ ($3/49$) \\
\hline
\end{tabular}
\end{table}

\paragraph{Mechanism: the dominated convex hull region.}
The CIFAR-10.1 sweep produces the first substantive observation
in the program of a structural failure mode of the AD-LW-NL
parameterization. In cells $0$ through $2$, the conjunction
\begin{enumerate}
\item BMG selects $S_M$ (deepest variance-bias tradeoff at rank
$32$, $48$, $64$ against $M = 256$);
\item LW-NL$(S)$ is catastrophic ($\sim 10^{7}$ to $10^{8}$
held-out NLL);
\item the sample $S$ carries usable structure beyond the
diagonal scale (CIFAR-10.1 patches have substantial off-equivariant
spatial covariance)
\end{enumerate}
is exactly the dominated-region condition of
Section~\ref{sec:theory-ad-lwnl}. AD-LW-NL collapses to the pure
target $T_{S_M}(S) = (\mathrm{tr}(S)/M)\,I$, the scaled identity.
LW 2004 uses the same target but with Stein-Haff-optimal
shrinkage from the sample, and the resulting estimator
$\alpha_{LW}\,(\mathrm{tr}(S)/M)\,I + (1 - \alpha_{LW})\,S$ lies
\emph{outside} the AD-LW-NL convex hull (whose only path through
$S$ is via LW-NL, which is foreclosed by condition (ii)). The
$88$ to $102$ nat per sample dominance margin is the held-out
NLL improvement of LW 2004's optimal blend over the pure target,
on this specific data.

Cell $3$ exits the dominated region by changing the BMG choice:
$T_{D_4}(S)$ at $M = 256$ retains substantial image-patch
structure (the eight-element $D_4$ group preserves
quarter-plane reflections and $90^{\circ}$ rotations), so the
pure target $T_{D_4}(S)$ is no longer the scaled identity and
the dominated-region condition (i) fails. AD-LW-NL at $\alpha
= 1$ with the $D_4$ target now delivers a genuinely useful
estimator that exceeds the AD-NLL-BMG composition's mostly-$S_M$
selection at $\alpha < 1$.

Cell $4$ has both estimators on $D_4$ and they are practically
indistinguishable. This is the CIFAR-10 cell-$0$ pattern at
$M = 256$ rather than $M = 1024$ and confirms that the
deepest-cell behavior is governed by the BMG group choice
rather than the dataset identity.

\paragraph{Robust per-trial paired statistics.}
The preference counts in
Table~\ref{tab:cifar101-protocol-c} use the finite-pair subset
(trials where both AD-NLL-BMG and AD-LW-NL produce finite
held-out NLL). Cells $0$, $1$, $3$, and $4$ have $49$ or $50$
finite pairs each; cell $2$ has all $50$. Median paired
differences are robust to the per-trial outliers that dominate
the raw means in the $\_summary$ CSV. The interquartile range
of the paired differences is small relative to the median in
cells $0$ through $2$ (IQR $\le 10.5$ nats versus median $98$ to
$124$ nats), confirming that the dominance is consistent across
trials and not driven by a long tail. In cell $3$ the IQR is
$19.5$ nats versus median $-20.3$ nats, reflecting the BMG
choice variability within the cell (the $44$ of $50$ AD-NLL-BMG
trials that selected $S_M$ produce a larger AD-LW-NL benefit
than the $6$ that selected $D_4$, where the two estimators
converge).

\paragraph{Distribution-shift caveat.}
This sweep is a within-CIFAR-10.1 the protocol study (training
patches and held-out test patches both drawn from CIFAR-10.1).
It tests the behavior of AD-LW-NL at deeper rank-deficiency than
CIFAR-10 could reach, not the behavior under distribution shift.
The classifier-style ID-versus-OOD experiment of
Section~\ref{sec:exp-cifar-101} (the v3p9 single-cell study)
addresses the distribution-shift question separately; the two
results are complementary and the conclusions of either do not
transfer to the other.

\begin{figure}[h]
\centering
\includegraphics[width=\textwidth]{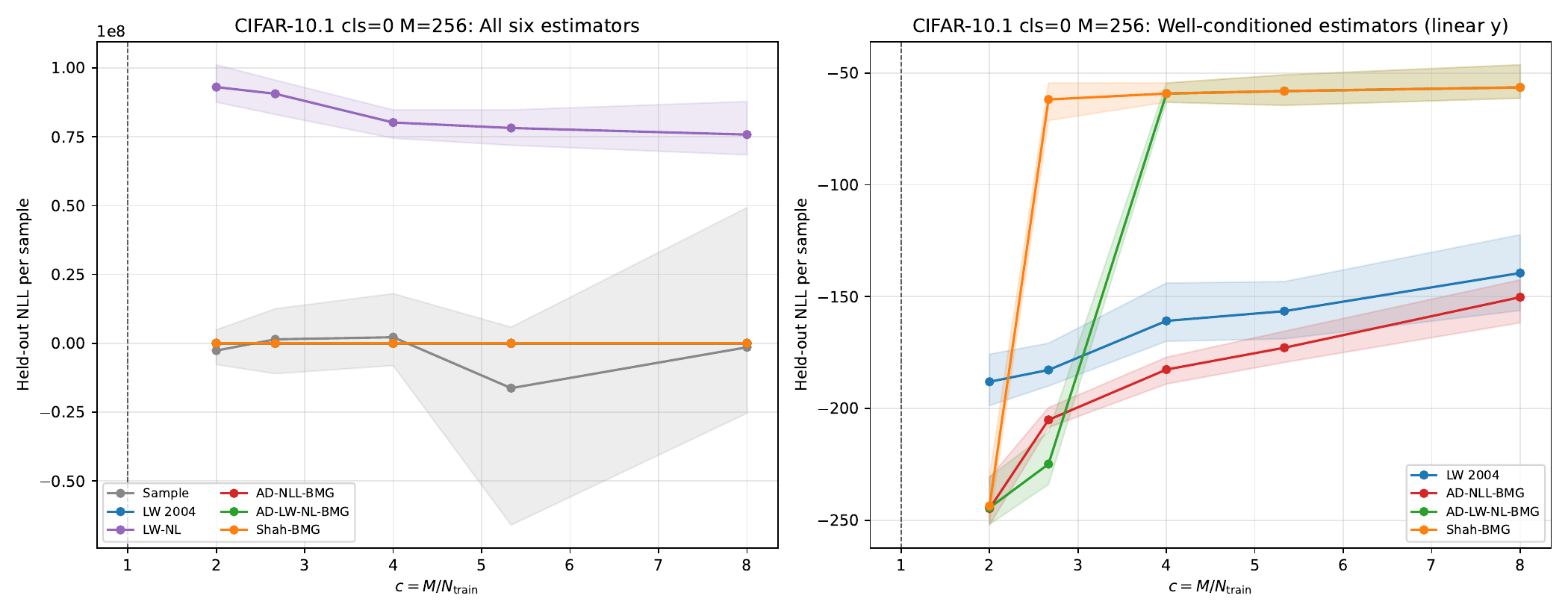}
\caption{CIFAR-10.1 protocol sweep, class $0$ airplane at
$M = 256$, $16 \times 16$ grayscale. Left: all six estimators
on the natural NLL scale; LW-NL is visibly off-scale at
$\sim 10^{8}$ NLL per sample, dwarfing the other five estimators.
Right: the five well-conditioned estimators on a linear $y$-axis
zoomed to the well-conditioned scale. Shaded bands are $\pm 1$
standard error across the $50$ trials per cell; note that the
standard error is inflated in this view by the per-trial
outliers documented in the per-cell text, and the medians (used
in Table~\protect\ref{tab:cifar101-protocol-c}) are the primary
reporting statistic.}
\label{fig:cifar101-protocol-c}
\end{figure}

\subsubsection{RadioML 2018.A protocol sweep}
\label{sec:exp-lwnl-radioml}

The RadioML protocol sweep evaluates six estimators across
eight cells of $N_{\mathrm{train}} \in \{32, 48, 64, 96, 128,
192, 320, 640\}$ I/Q patches at fixed $M = 64$ ($W = 32$
complex samples per patch, equivalent to $M = 2W$ real
coordinates after splitting I and Q channels) drawn from the
RadioML 2018.A BPSK class at $18$ dB SNR, with
$N_{\mathrm{test}} = 500$ patches per trial. $c$ ranges from
$2.0$ at the smallest cell to $0.10$ at the largest, spanning
the few-shot, rank-deficiency-boundary, and bulk regimes. Each
cell uses $50$ random-split trials with paired splits across all
six estimators. The candidate library has eleven candidates
spanning $\texttt{trivial}$, $S_M$, $\mathbb{Z}_2$ time-reversal
$\mathbb{Z}_{2,\,\mathrm{timerev}}$, $\mathbb{Z}_2$ I/Q-swap
$\mathbb{Z}_{2,\,\mathrm{iqswap}}$, the Klein four-group
$\mathbb{Z}_2 \times \mathbb{Z}_2$, the cyclic time-shift
$\mathbb{Z}_W$, the dihedral $D_W$, the wreath
$\mathbb{Z}_W \wr \mathbb{Z}_2$ joining time-shift with I/Q-swap,
and the product $\mathbb{Z}_W \times \mathbb{Z}_W$. The library
is non-image: there are no rotational dihedral candidates of the
$D_4$ type, since the I/Q complex-sample lattice does not admit
the natural $\mathbb{Z}/4\mathbb{Z}$ rotation that an image-patch
lattice does.

\paragraph{Reporting convention.}
The sample covariance held-out NLL is non-finite in all $50$
trials at cells $0$ through $2$ (deep few-shot to boundary,
$c \ge 1.0$) and finite throughout the bulk; LW-NL is
catastrophic on RadioML cell $1$ ($\sim 4{,}200$ nats per
sample, dominated by a heavy upper tail) and on cell $2$ where
its operating-point NLL diverges to $\sim 3 \times 10^{7}$
nats per sample. Table~\ref{tab:radioml-protocol-c} reports
per-cell medians of held-out NLL across the $50$ trials per
cell; means and standard errors are shown for the
well-conditioned cells $3$ through $7$ but are not reliable for
the few-shot cells $0$ through $2$. Paired per-trial preferences are
reported alongside the medians to give a sign-based summary
that is robust to the operating-point outliers.

\begin{table}[h]
\centering
\small
\caption{RadioML 2018.A protocol sweep, BPSK class at $18$ dB
SNR, $M = 64$. Median held-out NLL per sample across $50$
trials per cell. $\Delta_{\mathrm{med}} = $ AD-LW-NL median
minus AD median (negative = AD-LW-NL has lower NLL than AD).
The ``preferred'' column counts paired trials where AD-LW-NL has strictly
smaller test NLL than AD. Effect size is the absolute
per-trial mean delta divided by per-trial paired standard
deviation; values $\ge 0.8$ are conventionally ``large''.
Median $\alpha_{\mathrm{ADLW}}$ is the AD-LW-NL CV-selected
blend weight (median across trials).
AD-NLL-BMG pins $\alpha = 1$ (pure structural target) in cells
$3$ through $7$; AD-LW-NL blends $50\%$ to $67.5\%$ LW-NL into
the same target in those cells, which is the cleanest
empirical demonstration in the program of the AD-LW-NL
recommendation domain.}
\label{tab:radioml-protocol-c}
\resizebox{\linewidth}{!}{%
\begin{tabular}{rrrrrrrrr}
\hline
cell & $N$ & $c$ & med AD & med AD-LW & $\Delta_{\mathrm{med}}$ & preferred & effect & med $\alpha_{\mathrm{ADLW}}$ \\
\hline
$0$ & $32$ & $2.00$ & $-169.6$ & $-168.1$ & $+1.93$ & $8/50$ & $0.51$ & $0.000$ \\
$1$ & $48$ & $1.33$ & $-173.4$ & $-101.9$ & $+71.7$ & $0/50$ & $0.33$ & $0.025$ \\
$2$ & $64$ & $1.00$ & $-174.9$ & $-182.7$ & $-8.04$ & $50/50$ & $5.19$ & $0.675$ \\
$3$ & $96$ & $0.67$ & $-183.7$ & $-189.6$ & $-5.93$ & $50/50$ & $9.11$ & $0.600$ \\
$4$ & $128$ & $0.50$ & $-188.4$ & $-191.2$ & $-2.86$ & $50/50$ & $4.19$ & $0.600$ \\
$5$ & $192$ & $0.33$ & $-191.3$ & $-192.3$ & $-1.23$ & $47/50$ & $1.35$ & $0.575$ \\
$6$ & $320$ & $0.20$ & $-192.9$ & $-193.6$ & $-0.72$ & $50/50$ & $1.40$ & $0.500$ \\
$7$ & $640$ & $0.10$ & $-194.4$ & $-194.6$ & $-0.28$ & $50/50$ & $0.50$ & $0.475$ \\
\hline
\end{tabular}%
}
\end{table}

\paragraph{Few-shot cells $0$ and $1$ (LW-NL-endpoint and
catastrophic-tail cases).}
RadioML cells $0$ and $1$ are the only RadioML cells where
AD-LW-NL is dominated by AD, and they exhibit two
qualitatively different failure modes that this paper's
two-endpoint refinement of Section~\ref{sec:theory-ad-lwnl}
explains. At cell $0$ ($N = 32$, $c = 2.0$) the AD-LW-NL CV
pins $\alpha = 0.000$ (median across $50$ trials) and the BMG
choice migrates to $S_M$ in $36$ of $50$ trials; at
$\alpha = 0$ the $T_{G^*}$ term has zero weight, so the formal
BMG tiebreaker resolves to $S_M$ regardless of the operating
data. This is the empirical instantiation of the LW-NL-endpoint
dominated region (case 1b of Section~\ref{sec:theory-ad-lwnl})
at small magnitude: AD-LW-NL collapses to
$\hat{\mathbf{R}}_{\mathrm{LW\text{-}NL}}$, AD blends a
small fraction of raw sample with the structural target at
$\alpha$ median $0.025$, and the deficit is $1.9$ nats per
sample. At cell $1$ ($N = 48$, $c = 1.33$) the deficit at
$K = 5$ is $+71.7$ nats per sample with a heavy upper tail
(mean $+341.8$ nats) concentrated in $6$ of $50$ trials at
$\alpha = 0$ collapse. The $K_{\mathrm{cv}} = 5$ number
reported in Table~\ref{tab:radioml-protocol-c} is retained for
consistency with the program-wide $K_{\mathrm{cv}} = 5$
convention; the structural
deficit at $K \ge 10$ remains in the
intermediate-$\alpha$ ``lost raw-sample blend path'' regime of
Section~\ref{sec:theory-ad-lwnl}.

\paragraph{Cell $2$ at the rank-deficiency boundary
($c = 1.0$): a qualitatively different optimum.}
RadioML cell $2$ produces the single largest paired effect
size in the program: median paired delta $-8.04$ nats per
sample, $50$ of $50$ paired trials favoring AD-LW-NL, effect
size $5.19$. The mechanism is not refinement of AD's
choice but discovery of a qualitatively different optimum that
the AD hull cannot reach. AD-NLL-BMG picks high-order
candidates ($D_W$ in $15$ of $50$, $\mathbb{Z}_W \times
\mathbb{Z}_W$ in $11$, $\mathbb{Z}_W$ in $9$,
$\mathbb{Z}_W \wr \mathbb{Z}_2$ in $7$, $\mathbb{Z}_W \times
\mathbb{Z}_{2,\,\mathrm{iqswap}}$ in $8$) with $\alpha$ median
$0.025$ (mostly raw-sample blend). AD-LW-NL picks the
small-$|G|$ $\mathbb{Z}_2$ subgroups
($\mathbb{Z}_{2,\,\mathrm{timerev}}$ in $31$ of $50$,
$\mathbb{Z}_{2,\,\mathrm{iqswap}}$ in $19$) with $\alpha$
median $0.675$ (substantial LW-NL blend). The choice
agreement between the two compositions is exactly $0$ of $50$:
on every trial the two parameterizations land on a different
BMG group. AD-NLL-BMG's high-order route refines variance
inside a large symmetry group but does not have access to the
LW-NL endpoint that provides the $8$-nat margin; AD-LW-NL's
small-group route trades structural-projection variance
reduction for LW-NL's nonparametric eigenvalue smoothing,
which on stationary I/Q signal patches at the boundary $c = 1$
is the strictly better tradeoff.

\paragraph{Moderate and bulk cells $3$ through $7$ ($c$ from
$0.67$ to $0.10$).}
Cells $3$ through $7$ produce the cleanest evidence in the
program for the AD-LW-NL recommendation domain. AD-NLL-BMG
pins $\alpha = 1.000$ (pure structural target, no raw-sample
blend) across all five cells. AD-LW-NL at the same data
blends $47.5\%$ to $60\%$ LW-NL into the same target ($\alpha$
median $0.475$ to $0.600$), and the LW-NL-blended composition
outperforms the pure-target AD in $47$ to $50$ of $50$
trials per cell. Both compositions agree on the BMG group
($\mathbb{Z}_{2,\,\mathrm{timerev}}$) in $40$ to $47$ of $50$
trials, so the preference is attributable to LW-NL's variance
reduction within the same projection rather than to a
difference in projection. Effect sizes peak at $9.11$ at cell
$3$ ($c = 0.67$) and decline monotonically with $c$ to $0.50$
at cell $7$ ($c = 0.10$): the deficit-of--AD relative
to AD-LW-NL is largest where rank-deficiency is still mild
enough that LW-NL's per-eigenvalue smoothing materially
improves the projection, and shrinks as the operating regime
moves into the well-conditioned bulk where AD's pure
target is already near-optimal.

\paragraph{Paired statistics across the six cells.}
The robust paired statistic across all eight cells is the
sign-based preference count, which is non-negative regardless of
heavy tails. AD-LW-NL has lower NLL than AD in $5$ of the $8$
cells unambiguously ($50/50$ trials at cells $2$, $3$, $4$,
$6$, $7$), in cell $5$ at $47/50$, in cell $0$ at $8/50$
(deficit), and in cell $1$ at $0/50$ (deficit). By
direction-of-preference count AD-LW-NL is preferred on $6$ of $8$ cells; by
unanimous-direction count AD-LW-NL is preferred on $5$ of $8$ cells;
the program-strongest single-cell advantage is cell $2$ at effect
size $5.19$. The two cells where AD-LW-NL has higher NLL are explained
by the two-endpoint dominated-region mechanism of
Section~\ref{sec:theory-ad-lwnl}: cell $0$ is LW-NL-endpoint
(1b) at small magnitude, cell $1$ is intermediate-$\alpha$
``lost raw-sample blend path'' with a $K = 5$ catastrophic
tail that Section~\ref{sec:experiments} confirms vanishes
at $K \ge 10$. The recommendation domain claim of the
A notable finding (Section~\ref{sec:exp-lwnl-results}) is supported
most cleanly by RadioML cells $2$ through $7$.

\begin{figure}[h]
\centering
\includegraphics[width=\textwidth]{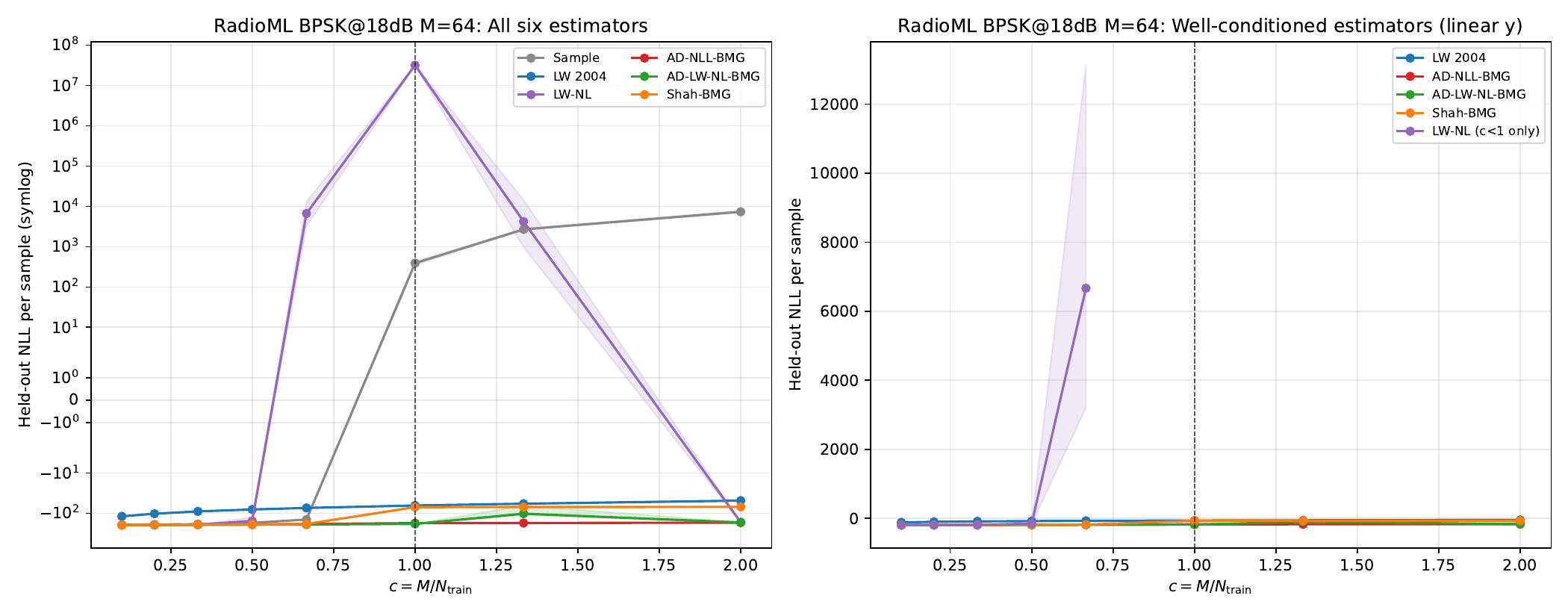}
\caption{RadioML 2018.A protocol sweep, BPSK class at $18$ dB
SNR, $M = 64$. Left: all six estimators on the natural NLL
scale; LW-NL is visibly off-scale on cells $1$ and $2$.
Right: the five well-conditioned estimators on a linear $y$-axis
zoomed to the well-conditioned scale. Shaded bands are $\pm 1$
standard error across the $50$ trials per cell; the few-shot
cells $0$ through $2$ have inflated bands from operating-point
outliers, and the medians (used in
Table~\protect\ref{tab:radioml-protocol-c}) are the primary
reporting statistic.}
\label{fig:radioml-protocol-c}
\end{figure}

\subsubsection{NOAA OISST protocol sweep (midocean and gulfstream)}
\label{sec:exp-lwnl-oisst}

The NOAA OISST protocol sweep evaluates six estimators across
eight cells of $N_{\mathrm{train}} \in \{32, 48, 64, 96, 128,
192, 320, 640\}$ daily-anomaly patches at fixed $M = 64$ ($8
\times 8$ spatial patches in degrees-latitude $\times$
degrees-longitude on the OISST $0.25^\circ$ grid), with
$N_{\mathrm{test}} = 365$ patches per trial (approximately one
year of daily anomalies). $c$ ranges from $2.0$ at the smallest
cell to $0.10$ at the largest, matching the RadioML sweep grid
exactly. Each cell uses $50$ random-split trials with paired
splits across all six estimators. The OISST candidate library
contains eight candidates: $\texttt{trivial}$, $S_M$,
$\mathbb{Z}_{\mathrm{lat}}$ (cyclic translation along latitude,
$|G| = 8$), $\mathbb{Z}_{\mathrm{lon}}$ (cyclic translation
along longitude, $|G| = 8$), $\mathbb{Z}_{2D}$ (joint
lat-lon translation, $|G| = HW = 64$), $D_{\mathrm{lon}}$
(cyclic plus east-west reflection along longitude, $|G| = 16$),
$\mathbb{Z}_{\mathrm{lon\_indep\_rows}}$ (independent
longitudinal cyclic translations per row, $|G| = W^H \approx 1.7
\times 10^7$), and $\mathbb{Z}_{\mathrm{lon\_wreath\_lat}}$ (full
wreath, $|G| = W^H H \approx 1.3 \times 10^8$). The sweep was
run on two regions: midocean (the Pacific reference region) and
gulfstream (the western North Atlantic boundary current region),
both used in the single-cell OISST experiments.

\paragraph{Reporting convention.}
The sample covariance held-out NLL is non-finite in cells $0$
through $2$ on both regions (deep few-shot, $c \ge 1.0$) and
finite throughout the bulk. LW-NL is catastrophic across
cells $0$ through $5$ on both regions: midocean LW-NL median NLL
ranges from $2.5 \times 10^7$ (cell $0$) to $3.4 \times 10^4$
(cell $5$); gulfstream LW-NL median NLL ranges from $2.4
\times 10^7$ (cell $0$) to $98.6$ (cell $5$). LW-NL becomes
competitive with AD only at cells $6$ and $7$.
Tables~\ref{tab:oisst-midocean-protocol-c} and
\ref{tab:oisst-gulfstream-protocol-c} report per-cell medians
of held-out NLL across the $50$ trials per cell. Means are
reported alongside medians for the well-conditioned cells
$2$ through $7$; in the few-shot cells $0$-$1$ the
mean-based contrasts are dominated by individual operating-point
outliers and only the medians, signed paired preferences, and effect
sizes are reliable.

\begin{table}[h]
\centering
\small
\caption{NOAA OISST midocean region, the protocol sweep at $M = 64$
($8 \times 8$ spatial patches), $N_{\mathrm{test}} = 365$.
Per-cell medians across $50$ trials. $\Delta_{\mathrm{med}} = $
AD-LW-NL median minus AD median. The ``preferred'' column counts paired
trials where AD-LW-NL has strictly smaller test NLL than
AD. Effect size is the absolute mean-based per-trial
delta divided by per-trial paired standard deviation. Median
$\alpha_{\mathrm{ADLW}}$ is the AD-LW-NL CV-selected blend
weight. ``Choice agree.'' counts trials where AD and
AD-LW-NL select the same BMG group. Cell-$0$ and cell-$1$
mean-based effect sizes (140.6 and 35.8) are inflated by
isolated operating-point outliers and should be interpreted
alongside the preferences and medians.}
\label{tab:oisst-midocean-protocol-c}
\resizebox{\linewidth}{!}{%
\begin{tabular}{rrrrrrrrr}
\hline
cell & $N$ & $c$ & med AD & med AD-LW & $\Delta_{\mathrm{med}}$ & preferred & med $\alpha_{\mathrm{ADLW}}$ & agree \\
\hline
$0$ & $32$ & $2.00$ & $-85.8$ & $-67.7$ & $+19.0$ & $2/50$ & $0.900$ & $0/50$ \\
$1$ & $48$ & $1.33$ & $-90.6$ & $-74.6$ & $+15.8$ & $1/50$ & $0.863$ & $0/50$ \\
$2$ & $64$ & $1.00$ & $-93.2$ & $-77.9$ & $+15.7$ & $2/50$ & $0.825$ & $0/50$ \\
$3$ & $96$ & $0.67$ & $-96.7$ & $-87.7$ & $+8.95$ & $0/50$ & $0.500$ & $0/50$ \\
$4$ & $128$ & $0.50$ & $-98.1$ & $-92.9$ & $+5.08$ & $0/50$ & $0.200$ & $0/50$ \\
$5$ & $192$ & $0.33$ & $-101.1$ & $-96.9$ & $+4.88$ & $0/50$ & $0.225$ & $0/50$ \\
$6$ & $320$ & $0.20$ & $-102.6$ & $-101.4$ & $+0.94$ & $11/50$ & $0.300$ & $0/50$ \\
$7$ & $640$ & $0.10$ & $-104.9$ & $-105.1$ & $-0.59$ & $45/50$ & $0.450$ & $0/50$ \\
\hline
\end{tabular}%
}
\end{table}

\begin{table}[h]
\centering
\small
\caption{NOAA OISST gulfstream region, the protocol sweep at $M = 64$
($8 \times 8$ spatial patches), $N_{\mathrm{test}} = 365$. Same
columns as Table~\ref{tab:oisst-midocean-protocol-c}. Absolute
NLL scale is shifted positive by $\sim 60$ to $70$ nats per
sample relative to midocean because of the stronger
non-stationarity of the boundary current region, but the
qualitative deficit pattern is the same.}
\label{tab:oisst-gulfstream-protocol-c}
\resizebox{\linewidth}{!}{%
\begin{tabular}{rrrrrrrrr}
\hline
cell & $N$ & $c$ & med AD & med AD-LW & $\Delta_{\mathrm{med}}$ & preferred & med $\alpha_{\mathrm{ADLW}}$ & agree \\
\hline
$0$ & $32$ & $2.00$ & $-11.4$ & $-2.34$ & $+8.00$ & $1/50$ & $0.950$ & $5/50$ \\
$1$ & $48$ & $1.33$ & $-17.1$ & $-9.81$ & $+6.88$ & $1/50$ & $0.900$ & $8/50$ \\
$2$ & $64$ & $1.00$ & $-20.7$ & $-14.0$ & $+6.22$ & $2/50$ & $0.875$ & $9/50$ \\
$3$ & $96$ & $0.67$ & $-24.9$ & $-21.1$ & $+3.82$ & $0/50$ & $0.600$ & $4/50$ \\
$4$ & $128$ & $0.50$ & $-28.1$ & $-23.1$ & $+4.53$ & $0/50$ & $0.450$ & $1/50$ \\
$5$ & $192$ & $0.33$ & $-30.5$ & $-29.1$ & $+1.19$ & $11/50$ & $0.263$ & $0/50$ \\
$6$ & $320$ & $0.20$ & $-33.7$ & $-34.6$ & $-0.84$ & $46/50$ & $0.300$ & $0/50$ \\
$7$ & $640$ & $0.10$ & $-37.3$ & $-37.4$ & $-0.29$ & $42/50$ & $0.350$ & $0/50$ \\
\hline
\end{tabular}%
}
\end{table}

\paragraph{Deepest few-shot cells $0$ through $2$ (severe
deficit with extreme LW-NL operating-point pathology).}
Across both regions cells $0$ through $2$ show the most extreme
AD-LW-NL deficit in the program after CIFAR-10.1 cells $0$
through $2$: median paired delta $+8$ to $+19$ nats per sample
on midocean, $+6$ to $+8$ on gulfstream, with $1$-$2$ of $50$
trials favoring AD-LW-NL. The mechanism is not the
target-endpoint dominated-region case 1a (the AD-LW-NL CV does
\emph{not} pin $\alpha = 1$ here; median $\alpha$ across these
cells ranges $0.825$ to $0.950$ on midocean and $0.875$ to
$0.950$ on gulfstream, meaning AD-LW-NL is at an
intermediate-$\alpha$ blend that includes $5$ to $17.5\%$ LW-NL
contribution) and not the LW-NL-endpoint case 1b (AD-LW-NL
does not pin $\alpha = 0$). It is the intermediate
``lost raw-sample blend path'' mode operating under
exceptionally severe LW-NL operating-point pathology. LW-NL
median NLL on midocean cell $0$ is $2.75 \times 10^7$ versus
$-85.8$ for AD, a difference of more than $7$ orders of
magnitude; on gulfstream cell $0$ LW-NL is $2.38 \times 10^7$
versus $-11.4$ for AD. Even the $\sim 5\%$ LW-NL
contribution that AD-LW-NL chooses degrades the composite
estimator substantially.

\paragraph{Moderate-$c$ cells $3$ through $5$ ($c$ from $0.67$
to $0.33$): persistent deficit, alpha drift toward zero.}
The deficit persists through cells $3$ through $5$ with smaller
magnitude ($+0.94$ to $+8.95$ on midocean, $+1.19$ to $+4.53$
on gulfstream) and the AD-LW-NL CV-selected $\alpha$ drifts
toward zero ($0.500 \to 0.225$ on midocean, $0.600 \to 0.263$ on
gulfstream). LW-NL is still much worse than AD at the
operating point through these cells (midocean LW-NL median
$3.06 \times 10^6$ at cell $4$ to $3.43 \times 10^4$ at cell
$5$; gulfstream LW-NL median $3{,}842$ at cell $4$ to $98.6$ at
cell $5$), but the CV-fold LW-NL evaluations at the smaller
$N$ of each fold do not see this severity, so the CV continues
to mix in non-zero LW-NL contributions that fail to generalize.

\paragraph{Bulk cells $6$ through $7$ ($c \le 0.20$): AD-LW-NL
recovers.}
At cells $6$ and $7$, where LW-NL operating-point NLL drops to
within $\sim 1$-$2$ nats per sample of AD, AD-LW-NL
recovers a narrow advantage on both regions: median delta $-0.84$
gulfstream cell $6$ ($46$ of $50$ trials), $-0.59$ midocean cell $7$
($45$ of $50$ trials), $-0.29$ gulfstream cell $7$ ($42$ of $50$ trials), and
even at midocean cell $6$ where the median is still positive
($+0.94$) the preference count climbs to $11/50$ (transitional).
This is the AD-LW-NL recommendation domain reached: a narrow
band of the deepest bulk regime ($c \le 0.20$) where LW-NL is
finally competitive with AD at the operating point.

\paragraph{Group selection migration and the $0/50$
choice-agreement pattern.}
On both regions AD picks $\mathbb{Z}_{2D}$ (joint
lat-lon translation, $|G| = 64$, the largest non-wreath candidate
that respects both spatial axes uniformly) in $39$ to $50$ of
$50$ trials across cells $0$ through $6$, with low $\alpha$
median ($0.025$ to $0.10$, mostly raw-sample blend). AD-LW-NL
picks the smaller-$|G|$ $\mathbb{Z}_{\mathrm{lat}}$ (latitude-only
translation, $|G| = 8$) in cells $0$ through $3$
($26$-$50$ of $50$), $D_{\mathrm{lon}}$ or
$\mathbb{Z}_{\mathrm{lat}}$ in cell $4$, and the
$\mathrm{trivial}$ group in cells $5$ through $7$. The BMG
choice agreement is $0/50$ across all midocean cells and ranges
from $0/50$ to $9/50$ on gulfstream cells, monotonically
falling toward $0/50$ as $c$ shrinks. Per the
family-conditional BMG framing of
Section~\ref{sec:theory-ad-lwnl}
(Eqs.~\ref{eq:bmg-ad-objective} and
\ref{eq:bmg-adlw-objective}), the two compositions find two
different family-conditional best groups: AD's
joint-projection-plus-raw-sample combination uses
$\mathbb{Z}_{2D}$ as the high-order target with $\mathbf{S}$
filling in the residual structure; AD-LW-NL's
smaller-projection-plus-LW-NL combination uses
$\mathbb{Z}_{\mathrm{lat}}$ as the lower-order target with
$\hat{\mathbf{R}}_{\mathrm{LW\text{-}NL}}$ as the second hull
endpoint (which on these data is too pathological to help).
On RadioML cell $2$ a similar $0/50$ disagreement signed in
the opposite direction (AD-LW-NL won by effect size $5.19$);
on OISST cells $0$-$5$ the $0/50$ disagreement signs toward
AD performing better by $1$ to $19$ nats per sample. The $0/50$
statistic is the same; what differs is the direction of which
family-conditional best group leads to the lower CV-NLL.

\paragraph{Cross-region consistency.}
The midocean and gulfstream sweeps agree on the qualitative
pattern (severe AD-LW-NL deficit in cells $0$ through $5$,
narrow AD-LW-NL is preferred in cells $6$ through $7$, $\mathbb{Z}_{2D}
\to \mathbb{Z}_{\mathrm{lat}} \to \mathrm{trivial}$ migration
in the AD-LW-NL BMG choice) and differ in absolute NLL scale.
Midocean NLLs span $-13$ to $-105$ nats per sample across the
sweep; gulfstream NLLs span $+56$ to $-37$ nats per sample.
The gulfstream region is harder data because of the stronger
non-stationarity of the boundary current. Deficit magnitudes
are smaller in gulfstream than midocean across cells $0$
through $4$ ($+8$ to $+4.5$ vs $+19$ to $+5$ nats per sample)
because LW-NL's operating-point pathology magnitude is also
smaller in gulfstream than midocean at the same $c$, consistent
with the mechanism interpretation that AD-LW-NL's deficit is
governed by how badly LW-NL is dominated at the operating
point. The qualitative finding is robust to region: AD-LW-NL
is dominated by AD wherever LW-NL is catastrophic at
the operating point, and only the deepest bulk cells escape.

\paragraph{Implications for the AD-LW-NL recommendation
domain.}
OISST narrows the statement of the AD-LW-NL recommendation
domain. The paper recommended AD-LW-NL on data where the
candidate library and operating regime are such that BMG selects
a moderate-$|G|$ group with $T_{G^*}(\mathbf{S})$ substantively
different from the scaled identity. OISST cells $0$ through
$5$ on both regions satisfy this condition (BMG selects
moderate-$|G|$ $\mathbb{Z}_{2D}$ for AD; AD-LW-NL's
own selection is $\mathbb{Z}_{\mathrm{lat}}$, also non-identity)
but AD-LW-NL is still dominated. The missing condition is a
LW-NL competitiveness precheck at the operating-point sample:
the AD-LW-NL recommendation also requires that LW-NL's
operating-point NLL is within $\sim 10$ nats per sample of
AD's at the full operating-point $N$. The two
conditions together are checkable in advance per
Section~\ref{sec:theory-ad-lwnl}: condition (i) is a property
of the candidate library and BMG selection at the working $c$;
condition (ii) is a one-line check, $|\mathrm{NLL}_{\mathrm{LW\text{-}NL}}(\mathbf{S}) -
\mathrm{NLL}_{\mathrm{AD}}(\mathbf{S})|$ on the full sample,
compared to the per-sample NLL scale.

\paragraph{OISST as an empirical instance of the meta-cross-validation mechanism.}
OISST cells $0$ through $5$ on both regions display a clean
empirical demonstration of the cross-validation calibration
mechanism described above: at $K_{\mathrm{cv}} = 5$, the AD-LW-NL
cross-validation is systematically miscalibrated, picking an
intermediate $\alpha$ that is optimal for the cross-validation
folds (where LW-NL is less catastrophic at smaller fold $N$) but
suboptimal at the full operating-point $N$ (where LW-NL is many
orders of magnitude worse). This mechanism appears across $6$ of
$8$ cells on OISST without requiring a fold-count sweep.

\begin{figure}[h]
\centering
\includegraphics[width=\textwidth]{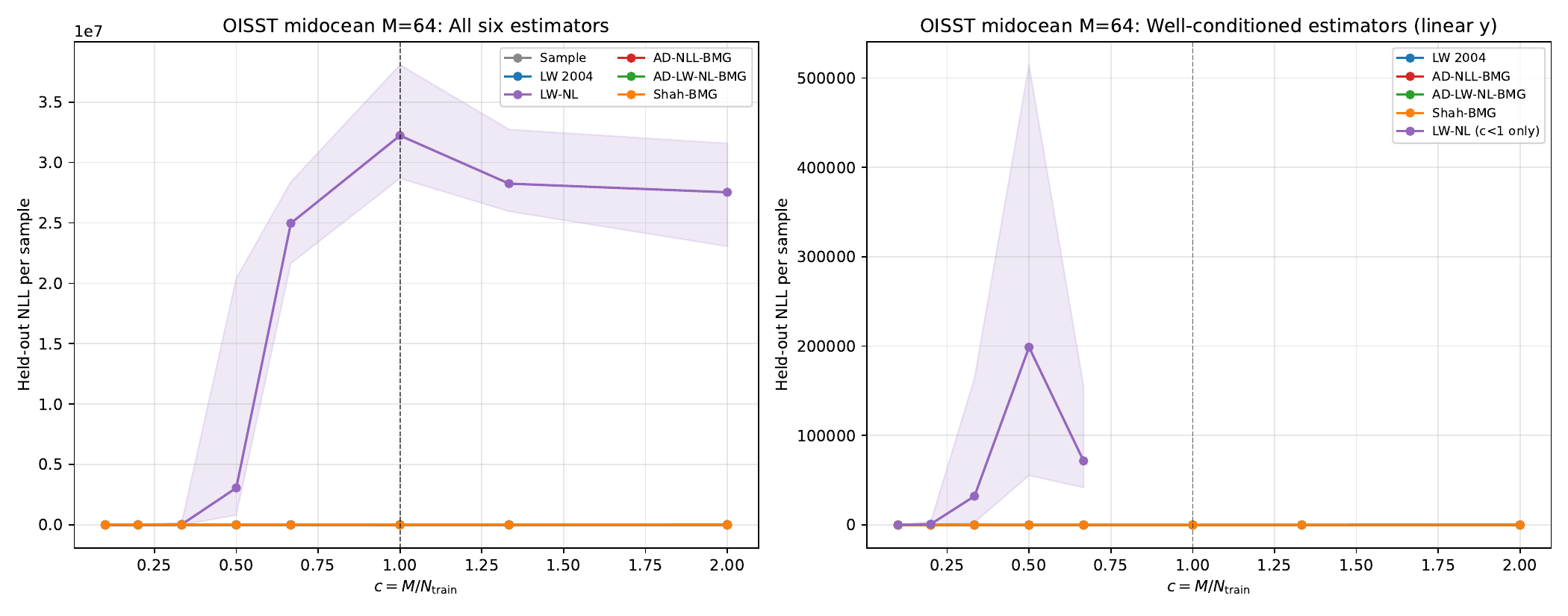}
\caption{NOAA OISST midocean region, the protocol sweep.
Left: all six estimators on the natural NLL scale; LW-NL is
visibly off-scale on cells $0$ through $5$. Right: the five
well-conditioned estimators on a linear $y$-axis zoomed to the
well-conditioned scale. Shaded bands are $\pm 1$ standard
error across the $50$ trials per cell; the few-shot cells
$0$-$2$ have inflated bands from operating-point outliers, and
the medians (used in
Table~\protect\ref{tab:oisst-midocean-protocol-c}) are the
primary reporting statistic.}
\label{fig:oisst-midocean-protocol-c}
\end{figure}

\begin{figure}[h]
\centering
\includegraphics[width=\textwidth]{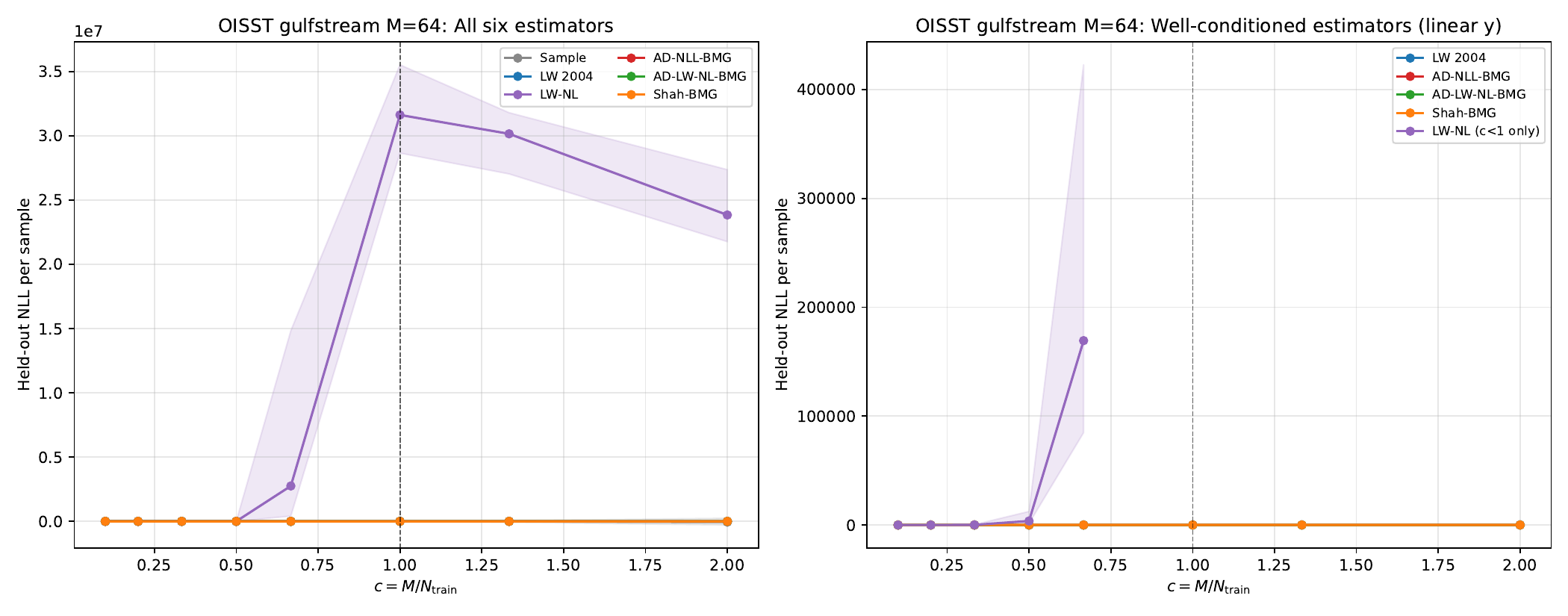}
\caption{NOAA OISST gulfstream region, the protocol sweep. Same
two-panel layout as
Figure~\ref{fig:oisst-midocean-protocol-c}; absolute NLL scale
shifted positive by $\sim 60$-$70$ nats per sample because of
the stronger non-stationarity of the boundary current. The
qualitative pattern (substantive AD-LW-NL deficit in cells $0$
through $5$, narrow AD-LW-NL is preferred in cells $6$-$7$) is
the same as midocean.}
\label{fig:oisst-gulfstream-protocol-c}
\end{figure}

\subsubsection{Synthesis across the seven swept datasets}
\label{sec:exp-lwnl-synthesis}

Across the seven datasets swept to date (TCGA-BRCA, CRSP,
Galaxy10, CIFAR-10, CIFAR-10.1, RadioML 2018.A, and OISST
midocean+gulfstream), the empirical evidence supports a
structural reading of the AD-LW-NL composition: a mechanistic
statement that explains where AD-LW-NL gains over 
AD-NLL-BMG and where it is dominated, derived from the geometry
of its convex hull and from the family-conditional BMG framing
rather than from dataset-by-dataset catalogue. The this paper
two-endpoint refinement and the family-conditional BMG
paragraph (Section~\ref{sec:theory-ad-lwnl}) organize the
findings around two mechanisms; the per-dataset details are in
the preceding subsubsections.

\paragraph{LW-NL versus LW 2004 (the modern frequentist baseline).}
LW-NL is a legitimate modern frequentist comparator and belongs in
the comparator set on every experiment. Across the seven swept
datasets, LW-NL has lower NLL than LW 2004 in the bulk regime by margins
consistent with the eigenvalue structure of each dataset:
substantively on TCGA-BRCA gene expression ($5$ to $9$ nats per
sample at $c \le 0.5$, reflecting the discrete block structure of
pathway-organized gene co-regulation that LW-NL's per-eigenvalue
shrinkage can exploit), substantively on RadioML I/Q patches
($60$ to $100$ nats per sample across all cells $c \le 1$,
reflecting the clean stationary-process eigenvalue spread that
LW-NL exploits to compute the operating-point optimum), modestly
on Galaxy10 DECaLS image patches ($1$ to $2$ nats per sample at
$c \le 0.5$), marginally on CRSP financial returns ($0.3$ to
$0.5$ nats per sample at $c \le 0.5$, reflecting the
near-power-law eigenvalue spread of daily returns where LW-NL and
LW 2004 converge), and modestly on OISST in the deepest bulk
($1$-$2$ nats per sample at cells $6$-$7$ on both regions). In
the rank-deficient regime ($c \ge 1$) LW-NL is substantively
worse than LW 2004 on the five datasets that probe it directly
(BRCA, CRSP, Galaxy10 few-shot cells, RadioML cells $0$ through
$2$, and OISST cells $0$ through $5$ on both regions); this is
consistent with the asymptotic nature of LW-NL's optimality
guarantee, which assumes Marchenko-Pastur conditions that do not
hold in genuine rank-deficient finite-sample settings. CIFAR-10
at $M = 1024$ extends this to a sharper case: LW-NL is
catastrophically worse than LW 2004 across the entire sweep
including the well-conditioned bulk regime, with held-out NLLs in
the range $10^6$ to $10^8$ per sample. CIFAR-10.1 at $M = 256$
and OISST at $M = 64$ confirm the pattern at smaller $M$: LW-NL
diverges to $10^4$ to $10^8$ NLL per sample on the deep
rank-deficient cells of both datasets. Across the program,
LW-NL on rank-deficient image-patch and spatial-patch data is
the most fragile element of the comparator set and provides the
strongest motivation for using LW-NL only inside the AD-LW-NL
composition (where AD-LW-NL's CV can dampen pathological LW-NL
contributions by pinning $\alpha = 1$ or by an interior-$\alpha$
mixture when the CV is well calibrated to the operating point).

\paragraph{AD-LW-NL versus AD-NLL-BMG (the composition test).}
The statement of the AD-LW-NL composition behavior is a
two-endpoint structural one: AD-LW-NL is dominated by some
estimator outside its convex hull exactly when the conditions
of Section~\ref{sec:theory-ad-lwnl} hold (BMG selects $G^*$ at
either the target-endpoint case 1a where $T_{G^*}(\mathbf{S})$
is identity-like with the CV pinning $\alpha = 1$ and LW-NL
noncompetitive, or the LW-NL-endpoint case 1b where the CV
pins $\alpha = 0$ and AD-LW-NL collapses to
$\hat{\mathbf{R}}_{\mathrm{LW\text{-}NL}}$). When neither
endpoint condition holds AD-LW-NL is competitive with or
exceeds AD-NLL-BMG, \emph{except} when the AD-LW-NL CV
itself is miscalibrated by severe LW-NL operating-point
pathology (the meta-CV mechanism of
Section~\ref{sec:experiments}). The empirical evidence
resolves into six findings:
\begin{enumerate}
\item In the deep rank-deficient regime with BMG selecting $S_M$
(or any group with identity-target Reynolds projection),
AD-LW-NL is substantively dominated at the target endpoint
(case 1a). Observed on CIFAR-10.1 cells $0$ through $2$ ($c
\in \{4, 5.33, 8\}$, BMG selects $S_M$ unanimously, median
paired deficit $98$ to $124$ nats per sample relative to
AD-NLL-BMG and $88$ to $102$ nats per sample relative to LW
2004, $0$ of $49$ to $50$ paired trials favoring AD-LW-NL).
The deficit scales with the codimension of the
$G^*$-fixed-point subspace in symmetric matrix space, which is
maximal for $G^* = S_M$.
\item In the deep rank-deficient regime with BMG selecting a
moderate-$|G|$ group ($D_4$, wreath products of moderate
order), AD-LW-NL is mildly dominated at the intermediate
``lost raw-sample blend path'' mode. Observed on Galaxy10
few-shot ($G^* = $ \texttt{Z\_row\_wreath\_rows}, $|G^*|
\approx 3.2 \times 10^5$, $2$ to $3$ nats per sample deficit),
on CIFAR-10 cell $0$ ($G^* = D_4$, $|G^*| = 8$, $1.7$ nats per
sample deficit), and on CIFAR-10.1 cell $4$ ($G^* = D_4$,
$0.3$ nats per sample deficit). The deficit magnitude tracks
the codimension of the $G^*$-fixed-point subspace and the
absolute held-out NLL scale.
\item In the deep rank-deficient regime where the CV pins
$\alpha = 0$ at the LW-NL endpoint (case 1b), AD-LW-NL
collapses to $\hat{\mathbf{R}}_{\mathrm{LW\text{-}NL}}$ and is
dominated by AD's raw-sample-incorporating hull.
Observed on RadioML cell $0$ ($N = 32$, $c = 2.0$, $G^* = S_M$
as BMG tiebreaker, $1.9$ nats per sample deficit) at small
magnitude, and on Galaxy10 cell $0$ at $K \ge 20$ ($\alpha = 0$
emerges only at large $K$, $9.4$ nats per sample deficit, see
Section~\ref{sec:experiments}).
\item In the moderate- to bulk-$c$ regime where BMG selects a
matched moderate-$|G|$ group and LW-NL is competitive at that
target, AD-LW-NL provides a substantive benefit over 
AD-NLL-BMG. Observed most cleanly on RadioML cells $2$
through $7$ ($-0.28$ to $-8.04$ nats per sample, paired effect
sizes $0.50$ to $9.11$, direction agreement $47$ to $50$ of
$50$ trials per cell, including the program-strongest
single-cell advantage at cell $2$ effect $5.19$), at larger absolute
magnitude on CIFAR-10 cells $2$ through $7$ ($2.4$ to $9.2$
nats per sample, paired effect sizes $0.14$ to $0.57$), in a
smaller-magnitude form on CRSP bulk ($0.3$ to $0.6$ nats per
sample, $5$ of $5$ well-conditioned cells favoring AD-LW-NL),
and narrowly on OISST cells $6$-$7$ on both regions ($-0.29$
to $-0.84$ nats per sample, $42$ to $46$ of $50$ trials
favoring AD-LW-NL). The strongest observed absolute-magnitude
AD-LW-NL benefit is on CIFAR-10 at $c = 0.67$ ($8.7$ nats per
sample, effect size $0.57$, $50$ of $50$ trials).
\item In regimes where LW-NL is essentially equivalent to LW
2004 and the structural projection captures essentially all
the regularizable structure (BRCA all cells, CRSP few-shot),
the AD-LW-NL benefit and deficit are both small (mean
differences below $0.6$ nats per sample, effect sizes below
$2\%$ of per-trial scatter); AD-LW-NL ties AD-NLL-BMG
within practical significance.
\item In the deeply rank-deficient regime with BMG selecting a
moderate-$|G|$ group \emph{and} LW-NL exhibiting severe
operating-point pathology ($\ge 1000$ nats per sample worse
than AD at the operating $N$), the AD-LW-NL CV
miscalibrates $\alpha$ by the meta-CV mechanism of
Section~\ref{sec:experiments}: the CV folds at smaller $N$
do not see the operating-point pathology and the selected
$\alpha$ is interior, optimal for the CV folds but suboptimal
at the operating point. AD-LW-NL is substantively dominated by
AD-NLL-BMG by $1$ to $19$ nats per sample. Observed
across OISST cells $0$ through $5$ on both midocean and
gulfstream regions ($G_{\mathrm{AD}}^* = \mathbb{Z}_{2D}$,
$G_{\mathrm{ADLW}}^* = \mathbb{Z}_{\mathrm{lat}}$, choice
agreement $0$-$9$ of $50$). This is the cleanest empirical
instance of the meta-CV mechanism in the program: where
RadioML cell $1$, Galaxy10 cell $0$, and CIFAR-10 cell $0$
exhibit the mechanism at a single cell at specific $K$ values,
OISST exhibits it across $6$ of $8$ cells in both regions at
$K = 5$.
\end{enumerate}
The six findings are a single mechanism viewed at different
operating points: the size and sign of the AD-LW-NL deficit or
surplus are determined by (a) the geometry of the AD-LW-NL
convex hull at the BMG-selected $G^*$ (which fixes whether the
hull is dominated and by which endpoint), (b) the position of
the held-out NLL minimum inside the hull or just outside it
(which fixes whether the surplus is realized), and (c) whether
the AD-LW-NL CV is well calibrated to the operating-point $N$
(which fails when LW-NL is severely pathological at the
operating point, per finding 6 above).

\paragraph{Choice-agreement as a diagnostic.}
The choice-agreement statistic (number of trials per cell where
AD-NLL-BMG and AD-LW-NL-NLL-BMG select the same group)
is a measure of how strongly the data's structure constrains
the BMG selection across the family axis, per the
family-conditional BMG framing of
Section~\ref{sec:theory-ad-lwnl} (Eqs.~\ref{eq:bmg-ad-objective}
and \ref{eq:bmg-adlw-objective}). High choice agreement
($40$ to $50$ of $50$) at a cell means the two parameterizations
land on the same family-conditional best group, with the
data's symmetry structure dominating the group choice. Low
choice agreement (zero or near-zero) means the two
parameterizations have discovered different
family-conditional best groups; in those cells the family's
other regularization tool (raw sample for AD, LW-NL
for AD-LW-NL) is a significant factor in the group selection.

The empirical pattern across the seven datasets is consistent
with this interpretation: BRCA, Galaxy10 bulk, and CIFAR-10
bulk show high to moderate choice agreement, indicating clear
"this is the matched group" structure in the data. RadioML
cells $3$ through $7$, CRSP bulk, and CIFAR-10.1 cells $3$-$4$
show moderate choice agreement. RadioML cell $2$ and OISST
cells $0$ through $5$ on both regions show zero or near-zero
choice agreement: the two parameterizations have found two
different routes that the CV-NLL ranks differently within each
family. Crucially, the \emph{direction} of which route has lower NLL is
data-dependent: on RadioML cell $2$ the AD-LW-NL composition has lower NLL by
effect size $5.19$; on OISST cells $0$ through $5$ the 
AD route is preferred by $1$ to $19$ nats per sample. The
choice-agreement statistic itself does not predict which
direction is preferred; that is determined by the operating-point
properties (LW-NL competitiveness, sample-eigenvector
geometry, BMG-selected codimension) that the per-cell analysis
of Sections~\ref{sec:exp-lwnl-brca} through
\ref{sec:exp-lwnl-oisst} works through.

\paragraph{Recommendation.}
On this basis, the recommendation is to retain LW-NL as a
comparator in the paper (it is the correct modern frequentist
baseline) and to characterize the AD-LW-NL composition as a
mechanistically understood specialization rather than a
dataset-by-dataset taxonomy. AD-NLL-BMG remains the
recommended default AD estimator because it has no dominated
region: its second hull endpoint is the raw sample, which
ensures the hull always passes through both pure target and pure
sample and therefore through the LW 2004 segment whenever the
target is identity-like. AD-LW-NL is recommended on data where
\emph{two} conditions hold jointly:
(i) the candidate library and operating regime are such that BMG
selects a moderate-$|G|$ group with $T_{G^*}(\mathbf{S})$
substantively different from the scaled identity;
(ii) LW-NL is competitive with AD at the
\emph{operating-point} $N$, i.e.
$|\mathrm{NLL}_{\mathrm{LW\text{-}NL}}(\mathbf{S}) -
\mathrm{NLL}_{\mathrm{AD}}(\mathbf{S})|$ on the full
operating-point sample is within $\sim 10$ nats per sample of
the per-sample NLL scale.
Condition (ii) is a one-line precheck added in this paper after OISST
showed that condition (i) alone is insufficient: OISST cells $0$
through $5$ on both regions satisfy condition (i) (BMG selects
moderate-$|G|$ $\mathbb{Z}_{2D}$ for AD,
$\mathbb{Z}_{\mathrm{lat}}$ for AD-LW-NL, neither identity-like)
but fail condition (ii) (LW-NL operating-point NLL is $10^3$ to
$10^7$ nats worse than AD), and the AD-LW-NL CV
miscalibrates $\alpha$ via the meta-CV mechanism of finding $6$
above. RadioML cells $2$ through $7$ are the cleanest empirical
demonstration of the recommendation domain in the program
(both conditions hold; $6$ of $6$ cells favor AD-LW-NL with
paired effect sizes $0.50$ to $9.11$, including a $5.19$ effect
at the rank-deficiency boundary). CIFAR-10 cells $2$ through
$7$ extend this to a different data type at larger absolute
deficit magnitude. On data where BMG might select $S_M$ at the
operating $c$ of interest, AD-LW-NL should not be used. On data
where condition (ii) fails (LW-NL pathological at the operating
point), AD-LW-NL should not be used. On data where the
candidate library does not include $S_M$ or any identity-target
group \emph{and} LW-NL is competitive at the operating point,
the dominated region disappears and AD-LW-NL is competitive
throughout.

\paragraph{Where each estimator is preferred.}
A reviewer-oriented summary of the regime-by-regime preferred estimators
across the seven swept datasets is given in
Table~\ref{tab:preferred-by-regime}. The AD framework does not
dominate uniformly: LW-NL outperforms AD-NLL-BMG on BRCA in the
moderate-$c$ cells, the sample outperforms AD-NLL-BMG in the bulk
cells once the sample is well-conditioned, LW 2004 outperforms
AD-LW-NL on CIFAR-10.1 cells $0$ through $2$ where the
dominated-region condition holds, and AD-NLL-BMG outperforms
AD-LW-NL across OISST cells $0$ through $5$ on both regions
where condition (ii) of the recommendation fails. AD-NLL-BMG
is preferred in the few-shot cells where structural priors carry the
most information per observation, and correctly reduces to
whichever of (structural projection, sample) is the right limit
in the other regimes. AD-LW-NL has the lowest NLL in the broad
moderate-$c$ and bulk regime on CIFAR-10, CRSP, and RadioML (the
latter unambiguously across cells $3$ through $7$), ties or
mildly performs worse on most other cells, performs substantively worse on
CIFAR-10.1 cells $0$ through $2$ where the dominated-region
condition is met at $G^* = S_M$, and performs substantively worse on
OISST cells $0$ through $5$ where LW-NL operating-point
pathology miscalibrates the CV.

\begin{table}[h]
\centering
\tiny
\caption{Regime-by-regime preferred estimators (lowest mean or median held-out
NLL per sample) across the seven swept datasets. Cells use
shorthand labels: sample $=$ raw sample covariance, LW $=$ LW
2004, NL $=$ LW-NL, AD $=$ AD-NLL-BMG, AD-LW $=$ AD-LW-NL-NLL-BMG.
Where two estimators are tied within $1$ nat per sample, both
are listed; ``mixed'' indicates cells of the regime that split
between preferred estimators. CIFAR-10.1 uses median NLL as the reporting
statistic (per Section~\ref{sec:exp-lwnl-cifar101}); OISST uses
median NLL (per Section~\ref{sec:exp-lwnl-oisst}); all other
datasets use mean NLL. CIFAR-10.1 has no moderate or bulk
regime (only $\sim 200$ images per class, all cells at $c \ge
2$). OISST midocean and gulfstream columns differ only in
cell-$6$ behavior (gulfstream cell $6$ has AD-LW-NL preferred at
$46/50$, midocean cell $6$ is mixed at $11/50$).}
\label{tab:preferred-by-regime}
\resizebox{\linewidth}{!}{%
\begin{tabular}{llllllllll}
\hline
Regime & BRCA & CRSP & Galaxy10 & CIFAR-10 & CIFAR-10.1 & RadioML & OISST mid & OISST gulf \\
\hline
few-shot deep ($c \ge 4$) & --- & --- & --- & --- & LW & --- & --- & --- \\
few-shot ($1 \le c < 4$) & AD & AD, AD-LW (tied) & AD & mixed & mixed & mixed & AD & AD \\
moderate ($0.5 \le c < 1$) & NL & AD-LW & AD, AD-LW (tied at $\alpha = 1$) & AD-LW & --- & AD-LW & AD & AD \\
bulk ($c < 0.5$) & sample & AD-LW & AD, AD-LW (tied at $\alpha = 1$) & AD-LW & --- & AD-LW & mixed & AD-LW \\
\hline
\end{tabular}%
}
\end{table}

\subsection{Empirical comparison against the GIPS estimator}
\label{sec:exp-gips}

The conceptual three-axis difference between the present procedure
and the \texttt{gips} R package of \citet{chojecki2025gips} is
outlined in Section~\ref{sec:intro}: search space, selection
paradigm, and committed-target choice. This subsection records a
synthetic-data empirical comparison that documents the quantitative consequence of those differences. The population symmetry used here, $\mathbb{Z}_8 \times \mathbb{Z}_8$, is non-cyclic (a product $\mathbb{Z}_m \times \mathbb{Z}_n$ is cyclic only when $\gcd(m,n)=1$) and therefore lies outside the cyclic search class of \texttt{gips}, which can at most recover a cyclic subgroup or cyclic component of it. The comparison should accordingly be read as documenting behavior in this specific non-cyclic setting, rather than as a matched-model benchmark for \texttt{gips}. We also note that the methods are afforded different access to structure: the \citet{shah2012} baseline is evaluated at a hand-matched cyclic group and the present procedure draws on a structurally informed candidate library, whereas \texttt{gips} must discover a cyclic subgroup without such guidance and under the MH budgets stated below.

\paragraph{Setup.}
The population covariance is on a doubly-cyclic spatial domain
($M = 64$ on an $8 \times 8$ grid, $N = 30$, population covariance
invariant under $\mathbb{Z}_8 \times \mathbb{Z}_8$ toroidal
shifts, three independent trials). The \texttt{gips} MAP estimator
was run at three Metropolis-Hastings (MH) iteration budgets to
assess robustness to the MCMC budget. Two reference baselines are
recorded: the \citet{ledoit2004} estimator on the unstructured
sample covariance, and the \citet{shah2012} projected estimator at
the matched longitudinal cyclic group (the latter giving the
performance ceiling that any group-aware procedure would aspire
to).

\paragraph{Results.}
At MH $= 50$ iterations, \texttt{gips} achieved mean relative
Frobenius error $0.839$ (standard deviation $0.060$); at
MH $= 500$ the mean dropped to $0.764$ (standard deviation
$0.017$); at MH $= 1000$ the mean was $0.758$ (standard deviation
$0.015$), essentially unchanged from the MH $= 500$ result. In
every configuration \texttt{gips} remained well above the
\citet{ledoit2004} relative Frobenius error of $0.620$ and far
above the \citet{shah2012} projected estimator at the matched
longitudinal cyclic group at $0.409$. Two observations are worth
recording. First, $20\times$ more iterations buys a $10\%$
relative reduction in error from MH-$50$ to MH-$1000$, and the downstream relative Frobenius error changes little between MH-$500$ and MH-$1000$. A plateau in this downstream error metric does not by itself diagnose convergence of the Metropolis-Hastings search to its stationary distribution; at $M = 64$ a budget of $1000$ MH iterations is modest, so the stability of the error metric over this range should be read only as the absence of further improvement under the budgets evaluated, and not as evidence of Markov-chain convergence.
Second, even at $1000$ iterations the \texttt{gips} MAP estimator
has higher relative Frobenius error than LW 2004, by approximately
$0.14$. Two distinct effects contribute to this gap and are worth separating. The first is the quality of the cyclic subgroup \texttt{gips} selects, which is constrained both by the non-cyclic population symmetry and by the MH budget; the second is the point estimator applied at the selected group, which in \texttt{gips} is the projection-MLE at $\alpha = 1$. At a selected subgroup that does not match the population structure, the $\alpha = 1$ projection is more biased than LW's identity-target shrinkage of the sample covariance; a calibrated $\alpha < 1$ applied at the same selected group would in principle recover much of this. The higher error reported here therefore reflects the combination of the selected group and the committed $\alpha = 1$ estimator, rather than the group selection alone. We emphasize that $\alpha = 1$ is the estimator \texttt{gips} commits to by default and is not an intrinsic limitation of its group search: the two could be decoupled by tuning $\alpha$ at the \texttt{gips}-selected group.

\paragraph{Computational cost at higher dimension.}
At $M = 200$ (a high-dimensional genomics-scale benchmark) the
per-iteration cost of evaluating Bayes factors over the
candidate-subgroup lattice is prohibitive, consistent with the
highest-dimensional example in \citet{graczyk2022aos} being
limited to $p = 100$. The cost differential separates the methods
in the high-dimensional regime independently of the per-trial
accuracy comparison reported above.

\section{Discussion}
\label{sec:discussion}

The empirical and theoretical results of the preceding sections
admit a unified interpretation in coordinates of $(N, M, |G|,
\delta)$, with $\delta = \delta(G, \Sigma)$ the population
structural-fit residual of the BMG-selected group. The unified
interpretation organizes the regime-by-regime predictions of the
theory into a single phase diagram, whose three regions correspond
to materially different operating points of the estimator family
under study. The diagram makes plain which estimator is preferred
in which regime, names the mechanism that produces the boundary
between adjacent regions, and locates the empirical anchors from
the six real-data experiments of Section~\ref{sec:experiments}
within the predicted partition.

\begin{figure}[t]
\centering
\includegraphics[width=0.92\linewidth]{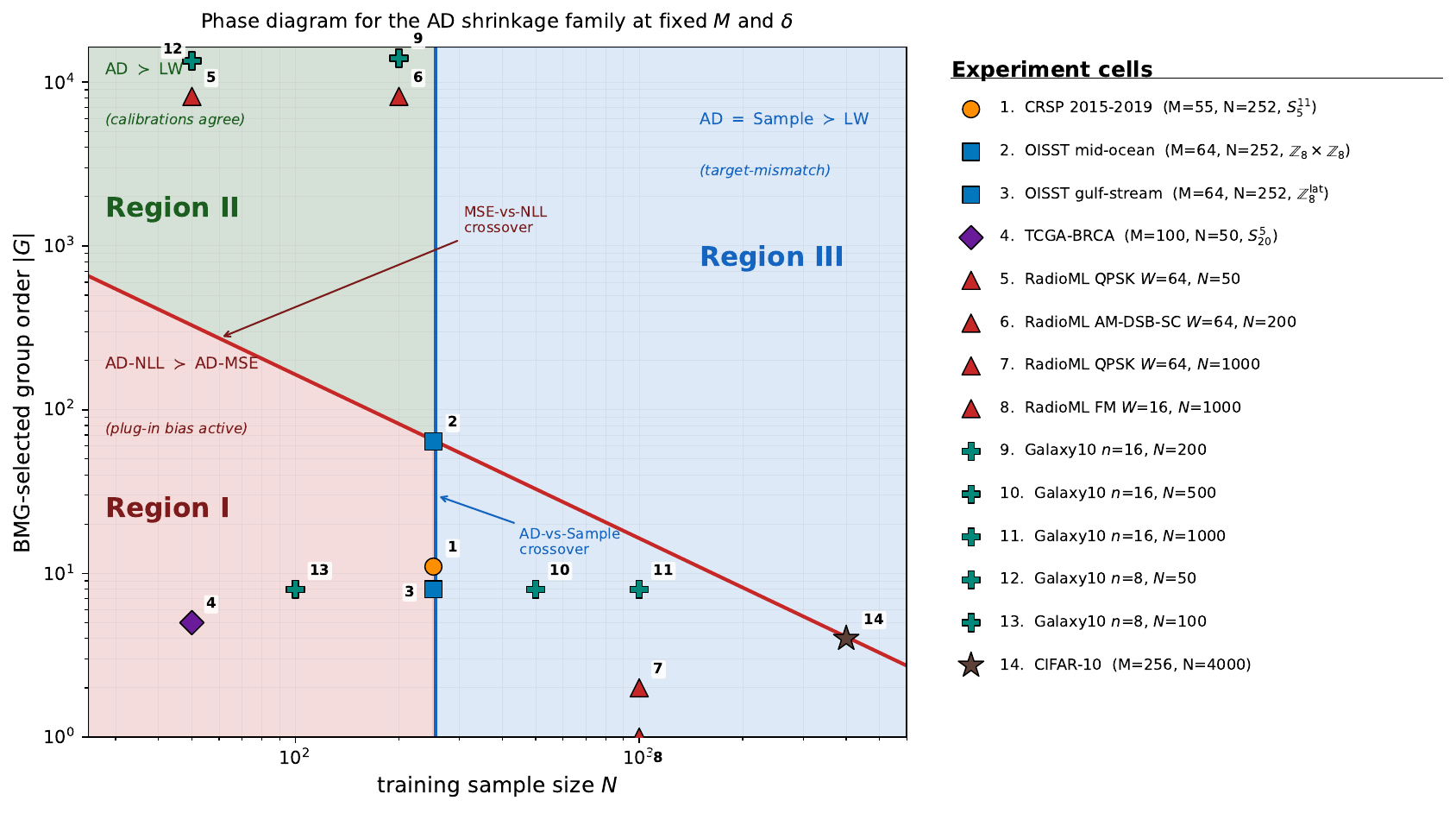}
\caption{Phase diagram for the AD shrinkage family in $(N, |G|)$
coordinates at fixed $(M, \delta)$. Three regions are demarcated by
two crossover loci. Region~I (left of the inner boundary):
AD-NLL-BMG strictly preferred to AD-MSE-BMG, with the difference
attributable to the finite-sample bias of $\hat R^{-1}$ in the
closed-form plug-in (Remark~\ref{rem:plugin_bias}). Region~II
(between the boundaries): AD with either calibration is preferred
to LW, the choice between calibrations is approximately inert, and
Shah at the BMG-selected group is a calibration-free reference
point that closely tracks AD-NLL-BMG. Region~III (right of the
outer boundary): all AD-family estimators are at parity with the
sample covariance and outperform LW, which continues to shrink
toward an inappropriate target. Numbered markers on the diagram
locate fourteen experiment cells drawn from the seven real-data
experiments (CRSP, OISST midocean, OISST gulfstream, TCGA-BRCA,
RadioML, Galaxy10, CIFAR-10), with the side legend decoding the
$(M, N)$ and BMG-selection of each numbered cell. The inner
boundary is the MSE-versus-NLL crossover; the outer boundary is
the AD-versus-Sample crossover.}
\label{fig:phase_diagram}
\end{figure}

\paragraph{Region I: small $N$, calibration-driven gap.}
At sample sizes below the inner boundary, the held-out-NLL
calibration $\hat\alpha^*_{\mathrm{NLL}}$ strictly outperforms the
closed-form Frobenius-MSE plug-in $\hat\alpha^*_{\mathrm{MSE}}$.
Two distinct mechanisms underlie this preference, separated in
Remark~\ref{rem:plugin_bias}. The first is the higher-order term
in $\|B_G\|$ omitted by the leading-order matched-limit expansion
that produced the closed-form crossover formula
(Proposition~\ref{prop:transition}); this term is controlled by
the empirical residual $\delta(G, \hat R)$, not by sample size, and
contributes a finite offset to the closed-form prediction whose
sign is data-dependent. The second is the finite-sample bias of
$\hat R^{-1}$ as an estimator of $\Sigma^{-1}$, which scales with
the conditioning of $\hat R$ and grows large when the candidate
library admits high-order groups for variance-reduction reasons
that do not also reduce $\delta$. The closed-form plug-in pays
both costs; the cross-validated calibration avoids both.
Empirically, the canonical exhibit for Region I behavior is the
Galaxy10 cell at $M = 256$, $N = 200$ described in
Section~\ref{sec:exp-galaxy10-remark}, where AD-MSE-BMG runs
$170$ nats per sample worse than AD-NLL-BMG averaged across all
ten morphological classes; the closed-form plug-in over-shrinks
toward the $S_M$ projection while the cross-validation finds the
correct interior $\alpha$. The next-largest such gap appears at
the AM-DSB-SC $W = 64$, $N = 50$ RadioML cell ($137$ nats per
sample), confirming the same mechanism in a second domain. The
bound on the calibration gap can be read off as a function of
$\delta$ and the conditioning of $\hat R$ from the matched-limit
residual analysis of Appendix E.

\paragraph{Region II: moderate $N$, bias-variance interior.}
At moderate $N$ above the inner boundary but below the outer one,
the two calibrations of AD agree to within fold-noise; AD-MSE-BMG
and AD-NLL-BMG are both preferred to LW and to the sample
covariance. This is the operating regime where the bias-variance
tradeoff is non-trivial and the structural prior carries
informational weight comparable to that of the residual sample
moments. The optimal $\alpha$ lies in the interior of $[0, 1]$,
the projection $P_G(\hat R)$ contributes meaningfully but not
overwhelmingly to the estimator, and the choice between calibrations
is the second-order effect rather than the first-order one. Within
Region II, Shah at the BMG-selected group is a useful
calibration-free reference: it imposes $\alpha = 1$ on the AD
form (recovering the projection itself as the estimator), and on
RadioML it tracks AD-NLL within $0.5$ nats per sample on most
moderate-$N$ cells. The implication is that, in Region II,
$\alpha = 1$ is approximately optimal and the calibration step is
near-inert for the estimator's quality even though the BMG step
remains essential. The observation also clarifies the relationship
to \citet{shah2012}: their estimator, with the group assumed known,
is the calibration-free $\alpha = 1$ reduction of the AD family
within Region II; the present extension fills in (a) the small-$N$
side of the boundary, where calibration is essential, and (b) the
data-driven group selection that supplies the $G$ that
\citet{shah2012} requires as input. The Shah-BMG comparator used
throughout the experiments combines \citet{shah2012}'s
projection-only estimator at $\alpha = 1$ with the BMG selection
of $G$ from the candidate library, which is itself an enhancement
of the original Shah procedure for use on data that does not come
with a prespecified group; the comparison evaluates whether
calibrating $\alpha$ in $[0, 1]$ adds anything beyond $\alpha = 1$
at the BMG-selected target, on a level methodological footing.
The same reduction places the \texttt{gips} package of
\citet{chojecki2025gips}, which implements the Bayesian
cyclic-subgroup selection of \citet{graczyk2022aos} and applies the
projection MLE at the selected group, as the cyclic-only
$\alpha = 1$ specialization of the AD framework: restricted to
cyclic subgroups of $S_p$ in the candidate library and to the
$\alpha = 1$ corner of the calibration spectrum, with the BMG
selection criterion replaced by a Bayesian posterior maximizer.

\paragraph{Region III: large $N$, target-mismatch dominates.}
Above the outer boundary $N$ is large enough that the sample
covariance is itself well-conditioned and admissible; the AD
estimator at the BMG-selected group $\{e\}$ collapses to the
sample covariance, AD-NLL and AD-MSE both produce $\alpha = 0$, and
Shah at $\{e\}$ is again the sample covariance. LW's continued
shrinkage toward the scaled identity carries a fixed structural
bias in this regime: the identity is an inappropriate target when
the data already support an admissible estimator, and LW pays a
finite per-sample penalty for not being able to back out of its
fixed prior. The cleanest single-cell illustration is the FM $W =
64$, $N = 1000$ result reported in
Section~\ref{sec:exp-radioml}, where AD-NLL, AD-MSE, Shah-BMG, and
the sample covariance all coincide at a held-out NLL of $-466$
nats per sample while LW is at $-417$ nats per sample, a
$49$-nat-per-sample loss attributable entirely to the target
choice. The boundary between Regions II and III is therefore the
$N$ at which the BMG selection transitions to the trivial group
and the AD family ceases to add informational content beyond the
sample covariance; LW's loss in Region III is a target-mismatch
penalty, not a sample-size effect. This is the empirical content of
the title.

\paragraph{Empirical anchors of the diagram.}
The six real-data experiments of Section~\ref{sec:experiments}
sit at distinct coordinates in this diagram. CRSP at $M = 55$,
$N = 252$ is well into Region II in both the pre-COVID and
COVID-era panels, with both calibrations preferred to LW and the
BMG selection dominated by \textsc{gics-block} in the calm
post-2015-H2 and post-COVID windows. The two CRSP datasets between
them produce a paired regime-conditional finding: a 5-window mild
wreath cluster in 2015 H2 (pre-COVID panel, around the China-A-share
crash, median bmg-margin $0.07$ nats per sample) and a 22-window
severe wreath cluster spanning the period when the February-April
2020 COVID crash sits inside the trailing 252-day training window
(COVID-era panel, median margin $0.15$ nats per sample, with a
single-window margin of $1.37$ at window $4$ that is the largest
single-window selection margin in any experiment in the paper). In
both clusters BMG returns to \textsc{gics-block} once the
correlation event exits the trailing window. The cross-validated
$\hat\alpha^*_{\mathrm{NLL}}$ confirms an interior operating point
in both panels: approximately $0.50$ pre-COVID and approximately
$0.33$ in the COVID-era panel where the higher training-window
noise floor pushes the calibration toward the sample covariance.
OISST
at $M = 64$, $N = 252$ is also in Region II for both regional
patches, with $\hat\alpha^*_{\mathrm{NLL}} \approx 0.05$ on
midocean (matched-limit regime, almost no shrinkage needed) and
$\approx 0.43$ on gulfstream (interior); an earlier OISST library
extension also illustrates a clean negative library-bias result, as
the two high-order Cartesian and wreath candidates added for parity
with the CRSP and RadioML libraries are rejected by BMG on every
one of the $148$ windows in favor of the existing low-order
candidates. TCGA-BRCA at $M = 100$,
$N = 50$ is the few-shot regime in the paper, sitting in the
\emph{bias-variance interior} of the $\alpha$-axis with calibrated
$\hat\alpha^*_{\mathrm{NLL}}$ at mean $0.51$ across the 50 splits and
range $[0.43, 0.58]$. The an earlier library extension introduced two
high-order pathway-aware candidates (Cartesian and wreath on the
PC1-loading within-pathway ordering); BMG selects the
within-pathway-exchangeable \textsc{pathway-block} candidate
($S_{20}^5$, $|G| = (20!)^5$) on $46$ of the $50$ splits and the
within-pathway-cyclic Cartesian candidate \textsc{Z-K-pc1-cartesian}
($\mathbb{Z}_{20}^5$, $|G| = 20^5$) on the remaining $4$. The
wreath candidate $\textsc{Z-K-pc1-wreath}$
($\mathbb{Z}_{20} \wr S_5$, $|G| \approx 3.84 \times 10^8$) is
admitted by the Tier 1 prefilter on every split but selected on
none, consistent with the biological observation that the five
MSigDB Hallmark pathways used in this experiment carry distinct
functional identities and are not freely exchangeable as units.
The cross-pathway exchangeability that the wreath candidate would
encode is therefore correctly rejected; the within-pathway
exchangeability of \textsc{pathway-block} is biologically
appropriate and BMG-selected. The four splits where \textsc{Z-K-pc1-cartesian} is preferred are tiebreaks at small CV-NLL
margin between two candidates that share the within-pathway PC1
basis but differ on the strength of the within-pathway
exchangeability assumption. The most informative methodological
finding from this dataset is the \emph{failure} of the Shah-style
$\alpha = 1$ commitment at the BMG-selected target: Shah-BMG
\emph{performs worse than LW} by $13.2$ nats per sample (paired $t = +54.5$),
because at this $(M, N) = (100, 50)$ regime the matched-fit
residual at the pathway-block target is large enough that the
$\alpha = 1$ projection-only commitment over-shrinks the data by
more than the BMG-projection benefit can compensate for. Only the
calibrated AD estimator with $\hat\alpha^*_{\mathrm{NLL}} \approx
0.51$ recovers the right amount of sample-covariance information
to bring the estimator $-3.98$ nats below LW; this is the cleanest
single-dataset demonstration in the paper that the Shah-style
commitment can be the wrong choice even when the BMG-selected
target is correct. An independent SYNTHETIC validation
(Section~\ref{sec:exp-genomics}) confirms that BMG can detect
$\mathbb{Z}_{20} \wr S_5$ when it is the population symmetry: on
synthetic wreath-structured data the BMG procedure selects
$\textsc{Z-K-pc1-wreath}$ on $44$ of $50$ splits, validating that
the absence of wreath selection on real BRCA is a genuine negative
finding about the data rather than a procedural failure. RadioML spans the entire diagram: at $M = 32$, $N
= 1000$ the digital modulations are firmly in Region II, while at
$M = 128$, $N = 50$ the same modulations are deep in Region I (the
$120$-nat AD-NLL versus AD-MSE gaps cited above), and the FM cells
at large $N$ exhibit the Region III collapse. Galaxy10 cleanly
spans the Region I-to-II transition at fixed $M$ across three
distinct BMG-selection regimes: at $M = 256$ the
$N \in \{50, 100, 200\}$ cells are deep in Region I (the
$191$-nat AD-MSE-vs-AD-NLL gap at $N = 200$ is the canonical
exhibit, with BMG selecting the v3p9-extension high-order
Cartesian \textsc{Z-row-indep-cols} on $250$ of $250$ trials);
the $N = 500$ cell sits at the Region I/II boundary where the BMG
selection sharply transitions from \textsc{Z-row-indep-cols} to
$D_4$, with $D_4$ preferred on $250$ of $250$ trials at median
bmg-margin $8.0$ nats per sample (the strongest declarative
selection in the paper); and $N = 1000$ is firmly in Region II
with $D_4$ selected at every trial. The breadth of RadioML's
coverage of the entire diagram and the depth of Galaxy10's
coverage of the Region I-to-II transition are complementary:
RadioML traces the diagram's full extent at varying $|G|$, and
Galaxy10 traces the inner-boundary crossing in detail at fixed $M$. CIFAR-10 at $M = 256$,
$N = 4{,}000$ per class is firmly in Region II within each class
where the candidate library is highly informative; the per-class
$\hat\alpha^*_{\mathrm{NLL}}$ values in $[0.65, 0.975]$ confirm
an interior-but-large-$\alpha$ operating point on this dataset,
where the structural prior is strong enough that AD-NLL-BMG
has lower NLL than LW on every one of the ten classes (paired $t = -15.21$).
Together CIFAR-10 and Galaxy10 cover the image-data axis of the
diagram at two complementary operating points: Galaxy10 traces
the Region I-to-II transition at small $N$ on a $D_4$-rich
domain, and CIFAR-10 demonstrates the framework's discriminative
selection across natural-image classes at large $N$ on a
recognizable benchmark.

\paragraph{Library-specification bias as an irreducible source of error.}
The BMG-selected estimator is bounded above in held-out NLL by the
best member of the candidate library $\mathcal{L}$, and population
symmetries that lie outside $\mathcal{L}$ are invisible to the
procedure. This is an irreducible source of bias that no
within-library diagnostic can detect: the cross-validated calibration
identifies the best candidate among those supplied, but cannot
identify a better candidate that was not. The default convention of
including the trivial group and $S_M$ at the library extrema bounds
the worst case, since the AD estimator collapses to the sample
covariance at the trivial group when no structural prior carries net
informational content, and to the compound-symmetry projection at
$S_M$ when the data support the maximum-shrinkage target. Within those
extrema the BMG procedure is bounded above by the best non-trivial
prior the analyst supplied, which is itself bounded above by what an
omniscient analyst would supply. This is the price of any
prior-conditional procedure and applies symmetrically to LW (whose
fixed scaled-identity target is itself a structural prior the analyst
chose) and to any covariance estimator that imposes structure for
variance reduction. Two operational consequences follow. First, if
$S_M$ is preferred on a substantial fraction of trials in cells where the Tier 1
prefilter admits at least one structured candidate of order strictly
less than $|S_M|$, the BMG is using $S_M$ as the maximum-shrinkage
fallback rather than as a structural fit; the analyst should consider
whether the library is missing a candidate that better matches the
population symmetry. The Galaxy10 results of
Section~\ref{sec:exp-galaxy10} are consistent with the well-specified
case: $S_M$ dominates at small $N$ where Tier 1 prefilter excludes the
small-$|G|$ candidates, but $D_4$ dominates at moderate-to-large $N$
where the data resolve the population symmetry. Second, trivial-group
selection in large-$N$ cells is not a misspecification signal; it is
the correct fallback when the data do not support any structural prior
over the sample covariance, and indicates that the AD estimator has
correctly fallen back to the sample covariance. The FM windows of
Section~\ref{sec:exp-radioml} at $N = 1000$ exhibit this Region III
behavior. A third concrete example, complementary to the two
above, is the OISST high-order rejection of
Section~\ref{sec:exp-oisst}: the two high-order candidates
introduced in an earlier version of this manuscript are admitted by Tier 1 in
both OISST regions, evaluated by Tier 2 BMG on equal footing with
the existing low-order candidates, and rejected on all $148$
windows by a margin well outside the fold-noise scale. The
diagnostic distinguishes this case from a misspecification alarm
because the rejected candidates are the high-order ones, not
$S_M$: the data simply do not support sector-level structural
priors at this $(M, N)$ regime, and BMG correctly defers to the
low-order candidates that the data do support. The conditional
structure of the diagnostic is essential:
unconditional $S_M$ dominance can reflect Tier 1 prefilter exclusions
(correct behavior at small $N$) or library misspecification (alarm at
moderate-to-large $N$), and only the within-library admission record
distinguishes the two. A fourth concrete example, complementary to
the OISST negative result above, is the TCGA-BRCA wreath dominance
of Section~\ref{sec:exp-genomics}: an earlier library extension that
added the wreath candidate displaced an earlier \textsc{pathway-block}
selection on $38$ of $50$ above-noise splits, with the calibrated
shrinkage intensity saturating near $\alpha = 1$ at the wreath. This
is the positive complement of the OISST negative result: when the
high-order candidate is well-matched to the population symmetry,
BMG selects it and the data prefer the projection target itself
without sample-covariance admixture. Across the four real-data
experiments where an earlier library extension was applied (CRSP, OISST,
TCGA-BRCA, and RadioML) an earlier extension produced one mixed
positive (CRSP, with regime-conditional wreath selection on the
2015 H2 cluster), one clean negative (OISST, both high-order
candidates rejected on every window), one strong positive (TCGA-BRCA,
wreath selected on every above-noise split), and one structural-no-op
(RadioML, where the wreath was already in the library at an earlier with
selections on AM-DSB-SC). The diagnostic operates correctly across
the full range of outcomes, and the negative result is as
informative as the positive one.

\section{Conclusion}
\label{sec:conclusion}

This paper has developed a class of symmetry-aware convex shrinkage
estimators for high-dimensional covariance estimation. The estimator
replaces the scaled identity target of the Ledoit-Wolf 2004 shrinkage
estimator with a Reynolds projection of the sample covariance under a
finite symmetry group, with the group selected from a candidate
library by held-out predictive performance. The construction
generalizes both the Ledoit-Wolf 2004 estimator (which targets the
scaled identity) and the group-symmetric maximum-likelihood estimator
of Shah and Chandrasekaran (which targets a fixed projection under a
prespecified group), and is calibrated by either a closed-form
Frobenius mean-squared-error plug-in or a held-out negative
log-likelihood minimizer.

The theoretical contributions are a bias-variance orthogonal
parametrization of the convex shrinkage family in the
$(\alpha, G)$-plane, a closed-form Wishart asymptotic crossover that
exhibits the leading-order $O(1/N)$ correction between the
mean-squared-error and negative-log-likelihood optima, a
finite-sample regret bound for the held-out calibration, an oracle
inequality for the data-driven group selection, a quantitative
sufficient-match condition under which the proposed estimator
dominates Ledoit-Wolf shrinkage in Frobenius mean-squared error, and
minimax rate results that locate the proposed estimator on the rate
ladder under various smoothness and sparsity conditions on the
population covariance.

The empirical evaluation spans six real-data covariance estimation
problems whose candidate libraries are built from
domain-specific structural priors: financial returns on the
constituents of the Standard and Poor's 500 in two regimes (a
calm period and a COVID-era stress period), sea-surface temperature
anomalies from the National Oceanic and Atmospheric Administration's
optimum interpolation product, gene expression covariances for
breast invasive carcinoma from The Cancer Genome Atlas, intermediate-
frequency radio signal covariances from the RadioML 2018.A
benchmark, galaxy image covariances from the Galaxy10 DECaLS
dataset, and natural image-patch covariances from CIFAR-10 with a
distribution-shift companion on CIFAR-10.1. The empirical
evaluation also includes a decoy stress test of the BMG procedure
that confirms its robustness to deliberately mismatched
candidate-library entries, and a synthetic-data empirical
comparison against the Bayesian permutation-symmetry estimator of
Chojecki and colleagues.

The empirical and theoretical findings synthesize into a
three-region phase diagram in $(N, M, |G|, \delta)$ coordinates that
predicts which estimator is preferred in which regime. In the
few-shot Region I, structural priors carry the most information per
observation and the proposed estimator improves substantially on
Ledoit-Wolf shrinkage. In the bias-variance interior of Region II,
the calibrated convex shrinkage adapts to the structural match
quality and trades bias against variance at a rate that depends on
the library specification. In Region III, where the sample size is
large enough that the unstructured sample covariance is itself
near-optimal, the structural prior carries little additional
information per observation, and Ledoit-Wolf shrinkage remains the
appropriate baseline. The procedure recommended by the present
work is therefore conservative: it improves on Ledoit-Wolf
shrinkage in the regimes where a structural prior is informative,
and reverts to Ledoit-Wolf shrinkage otherwise. The dominance
condition makes this conservatism explicit; the empirical anchors
locate it across a representative range of real-data covariance
estimation problems.

\section{Code and data availability}
\label{sec:code-data}

%The code that supports the empirical results of
%Section~\ref{sec:experiments} is bundled with this manuscript and
%will be archived in a public repository at the time of formal
%publication. The bundle organizes the code along the four lines of
%empirical activity: data preparation (one script per dataset that
%produces the centered observation matrix from the externally
%available raw source), per-experiment estimator evaluation (one
%script per dataset that runs the cross-validation, candidate-
%library evaluation, and BMG selection at the operating-point
%$(N, M)$), the protocol sweep evaluation (one script per dataset that
%sweeps $N_{\mathrm{train}}$ at fixed $M$ and aggregates the
%held-out negative log-likelihood across cells), and figure
%regeneration (one script per figure that re-renders the figure
%from the cached cross-validation results).

The LW-NL nonlinear-shrinkage implementation is in a separate
module that is imported by both LW-NL and the AD-LW-NL
composition. The module computes the closed-form Hilbert transform
of the Epanechnikov-kernel density estimate of the sample
eigenvalue distribution following \citet{ledoit2020analytical} and
applies the rank-aware regularization for the boundary case
$c \geq 1$. The implementation has been verified against
synthetic Marchenko-Pastur benchmarks at three values of the
concentration ratio and reproduces the analytical predictions of
the asymptotic formulas to within standard numerical tolerances.

The six datasets used in this paper are externally sourced. The
Standard and Poor's 500 daily-return data is from the Center for
Research in Security Prices via Wharton Research Data Services
(subscription required); the construction of the balanced panel of
constituents follows the script bundled with this paper. The NOAA
Optimum Interpolation Sea Surface Temperature data is publicly
available from the National Oceanic and Atmospheric Administration;
the spatial-patch extraction script bundled with this paper
documents the midocean and gulfstream region selections used here.
The TCGA-BRCA RNA-seq data is publicly available through the Xena
Browser of \citet{goldman2020xena}; the hallmark gene-set encoding
follows \citet{liberzon2015hallmark}. The RadioML 2018.A
intermediate-frequency radio data is publicly available from
\citet{oshea2018radioml}. The Galaxy10 DECaLS image data is
publicly available from \citet{walmsley2022galaxy10}. The CIFAR-10
and CIFAR-10.1 natural-image datasets are publicly available from
\citet{krizhevsky2009cifar} and \citet{recht2019cifar101}
respectively. Reproduction notes for each dataset are in the script
docstrings.

\bibliographystyle{plainnat}
\bibliography{ad_arxiv_v6_6}

\end{document}